\documentclass[12pt]{article}
\usepackage[utf8]{inputenc}
\usepackage[english]{babel}
\usepackage{hyperref}
\usepackage{textcomp}
\usepackage{notations}
\usepackage{bm}
\usepackage{comment}
\usepackage[autostyle]{csquotes}
\usepackage{authblk}


\usepackage{amsthm}
\usepackage{amsfonts}
\usepackage{amssymb}
\usepackage{graphicx}
\usepackage{comment}
\usepackage{geometry}
\geometry{top=3cm,left=2cm,right=2cm,bottom=3.5cm,heightrounded,bindingoffset=5mm}
\usepackage{subfiles}
\usepackage{subfig}
\usepackage{mathtools}
\usepackage{pgfplots}
\pgfplotsset{/pgf/number format/use comma,compat=newest}
\usepackage{url}
\usepackage{tikz}
\usepackage{bbm}
\usepackage{floatflt}
\usetikzlibrary{arrows,decorations.pathmorphing,backgrounds,positioning,fit,petri}

\usepgfplotslibrary{fillbetween}

\newcommand{\jean}[1]{{ #1}}

\newcommand{\E}{\mathbb{E}}
\newcommand{\R}{\mathbb{R}}

\newcommand{\bPhi}{\boldsymbol \Phi}
\newcommand{\bmu}{\boldsymbol \mu}
\newcommand{\bSigma}{\boldsymbol \Sigma}
\newcommand{\bDelta}{\boldsymbol \Delta}

\theoremstyle{plain} 
\newtheorem{theorem}{Theorem}

\newtheorem{lemma}[theorem]{Lemma} 
\newtheorem{proposition}[theorem]{Proposition} 

\theoremstyle{definition}

\newtheorem{result}{Result}
\newtheorem{hypothesis}{Hypothesis}

\theoremstyle{remark}

\title{\textbf{Fundamental limits in structured PCA, \\ and how to reach them}}

\author{Jean Barbier$^{\dagger,*}$, Francesco Camilli$^{\dagger,*}$, \\
Marco Mondelli$^\diamond$, Manuel S\'aenz$^\triangledown$}
\affil{$\dagger$ \emph{Abdus Salam International Center for Theoretical Physics, Italy}\\ 
$\diamond$ \emph{Institute of Science and Technology, Austria}\\
$\triangledown$ \emph{Universidad de La Rep\'ublica, Uruguay}}
\date{}

\begin{document}
\maketitle

{
\let\thefootnote\relax\footnotetext{$*$ Corresponding authors: \url{jbarbier@ictp.it}, \url{fcamilli@ictp.it}}
}

\vspace{-10pt}

\begin{abstract}
How do statistical dependencies in measurement noise influence high-dimensional inference? 
    To answer this, we study the paradigmatic spiked matrix model of principal components analysis (PCA), 
    where a rank-one matrix is corrupted by additive noise. We go beyond the usual independence assumption on the noise entries, 
    by drawing the noise from a low-order polynomial orthogonal matrix ensemble.
    The resulting noise correlations make the setting relevant for applications but analytically challenging. We provide the first 
    characterization of the Bayes-optimal limits of inference in this model. If the spike is rotation-invariant, 
    we show that standard spectral PCA is optimal. However, for more general priors, both PCA and the existing approximate message passing algorithm (AMP) 
    fall short of achieving the information-theoretic limits, which we compute using the replica method from statistical mechanics. We thus propose a novel AMP, inspired by the theory of Adaptive Thouless-Anderson-Palmer 
    equations, which saturates the theoretical limit. This AMP comes with a rigorous 
    state evolution analysis tracking its performance. 
    Although we focus on specific noise distributions, our methodology can be generalized to a wide class of trace matrix ensembles at the cost 
    of more involved expressions. Finally, despite the seemingly strong assumption of rotation-invariant noise, our theory empirically predicts algorithmic performance on real data, pointing at remarkable universality properties.
\end{abstract}

\newpage

\tableofcontents

\part{Main part}

\section{Introduction}
The success of inference and learning algorithms depends strongly 
on the structure of the high-dimensional noisy data they process. Consequently, quantifying how this structure helps algorithms to overcome the curse of dimensionality has become a central topic in statistics and machine learning.
Classical examples include sparsity in compressed sensing \cite{donoho2006compressed}, low-rank structure in matrix recovery \cite{candes2006exact}, or community structure in community detection \cite{SBM_abbe18}. In all these models, structure is usually assumed only at the signal's level. But the decomposition of the data into ``signal'' (the component considered of interest) and ``noise'' (the rest) is often arbitrary and application-dependent. E.g., in classification of ``dogs/cats'', the training images contain a lot of information unrelated to dogs and cats, e.g., on the notions of ``inside/outside'', ``day/night'', etc. Yet, this highly structured potential source of information is discarded as random noise (independent, Gaussian, etc.). Most of the research effort has thus focused on understanding how the signal structure alone helps inferring it. In contrast, much less is known on the role of the noise structure and how to exploit it to improve inference.

Given their ubiquitous appearance in the statistics literature, spiked matrix models, which were originally formulated as models for probabilistic principal component analysis (PCA) \cite{johnstone2001distribution}, are now a paradigm in high-dimensional inference. Thanks to their universality features they, and their generalizations, find numerous applications in other central problems, including community detection \cite{SBM_abbe18}, group synchronization \cite{perry2018message} and sub-matrix localization or high-dimensional clustering \cite{lesieur2015mmse}. They thus offer the perfect benchmark to quantify the influence of noise structure. In this paper we focus on the following estimation problem: a statistician needs to extract a rank-one matrix (the \emph{spike}) $\mathbf{P}^*:=\mathbf{X}^*\mathbf{X}^{*\intercal}$, $\mathbf{X}^*\in\mathbb{R}^N$, from the data
\begin{align}\label{channel0_main}
    \mathbf{Y}=\frac{\lambda}{N}\mathbf{P}^*+\mathbf{Z}\in\mathbb{R}^{N\times N}
\end{align}
with ``noise'' $\mathbf{Z}$ and \emph{signal-to-noise ratio} (SNR) $\lambda\ge 0$. 

The spectral properties of finite rank perturbations of large random matrices like \eqref{channel0_main} were intensively investigated in random matrix theory (see e.g. 
\cite{baik2005phase,baik2006eigenvalues,benaych2011eigenvalues}), showing the presence of a threshold phenomenon coined BBP transition (in reference to the authors of \cite{baik2005phase}): when $\lambda$ is large enough, the top eigenvalue of $\mathbf{Y}$ detaches from the bulk of eigenvalues. Its corresponding eigenvector has then a non-trivial projection onto the sought ground truth $\mathbf{X}^*$, and can be used as its estimator. The problem has also been approached from the angle of Bayesian inference \cite{korada2009exact,it-PCAMontanari14,XXT,2016arXiv161103888L}. In particular, besides the previous spectral estimator, there exists a whole family of iterative algorithms, known as approximate message passing (AMP), that can be tailored
to take further advantage of prior structural information 
about the signal and noise. AMP algorithms were first proposed for estimation in linear models 
\cite{kabashima2003cdma,Donoho09_Compressed_sensing}, but have since been applied to a range of statistical estimation problems, including generalized linear models 
\cite{barbier2019optimal,rangan2011generalized} and low-rank matrix estimation 
\cite{it-PCAMontanari14,montanari2017estimation}.
An attractive feature of AMP is that its performance in the high-dimensional limit can often be characterized
by a succinct recursion called state evolution \cite{Bayati_Montanari11,bolthausen2014iterative}
. Using the state evolution analysis, it has been proved that AMP achieves Bayes-optimal performance for some models \cite{it-PCAMontanari14,
montanari2017estimation,
barbier2019optimal}, and a conjecture posits that for a wide range of estimation problems, AMP is optimal among polynomial-time algorithms \cite{montanari2022equivalence}.

The references mentioned above rely on the assumption of independent and identically distributed (i.i.d.) noise, often taken Gaussian $Z_{ij}=Z_{ji}\sim \mathcal{N}(0,1)$, under which 
\eqref{channel0_main} is the well-known spiked Wigner model \cite{
johnstone2001distribution}. This independence, or ``absence of structure'', in the noise simplifies greatly the analysis.
In order to relax this property, we may seek inspiration from the statistical physics literature on disordered systems. An idea that was first brought forth in \cite{MultiSK_Contucci,Panchenko_multi13} for the Sherrington-Kirkpatrick model, and later imported also in high dimensional inference \cite{
DBMNL,Guionnet_inhomogeneous_noise} is that of giving an inhomogeneous variance profile to the noise matrix elements (we mention that this idea in inference is similar to the earlier definition of ``spatially coupled systems'' \cite{felstrom1999time,kudekar2011threshold} in coding theory, see \cite{XXT} for its use in the present context). The procedure makes the $(Z_{ij})$ no longer identically distributed, but it leaves them independent. This is an important step towards more structure in the noise. Yet, the independence assumption is a rather strong one. In fact, \cite{Guionnet_inhomogeneous_noise} showed that a broad class of observation models, as long as the independence assumption holds, are information-theoretically equivalent to one with independent Gaussian noise. 

One way to go beyond is to consider noises belonging to the wider class of \emph{rotationally invariant matrices}. Since the appearance of the seminal works \cite{Marinari_ParisiI,Marinari_ParisiII,Parisi_Potters}, there has been a remarkable development in this direction, as evidenced by the rapidly growing number of papers on spin glasses 
\cite{opper2016theory,adatap,maillard2019high}
and inference 
\cite{gerbelot2020asymptotic,fan2022approximate,zhong2021approximate,venkataramanan2022estimation} that 
take into account structured disorder, including the present one. Indeed, we hereby consider a spiked model in which the noise $\mathbf{Z}$ is drawn from an orthogonal matrix ensemble different from the Gaussian orthogonal ensemble (the only one with independent entries). 
Intuitively, the presence of dependencies in the noise should be an advantage for an algorithm sharp enough to see patterns within it and use them to retrieve the sought low-rank matrix. 
Going in that direction, \cite{fan2022approximate} proposed a version of AMP designed for rotationally invariant noises (using earlier ideas of \cite{adatap,opper2016theory}). 
Furthermore, in a recent work \cite{price_of_ignorance22}, part of the authors analysed a Bayes estimator and an AMP, both assuming Gaussian noise, whereas the actual noise in the data was drawn from a generic orthogonal matrix ensemble. However, besides intuition and the mentioned works, to the best of our knowledge there is little theoretical understanding of the true role played by noise structure in spiked matrix estimation and more generically in inference. In particular, prior to our work there was no theoretical prediction of optimal performance to benchmark practical inference algorithms.

\section{Setting and main results}
Our analysis focuses on two types of signal's distributions: the factorized prior $dP_X(\mathbf{x})=\prod_{i\leq N}dP_X(x_i)$ and a uniform prior measure over the $N$-dimensional sphere of radius $\sqrt N$. By convention $\int x^2\,dP_X(x)=1$; which amounts to rescale $\lambda$. The noise matrix $\mathbf{Z}$ is drawn from a trace random matrix ensemble, defined by a certain potential $V:\mathbb{R}\mapsto
\mathbb{R}$. $V$ is extended to matrices as follows: if $\mathbf{A}=\text{diag}(a_1,\dots,a_N)$ then
$V(\bA)=\text{diag}(V(a_1),\dots,V(a_N))$. For real symmetric matrices $\bM=\bU\bA\bU^\intercal$, with $\mathbf{U}$ orthogonal, $V(\mathbf{M})=\bU V(\bA)\bU^\intercal$. With these notations we can write the density of the trace ensemble (with
normalization constant $C_V$) as
\begin{align}\label{Z-ensemble_main}
    dP_Z(\mathbf{Z})= C_V \exp\big(-\frac{N}{2}\Tr V(\mathbf{Z})\big)\prod_{i\leq j}dZ_{ij}\,.
\end{align}
Instances of such ensembles have a spectral decomposition $\mathbf{Z}=\mathbf{O}\mathbf{D}\mathbf{O}^\intercal$, with $\mathbf{O}$ uniformly distributed over $N\times N$ orthogonal matrices. The distribution of the eigenvalues in the diagonal matrix $\mathbf{D}$, which is independent of $\mathbf{O}$, can be explicitly written, see the Supporting Information (SI), Sec. 1.2. Only the special case $V(x)=x^2/(2\sigma)$, corresponding to the Gaussian orthogonal ensemble, induces independent (Gaussian
distributed) matrix entries. Any other potential generates dependencies among matrix elements and thus \emph{structure}. E.g., if we
take $V(x)=x^4/4$, the probability density would be proportional to $\prod\exp(-\frac N 8
Z_{ij}Z_{jk}Z_{kl}Z_{li})$, which is clearly not factorizable over matrix entries.

Analysing the model for a generic potential $V$ is \emph{possible} through the novel methodology presented in this paper. {Indeed, as discussed in Appendix \ref{sec:approx_potential}, this can be done by studying the inference problem whose noise's potential is a polynomial approximation of $V$.} However, if we take a generic polynomial potential $V$, the higher the order, the more technical and cumbersome our derivations become. Therefore, for the sake of clarity, we focus on a concrete example of non-trivial correction to i.i.d.\ noise: 
the quartic matrix potential $V(x)=\mu x^2/2+\gamma x^4/4$, where $\mu$ and $\gamma$ are two non-negative real numbers \cite{quartic_Parisi}.
We could have also considered a non-symmetric potential with a cubic term too, but for simplicity we restrict ourselves to that case as symmetry slightly simplifies the computations.
The noise $\bZ$ drawn from the quartic matrix ensemble has a known $N\to\infty$ asymptotic eigenvalue distribution \cite{potters2020first}
\begin{align}\label{eq:density_main}
    \rho(x)dx=(\mu+2a^2\gamma+\gamma x^2)\sqrt{4a^2-x^2}/(2\pi) \,dx,
\end{align}
where $a^2:=(\sqrt{\mu^2+12\gamma}-\mu)/(6\gamma)$. In order to have a coherent definition of SNR, we also fix $\int x^2d\rho(x)=1$, which implies $$\gamma= \gamma(\mu)=(8-9\mu+\sqrt{64-144\mu+108\mu^2-27\mu^3})/27\,.$$
When $\mu=1, \gamma(1)=0$ and we recover the pure Wigner case. On the contrary, $(\mu=0,\gamma(0)=16/27)$ corresponds to a purely quartic case with unit variance, the ``most structured'' ensemble in this class. 
Therefore, $\mu$ allows us to interpolate between unstructured and structured noise ensembles.

{We emphasize that, despite this model may seem rather academic at first sight, we will see that our main assumption, i.e., the rotational invariance of the noise, turns out to yield a theory which accurately predicts the empirical performance of algorithms for inference of low-rank matrices hidden in noise coming from \emph{real data sets} from various application domains. This is probably a consequence of strong universality properties, yet to be understood from a theoretical perspective, along the lines of \cite{dudeja2022universality2,dudeja2022universality}. We thus argue that our assumptions are in fact rather mild, making our novel inference algorithms relevant for potential future applications.}

We now introduce the Bayesian framework we are going to analyse. Let $\mathbf{P}:=\mathbf{x}\mathbf{x}^{\intercal}$. The posterior measure reads
\begin{align}
    dP_{X\mid Y}(\mathbf{x}\mid \bY)\!=\!\frac{C_V}{P_Y(\bY)}dP_X(\mathbf{x})\exp\big(\!-\frac{N}{2}\Tr V\big(\bY-\frac{\lambda}{N}\mathbf{P}\big)\!\big).\label{posterior_main}
\end{align}
The evidence $P_Y(\bY)$ is simply the integral of the numerator. We stress that the prior $P_X$ and the likelihood $P_{Y\mid X}$ match respectively the distribution of the signal and the noise density $P_Z$, and $\lambda$ is known. Therefore we are in the \emph{Bayes-optimal setting}. Studying the limits of inference in this setting draws a fundamental line between what is information-theoretically possible and what is not in terms of performance of inference.

A main object of interest is the \emph{free entropy}, which is minus the Shannon entropy of the data: $F_N(\bY):= -H(\bY)=\EE\ln P_Y(\bY)$. It is related to the mutual information between signal and data through the identity $I(\bP^*; \bY)
=-F_N(\bY)+\ln C_V-\frac{N}{2}\mathbb{E}\Tr  V(\mathbf{Z})$. {The relevance of the latter is extensively discussed in Sec. \ref{sec:results}.}
Using the form of the observation model in \eqref{channel0_main} it reads
\begin{align}\label{mutual-info_main}
-\text{I}(\bP^*; \bY)&=\mathbb{E}\ln\int dP_X(\mathbf{x})e^{-H_N(\mathbf{x};\mathbf{Z},\bX^*)}=:\mathbb{E}\ln\mathcal{Z}\,,
\end{align}
where the Hamiltonian linked to the partition function $\mathcal{Z}$ is
\begin{align}
    H_N(\mathbf{x};\mathbf{Z},\bX^*):=\frac N2\Tr\big[ V\big(\mathbf{Z}+\frac{\lambda}{N}(\mathbf{P}^*-\mathbf{P})\big)-V(\mathbf{Z})\big]\,.\label{Hamilt_main}
\end{align}
In this way, the problem is mapped onto a statistical mechanics model with ``quenched randomness'' $\mathbf{Z},\bX^*$ and ``spins'' $\bx$ with Gibbs-Boltzmann distribution associated to this Hamiltonian (i.e., the posterior). This Hamiltonian is tricky to directly deal with, so a key point will be to ``convert'' it into a more tractable form, see Sec. \ref{sec:methods} and Sec. \ref{sec:quadraticModel}.

\begin{figure*}[ht!!!]
    \begin{center}                
        \includegraphics[width=0.32\linewidth,trim={0.7cm 0 1cm 1cm},clip]{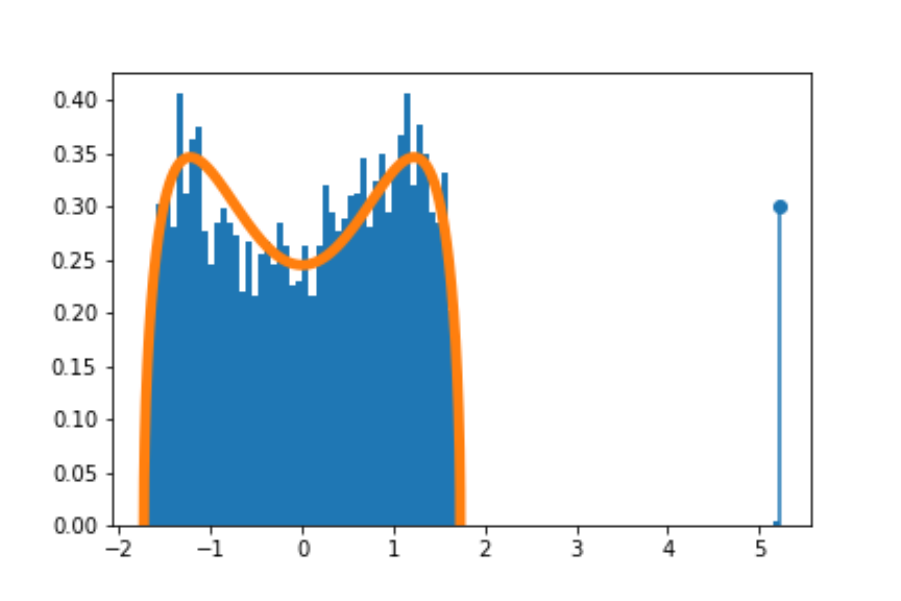}
        \includegraphics[width=0.32\linewidth,trim={1cm  0 1cm 1cm},clip]{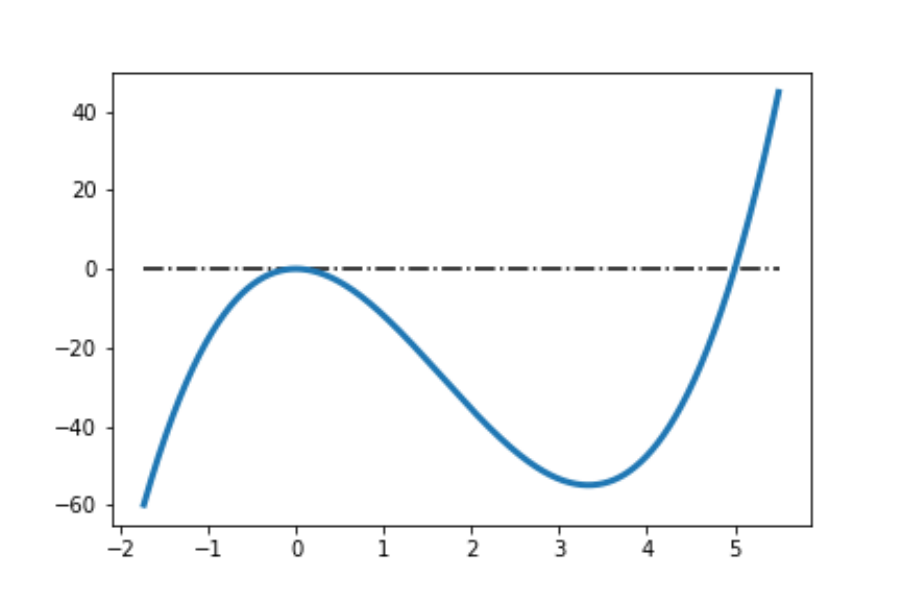}
        \includegraphics[width=0.32\linewidth,trim={0.7cm 0 1cm 1cm},clip]{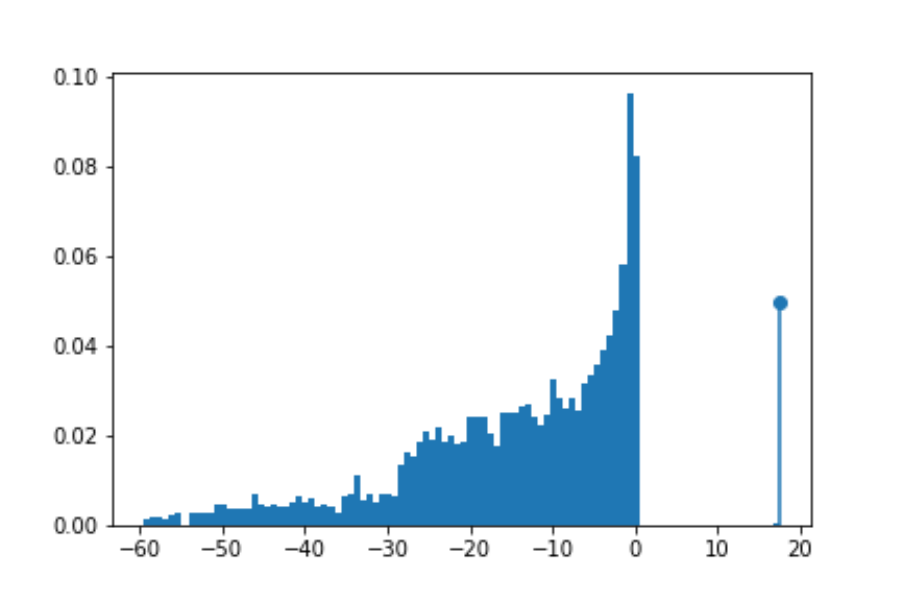}
    \end{center}
     \vspace{-17pt}
\caption{We have set $\mu=0,\gamma(0)=16/27,\lambda=5, N=4000$ and generated one instance of the data model \eqref{channel0_main}. (Left) Histogram of the eigenvalues of $\mathbf{Y}$. The leading eigenvalue is emphasized and the orange curve is the density in \eqref{eq:density_main}. (Middle) Optimal pre-processing function $J(x)=\mu\lambda x-\gamma \lambda^2x^2+\gamma\lambda x^3$. (Right) Histogram of the eigenvalues of $J(\bY)$. The pre-processing $J$ flushes the bulk to the negative axis while pushing \emph{only} the leading eigenvalue even further from the bulk in the positive direction.}\label{fig:spectral2_main}
\vspace{-5pt}
\end{figure*}

\subsection{Result 1: Information-theoretical limits}
Our first result is a variational formula for the mutual information via the celebrated \emph{replica method} \cite{mezard2009information} outlined in Sec. \ref{sec:methods}: if we  let $\btau_*:= {\rm argmax}\{ f_\rho(\btau):\btau \in\mathbb{R}^{13}, \nabla f_\rho(\btau)=\boldsymbol{0}\}$ then we have the following low-dimensional expression for the mutual information between hidden spike and the data:
\begin{equation}\label{var_princ_main}
    \textstyle{\frac{1}{N}I(\bP^*; \bY)\xrightarrow{N\to\infty}-\,f_\rho(\btau_*)\,.}
\end{equation}
The argmax is selected and not the argmin as $f_\rho$ is a free entropy (i.e., \emph{minus} free energy, the free energy being minimized in physics). $f_\rho$ and its derivation are reported in Sec. \ref{replica_comp_section}. The $13$ coupled fixed point equations coming from $\nabla f_\rho=\boldsymbol{0}$ {will reduce to only $2$ (see \eqref{m_replica_simplified}--\eqref{tildeV_eq_BO}) thanks to special symmetries inherent to the Bayes-optimal nature of our analysis}. One of the two remaining order parameters, denoted $m^2$ and called (squared) ``magnetization'', quantifies the asymptotic trace inner product between the minimum mean-square error (MMSE) estimator $\int  dP_{X\mid Y}(\mathbf{x}\mid \bY)\mathbf{x}\mathbf{x}^\intercal $ and the spike $\bX^*\bX^{*\intercal}$. It allows us to compute the MMSE as
\begin{equation}\label{eq:MMSE_main}
    \textstyle{\frac{1}{2N^2}\EE\| \bX^*\bX^{*\intercal}-{\textstyle \int } dP_{X\mid Y}(\mathbf{x}\mid \bY)\mathbf{x}\mathbf{x}^\intercal\|_{\rm F}^2\xrightarrow{N\to\infty}\frac{1-m^2}2\,,}
\end{equation}
with $m$ solving the aforementioned system of equations.

\subsection{Result 2: Optimality of PCA for rotationally invariant priors}
The above results hold for a factorized prior $P_X^{\otimes N}$. Nevertheless, if $\bX^*$ is uniformly distributed on the sphere, a variational formula analogous to \eqref{var_princ_main} can still be derived, as shown Sec. \ref{sec:PCAopt}, and the related MMSE computed. Analytical arguments and numerical experiments show that the latter can be achieved using the naive spectral estimator $C\boldsymbol \nu \boldsymbol \nu^\intercal$ of $\bP^*$ obtained from the principal eigenvector $\boldsymbol \nu=\boldsymbol \nu(\bY)$ of $\bY$ 
properly re-scaled by a certain factor $C(\lambda,\rho)$, see \cite{benaych2011eigenvalues}.

\subsection{Result 3a: Optimal pre-processing of the data}
{
Instead of using an AMP with iterates based on $\mathbf{Y}$, we introduce a pre-processing procedure driven by the AdaTAP formalism 
\cite{adatap}. The end result is an \emph{effective} quadratic model (i.e., with only pairwise interactions) which is ``equivalent'' (in a proper sense described below) to the original one, with coupling matrix 
\begin{align}\label{eq:pre-processed_main}
    J(\bY)=\mu\lambda \bY-\gamma \lambda^2\bY^2+\gamma\lambda\bY^3.
\end{align}
This new model being quadratic is now solvable using AdaTAP/AMP, and possesses the same thermodynamic properties (free entropy, phase transitions, etc.) as well as the same marginal means and variances as the model in \eqref{posterior_main} when $N\to\infty$ (and thus equivalent for our purposes). Therefore, to approximate the MMSE estimator, one can simply ``pre-process'' $\bY$ by applying $J(\bY)$ and then efficiently compute the marginals of the resulting quadratic model by AdaTAP/AMP, see next section. AdaTAP allows to parametrize the free entropy (i.e., log-partition function) of a model with \emph{quadratic} Hamiltonian, for a given instance of the interaction matrix, in terms of $O(N)$ order parameters, some of which correspond to the sought marginal means $(\langle x_i\rangle)_{i\le N}$ and associated variances. The extremization w.r.t. them yields equations that can be solved iteratively and identified with an AMP algorithm. However, the Hamiltonian in \eqref{Hamilt_main} is \emph{not} quadratic in $\mathbf{x}$, but can be made so by fixing certain order parameters as outlined in Sec. \ref{sec:methods}. The resulting coupling matrix depends on $\mathbf{Y}$ and on the fixed order parameters, whose values are constrained by Bayes-optimality (see Sec. \ref{sec:results}). Using these values, for an initial quartic $V(x)=\mu x^2/2+\gamma x^4/4$, we get the above interaction matrix in \eqref{eq:pre-processed_main} (see Sec. \ref{sec:adatapFreeEn} and \ref{statCondADATAP}). }

The ``cleaning effect'' of $J(\bY)$ is illustrated in Fig. \ref{fig:spectral2_main}. {In general, for a $(K+1)$-order polynomial matrix potential, the pre-processed matrix is a polynomial $J(\bY)=\sum_{k\le K}c_k\bY^k$}, with $(c_k)_{k\le K}$ depending on $V$. 
E.g., for $V(x)=\xi x^6/6$ (with $\xi=27/80$ to select unit variance) the pre-processing (derived similarly to the quartic case, see Sec. \ref{sec:preprocessing_sestic}) is $J_6(x)=\xi \lambda x^5-\xi\lambda^2 x^4-\xi\lambda^2 x^2$; it has an effect similar to that in Fig~\ref{fig:spectral2}.
We point out that the statistics of the noise could be only partially known. This issue can be overcome by learning the $(c_k)$ from the data, see Appendix \ref{sec:EM}. 

\begin{figure*}[t!]
        \begin{center}
                \includegraphics[width=0.45\linewidth,
                clip]{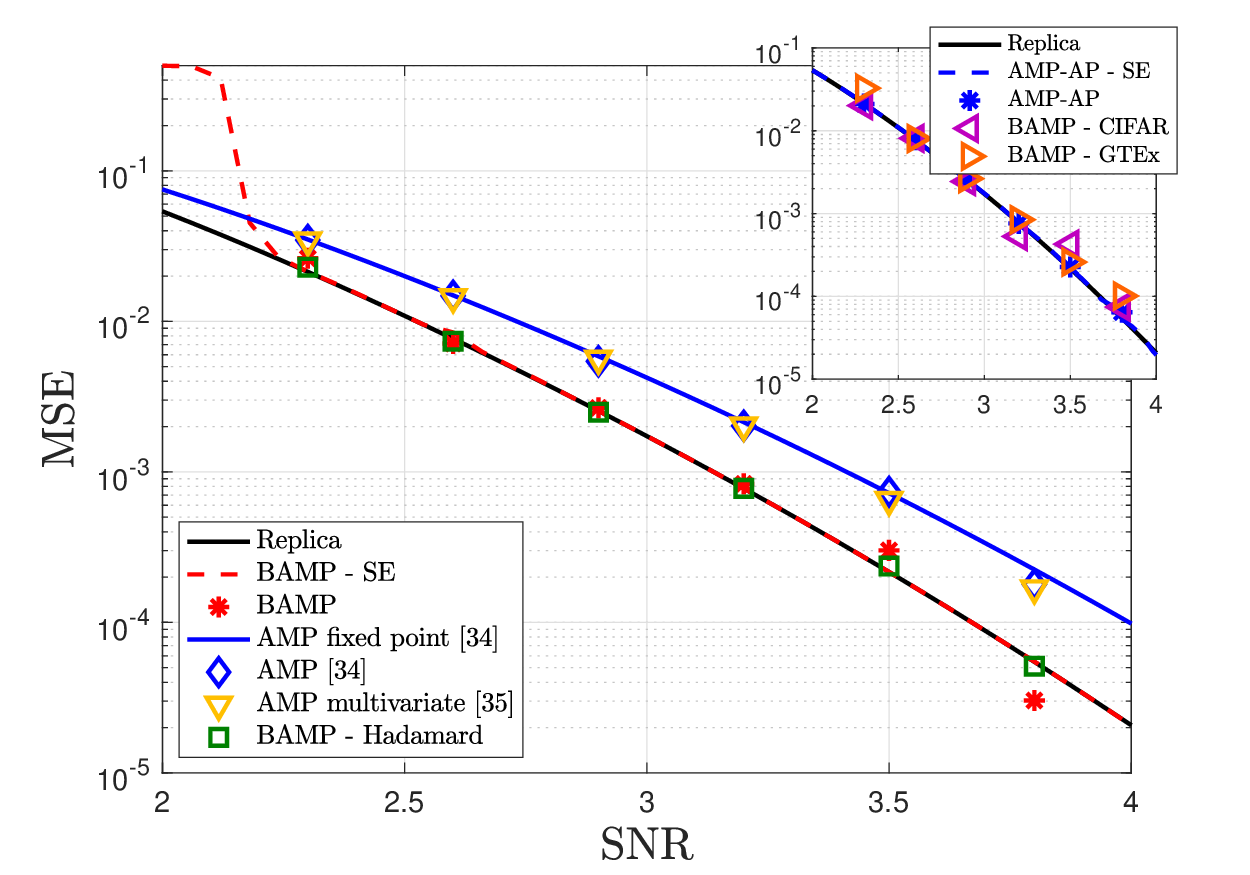}
                \hspace{2em}
                \includegraphics[width=0.45\linewidth,clip]{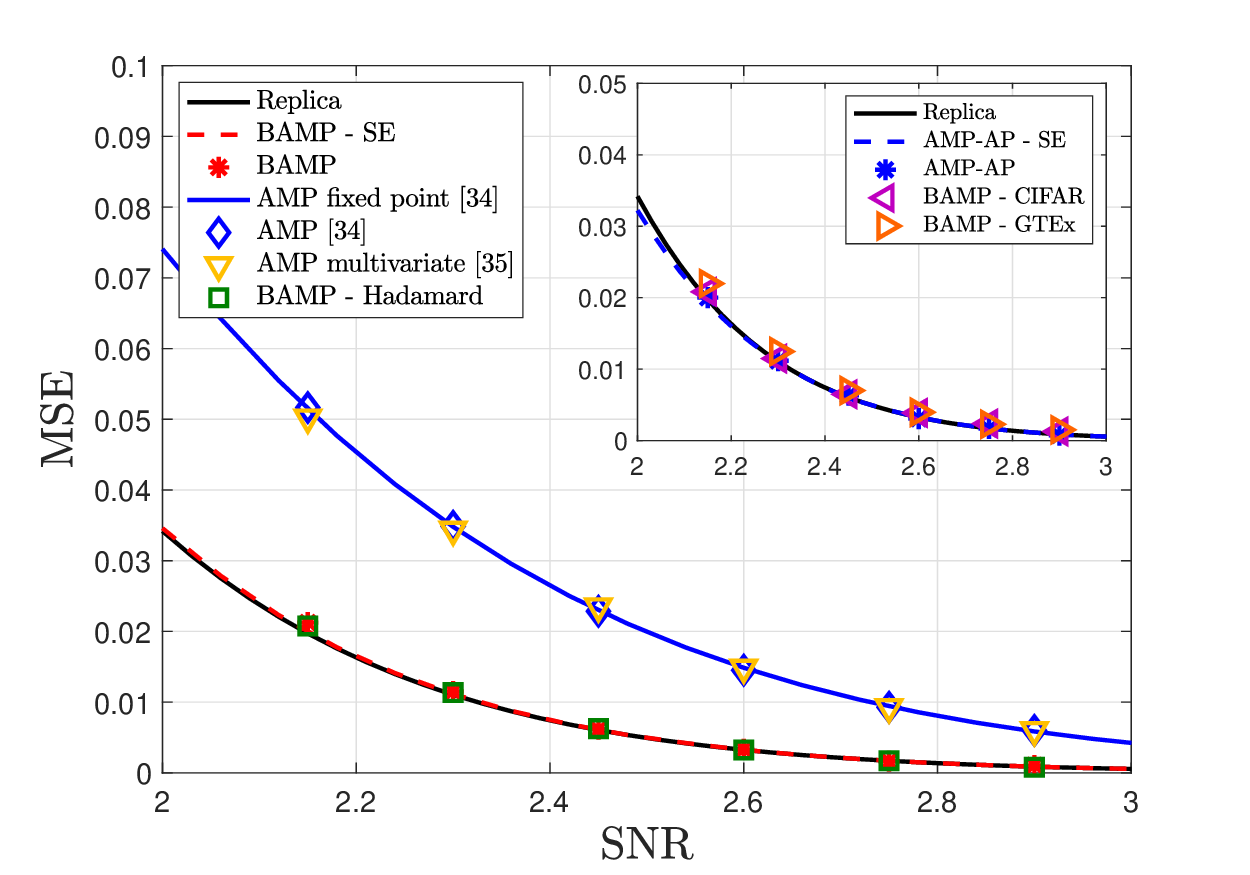}
        \end{center}
        \vspace{-15pt}
\caption{{Quartic potential with $\mu= 0$ (left) and pure power six potential (right). Comparison of the following inference procedures: \emph{(i)} (black) replica prediction of the MMSE, \eqref{eq:MMSE_main}. \emph{(ii)} (red) performance of the BAMP algorithm, where $g_{t+1}$ is the single-iterate posterior mean denoiser $g_{t+1}(f)=\mathbb{E}[X^*\mid F_t=f]$. The red line corresponds to the fixed point of the MSE given by the state evolution recursion, and the red stars denote the MSE obtained by running BAMP \eqref{BAMP_main} with the proper pre-processing. \emph{(iii)} (blue) performance of the AMP proposed in \cite{fan2022approximate}. The blue line corresponds to the fixed point MSE obtained with a single-iterate posterior mean as denoiser, and the blue diamonds denote the MSE obtained by running the AMP of \cite{fan2022approximate} with the same denoiser. \emph{(iv)} (ochre squares) MSE obtained by the AMP of \cite{zhong2021approximate} (without the pre-processing of $\mathbf{Y}$), which employs a full memory posterior mean denoiser: $h_{t+1}(f_1, \ldots, f_t) = \mathbb E[X^*\mid (F_1, \ldots, F_t) = (f_1, \ldots, f_t)]$. Finally, \emph{(v)} (green triangles) performance of BAMP when the uniformly distributed matrix $\bO$ (appearing in the spectral decomposition of the noise $\bZ$) is replaced by the product of the Hadamard-Walsh matrix and a diagonal matrix with i.i.d.\ Rademacher entries as in \cite{dudeja2022universality}. In the smaller plots in the top-right corner, we report the performance of AMP-AP (blue) and of BAMP for our universality experiments involving the CIFAR-10 ``plane'' class (purple) and the ``muscle skeletal'' GTEx dataset (orange).}}\label{fig:BAMP_main}
\vspace{-5pt}
\end{figure*}

\subsection{Result 3b: Bayes-optimal AMP}

{First, we show in Sec. \ref{sec:sub-opt} that existing AMPs \cite{fan2022approximate,zhong2021approximate} do not saturate the MMSE predicted by \eqref{eq:MMSE_main}. We provide a replica-based theory showing that despite these existing AMPs are aware of the noise structure/statistics, they nevertheless make an implicit mismatched assumption of i.i.d. Gaussian noise: the noise structure is ``only'' exploited to enforce convergence despite the mismatch,  rather than as a source of greater statistical accuracy, in contrast to the proposed AMP we explain now.}

{To cure this issue we propose to employ the processed $J(\bY)$ in AMP}, which leads to our Bayes-optimal approximate message passing (BAMP) algorithm defined by the recursion
\begin{align}
    {\textstyle \bff^t = J(\bY) \bu^t - \sum_{i\le t}{\sf c}_{t, i}\bu^i, \quad \bu^{t+1} = g_{t+1}(\bff^t), \quad t\ge 1\,,}\label{BAMP_main}
\end{align}
with $g_{t+1}$ applied component-wise. For simplicity, we assume to have access to an initialization $\bu^1\in \mathbb R^N$ independent of the noise $\bZ$ and with a strictly positive correlation with $\bX^*$, i.e., 
\begin{equation}\label{eq:AMPinit_main}
(\bX^*, \bu^1)\stackrel{\mathclap{W_2}}{\longrightarrow} (X^*, U_1), \ \ \mathbb E[X^* \,U_1]:=\epsilon>0, \ \ \mathbb E[U_1^2]= 1.
\end{equation}
This requirement 
is rather standard in the analysis of AMP algorithms \cite{barbier2019optimal,fan2022approximate,feng2022unifying}. However, as having access to such an initialization is often impractical, recent work \cite{montanari2017estimation,
mondelli2021pca,zhong2021approximate} has designed AMPs initialized with the top eigenvector $\boldsymbol \nu(\bY)$.

By carefully choosing the Onsager coefficients $\{{\sf c}_{t, j}\}_{j
\in [t]}$, we rigorously obtain BAMP's state evolution characterization.

\begin{theorem}[State evolution of BAMP]\label{th:SE_main}
Let $J(\bY)=\sum_{i\le K} c_i \bY^i$. Consider the AMP of \eqref{BAMP_main} initialized as \eqref{eq:AMPinit_main}, with Onsager coefficients $\{{\sf c}_{t, j}\}_{j
\in [t]}$ given in Sec. \ref{subsec:OnsSE}, and where $(g_{t+1})_{t\ge 1}$ are $\mathcal{C}^1$ and Lipschitz. Then, the following limit holds almost surely for any order 2 pseudo-Lipschitz function\footnote{A function $\psi\colon\mathbb R^m\to\mathbb R $ is \emph{pseudo-Lipschitz of order $2$} if there exists a constant $ C>0 $ such that, for all $ \bx,\by\in\mathbb R^m$, $\|\psi(\bx) - \psi(\by)\|_2 \le C(1 + \|\bx\|_2 + \|\by\|_2) \|\bx - \by\|_2$. 
} $\psi: \mathbb R^{2t+2}  \to \mathbb R$ and $t\ge 1$:
\begin{align}\label{eq:conv_main}
    & \frac{1}{N} \sum_{i\le N} \psi(u_i^1, \ldots, u_i^{t+1}, f_i^1, \ldots, f_i^t, X^*_i) \nonumber\\ &\qquad\qquad\xrightarrow{N\to\infty}\mathbb E \,\psi(U_1, \ldots, U_{t+1}, F_1, \ldots, F_t, X^*) \,.
\end{align}
Here, for $i\in [t]$, $U_{i+1}=g_{i+1}(F_t)$ and $(F_1, \ldots, F_t)=\boldsymbol{\mu}_t X^*+(W_1, \ldots, W_t)$, with $(W_i)_{i\le t}$ a multivariate Gaussian vector whose covariance as well as $\boldsymbol{\mu}_t$ are given in Sec. \ref{subsec:OnsSE}.
\end{theorem}

\eqref{eq:conv_main} provides a high-dimensional characterization of our proposed BAMP. A suitable choice of $\psi$ readily gives the MSE of the BAMP iterates. We also note that our result is equivalent to the almost sure convergence in Wasserstein-2 distance of the joint empirical distribution of $(\bu^1, \ldots, \bu^{t+1}, \bff^1, \ldots, \bff^t, \bX^*)$ to 
$(U_1, \ldots, U_{t+1}, F_1, \ldots, F_t, X^*)$, see Corollary 7.21 of \cite{feng2022unifying}.


We emphasize that our BAMP algorithm is \emph{not} the usual AMP of \cite{fan2022approximate}, where the data matrix $\bY$ is just replaced by the pre-processed matrix $J(\bY)$. Indeed, tuning the Onsager coefficients $\{{\sf c}_{t, i}\}$ entering BAMP requires a novel type of ``multi-stage'' state evolution recursion which is completely different from the one in \cite{fan2022approximate}. The novel acronym we introduce stresses this crucial distinction. 
While our replica prediction for the MMSE is non-rigorous, the state evolution analysis of BAMP is rigorous. 
In Sec. \ref{sec:num}, we show that BAMP improves over the AMP in \cite{fan2022approximate} by comparing their fixed points. This improvement is thus a rigorous conclusion, while the conjecture is that BAMP saturates the Bayes-optimal performance.

{Finally, the ``multi-stage'' state evolution of BAMP suggests a choice of the denoisers in the AMP of \cite{fan2022approximate}, which differs from the greedy strategy of \cite{zhong2021approximate} (i.e., picking the full posterior mean denoiser at every iteration). The numerical results of Sec. \ref{sec:num} also show that this denoiser selection --motivated by BAMP-- meets the BAMP performance and, hence, the replica prediction of the Bayes-optimal error.}

\section{Numerical results and discussion}\label{sec:num}

\subsection{BAMP vs the replica prediction}
{The left plot of Fig. \ref{fig:BAMP_main} considers the quartic ensemble for $\mu=0$, and the right one refers to the pure power six potential.} The signal $\bX^*$ has a Rademacher prior $X_i^*\sim \frac12 (\delta_1+\delta_{-1})$. The estimators of the spike $\bX^*\bX^{*\intercal}$ are compared in terms of the MSE achieved at the fixed point, as a function of the SNR $\lambda$. All algorithms are run for $N=8000$, {they are initialized with $\bu^1$ that satisfies \eqref{eq:AMPinit_main}}, and the results are averaged over $50$ trials; the state evolution recursions and the replica prediction are for $N\to\infty$. {In Sec. \ref{sec:7.2} we provide additional numerical results for a \emph{sparse}} Rademacher prior, which display a similar qualitative behavior.

We observe that all algorithms converge rapidly: $10$ iterations are sufficient to reach the corresponding fixed points. A few remarks concerning the results displayed in Fig. \ref{fig:BAMP_main} are now in order. First, in all settings, the fixed point of the BAMP state evolution (red) matches the replica prediction (black). 
This is a strong numerical evidence supporting our conjecture that the proposed BAMP algorithm is Bayes-optimal. These theoretical curves for $N\to\infty$ are also remarkably close to the MSE achieved by the BAMP algorithm at $N=8000$.

Secondly, 
there is a clear performance gap between our proposed BAMP (red) and the existing AMP algorithms \cite{fan2022approximate,zhong2021approximate} (single-step denoiser in blue, and multi-step in ochre). For $V(x)=\xi x^6/6$ the gap is even more evident. As predicted by our theory, the gap is reduced when $\mu$ approaches $1$ with all curves collapsing for $\mu=1$, see 
Sec. \ref{sec:7.2}. 

{Thirdly, we consider a choice of denoisers in the AMP of \cite{fan2022approximate} which is motivated by our BAMP: if the potential has degree $K$, every $K$-th non-linearity is the full memory posterior mean denoiser, and all the other denoisers are chosen to be the identity. The algorithm is dubbed AMP with Alternating Posteriors (AMP-AP), and its connection to BAMP is discussed at the end of Sec. \ref{sec:methods}. As evident from the smaller plots in the top right corner, AMP-AP (blue) matches the performance of BAMP and of the replica prediction as well.}
     
Lastly, BAMP is numerically unstable for low SNR. For 
the quartic potential and $\lambda =2.3$, 5 out of 50 trials 
do not reach the state evolution fixed point (and are thus discarded). Furthermore, BAMP's state evolution detaches from the replica prediction as the SNR gets smaller. Considering an initialization closer to the fixed point mitigates the issue. This instability is likely due to the fact that BAMP's state evolution corresponds to an auxiliary AMP that multiplies the number of iterations, see Sec. \ref{sec:methods}, and which thus amplify errors. 


\subsection{Universality of the rotational invariance assumption} 
We believe that our results apply beyond the rotational invariance assumption to cases where the eigenbasis of the noise is invariant under more restrictive transformations (such as permutations), or even ``quasi deterministic''. This intuition comes from recent works \cite{dudeja2022universality2,dudeja2022universality} showing that, when AMP or its linearized version are used,
the class of rotationally invariant matrices leads to the same performance as a much broader class of matrices (with same spectral density). While the existing literature considers a setting different than ours, 
this still suggests that our predictions should remain true more generally. {To confirm this, we plot 
in Fig. \ref{fig:BAMP_main} 
the performance of BAMP when the uniformly distributed matrix $\bO$ (i.e., the noise $\bZ$ eigenbasis) is replaced by \emph{(i)} the product of the Hadamard-Walsh matrix and a diagonal matrix with i.i.d.\ Rademacher entries, as in 
\cite{dudeja2022universality} (green squares), or \emph{(ii)} the eigenbasis of the covariance matrix for two popular datasets in computer vision and quantitative genetics, i.e., the CIFAR-10 ``plane'' class and the ``muscle skeletal'' GTEx dataset \cite{lonsdale2013genotype} (purple and orange markers, respectively, in the top-right plots). The excellent match 
clearly supports the universality of our predictions. Additional validations are contained in Sec. \ref{sec:7.2}. These results can be understood from the fact that \emph{any} eigenbasis $\bO$ is typical w.r.t. the Haar measure, so for a fixed instance, as long as $\bO$ is sufficiently independent from the eigenvalues, the universality should hold. This suggests that, in practice, our rotational invariance assumption effectively corresponds to assuming decoupling between eigenbasis and eigenvectors.} 

\section{Methods}\label{sec:methods}

\subsection{Outline of the replica computation}
The starting point of the replica method is the ``replica trick'' $$\lim_{N\to\infty}\EE\ln \mathcal{Z}(\mathbf{Y})/N=\lim_{n\to 0}\lim_{N\to\infty} \ln \EE \frac{1}{Nn}\mathcal{Z}^n(\mathbf{Y})$$
that implicitly assumes the commutation of the $n,N$ limits. Another key assumption is to consider $n\in\mathbb{N}$ in the computation and then assume an analytic continuation to $n$ close to $0_+$. 
The expectation is with respect to $\bY$ or equivalently the independent $\bO,\bX^*$; concerning $\mathbf{D}$ we only need that its empirical eigenvalue distribution converges weakly to $\rho$ and that it has asymptotically no outliers. 
When computing $\mathcal{Z}^n$ we get multiple integrals over $(\mathbf{x}_\ell)_{0\leq\ell\leq n}$, with $\mathbf{x}_0\equiv\mathbf{X}^*$, and a sum of $n$ Hamiltonians as in \eqref{Hamilt_main} in the exponential. Expanding the exponent we identify some order parameters: for $1\leq\ell\leq n$,
\begin{align*}
        &v_\ell:=\frac{\Vert\mathbf{x}_\ell\Vert^2}{N},\,\, 
        M_{(k)\ell}:=\frac{\bx_\ell^\intercal \bZ^k\bx_\ell}{N},\,\,\kappa_\ell:=\frac{\bx_\ell^\intercal \bZ\bx_0}{N},\,\, 
        m_\ell:=\frac{\bx_0^\intercal \bx_{\ell}}{N},
\end{align*}
After fixing these using the Fourier representation of the Dirac delta function, the replicated partition function reads
\begin{align*}
    \mathbb{E}\mathcal{Z}^n=\, \mathbb{E}_{\mathbf{Z},\bx_0}\int  \prod_{\ell=1}^n dP_X(\bx_\ell) d \btau_\ell d\hat\btau_\ell \,e^{-H_N(\btau_{\ell},\hat \btau_{\ell},\bx_{\ell};\bx_0,\bZ)},
\end{align*}
where $H_N(\btau,\hat \btau,\bx;\bx_0,\bZ):=Nh(\btau,\hat \btau)  + \bx^\intercal \bJ_1(\btau,\hat \btau,\bZ)\bx+\bx^\intercal\bJ_0(\btau,\hat \btau, \bZ)\bx_0$, and $\btau_\ell:=(v_\ell,M_{(1)\ell},\kappa_\ell,m_{\ell})$ with $\hat \btau_\ell$ being the Fourier conjugate. The definitions of $(h,\bJ_1,\bJ_0)$ can be found in Sec. \ref{sec:quadraticModel}.
This point is crucial as it allows us to write the $n$ Hamiltonians (one per $\mathbf{x}_\ell$) as at most quadratic functions of $\mathbf{x}_\ell$. Due to the quartic nature of the potential, the original $H_N$ would instead have quartic interactions, or higher order ones for polynomial $V$ of degree greater than four. Yet, by identifying the proper order parameters, a similar reduction to effective quadratic Hamiltonians would still be possible.

In $\mathbb{E}\mathcal{Z}^n$ the replicas are coupled in the system only through the expectation over the \emph{quenched} noise, that can be rewritten as an expectation over the Haar distributed noise eigenbasis $\mathbb{E}_\mathbf{O}$. The entire computation then boils down to the evaluation of an \emph{inhomogeneous log-spherical integral} that we introduced and defined as follows: let the matrices $\bC_{\ell\ell'}={\rm diag}((C_{i,\ell\ell'})_{i\le N})$, $\bC_i=(C_{i,\ell\ell'})_{\ell,\ell'\le n}$, and vectors $\bh_\ell=(h_{i,\ell})_{i\le N}$, $\bh_i=(h_{i,\ell})_{\ell\le n}$ all having bounded entries uniformly in $N$. The sequence $(\bh_i\in\mathbb{R}^{n},\bC_i\in\mathbb{R}^{n\times n})_{i\le N}$ is assumed to have an empirical law tending to that of the random variable $(\bh\in\mathbb{R}^{n},\bC\in\mathbb{R}^{n\times n})$. The inhomogeneous log-spherical integral is defined as
\begin{align}
    \mathcal{I}_N\!:=\!\frac1N \ln \mathbb{E}_{\bO} \exp\!\big(\sum_{\ell,\ell'\le n} (\bO\bx_\ell)^\intercal \bC_{\ell\ell'} \bO\bx_{\ell'}\!+\!\sum_{\ell\le n}(\bO\bx_\ell)^\intercal \bh_{\ell}  \big).\label{gene_sphInt_main}
\end{align}
Its limit depends only on the law of $(\mathbf{C},\mathbf{h})$ and on the overlaps $q_{\ell\ell'}:=\frac{1}{N}\bx_\ell^\intercal \bx_{\ell'}$, $\ell\le \ell'$, that we need to fix with additional Dirac deltas in addition to the previous order parameters. We find that $\lim_{N\to\infty}\mathcal{I}_N$ is expressed by a variational formula, see Sec. \ref{sec:sphInt}. This integral is a natural generalization of the standard spherical integral \cite{guionnet2005fourier} and thus may have an interest beyond the present model, in particular in random matrix theory or spin glasses. 

The final ingredient is a \emph{replica symmetric ansatz}, justified by the strong concentration-of-measure effects taking place in the Bayes-optimal setting \cite{nishimori01,barbier_strong_CMP}. It amounts to assume that all order parameters entering the model are independent of the replica index $\ell$. Finally, a saddle point yields an extremization over $\mathbb{R}^{13}$ of an effective action. Eqs. (\ref{var_princ_main}) and (\ref{eq:MMSE_main}) follow directly. 

Concerning the reduction from $13$ to $2$ order parameters (saddle point equations): this is possible thanks to a symmetry arising as a consequence of Bayes rule which is specific to the Bayes-optimal setting, and often called \emph{Nishimori identity}. It allows to ``interchange'' the ground-truth signal $\bX^*$ with a sample $\bx$ from the posterior \eqref{posterior_main} inside joint expectations over the posterior and data, see, e.g., \cite{barbier_strong_CMP}, and as a consequence to automatically fix the value of most order parameters.


\subsection{Auxiliary AMP and Onsager coefficients}
The Onsager coefficients $\{{\sf c}_{t, i}\}_{i\in [t], t\ge 1}$ are designed so that, 
conditioned on the signal, the empirical distribution of the iterate $\bff^t$ is Gaussian, namely $(\bff^1, \ldots, \bff^t)\stackrel{\mathclap{W_2}}{\longrightarrow} (F_1, \ldots, F_t) := \boldsymbol{\mu}_t X^* + \bW_t$, with $\bW_t\sim\normal (0, \boldsymbol{\Sigma}_t)$ for some mean vector $\boldsymbol{\mu}_t$ and covariance matrix $\boldsymbol{\Sigma}_t$. 
For the AMP in \cite{fan2022approximate}, this condition is enforced via the reduction to an \emph{auxiliary} AMP, which also allows to track the iterates of the original algorithm and yields the state evolution parameters, such as $\boldsymbol{\mu}_t$ and $\boldsymbol{\Sigma}_t$ above. This reduction crucially relies on splitting the matrix $\mathbf{Y}$ that multiplies the iterate $\mathbf{u}^t$, into the rank-one signal plus the noise matrix. In contrast, in \eqref{BAMP_main}, the iterate is multiplied by the pre-processed matrix $J(\bY)$, which \emph{cannot} be directly split in a similar fashion. Hence, we track all the contributions $(\mathbf{Y}^k\mathbf{u}^t)_{k\leq K}$, so that we can split them as $\mathbf{Y}^k\mathbf{u}^t=\bY\,\bY^{k-1}\bu^t=\frac\lambda N\,\bX^*\langle\bX^{*},\mathbf{Y}^{k-1}\mathbf{u}^t\rangle+\mathbf{Z}\mathbf{Y}^{k-1}\mathbf{u}^t$.


The key idea is to map the first $T$ iterations of \eqref{BAMP_main} to the first $K \times T$ iterations of an auxiliary AMP with iterates $(\tilde \bz^t, \tilde \bu^t)_{t\in [KT]}$ and denoisers $\{\tilde h_{t+1}\}_{t\in [K T]}$,
\begin{align}\label{eq:auxbd_main}
    \tilde{\bz}^t = \bZ \tilde{\bu}^t - \sum_{i\le t} \bar {\sf b}_{t, i}\tilde{\bu}^i, \ \tilde{\bu}^{t+1} = \tilde{h}_{t+1}(\tilde{\bz}^1, \ldots, \tilde{\bz}^t, \bu^1, \bX^*)\,,
\end{align}
whose state evolution can instead be deduced from \cite{fan2022approximate}. The denoisers $\{\tilde h_{t+1}\}_{t\in [KT]}$ of this multi-stage auxiliary AMP are chosen so that, for $t\in [T]$  and $\ell\in [K]$, 
\begin{align}
    \frac{1}{N}\|\tilde\bu^{K(t-1)+\ell} - \bY^{\ell-1}\bu^t\|_2^2\xrightarrow{N\to\infty}0\,.\label{eq:mapping1_main}
\end{align}
More specifically, for $t\in [T]$  and $\ell\in \{2, \ldots, K\}$, the denoiser $\tilde h_{K(t-1)+\ell}$ giving $\tilde\bu^{K(t-1)+\ell}$ is a linear combination of past iterates $\tilde \bu^1, \ldots, \tilde \bu^{K(t-1)+\ell-1}$ and of $\tilde \bz^{K(t-1)+\ell-1}$; furthermore, the coefficients of these linear combinations are chosen to ensure that $\tilde\bu^{K(t-1)+\ell}\approx \bY^{\ell-1}\bu^t$. Hence, by using \eqref{eq:auxbd_main} with $Kt$ in place of $t$, one gets $(\bY^{\ell}\bu^t)_{\ell\in [K]}$ from $\tilde \bz^{Kt}$ and $(\tilde \bu^{K(t-1)+\ell})_{\ell\in \{2, \ldots, K\}}$ (up to an $o_N(1)$). Thus, $J(\bY) \bu^t$ can be expressed as a linear combination of $(\tilde \bu^1, \ldots, \tilde \bu^{Kt}, \tilde \bz^{Kt})$, which in turn is a linear combination of \emph{(i)} the past iterates $\{\bu^i\}_{i\in [t]}$, \emph{(ii)}
the signal $\bX^*$, plus \emph{(iii)} independent Gaussian noise. By inspecting the coefficients of this linear combination, one deduces \emph{(a)} the Onsager coefficients $\{{\sf c}_{t, i}\}_{i\in [t], t\ge 1}$ (as the coefficients multiplying the past iterates $\{\bu^i\}_{i\in [t]}$), \emph{(b)} the mean $\mu_t$ (as the coefficient multiplying the signal $\bX^*$), and \emph{(c)} the covariance matrix $\boldsymbol{\Sigma}_t$ (as the covariance matrix of the remaining noise terms). Finally, by making $\tilde h_{Kt+1}$ depend on $g_{t+1}$, we enforce that $\tilde\bu^{Kt+1}\approx \bu^{t+1}$. 
The description of the auxiliary AMP is deferred to SI, Appendix \ref{appsubsec:aux}, and its state evolution follows in Appendix \ref{appsubsec:auxAMPSE}. 

{In summary, the derivation of BAMP's Onsager coefficients involves approximating $\{\bY^{k}\bu^t\}_{k\le K-1}$. This suggests an alternative choice of denoisers leading to the algorithm dubbed AMP-AP: for each batch of $K$ iterations, we pick linear denoisers in the first $K-1$ of them, as this allows to construct $\{\bY^{k}\bu^t\}_{k\le K-1}$; then, at the $K$-th iteration, we pick the posterior mean using \emph{all} the past iterates, as this --in principle-- allows to assemble the vectors $\{\bY^{k}\bu^t\}_{k\le K-1}$ to obtain $J(\bY)\bu^t$ as in BAMP. We note that AMP-AP does not require the coefficients of the polynomial $J(\bY)$, but it rather leaves to the posterior mean denoiser to learn them from the data. As such, it provides an efficient alternative to our proposed BAMP.}

\vspace{-3pt}

\vspace{-5pt}

\paragraph{Acknowledgements} {J. Barbier was funded by the European Union (ERC, CHORAL, project number 101039794). Views and opinions expressed are however those of the author(s) only and do not necessarily reflect those of the European Union or the European Research Council. Neither the European Union nor the granting authority can be held responsible for them. M. Mondelli was supported by the 2019 Lopez-Loreta prize. {The authors would like to thank the reviewers for the insightful comments and, in particular, for suggesting the BAMP-inspired denoisers leading to AMP-AP.}}

\paragraph{Codes} The codes used for this work are available \href{https://github.com/fcamilli95/Structured-PCA-}{here}.

\newpage
\part{Supplementary Information}

\section{Introduction, problem setting and main results}\label{sec:intro}

\subsection{Introduction and related works}

Given their ubiquitous appearance in the statistics literature, spiked matrix models, which were originally formulated as probabilistic models for principal component analysis (PCA) \cite{johnstone2001distribution}, are now a paradigm in high dimensional inference. Thanks to their universality features they, and their generalizations, find numerous applications in other central problems such as community detection \cite{SBM_abbe18,kim2017community}, group synchronization \cite{perry2018message,perry2016optimality}, sub-matrix localization or high-dimensional clustering \cite{lesieur2015mmse}; see \cite{lesieur2017constrained,perry2018optimality} for more applications.

In this paper we focus on the following estimation problem: a statistician needs to extract a rank-one matrix (the spike) $\mathbf{X}^*\mathbf{X}^{*\intercal}$, $\mathbf{X}^*\in\mathbb{R}^N$, from the data
\begin{align}\label{channel0}
    \mathbf{Y}=\frac{\lambda}{N}\mathbf{X}^*\mathbf{X}^{*\intercal}+\mathbf{Z}\in\mathbb{R}^{N\times N}
\end{align}
with some additive noise $\mathbf{Z}$. The positive parameter $\lambda$, referred to as signal-to-noise ratio (SNR), sets the strength of the signal with respect to that of the noise.  

The spectral properties of finite rank perturbations of large random matrices like \eqref{channel0} were intensively investigated in random matrix theory \cite{baik2005phase,baik2006eigenvalues,rmt-Peche2006,feral2007largest,capitaine2009largest,Nadakuditi_Jacobi,benaych2011eigenvalues,benaych2012singular,bai2012sample}, showing the presence of a spectral transition often called BBP transition (in reference to the authors of \cite{baik2005phase}): when $\lambda$ is large enough, the top eigenvalue of $\mathbf{Y}$ detaches from the bulk of the eigenvalue distribution. Its corresponding eigenvector has then a non trivial projection onto the sought ground truth $\mathbf{X}^*$, and can be used as its estimator. 

The problem has also been approached from the angle of Bayesian inference. In particular, besides the previous spectral estimator, there exists a whole family of iterative algorithms, known as approximate message passing (AMP), that can be tailored
to take further advantage of prior structural information known about the signal. AMP algorithms were first proposed for estimation in linear models \cite{kabashima2003cdma,Bayati_Montanari12,Bayati_Montanari11,Donoho09_Compressed_sensing,krzakala2012probabilistic,maleki2013asymptotic}, but have since been applied to a range of statistical estimation problems, including generalized linear models \cite{barbier2019optimal,ma2019optimization,maillard2020phase,mondelli2021approximate,rangan2011generalized,schniter2014compressive,sur2019modern} and low-rank matrix estimation \cite{it-PCAMontanari14,Rangan_iterative18,kabashima2016phase,lesieur2017constrained,montanari2017estimation,barbier2020all}.
An attractive feature of AMP is that under
suitable model assumptions, its performance in the high-dimensional limit is precisely characterized
by a succinct deterministic recursion called state evolution \cite{Bayati_Montanari11,bolthausen2014iterative,javanmard2013state}. Using the state evolution analysis, it has been proved that AMP achieves Bayes-optimal performance for some models \cite{it-PCAMontanari14,donoho2013information,montanari2017estimation,barbier2019optimal}, and a conjecture from statistical physics posits that for a wide range of estimation problems, AMP is optimal among polynomial-time algorithms.

The references mentioned above rely on the assumption of Gaussian identically and independently distributed (i.i.d.) noise $Z_{ij}\sim \mathcal{N}(0,1)$, under which the model identified by \eqref{channel0} is the well-known Wigner spiked model \cite{el2018estimation,BarbierMacris2019,alaoui2017finite,johnstone2001distribution}. This independence, or ``absence of structure'', in the noise  has many advantages from the theoretical point of view due to the numerous simplifications it generates.

%
In order to relax this property, we can seek inspiration from the statistical physics literature on disordered systems. An idea that was first brought forth in \cite{MultiSK_Contucci,Panchenko_multi13} for the Sherrington-Kirkpatrick model, and later imported also in high dimensional inference \cite{MSKNL,DBMNL,Guionnet_inhomogeneous_noise}, is that of giving an inhomogeneous variance profile to the noise matrix elements (we mention that this idea in inference is similar to the earlier definition of ``spatially coupled systems'' \cite{felstrom1999time,kudekar2011threshold} in coding theory, see \cite{XXT,XX_long} for its use in the present context). This procedure makes the $(Z_{ij})$ no longer identically distributed, but it leaves them independent. This an important step towards more structure in the noise (and therefore the data). Yet, the independence assumption is a rather strong one. Actually, \cite{Guionnet_inhomogeneous_noise} showed that for a broad class of observation models, as long as the independence assumption holds, the model is information-theoretically equivalent to one with independent Gaussian (possibly inhomogeneous) noise. 

One way to go beyond this last assumption is to consider noises that belong to the wider class of \emph{rotationally invariant matrices}. Since the appearance of the seminal works \cite{Marinari_ParisiI,Marinari_ParisiII,Parisi_Potters}, there has been a remarkable development in this direction, as evidenced by the rapidly growing number of papers on spin glasses \cite{bhattacharya2016high,opper2016theory,adatap,fan2021replica,maillard2019high,Foini_Kurchan_annealed} and inference \cite{benaych2011eigenvalues,benaych2012singular,gabrie2019entropy,gerbelot2020asymptotic,impact_sensing_Ma,takahashi2020macroscopic,fan2022approximate,zhong2021approximate,venkataramanan2022estimation} that try to take into account structured disorder, including the present one. Indeed, we hereby consider a spiked model in which the noise matrix $\mathbf{Z}$ is drawn from an orthogonal matrix ensemble different from the Gaussian orthogonal ensemble (which is the only rotationally invariant ensemble such that the matrix entries are independent). Intuitively, the presence of dependencies in the noise should be exploitable by an algorithm that is sharp enough to see patterns within it and use them to retrieve the sought rank one matrix more efficiently. Going in that direction, in \cite{fan2022approximate} the author proposed a version of AMP designed for rotationally invariant noises (using earlier ideas of \cite{adatap,opper2016theory}) and provided also a rigorous state evolution analysis for it. Furthermore, in a recent work \cite{price_of_ignorance22}, part of the authors performed a rigorous analysis of a Bayes estimator and an AMP, both assuming Gaussian noise, whereas the actual noise in the data was drawn from a generic orthogonal matrix ensemble. However, besides intuition and the mentioned works, to our best knowledge there is little theoretical understanding of the true role played by noise structure in spiked matrix estimation and more generically in inference. In particular, prior to our work there was no theoretical prediction of optimal performance to benchmark practical inference algorithms.

\subsubsection*{Organization.} 

The end of this section 
properly defines the model and the quartic random matrix ensemble we 
consider.
In Section \ref{appendix_spherical_generalized}, we define and analyze an integral dubbed \emph{inhomogeneous spherical integral}, that 
will play an essential role in the analysis. For those interested mainly in the main results this section can be skipped at first reading. Section \ref{sec:replica} contains the core information-theoretic analysis based on the replica method. We also show at the end of it that, for rotationally invariant priors, the spectral estimator is Bayes-optimal in the MMSE sense. Next, in Section \ref{sec:sub-opt}, we analyze both the fixed point performance of the previously proposed AMP for structured PCA \cite{fan2022approximate} and our replica prediction for the MMSE. We deduce that, in general, the AMP in \cite{fan2022approximate} is sub-optimal, and we provide an explanation for why this is the case. Using the theory of adaptive TAP equations \cite{adatap}, Section \ref{sec:AdaTAPtoward} lays the foundations for defining an optimal AMP: the main outcome is an optimal pre-processing polynomial function that depends on the statistical properties of the noise and has to be applied to the data, in order to achieve 
Bayes-optimality. Section \ref{sec:AMP} demonstrates that, by exploiting this pre-processing function, a novel AMP can be written down which \emph{does match} the MMSE predicted by the replica theory. This algorithm comes with a scalar state evolution recursion which rigorously tracks its performance in the limit of large size. The Onsager reaction coefficients of our AMP are different from those in \cite{fan2022approximate} and their calculation, as well as the state evolution analysis, requires new ideas. To highlight these differences and emphasize the match with the replica MMSE, this new algorithm is dubbed Bayes-optimal AMP, or \emph{BAMP}. {Furthermore, the structure of BAMP suggests a choice of the denoisers in the existing AMP which differs from those previously proposed in \cite{fan2022approximate,zhong2021approximate}. We refer to this algorithm as AMP with Alternating Posteriors (AMP-AP), since it alternates linear denoisers to a full posterior mean denoiser using all the previous iterates.  
In the final Section \ref{sec:numerics} we provide a numerical confirmation of our theoretical predictions, and we show that both BAMP and AMP-AP match the replica MMSE.}
Appendix \ref{sec:approx_potential} is dedicated to showing that studying polynomial potentials acting on the eigenvalues on the noise is sufficient in order to study more general ensembles. In Appendix \ref{sec:EM}, we provide expectation-maximization (EM) equations that learn the optimal pre-processing function to be used by BAMP, when noise statistics are not known.
In the last technical Appendix \ref{appsec:BoptAMP}, we gather the proofs of the various results needed to reach the state evolution of our BAMP algorithm.

\subsubsection*{Notations.} 
Bold notations are reserved for vectors and matrices. By default a vector $\bx$ is a column vector, and its transpose $\bx^\intercal$ is therefore a row vector. Thus the usual $L_2$ norm $\|\bx\|^2=\bx^\intercal \bx$ and $\bx\bx^\intercal$ is a rank-one projector. The notation $\bx\stackrel{\mathclap{W_2}}{\longrightarrow} X$ denotes convergence of the empirical distribution of the random vector $\bx$ to the random variable $X$ in Wasserstein-2 distance. Symbol $\propto$ means ``equality up to a constant'' (often, a normalization constant) and $:=$ is an equality by definition. $\Tr$ is the usual trace operator. For a vector $\bx$, the matrix ${\rm diag}(\bx)$ is diagonal with $\bx$ on its diagonal. For a diagonal matrix $\bA$ and a function $F:\mathbb{R}\mapsto \mathbb{R}$ the matrix $F(\bA)$ is diagonal with $F$ applied component-wise to each diagonal entry of $\bA$. A function $F$ applied to a real symmetric $N\times N$ matrix diagonalizable as $\bM=\bU\bA\bU^\intercal$ acts in the standard way: $F(\bM):=\bU F(\bA)\bU^\intercal$. $\EE_A$ is an expectation with respect to the random variable $A$; $\EE$ is an expectation with respect to all random variables entering the ensuing expression. For a function $F$ of one argument we denote $F'$ its derivative. Notations like $i\le N$ always implicitly assume that the index $i$ starts at $1$. Notation $[t]:=\{1,2,\cdots,t\}=\{i\le t\}$. Powers for vectors apply componentwise (this is however \emph{not} the case for matrices). We often compactly write $\EE(\cdots)^2=\EE[(\cdots)^2]\ge (\EE(\cdots))^2$ and similarly for other functions, we denote equivalently $\EE[f(\cdots)]$ and $\EE f(\cdots)$. Matrix $I_N$ is the identity of size $N$.

\subsection{Probabilistic model of PCA with structured noise}
Consider a vector $\mathbf{X}^*=(X_i^*)_{i\leq N}$ whose components are drawn i.i.d. from a given distribution $P_X$ with support bounded uniformly in $N$. Two cases will be considered: the factorized case $$dP_X(\bX^*)=\prod_{i\leq N}dP_X(X^*_i)=\prod_{i\leq N}P_X(X^*_i) dX^*_i,$$ and the case where $dP_X$ is the uniform measure over the $N$-sphere of radius $\sqrt N$. If not specified the first case is assumed. We will always consider priors with unit second moment $$\int dP_X(x)x^2=1.$$ This is just a convention as if one wants to consider a different normalization, it can simply be included through a proper rescaling of the SNR $\lambda$.

The inference task we are interested in is the retrieval of the rank-one ``spike'' $\mathbf{P}^*:=\mathbf{X}^*\mathbf{X}^{*\intercal}$ from the following observed matrix
\begin{align}\label{channel}
    \mathbf{Y}=\frac{\lambda}{N}\mathbf{P}^*+\mathbf{Z},
\end{align}
where $\mathbf{Z}$ is a unknown noise matrix, $\lambda \ge  0$ is the SNR. Whenever $\bZ$ is a Wigner matrix this model corresponds to the usual Wigner spike model. But here we no longer assume that the noise is \emph{unstructured} (namely, has independent entries). More specifically, we will assume that is drawn from a certain orthogonal rotationally invariant random matrix ensemble defined by a potential $V:\mathbb{R}\mapsto \mathbb{R}$ and a density (with normalization constant $C_V$)
\begin{align}\label{Z-ensemble}
    dP_Z(\mathbf{Z})= C_V d\mathbf{Z}\exp\Big(-\frac{N}{2}\Tr V(\mathbf{Z})\Big).
\end{align}
Rotational invariance means that $\bZ$ equals in distribution $\bU^\intercal \bZ\bU$ for any orthogonal matrix $\bU$ (this follows from the trace in the exponent) \cite{potters2020first}. More precisely, when changing variables from matrix $\bZ$ to eigenvalues $\bD$ and eigenbasis $\bO$ via $\bZ=\bO^\intercal \bD\bO$ we have
\begin{align}
    dP_Z(\bD,\bO)= C_V d\bO \,d\bD\exp\Big(-\frac{N}{2}\Tr V(\bD)\Big)\prod_{i<j} |D_i-D_j|.\label{densityOD}
\end{align}
The measure $d\bO$ is the Haar measure, i.e., uniform measure over the orthogonal group $\mathbf{O}(N)$, and the last term coupling all eigenvalues in a pairwise long-range fashion is the Vandermonde determinant.
Note that only the special case $V(x)=x^2/(2\sigma)$ corresponding to the Gaussian orthogonal ensemble induces independent (Gaussian distributed) matrix entries (up to symmetry). Any other potential generates dependencies among matrix elements and thus \emph{structure}. E.g., if we take $V(x)=x^4/4$,
\begin{align}\label{Z-ensemble}
    dP_Z(\mathbf{Z})= C_V d\mathbf{Z}\prod_{i,j,k,l} \exp\Big(-\frac N 8 Z_{ij}Z_{jk}Z_{kl}Z_{li}\Big)
\end{align}
which clearly is not factorizable over matrix entries.

We now introduce the Bayesian framework which we are going to analyse. Let the projector $\mathbf{P}:=\mathbf{x}\mathbf{x}^{\intercal}$. This allows us to write the posterior measure of the inference problem:
\begin{align}
    dP_{X\mid Y}(\mathbf{x}\mid \bY)=\frac{C_V}{P_Y(\bY)}dP_X(\mathbf{x})\exp\Big(-\frac{N}{2}\Tr V\Big(\bY-\frac{\lambda}{N}\mathbf{P}\Big)\Big).\label{posterior}
\end{align}
Because both the prior $P_X(\mathbf{x})$ matches the density of the signal and the likelihood $P_{Y\mid X}$ matches the noise density $P_Z$ and moreover the SNR $\lambda$ is known, the posterior written above is the ``correct'' one and we are in the \emph{Bayesian-optimal setting}. Studying the limits of inference in this setting draws a fundamental line between what is information-theoretically possible and what is not in terms of performance of inference. The evidence reads
\begin{align}
    P_Y(\bY)=C_V\int dP_X(\mathbf{x})\exp\Big(-\frac{N}{2}\Tr V\Big(\bY-\frac{\lambda}{N}\mathbf{P}\Big)\Big).
\end{align} 

One of the main object of interest is the \emph{free entropy} (or minus the \emph{free energy}), which is nothing else than minus the Shannon entropy of the data: 
\begin{align}
F_N(\bY):= -H(\bY)=\EE\ln P_Y(\bY).  \label{freeEntropy}  
\end{align}
Therefore the free entropy is related to the mutual information by an additive constant corresponding to the entropy of the noise, and is therefore simply computed (while the free entropy is not):
\begin{align}
 I(\bP^*; \bY)&=-F_N(\bY)-H(\bY\mid \bX^*)\nn
 &=   -F_N(\bY)-H(\bZ)\nn
 &=-F_N(\bY)-\ln C_V+\frac{N}{2}\mathbb{E}\Tr  V(\mathbf{Z}).
\end{align}
 Using the explicit form of the observation model \eqref{channel} the free entropy reads
\begin{align}\label{free-energy}
        F_N(\mathbf{Y})&=\mathbb{E}\ln\int dP_X(\mathbf{x})\exp\Big(-\frac{N}{2}\Tr\Big[ V\Big(\mathbf{Z}+\frac{\lambda}{N}(\mathbf{P}^*-\mathbf{P})\Big)-V(\mathbf{Z})\Big]\Big)\nn
        &\qquad+\ln C_V-\frac{N}{2}\mathbb{E}\Tr  V(\mathbf{Z}).
\end{align}
We extracted the noise entropy in the second line so that we can isolate the mutual information and to make the argument of the integrated exponential of order $N$. In this way the problem is naturally mapped onto a statistical mechanics model with extensive Hamiltonian given by minus the log-likelihood:
\begin{align}
    H_N(\mathbf{x};\mathbf{Z},\bX^*)&:=\frac N2\Tr\Big[ V\Big(\mathbf{Z}+\frac{\lambda}{N}(\mathbf{P}^*-\mathbf{P})\Big)-V(\mathbf{Z})\Big]\label{Hamilt}.
\end{align}
Indeed, our Hamiltonian can be rewritten as
\begin{align}
    \frac12\Tr(\mathbf{P}^*-\mathbf{P})\int_0^\lambda dt V'\Big(\mathbf{Z}+\frac{t}{N}(\mathbf{P}^*-\mathbf{P})\Big).\nonumber
\end{align}
The difference between the two projectors has only two eigenvalues of order $N$ and the matrix inside the potential derivative has $O(1)$ eigenvalues, hence the previous is of $O(N)$ too. The free entropy is thus directly linked to the expected log-partition function associated to this Hamiltonian:
\begin{align}
  \EE\ln \mathcal{Z}(\bY):=\EE\ln \int dP_X(\mathbf{x})\exp\big(-H_N(\mathbf{x};\mathbf{Z},\bX^*)\big) \label{partFunc} .
\end{align}
The notation $;$ in $H_N(\mathbf{x};\mathbf{Z},\bX^*)$ emphasizes that $\mathbf{Z},\bX^*$ are quenched variables while $\bx$ fluctuates according the Gibbs-Boltzmann distribution associated to this Hamiltonian (i.e., the posterior). The same notation with same meaning for Hamiltonians will be used later on.

 
{
\subsection{Concrete examples: the quartic and sestic ensembles}\label{sec:quartic}
}
Analysing this model for a generic potential $V$ is \emph{possible} through the novel methodology presented in this paper. But as it will become apparent, if we take a generic polynomial potential $V$, the higher the order of this polynomial, the more technical and cumbersome it becomes. So for the sake of pedagogy we focus in the present contribution on a very concrete example of non trivial correction to the i.i.d. noise hypothesis. As a matter of fact, the simplest inference problem with correlated noise elements is that with the quartic matrix potential, {that is,} for two positive real numbers $\mu$ and $\gamma${,}
\begin{align}
    \label{quartic_potential}
    V(x)=\frac{\mu}{2} x^2+\frac{\gamma}{4}x^4.
\end{align}
This was first studied by Br\'ezin et al in \cite{quartic_Parisi} to study the planar approximation of quantum field theories with large internal symmetry groups. We could have also considered a non-symmetric potential with a cubic term too, but for simplicity we restrict ourselves to that case as symmetry will slightly simplify the computations (but there is no barrier to applying our methods to that a more general, possibly non-even, potential).

\begin{figure}[t!] 
\begin{center}
 \includegraphics[width=0.7\linewidth,trim={0 0 0 0},clip]{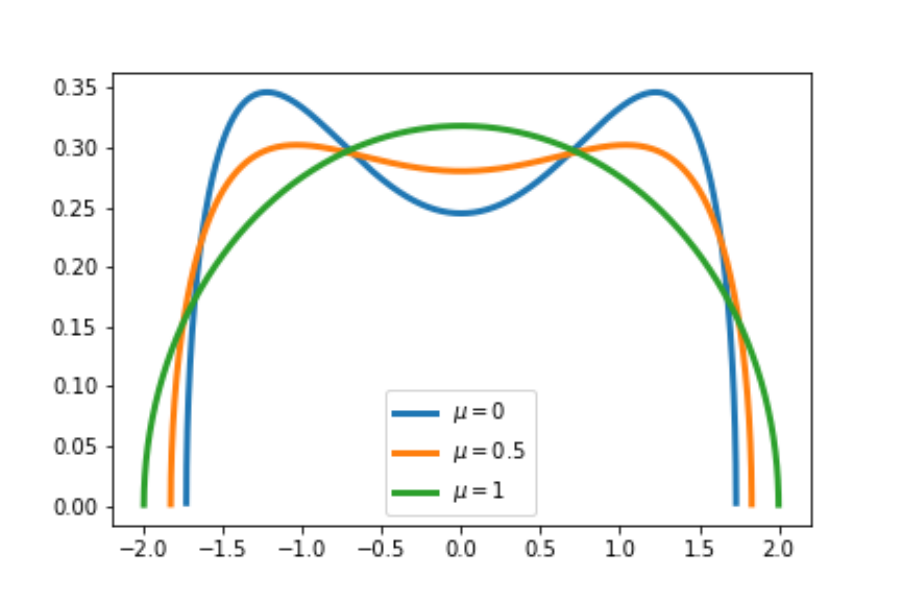}
\caption{Asymptotic spectral density \eqref{densityRho} of the random noise ensemble defined by the potential \eqref{quartic_potential} from less structured (with independent entries) at $(\mu=1, \gamma=0)$, corresponding to the standard semi-circle law, to the more structured $(\mu=0,\gamma=16/27)$ (recall relation \eqref{gamma(mu)}.} \label{fig:density}
\end{center}
\end{figure}

The matrix ensemble defined by \eqref{quartic_potential} has a known Stieltjes transform $\mathcal{S}$ and asymptotic eigenvalue density $\rho$, see, e.g., \cite{potters2020first}: if $\bZ$ is a sequence of matrices of increasing size $N$ drawn from \eqref{Z-ensemble} with the above quartic potential and whose sequence of eigenvalues is $(D_i)_{i\le N}$, then
\begin{align}
    &\frac1N\sum_{i\le N} \delta_{D_i,x}\xrightarrow{N\to\infty} \rho(x)=\frac{1}{2\pi}(\mu+2a^2\gamma+\gamma x^2)\sqrt{4a^2-x^2},\label{densityRho}\\
    &\mathcal {S}(z) = \int \frac{d\rho(x)}{z-x} =\frac{1}2\big(\mu z+\gamma z^3-(\mu+2a^2\gamma+\gamma z^2)\sqrt{z^2-4a^2}\big),
\end{align}
for a $z$ lying outside of the support of $\rho$, and where
\begin{align}
    &a^2:=\frac{\sqrt{\mu^2+12\gamma}-\mu}{6\gamma}.
\end{align}
It is evident that when $\gamma\to 0^+$ one has $a^2\to 1/\mu$
and consequently $\rho(x)\to\rho_{\rm sc}(x)$ the standard semi-circle law, see Figure \ref{fig:density}. In principle the choice of $\gamma$ and $\mu$ is totally free, as long as\footnote{We use implicitly the convexity of the potential, which requires $\mu,\gamma>0$, to obtain the density of eigenvalues \cite{potters2020first}. But we believe that this condition can be relaxed if one can get an associated well-defined asymptotic spectral density and that our analysis would still hold.} $\gamma>0$. However, we are interested in a noise with unit variance in order to be able to make a meaningful comparison with models with unstructured noise. By enforcing this unitarity constraint one finds a relation between $\gamma$ and $\mu$:
\begin{align}
    \gamma= \gamma(\mu)=\frac1{27}\big(8-9\mu+\sqrt{64-144\mu+108\mu^2-27\mu^3}\big).\label{gamma(mu)}
\end{align}
With this choice one can check that $$\int d\rho(x)x^2=1 \quad \mbox{for any} \quad \mu\in[0,1].$$  When $(\mu=1, \gamma(1)=0)$ we recover the pure Wigner case already analyzed in great detail. On the contrary $(\mu=0,\gamma(0)=16/27)$ corresponds to a purely quartic case with unit variance, and to the ``most structured'' ensemble in {parametric class of ensembles.} 
Therefore, $\mu$ can be thought of as a parameter allowing to interpolate between unstructured and structured noise ensembles. Even for this simple family of potentials, as soon as $\mu <1$, neither the Bayes-optimal nor the algorithmic limits of inference are known (except for those of a simple spectral algorithm, see \cite{benaych2011eigenvalues}).

{
As an additional example, we push our analysis further to the sestic matrix potential
\begin{align}\label{eq:sestic_potential}
    V(x)=\frac{\mu}{2}x^2+\frac{\gamma}{4}x^4+\frac{\xi}{6}x^6\,.
\end{align}
As for the quartic ensemble, using the same techniques illustrated in \cite{potters2020first}, we were able to derive the Stieltjes transform $\mathcal{S}_6$ and asymptotic eigenvalue density $\rho_6$ for this ensemble:
\begin{align}
    &\frac1N\sum_{i\le N} \delta_{D_i,x}\xrightarrow{N\to\infty} \rho_6(x)=\frac{\big[
\mu+2a^2\gamma+6a^4\xi+(\gamma +2a^2\xi)x^2+\xi x^4\big]\sqrt{4a^2-x^2}}{2\pi},\label{densityRho_6}\\
    &\mathcal {S}_6(z)  =\frac{1}2\big(\mu z+\gamma z^3+\xi z^5-\big[
\mu+2a^2\gamma+6a^4\xi+(\gamma +2a^4\xi)z^2+\xi z^4
\big]\sqrt{z^4-4a^2}\big),
\end{align}
for a $z$ lying outside of the support of $\rho$, and $a^2$ solving the cubic equation
\begin{align}
    10 \xi x^3+3\gamma x^2 +\mu x-1=0\,.
\end{align}
As we are interested in the most structured and accessible case, we set the coefficients of the lower order monomials to $0$: $\mu=\gamma=0$. In this case, after imposing the constraint $\int d\rho_6(x) x^2=1$, one readily gets
\begin{align}
    \xi=\frac{27}{80}\,,\quad a=\sqrt{\frac{2}{3}}\,.
\end{align}

}

\subsection{Main results}\label{sec:results}

{Our main contributions can be divided in two categories: those on the fundamental, information-theoretic, limitations of inference in structured PCA, and new algorithmic ideas.

Our information-theoretic results boil down to low-dimensional explicit variational formulas for, firstly, the asymptotic limit of the mutual information between the spike and the data. This limit contains the location of the fundamental phase transition in the problem, which corresponds to its non-analytic points (as a function of the signal-to-noise ratio). Such a transition, often called information-theoretic phase transition, defines the limit below which inference is typically poor or even impossible. Secondly, we obtain a formula for the minimum mean-square error (MMSE), which represents the fundamental lower bound on the mean-square error any algorithm, efficient or not, can possibly achieve. Having access to the MMSE then provides a clear benchmark for any practical algorithm. Analytically computing these quantities in the present model was not explored prior to our work. In absence of low-dimensional asymptotic formulas such as those given below, practitioners aiming at approximating them would rely on exact sampling procedures (such as Monte-Carlo Markov Chain) and wait long enough for convergence, which is not guaranteed in reasonable times in certain regions of the phase diagram (often at low signal-to-noise ratio). Thus, in general, only analytical characterizations are able to quantify the MMSE in the whole region of parameters of the problem.}

Complementary to that, we will introduce novel algorithmic ideas allowing to match these Bayes-optimal limits efficiently. Both type of results require conceptual insights and technical advances that we emphasize. We gather here these results and state them informally; we refer to the main sections for precise statements.

\subsubsection*{Information-theoretic results}
\begin{itemize}
    \item Our analysis of the information-theoretic (Bayes-optimal) performance based on the non-rigorous replica method yields first a low-dimensional variational formulation for the free entropy (log-partition function) of the model when $P_X$ is factorized:
\end{itemize}

\begin{result}[Free entropy]
{For the quartic potential,} the free entropy (i.e., minus Shannon entropy of the data) verifies in the limit of large size the following characterization:
\begin{align*}
   \frac1NF_N(\bY)=- \frac1N H(\bY)\xrightarrow{N\to\infty}{\rm extr}\, f_\rho(\btau)
\end{align*}
where $\btau \in\mathbb{R}^{13}$ and for an explicit real-valued function $f_\rho:\mathbb{R}^{13}\mapsto \mathbb{R}$ depending on the noise asympotic spectral density $\rho$. See \eqref{free_energy_RS} for the complete statement. Here and everywhere in the paper ${\rm extr}$ stands for the following ``extremization'' procedure: if $f:\mathbb{R}^{k}\mapsto \mathbb{R}$ then $${\rm extr}\, f(\btau):= f(\btau_*) \ \ \mbox{where}\ \ \btau_*:= \underset{\{\btau \in\mathbb{R}^{k}\,:\,\nabla f(\btau)=\boldsymbol{0}\}}{{\rm argmax}} f(\btau).$$
\end{result}

We will see that despite the apparent mess, the $13$-dimensional system of equations defining $\btau_*$ will reduce to a much simpler $2$-dimensional one (see eqs. \eqref{m_replica_simplified}--\eqref{hatmReplica}) thanks to special symmetries inherent to the Bayes-optimal nature of our analysis, and known as Nishimori identity in physics, which is a simple consequence of Bayes rule \cite{barbier_strong_CMP}:\\

\noindent {\bf Nishimori identity.}\ \ For any bounded function $f$ of the signal $\bX^*$, the data $\bY$ and of conditionally i.i.d. samples from the posterior $\bx^j\sim P_{X\mid Y}(\,\cdot \mid \bY)$, $j=1,2,\ldots,n$, we have that
\begin{align}
    \EE\langle f(\bY,\bX^*,\bx^2,\ldots,\bx^{n})\rangle=\EE\langle f(\bY,\bx^1,\bx^2,\ldots,\bx^{n})\rangle
\end{align}
where the bracket notation $\langle \,\cdot\,\rangle$ is used for the joint expectation over the posterior samples $(\bx^j)_{j\le n}$, $\EE$ is over the signal $\bX^*$ and data $\bY$. \\

The reduction of the replica saddle point equations thanks to this identity is done in Section \ref{sec:RSFP}. As a consequence only two scalar quantities will remain after reduction, one denoted $m$ and called ``magnetization'' quantifying the overlap between the minimum mean-square error (MMSE) estimator and the signal.

\begin{itemize}
    \item From the solution of this variational problem we deduce our second main result, namely, an asympotically exact expression for the minimum mean-square error of inference of the hidden spike with factorized prior:
\end{itemize}

\begin{result}[Minimum mean-square error]
 The minimum mean-square error verifies
\begin{align*}
    \lim_{N\to\infty}\frac{1}{2N^2}\EE\| \bX^*\bX^{*\intercal}-\EE[\bX^*\bX^{*\intercal}\mid \bY]\|_{\rm F}^2=\frac{1}2(1-m^2)
\end{align*}
where $m$ is one component of the solution $\btau_*$ to the variational problem for the free entropy, studied in Section \ref{sec:RSFP}.
\end{result}
{
All the above results hold in an analogous form also for the sestic potential \eqref{eq:sestic_potential} with $\mu=\gamma=0$. The variational principle outlined in Result 1 involves more order parameters, but after the application of the Nishimori identities the saddle point equations can be reduced to only 5, see Section \ref{sec:sestic_replica}. The replica prediction for the MMSE in Result 2 appears still in the same form, with $m$ being again one of these stationary solutions of the saddle point equations.

}

The main technical and conceptual novelties which lead to these formulas are:
\begin{itemize}
    \item To the best of our knowledge, we provide the first adaptation of the replica method to the analysis of the fundamental limits of inference in a model with a \emph{noise} having strongly dependent random entries (instead of a measurement operator, or matrix of covariates, in a regression setting). See Section \ref{sec:replica}.
    \item If the structure of the noise (i.e., its statistical properties) is encoded by a polynomial potential $V$ of order $K+1$, then this induces in the posterior distribution $k$-wise interactions between the signal's estimator entries, for all $k\le K+1$. Said differently, the underlying factor graph is an hypergraph with hyperedges of degrees $K+1,K,\ldots,1$. However, we discovered that by exploiting the low-rank structure of the signal, all these interactions can be reduced to effective pair-wise interations. This allows to reduce the model to an Ising model more convenient for theoretical analysis (a similar reduction is useful for algorithmic approaches too, see next section). The reduction we propose is general and systematic for low-rank signals corrupted by rotational invariant noise matrices. See Section \ref{sec:quadraticModel}.
    \item Our analysis can be mainstreamed once we have identified a key integral that we refer to as the \emph{inhomogeneous spherical integral}. This exactly solvable integral is a generalization of the standard low-rank spherical integral appearing in random matrix theory (as it is related to the R-transform) \cite{potters2020first}, in spin-glasses \cite{Parisi_Potters,adatap,opper2016theory,fan2021replica,barbier2021marginals}, the theory of large-deviations for matrix-valued stochastic processes \cite{guionnet2005fourier,guionnet2002large} and matrix models in high-energy physics \cite{itzykson1980planar,kazakov2000solvable,guionnet2004first}. Given the breadth of applications of this integral, we foresee that the generalization we propose and analyze in Section \ref{appendix_spherical_generalized} may have applications well beyond the present setting, for the study of models where rotationally invariant matrices with non-independent matrices appear.
    \item Another important conclusion from our analysis is the fact that for signals $\bX^*$ whose law is  rotation-invariant (such as Gaussian or uniformly spherically distributed), the simple spectral PCA procedure of \cite{benaych2011eigenvalues} is Bayes-optimal: 
\end{itemize}

\begin{result}[Optimality of spectral PCA for rotation-invariant priors]
    Let $\bX^*$ be a standard Gaussian vector or uniformly sampled on the sphere of radius $\sqrt N$. Then its inference from $\bY$ can be optimally achieved from the naive spectral algorithm that constructs an estimator $C\boldsymbol \nu \boldsymbol \nu^\intercal$ of $\bP^*$ from the eigenvector $\boldsymbol \nu=\boldsymbol \nu(\bY)$ of $\bY$ with leading eigenvalue $\lambda_{\rm max}$ and that is then properly rescaled by a certain factor $C=C(\lambda,\rho)$, see \cite{benaych2011eigenvalues}.
\end{result}
    This is verified both by the replica method and an exact computation based on Gaussian integration and a saddle point method, see Section \ref{sec:PCAopt}. We remark that this statement is incorrect for other priors $P_X$.

\subsubsection*{Algorithmic results}

On the algorithmic side our contributions are the following:

\begin{itemize}
    \item We analytically show that the existing Approximate Message Passing algorithms \cite{fan2022approximate,zhong2021approximate}, whose iterates are based on the data matrix $\bY$, do not saturate the Bayes-optimal performance predicted by our replica theory. See Section \ref{sec:sub-opt}.
    \item We employ in Section \eqref{sec:AdaTAPtoward} the AdaTAP formalism of Opper et al \cite{adatap} to analyze the model from the algorithmic perspective. What the analysis shows is that, like in the replica method, one can reduce the model with interactions of order higher than two to a pure quadratic Ising model with an effective interaction matrix $\bJ(\bY)$ which is a non-trivial matrix polynomial of the data $\bY$. This explains the reason why the previously proposed AMP algorithms are sub-optimal: the data $\bY$ is \emph{not} the best choice of matrix to use in the AMP iterates, despite being the most natural one. The Bayes-optimal choice is instead $\bJ(\bY)$ obtained from our theory, which cannot be guessed a-priori. We informally state this fact as one of our main results:
\end{itemize}

\begin{result}[Bayesian-optimal processing of data and optimal AMP]
Consider the matrix estimation model under structured noise \eqref{channel}. Given the observed matrix of data $\bY$, the optimal choice of matrix to use in a Bayesian inference algorithm such as AMP is \emph{not} $\bY$ but instead a proper polynomial of it, i.e., $\bJ(\bY)=\sum_{k\le K}c_k\bY^k$, with coefficients $(c_k)_{k\in [K]}$ depending on $V$. For example, when the potential $V$ is given by \eqref{quartic_potential} we show in Sections \ref{sec:adatapFreeEn} and \ref{statCondADATAP} that the optimal choice is
\begin{align*}
    \bJ(\bY)=\mu\lambda \bY-\gamma \lambda^2\bY^2+\gamma\lambda\bY^3.
\end{align*}
Employing this matrix in the AMP iterates leads to a Bayesian-optimal inference algorithm whose complexity scales as the dimension $N$, see the result below.
In Section \ref{sec:preprocessing_sestic}, with the same techniques, we also derive the optimal pre-processing in the case of a {pure sestic potential} $V(x)=\xi x^6/6$: $J_6(x)=\xi \lambda x^5-\xi\lambda^2 x^4-\xi\lambda^2 x^2$.
\end{result}
\begin{itemize}
    \item After having defined the Bayesian-optimal AMP recursion, we provide a rigorous state evolution recursion to track its asymptotic performance. We highlight that, since the data matrix $\bY$ is replaced by the polynomial $\bJ(\bY)$, we cannot apply the state evolution result of \cite{fan2022approximate}. More specifically, the Onsager correction terms will have a different form than the ones of \cite{fan2022approximate}, and their derivation requires a novel analysis.
\end{itemize}

\begin{result}[State evolution of the Bayes-optimal AMP (BAMP)]
    Consider the Bayesian-optimal Approximate Message Passing (BAMP) algorithm defined by the recursion
    \begin{align}
           \bff^t = \bJ(\bY) \bu^t - \sum_{i\le t}{\sf c}_{t, i}\bu^i, \quad \bu^{t+1} = g_{t+1}(\bff^t), \quad t\ge 1.\label{BAMP}
    \end{align}
    When a proper choice of coefficients $\{{\sf c}_{t, j}\}_{j
\in [t]}$ is considered, for a large family of functions $(g_t)_{t\ge 1}$ and $\psi$, the following holds almost surely:
\begin{align*}
& \lim_{N \to \infty} \frac{1}{N} \sum_{i\le N} \psi(u_i^1, \ldots, u_i^{t+1}, f_i^1, \ldots, f_i^t, X^*_i) = \mathbb E \,\psi(U_1, \ldots, U_{t+1}, F_1, \ldots, F_t, X^*) .
\end{align*}
Equivalently the joint empirical distribution over the $N$ rows of the $N\times (2t+2)$ matrix $(\bu^1, \ldots, \bu^{t+1}, \bff^1, \ldots, \bff^t, \bX^*)$ converges in a certain sense to the $(2t+2)$-dimensional random vector $(U_1, \ldots, U_{t+1}, F_1, \ldots, F_t, X^*)$ when $N$ increases. Here $$U_{i+1}=g_{i+1}(F_t) \ \ \mbox{and} \ \ (F_1, \ldots, F_t)=(\mu_1,\ldots,\mu_t) X^*+(W_1,\ldots, W_t)$$ with $(W_i)_{i\le t}$ a multivariate Gaussian vector whose covariance as well as $(\mu_i)_{i\le t}$ can be computed via a deterministic state evolution recursion. 
\end{result}

The precise rigorous statement can be found in Section \ref{sec:AMP}. The idea of the argument is to construct an auxiliary AMP which tracks the quantities $(\bY^{j-1}\bu^t)_{t\ge 1, j\le K-1}$. By decomposing the iterates of this auxiliary AMP into a component aligned with previous iterates, a component in the direction of the signal and independent Gaussian noise, we obtain the form of the Onsager correction and the state evolution. From this result we can rigorously predict the performance of the novel AMP algorithm we propose. The optimality of the pre-processed matrix $\bJ(\bY)$ and associated AMP is then confirmed by the perfect matching of the fixed point of the state evolution recursion tracking the AMP mean-square error and our replica prediction for the MMSE. 

Some important remarks are in order. First, we emphasize that the BAMP algorithm \eqref{BAMP} we propose is \emph{not} the usual AMP of \cite{fan2022approximate} where the data matrix $\bY$ is just replaced by the pre-processed matrix $\bJ(\bY)$. Indeed, the correct Onsager coefficients $\{{\sf c}_{t, i}\}$ entering BAMP require a novel type of ``multi-stage'' state evolution recursion which is completely different from the one in \cite{fan2022approximate}, see Section \ref{sec:AMP}. The novel acronym we introduce emphasizes that crucial distinction.

Secondly, it is true that our replica prediction for the MMSE is non-rigorous. However, our state evolution analysis of BAMP is fully rigorous (just like the analysis of the AMP in \cite{fan2022approximate}). By comparing their asymptotic fixed point performance by state evolution in Section \ref{sec:numerics}, we show that BAMP improves over the AMP in \cite{fan2022approximate}. This improvement is thus a rigorous conclusion, while the conjecture is that, thanks to this improvement, BAMP saturates the Bayes-optimal performance.

{Finally, the ``multi-stage'' state evolution of BAMP suggests a choice of the denoisers in the AMP of \cite{fan2022approximate}, which differs e.g. from the greedy strategy of \cite{zhong2021approximate} picking the full posterior mean denoiser at every iteration. The numerical results of Section \ref{sec:numerics} also show that this denoiser selection -- motivated by BAMP -- meets the BAMP performance and, hence, the replica prediction of the Bayes-optimal error.}

{
\paragraph{Codes} A repository with the codes used in the present work
can be found \href{https://github.com/fcamilli95/Structured-PCA-}{here}.
}

\subsubsection*{Comments on the potential universality of our results} 

We comment the hypotheses under which our results are conjectured valid, and then extrapolate on the more general settings in which the results may still hold. 

We start with a remark concerning the insensitivity of our results to the ``statistical details'' of the noise eigenvalues. Let us precise the hypotheses on the distribution of the noise, in particular on its eigenvalues, under which our results are conjectured valid. As seen from \eqref{densityOD} the eigenvalues of the noise are strongly dependent due to the Vandermonde determinant. However, we conjecture that all our results still hold if one considers instead a simpler ensemble where the $N$ eigenvalues are drawn i.i.d. from $\rho(x)dx$, see \eqref{densityRho}. The reason is that all the analysis and results rely only on the weak convergence of the empirical density of eigenvalues of the ensemble under consideration towards $\rho$. Hence, as long as this is the case, our results must hold, even if we do not rigorously prove it. To formally show it, from now on we consider that the diagonal matrix $\bD$ of eigenvalues of the noise is \emph{deterministic} with the sole constraint that the empirical density of its diagonal entries converges towards $\rho(x)dx$. This of course includes as special cases the two aforementioned settings (i.i.d. and coupled by Vandermonde determinant). We therefore work in this paper under the following hypothesis.

\begin{hypothesis}[Distribution of the noise]\label{hyp-density}
The noise $\mathbb{R}^{N\times N}\ni\bZ=\bO^\intercal \bD\bO$ in model \eqref{channel} is a symmetric rotationally invariant matrix, namely, it is equal in law to $\bU^\intercal \bZ\bU$ for any orthogonal matrix $\bU\in \mathbb{O}(N)$ (the group of $N\times N$ orthogonal matrices). Equivalently, $\bO$ is drawn from the Haar (uniform) measure over $\mathbb{O}(N)$. Moreover, we only require for its (possibly deterministic) eigenvalues $(D_i)_{i\le N}$ that their empirical law $N^{-1}\sum_{i\le N}\delta_{D_i,x}$ is tending weakly as $N\to\infty$ to a probability measure with support bounded uniformly in $N$ and with density $\rho$ with respect to the Lebesgue measure. As mentioned earlier, for the purpose of having a uniform measure of SNR when tuning $(\mu,\gamma(\mu))$ we will consider cases where $\int d\rho(x) x^2=1$  despite this is not necessary for the analysis to hold.
\end{hypothesis}

A second remark concerns the rotational invariance of the noise. We believe that our results may extend beyond this hypothesis to cases where the noise eigenbasis may be invariant under more restrictive transformations (such as permutation invariant), or even ``almost deterministic''. This intuition comes from a very recent line of work concerning linear regression and phase retrieval with structured matrices of covariates. Indeed, the authors of \cite{dudeja2022universality2,dudeja2022universality,dudeja2022spectral} show that in this context, the class of rotationally invariant matrices leads to the same performance as a much broader class of almost deterministic matrices (with the same spectral density), also when AMP or its linearized version are used as inference algorithm. This is a different setting from the one we consider, since in our setup the structured matrix is the noise, but it nevertheless suggests that our predictions should remain true more generically. The confirmation of this universality is left for future work.

\subsubsection*{What is conjectured exact, and what is rigorous} 

We end this section with a remark concerning the level of rigor of our derivations. Most of our results are based on non-rigorous but well established methods from the statistical mechanics of mean-field disordered systems, in particular the replica method at the replica symmetric level, and the theory of Anderson-Thouless-Palmer equations. For a general background on these techniques we refer to \cite{mezard2009information,opper2001advanced,montanari2022short}. It is important to keep in mind that despite being non-rigorous, the results obtained from these techniques are conjectured \emph{exact} in the present setting of \emph{Bayesian-optimal inference} (or equivalently, statistical mechanical models living on their \emph{Nishimori line} \cite{nishimori01}), in the asymptotic large size limit $N\to\infty$. 

This widely admitted asymptotic exactness, first proved for the Sherrington-Kirkpatrick model \cite{guerra2003broken,talagrand2006parisi,panchenko2013sherrington}, spreads in numerous fields and in particular in the analysis of high-dimensional inference. In this context a plethora of rigorous results confirm the validity of replica predictions \cite{barbier2016proof,el2018estimation,BarbierMacris2019,barbier_strong_CMP,barbier2019overlap,krzakala_limits20,Barbier_IMAIAI21}. In particular, replica symmetric formulas for the free entropy, mutual information and minimum mean-square error have been systematically proved thanks to a combination of concentration techniques specifically adapted to the context of inference  \cite{barbier2019overlap,barbier_strong_CMP} together with rigorous versions of the cavity method \cite{Talagrand2011spina,panchenko2013sherrington,coja2018information}, (adaptive) interpolation techniques \cite{guerra2002thermodynamic,BarbierM17a,BarbierMacris2019} or Hamilton-Jacobi approaches \cite{mourrat2020hamilton,chen2021statistical,chen2021limiting}. From this fastly growing literature, we conjecture that it is only a matter of time before our replica-based predictions are proven. 

Concerning our algorithmic results on the novel approximate message passing we propose (BAMP), the results are completely rigorous; full proofs are provided as appendix. They are based on the theory of message passing algorithms and associated state evolution recursions \cite{Bayati_Montanari11}, in particular the most recent results for structured matrices as considered here \cite{fan2022approximate,zhong2021approximate}.

\section{The inhomogeneous spherical integral}\label{appendix_spherical_generalized}
In this section we derive the expression of a useful general integral that will play a crucial role along the whole analysis, and that we believe may have an interest on its own. For the reader interested in the information-theoretic and algorithmic analyses directly, this section can be skipped at first reading as only its main results \eqref{INasympto}, \eqref{varFormIN} and \eqref{SphIntRS_2} will be used in the rest.

Indices $\ell,\ell'\le n$ will always indicate the ``replica dimension'' (with $n$ which always remains finite), while $i,j,k\le N$ index the ``spin dimension'' (where $N$ will diverge). 

\subsection{Definition and variational characterization}\label{sec:sphInt}

Let $\bO\sim {\rm Haar}(\mathbb{O}(N))$ be drawn from the Haar measure over the orthogonal group of $N\times N$ matrices. Consider a fixed matrix $\bx\in \mathbb{R}^{N\times n}$ with rows $\bx_i\in\mathbb{R}^{n}$, $i\le N$, and columns $\bx_\ell\in\mathbb{R}^N$, $\ell\le n$. Assume it has the column-wise overlap structure 
\begin{align}
    \bx_\ell^\intercal \bx_{\ell'}= Nq_{\ell\ell'}, \qquad \ell,\ell'\le n.
\end{align}
We let $\bq=(q_{\ell\ell'})_{\ell,\ell'\le n}:=N^{-1}\bx^\intercal \bx$.  Every vector is considered a column vector, so, e.g., $(\bO\bx)_i$ is a $n$-dimensional column-vector corresponding to the transpose of the $i$th row of the $N\times n$ matrix $\bO\bx$, while $(\bO\bx)_i^\intercal$ is a row-vector.  

Let the matrices $\bC_{\ell\ell'}={\rm diag}((C_{i,\ell\ell'})_{i\le N})$, $\bC_i=(C_{i,\ell\ell'})_{\ell,\ell'\le n}$, and the ``external fields'' $\bh_\ell=(h_{i,\ell})_{i\le N}$, $\bh_i=(h_{i,\ell})_{\ell\le n}$ all having entries bounded uniformly in $N$. The sequence $(\bh_i\in\mathbb{R}^{n},\bC_i\in\mathbb{R}^{n\times n})_{i\le N}$ is assumed to have an empirical law tending to that of the random $(\bh\in\mathbb{R}^{n},\bC\in\mathbb{R}^{n\times n})$: for any continuous bounded function $f:\mathbb{R}^{n\times n}\times\mathbb{R}^{n} \mapsto \mathbb{R}^k$ with $k$ independent of $N$, 
\begin{align*}
    \frac1N \sum_{i\le N} f(\bC_i,\bh_i) \xrightarrow{N\to\infty} \EE f(\bC,\bh).
\end{align*}

We denote by $\mathbb{R}\ni I_N=I_N(\bq,(\bC_{\ell\ell'})_{\ell,\ell'\le n}, (\bh_\ell)_{\ell\le n})=I_N(\bq,(\bC_{i},\bh_i)_{i\le N})$ the generalized low-rank spherical integral, which is defined as
\begin{align}
    I_N&:=\frac1N \ln \mathbb{E}_{\bO} \exp\sum_{i\le N}\big( (\bO\bx)_i^\intercal \bC_{i} (\bO\bx)_i+(\bO\bx)_i^\intercal \bh_{i}  \big)\nonumber\\
    &\,\,=\frac1N \ln \mathbb{E}_{\bO} \exp\Big(\sum_{\ell,\ell'\le n} (\bO\bx_\ell)^\intercal \bC_{\ell\ell'} \bO\bx_{\ell'}+\sum_{\ell\le n}(\bO\bx_\ell)^\intercal \bh_{\ell}  \Big)\nonumber\\
    &\,\,=\frac1N \ln \mathbb{E}_{\bO} \exp\Big(\sum_{i,j,k\le N}\sum_{\ell,\ell'\le n}O_{ij}O_{ik} x_{j,\ell'}x_{k,\ell} C_{i,\ell\ell'}+\sum_{i,j\le N}\sum_{\ell\le n}O_{ij} x_{j,\ell} h_{i,\ell}\Big).\label{gene_sphInt}
\end{align}
Calling the columns $(\bx_\ell)_{\ell \le n}$ ``replicas'', the matrices $(\bC_i)_{i\le N}$, $(\bC_{\ell\ell'})_{\ell,\ell'\le n}$ are coupling them (after the replicas have been jointly rotated by the random $\bO$). Therefore we call them ``replica coupling matrices''.

As $N\to \infty$ with $n$ fixed this integral is given by
\begin{align}
    I_N\xrightarrow{N\to\infty} I_{\bC,\bh}(\bq),\label{INasympto}
\end{align}
with variational formula
\begin{align}
    I_{\bC,\bh}(\bq)&:=\frac12 {\rm extr}_{\tilde \bq} \big( \Tr\bq\tilde\bq+\EE\bh^\intercal (\tilde \bq-2\bC)^{-1}\bh-\EE\ln \det (\tilde \bq-2\bC)\big)\nonumber \\
    &\qquad \qquad-\frac{1}2(n+\ln\det \bq) .\label{varFormIN}
\end{align}
The extremum is over symmetric matrices such $\tilde \bq - 2\bC$ is positive definite for all $\bC$ living on its domain.

We remark that it may be the case that the extremum over $\tilde \bq$ is actually attained on the boundary of the optimization domain, in which case the optimization requires more care than what is done in \eqref{35} to solve it (as \eqref{35} assumes the extremum to lie inside the optimization domain). This is however not expected in the settings of the present paper. When this phenomenon happens, in the standard low-rank spherical integral this leads to a ``sticking phenomenon'' where the solution of the optimization is dependent on the maximum eigenvalue of the full-rank random matrix entering the integral's definition, see \cite{guionnet2005fourier}. 

\subsection{Special cases}

\subsubsection{Low-rank HCIZ integral}\label{sec:standardHCIZ}
The special case $\bh_i=\boldsymbol{0}$ and replica coupling matrices $\bC_i=\bC D_i$ for $i\le N$ corresponds to the standard rank-$n$ spherical (or HCIZ) integral: 
\begin{align*}
    I_N&=\frac1N \ln \mathbb{E}_{\bO} \exp\sum_{i,j,k\le N}\sum_{\ell,\ell'\le n}O_{ij}O_{ik} x_{j,\ell'}x_{k,\ell} C_{\ell\ell'} D_i \\
      &=\frac1N \ln \mathbb{E}_{\bO} \exp\sum_{\ell,\ell' \le n} C_{\ell\ell'} (\bO\bx_\ell)^\intercal\bD \bO\bx_{\ell'}\\
    &=\frac1N \ln \mathbb{E}_{\bO} \exp\Tr\,\bO^\intercal \bD\bO (\bx\bC\bx^\intercal),
\end{align*}
where $\bD={\rm diag}((D_i)_{i\le N})$ and $\bx\bC\bx^\intercal$ is an arbitrary rank-$n$ symmetric matrix (arbitrary given that $\bx$ and $\bC$ are so). Its asymptotic expression can also be obtained from the results of \cite{guionnet2005fourier} after diagonalizing $\bx\bC\bx^\intercal$ and depends only on the limit of the empirical distribution of $(D_i)$ and on the $n$ non-zero eigenvalues of $\bx\bC\bx^\intercal$.

\subsubsection{Low-rank spherical integral with external field and diagonal replica coupling}
Taking diagonal replica coupling matrices $\bC_i= I_n D_i/2$ gives (a generalization of) the spherical integral with external field found in [Prop. 2.7, \cite{fan2021replica}]:
\begin{align}
    I_N=\frac1N \ln \mathbb{E}_{\bO} \exp\sum_{\ell\le n}\Big(\frac12(\bO\bx_\ell)^\intercal  \bD \bO\bx_\ell + (\bO\bx_\ell)^\intercal \bh_{\ell} \Big).
\end{align}

\subsubsection{Low-rank spherical integral with non-diagonal replica coupling and replica symmetric overlap}\label{Low-rank_int_for_replicas}
Let the $N\times N$ diagonal matrices $\bA={\rm diag}((A_i)_{i\le N})$ and similarly for $\bB$. The empirical law of $(A_i,B_i)_{i\le N}$ tends to that of $(A,B)$. Of particular interest to us corresponds to taking $\ell\in\{0,\dots,n\}$, $\bh_i=\boldsymbol{0}$ and replica coupling matrices with only non-zero entries being 
\begin{align}
(\bC_i)_{\ell0}=(\bC_i)_{0\ell}=\frac{A_i}2 \ \ \mbox{for} \ \ 1\le \ell\le n, \qquad (\bC_i)_{\ell\ell}=\frac{B_i}2(1-\delta_{\ell,0}),\label{C_RS}    
\end{align}
or equivalently, 
\begin{align}
    \bC_{0\ell}=\bC_{\ell0}=\frac{\bA}2 \ \  \mbox{and} \ \ \bC_{\ell\ell}=\frac{\bB}2  \ \ \mbox{for} \ \ 1\le \ell\le n,\qquad \bC_{\ell\ell'}=\boldsymbol{0} \ \ \mbox{else}.
\end{align}
Note that this is \emph{not} a special case of the standard rank-$n$ spherical integral of the first example: here $\bC_i$ cannot be written as $\bC$ times a function of $i$; instead different entries of $\bC_i$ vary with $i$ differently. In this case the generalized spherical integral reads (the sum over $\ell$ below starts at $\ell=1$)
\begin{align}
    I_N&=\frac1N \ln \mathbb{E}_{\bO} \exp\sum_{\ell\le n} \Big((\bO \bx_{0})^\intercal \bA \bO \bx_{\ell}+\frac{1}2(\bO \bx_{\ell})^\intercal \bB\bO \bx_{\ell} \Big).\label{gene_sphInt_3}
\end{align}
So the $0$th replica plays here a special role (it corresponds to the planted signal). 

We consider a ``replica symmetric structure'' for the overlap matrix parametrized by the vector $(v_0,  v, m,q )\in\mathbb{R}^{4}$:
\begin{align}
    \bq=\begin{pmatrix}
    v_0   &m     &m&m&\dots&m\\
    m     &v     &q&q&\dots&q\\
    m     &q     &v&q&\dots&q\\
    m     &q     &q&v&\dots&q\\
    \vdots&\vdots&\vdots&\vdots&\ddots&\vdots\\
    m&q&q&q&\dots&v
    \end{pmatrix}\in \mathbb{R}^{(n+1)\times (n+1)},
\end{align}
and, coherently, we assume that the extremum over $\tilde \bq$ is attained for a matrix having the same structure with different constants $(\tilde v_0, \tilde v, \tilde m,\tilde q )$. Its determinant can be easily computed via Gauss' reduction: 
\begin{align}
    \ln\det\bq
    &=\ln v_0+n\ln(v-q)+\ln\Big(
    1+n\frac{v_0q-m^2}{v_0(v-q)}\Big).\label{lndetq}
\end{align}
We also need to compute 
\begin{align}
    \Tr \bq\tilde \bq= v_0\tilde v_0 + n (2 m\tilde m + v\tilde v+(n-1)q\tilde q).
\end{align}
Letting $\bC$ be defined as \eqref{C_RS} but with the random variables $A,B$ replacing $A_i,B_i$, the last missing term is obtained similarly as \eqref{lndetq}: under the replica symmetric stucture for $\tilde \bq$,
\begin{align*}
    &\EE\ln \det (\tilde \bq-2\bC)
    =\ln \tilde v_0+n\ln(\tilde v-B-\tilde q)+\EE\ln\Big(1+n\frac{\tilde v_0\tilde q-(\tilde m-A)^2}{\tilde v_0(\tilde v-B-\tilde q)}
    \Big).
\end{align*} 
Combining everything in the variational formula \eqref{varFormIN}, and taking into account that $\mathbf{q}$ here is a $(n+1)\times(n+1)$ matrix, we obtain the following expression for the generalized spherical integral with replica coupling \eqref{C_RS}, and under a replica symmetric structure for the overlap and conjugate matrices (thus the upperscript): 
\begin{align}
    &I_N\to I_{A,B}^{\rm RS}(\bq):= \frac12 {\rm extr}_{(\tilde v_0,\tilde v,\tilde m,\tilde q)} \Big\{ v_0\tilde v_0-\ln \tilde v_0 + n (2 m\tilde m + v\tilde v+(n-1)q\tilde q)\nonumber \\
    &\qquad\qquad-n\EE\ln(\tilde v-B-\tilde q)-\EE\ln\Big(1+n\frac{\tilde v_0\tilde q-(\tilde m-A)^2}{\tilde v_0(\tilde v-B-\tilde q)}\Big)\Big\} \nonumber\\
    &\qquad\qquad-\frac{1+\ln v_0}2-\frac n2\big(1+\ln(v-q)\big)-\frac12\ln\Big(
    1+n\frac{v_0q-m^2}{v_0(v-q)}\Big).\label{SphIntRS}
\end{align}
By definition \eqref{gene_sphInt_3} of $I_N$ this formula has to cancel when $n=0$. Thus
\begin{align}
    {\rm extr}_{\tilde v_0}\big\{ v_0\tilde v_0 - \ln \tilde v_0\big\}-1-\ln v_0=0.
\end{align}
The saddle point equation over $\tilde v_0$ then yields $\tilde v_0=1/v_0$, in which case this latter formula indeed cancels. So the simplified formula reads
\begin{align}
    &I_{A,B}^{\rm RS}(\bq)= \frac12 {\rm extr}_{(\tilde v,\tilde m,\tilde q)} \Big\{ n (2 m\tilde m + v\tilde v+(n-1)q\tilde q)-n\EE\ln(\tilde v-B-\tilde q)\nonumber \\
    &\qquad\qquad\qquad-\EE\ln\Big(1+n\frac{\tilde q-v_0(\tilde m-A)^2}{\tilde v-B-\tilde q}\Big)\Big\} -\frac n2\big(1+\ln(v-q)\big)\nn
    &\qquad\qquad\qquad-\frac12\ln\Big(
    1+n\frac{v_0q-m^2}{v_0(v-q)}\Big).\label{SphIntRS_2}
\end{align}

\subsection{Derivation of the variational formula}
Let $\tilde{\bx}_\ell:=\bO\bx_\ell$ the columns of $\tilde \bx=\bO\bx$. Under the law of $\bO$ at fixed $\bx$, these random vectors are uniform among all vectors having the overlap structure of $(\bx_\ell)$. Thus their law conditional on $\bx$ is just a function of the symmetric overlap $\bq=(q_{\ell\ell'})$:
\begin{align*}
    &P(\tilde \bx \mid \bx )=P(\tilde \bx \mid \bq )=\frac{ 1 }{\mathcal{Z}(\bq)}\prod_{\ell\ge \ell'}^{1,n} \delta(Nq_{\ell\ell'}- \tilde \bx_\ell^\intercal \tilde \bx_{\ell'})=\frac{ 1}{\mathcal{Z}(\bq)} \delta(N\bq- \tilde \bx^\intercal \tilde \bx) 
\end{align*}
with normalization 
\begin{align*}
    \mathcal{Z}(\bq)=\int  d\tilde \bx \, \delta(N\bq- \tilde \bx^\intercal \tilde \bx).
\end{align*}
Using the Fourier representation of the Delta function, the integral to compute reads (below $\tilde \bq$ is a $n\times n$ symmetric matrix with complex entries)
\begin{align*}
   &\exp(N I_N)=\frac1{ \mathcal{Z}(\bq)}\int  d\tilde \bx\, \delta(N\bq- \tilde \bx^\intercal \tilde \bx)  \exp\sum_{i\le N}\big(  \tilde \bx_i^\intercal \bC_{i} \tilde \bx_i+\tilde \bx_i^\intercal\bh_i  \big) \nonumber\\
   &\quad =\frac1{ \mathcal{Z}(\bq)}\int  d\tilde \bx  d\tilde \bq \exp\Big(\frac N2\Tr\bq\tilde\bq- \frac12\Tr\tilde \bx^\intercal \tilde \bx\tilde\bq+\sum_{i\le N}\big(  \tilde \bx_i^\intercal \bC_{i} \tilde \bx_i+\tilde \bx_i^\intercal\bh_i  \big)\Big) \nonumber\\
   &\quad=\frac1{ \mathcal{Z}(\bq)}  \int d\tilde \bq \exp\Big(\frac N2\Tr\bq\tilde\bq\Big) \prod_{i\le N}\int  d\tilde\bx_i \exp\Big(-\frac12  \tilde \bx_i^\intercal (\tilde \bq-2\bC_{i}) \tilde \bx_i+\tilde \bx_i^\intercal\bh_i  \Big).
   \end{align*}
   We will soon evaluate the $\tilde \bq$-integral by saddle-point approximation. We now assume that the dominating saddle-point belongs to a set $$D_\epsilon:=\{\tilde \bq \in \mathbb{R}^{n\times n} : \tilde \bq-2\bC\succ \epsilon I_n  \ \mbox{for all $\bC$ living on its domain}\},$$ for some arbitrarily small $\epsilon>0$ but independent of $N$. Thus restricting the integral to this domain yields a sub-leading correction $\exp o(N)$. For $\tilde \bq\in D_\epsilon$ a Gaussian integration over $\tilde \bx$ is possible: $\exp(N I_N)$ equals
  \begin{align*}
   &\frac{(2\pi)^{Nn/2}e^{o(N)}}{ \mathcal{Z}(\bq)}  \int_{D_\epsilon} d\tilde \bq \exp \frac N2\frac1{N}\sum_{i\le N}\big( \Tr\bq\tilde\bq+\bh_i^\intercal (\tilde \bq-2\bC_{i})^{-1}\bh_i-\ln \det(\tilde \bq-2\bC_{i})\big)\nonumber\\
   &\quad =\frac{(2\pi)^{Nn/2}e^{o(N)}}{ \mathcal{Z}(\bq)}  \int_{D_\epsilon} d\tilde \bq \exp\Big\{\frac N2 \EE\big( \Tr\bq\tilde\bq+\bh^\intercal (\tilde \bq-2\bC)^{-1}\bh-\ln \det(\tilde \bq-2\bC)\big)\Big\}.
\end{align*}
We used the convergence of the empirical law of the sequence $(\bC_i,\bh_i)_i$ to turn the above empirical mean into a statistical expectation over $(\bC,\bh)$, including the correction in the $\exp o(N)$; this is possible because over $D_\epsilon$ the summand is a bounded continuous function of $(\bC_i,\bh_i)$. As $N$ diverges at fixed $n$ we can estimate the integral by saddle-point and reach that the generalized spherical integral is
\begin{align}
  I_N\to \frac12 {\rm extr}_{\tilde \bq} &\big( \Tr\bq\tilde\bq+\EE\bh^\intercal (\tilde \bq-2\bC)^{-1}\bh-\EE\ln\det (\tilde \bq-2\bC)\big) \nonumber \\
  &-\frac12{\rm extr}_{\tilde \bq} \big( \Tr\bq\tilde\bq-\ln \det \tilde \bq\big)
\end{align}
where the term $-\ln\mathcal{Z}(\bq)/N$ from the normalization has been obtained by simply setting $\bC$ and $\bh$ to all-zeros in the first optimization problem. The extremum is over $n\times n$ symmetric matrices $\tilde \bq$ such $\tilde \bq - 2\bC$ is positive definite for all $\bC$ on its domain.

\emph{Assuming that the extremum is attained inside the optimization domain} we can perform the extremization using $\ln \det \bA=\Tr\ln \bA$. The extremum is solution of the matrix equation
\begin{align}
 \bq=\EE \bh^\intercal(\tilde \bq-2\bC)^{-2}\bh+\EE(\tilde \bq-2\bC)^{-1}.\label{35}
\end{align}
The second extremization leads instead to $\tilde \bq=\bq^{-1}$. Thus the result. 

\section{Information-theoretic analysis by the replica \\method}\label{sec:replica}

Let us start with a remark. Express the noise $\bZ=\bO^\intercal\mathbf{D}\bO$ in terms of its random Haar distributed basis $\bO$ and eigenvalues $\bD$, so that the observation model becomes
\begin{align}
    \mathbf{Y}=\frac{\lambda}{N}\mathbf{P}^*+\bO^\intercal\mathbf{D}\bO.\label{diagModel}
\end{align}
When the signal is rotationally invariant we can consider the noise diagonal right away by absorbing $\bO$ into $\mathbf{x},\mathbf{X}^*$. If the law $P_X$ is uniform on the sphere, then this joint rotation does not change the distribution of $\mathbf{x},\mathbf{X}^*$ which greatly simplifies the analysis. In this simpler case, the replica method is not needed as the computation of the free entropy can be carried out simply using a saddle point method. We provide this analysis in Section \ref{sec:PCAopt}. The rotational invariance of the Gaussian law implies that also that case could be treated similarly by direct computation. On the contrary, for other priors than spherical or Gaussian this is no longer possible and the replica method is needed.

In order to deal with such non-rotational invariant priors we are going to adapt an approach developed by Kabashima in \cite{kabashima2008inference,takahashi2020macroscopic} to study certain inference models where rotational invariant random matrices appear as quenched disorder. The main difference compared to the works is the fact that because they consider (generalized) linear regression, the structured matrix plays the role of covariates/data and therefore does not influence the form of the likelihood when writing the posterior. A novelty of the present setting is the fact that because the structured matrix is now the noise itself, the likelihood is a function of its statistics which in turn complicates the analysis.

The goal here is to compute the log-partition function \eqref{partFunc} using the replica trick
\begin{align}\label{replicatrick}
  \lim_{N\to\infty}\frac1N\EE\ln \mathcal{Z}=   \lim_{N\to\infty}\frac1N\lim_{n\to 0}\partial_n \ln \EE \mathcal{Z}^n=\lim_{n\to 0}\partial_n \lim_{N\to\infty}\frac1N\ln \EE \mathcal{Z}^n. 
\end{align}
The expectation is with respect to $\bY$ or equivalently the independent $\bO,\bx_0$ (recall $\bD$ is deterministic). The last equality \emph{assumes} the commutation of the two limits. Another key assumption of the method is that we are going to make the computation considering $n\in\mathbb{N}$ and then assume an analytic continuation to $n$ in a small neighborhood of $0$. Before doing all that we are going to first re-express our model in a form more convenient for analysis.

\subsection{An equivalent quadratic model}\label{sec:quadraticModel}

The Hamiltonian \eqref{Hamilt} of the model can be written in a more convenient way by introducing the following shorthand notations for order parameters. Despite at the moment only vector $\bx$ has been introduced, soon a family of vectors $(\bx_\ell)$ will be introduced when ``replicating'' the system. So we directly introduce the order parameters for these:
\begin{align}
    &v_\ell=v(\mathbf{x}_\ell):=\frac{1}{N}\Vert\mathbf{x}_\ell\Vert^2,\\
    &M_{(k)\ell}=M_{(k)}(\mathbf{x}_\ell,\bZ):=\frac{1}{N}\bx_\ell^\intercal \bZ^k\bx_\ell,\label{22}\\
    &\kappa_\ell=\kappa({\mathbf{x}}_\ell,\mathbf{x}_0,\bZ):=\frac{1}{N}\bx_\ell^\intercal \bZ\bx_0,\label{kappa}\\
    &q_{\ell\ell'}=q({\mathbf{x}}_\ell,{\mathbf{x}}_{\ell'}):=\frac{1}{N}\bx_\ell^\intercal \bx_{\ell'},\label{defq}
\end{align}
where the replica indices $0\leq\ell,\ell'\leq n$ with the identification $\bx_0:=\bX^*$. 

We now treat the quadratic and quartic part of the matrix potential separately. Let us denote 
$$\Delta:=\frac1N(\bP^*-\bP), \qquad M_\ell:=M_{(1)\ell}.$$ The quadratic part yields a contribution:
\begin{align}
    \frac{N}{4}\Tr[ (\bZ+\lambda \Delta)^2-\bZ^2]
&=\frac{1}{2}\Big[\lambda(\bx_0^\intercal\bZ\bx_0-\bx^\intercal\bZ\bx )+N\lambda^2\Big(\frac1{2N^2}(\|{\mathbf{x}}_0\|^4+\|{\mathbf{x}}\|^4)-q_{01}^2\Big)\Big]\nn
    &=-\frac{N\lambda}{2}M_{1}+\frac{N\lambda^2}{2}\Big(\frac12(v_0^2+v_1^2 )-q_{01}^2\Big)+o(N).
\end{align}
The subscript $1$ indicates that only one replica $\bx_1:=\bx$ is involved yet, and by convention it is replica number one. We used that by the law of large numbers, and thanks to the symmetry of the chosen matrix potential, we can assert that
$$M_{(2k+1)0}=o_N(1), \qquad M_{(2)0}= 1+o_N(1)$$ due to our choice of normalization, so in particular $M_0=M_{(1)0}=o_N(1)$. Again by the law of large numbers we have
$$v_0=\EE (X^{*}_1)^2 + o_N(1)=1+ o_N(1).$$

The quartic contribution is more complicated due to the non-commutativity of matrices:
\begin{align}
    &\frac{N}{8}\Tr[ (\mathbf{Z}+\lambda\Delta)^4-\mathbf{Z}^4]\nn
    &\quad=\frac{N}{8}\Tr[\lambda^4\Delta^4+4\lambda^3\bZ\Delta^3+4\lambda^2\bZ^2\Delta^2+4\lambda\bZ^3\Delta +2\lambda^2\bZ\Delta\bZ\Delta]\nn
    &\quad=\frac{N}{8}\Big[\lambda^4(2q_{01}^4+v_1^4+1-4q_{01}^2(v_1 ^2+1-v_1))\nn
    &\qquad\quad+4\lambda^3(M_{0}(1-q_{01}^2) -M_{1}(v_1 ^2-q_{01}^2)+2q_{01}(v_1 -1)\kappa_1)\nn
    &\qquad\quad+4\lambda^2 \Big(M_{(2)0}+v_1\frac1N \bx^\intercal \bZ^2\bx  -2q_{01}\frac1N\bx^\intercal\bZ^2\bx_0 \Big)+4\lambda \Big(M_{(3)0}-\frac1N\bx^\intercal \bZ^3\bx\Big)\nn
    &\qquad\quad+2\lambda^2 (M_{0}^2 + M_{1}^2-2\kappa_1^2)\Big].
\end{align}
Note that the only three terms which we did not write in a compact form using order parameters are linear and quadratic forms in $\bx$ that do \emph{not} appear elsewhere to a power greater than $1$. This is because introducing order parameters for these would add useless redundancy in the final equations (but it is necessary for the other order parameters due to powers of them appearing in the Hamiltonian). Let
\begin{align}\label{f}
    f_\ell&= f(q_{0\ell},v_\ell,M_\ell,\kappa_\ell):=\gamma\frac{\lambda^4}8\Big(2 q_{0\ell}^4+ v_\ell^4-4 q_{0\ell}^2( v_\ell^2+1- v_\ell)\Big)-\gamma\frac {\lambda^3}2 M_\ell( v_\ell ^2- q_{0\ell}^2)\nn
    &\qquad + \gamma\lambda^3q_{0\ell}(v_\ell - 1)\kappa_\ell  + \gamma\frac{\lambda^2} 4  M_\ell^2- \gamma\frac{\lambda^2}2  \kappa_\ell^2+\mu\frac{\lambda^2 }{2}\Big(\frac12v_\ell^2 -q_{0\ell}^2\Big)-\mu\frac{\lambda}{2}{M}_\ell .
\end{align}
Plugging the contributions we computed into \eqref{Hamilt} shows that the Hamiltonian is equivalently written as
\begin{align}\label{Hamiltonian}
    H_N(\mathbf{x};\mathbf{Z},{\mathbf{x}}_0)&= N f_1+  \gamma\frac\lambda2 \bx^\intercal(\lambda  v_1 \mathbf{Z}^2 -\mathbf{Z}^3)\bx - \gamma q_{01}\lambda^2 \bx^\intercal\mathbf{Z}^2\bx_0+C+o(N),
\end{align}
where we have put all irrelevant constants inside $C$. We will neglect the $o(N)$ contribution in the following as it yields a subleading correction to the free entropy. Also the constant $C$ is irrelevant, so we simply forget about it. Keep in mind that at the moment $f_1$ is still a function of $\bx$. This model is thus \emph{not} (yet) quadratic in $\bx$ due to terms such as $M_1(\bx,\bZ)^2$ appearing in $f_1$.

We now use delta functions to fix various order parameters. We are going to use repeatedly the Fourier representation of the delta function, namely
\begin{align}
  \delta(x)  = \frac1{2\pi} \int d\hat x \exp(i\hat x x). 
\end{align}
Because the integrals we will end-up with will always be at some point evaluated by saddle point, implying a deformation of the integration contour in the complex plane, tracking the imaginary unit $i$ in the delta functions will be irrelevant. Similarly, the normalization $1/(2\pi)$ will always contribute to sub-exponential corrections in the integrals at hand. Therefore, we will allow ourselves to formally write
\begin{align}
  \delta(x)  = \int d\hat x \exp( r\hat x x)
\end{align}
for a convenient constant $r$, keeping in mind these considerations (again, as we evaluate the final integrals by saddle point, the choice of $r$ ends-up being irrelevant).

We denote jointly $\btau:=(v_1,M_1,\kappa_1,q_{01})$ and $\hat \btau$ for their Fourier conjugates. Coming back to the the partition function for this equivalent model \eqref{Hamiltonian}, it can be re-expressed using delta functions as
\begin{align}
    &\int  dP_X(\bx) d \btau \exp\big(-H_N(\mathbf{x};\mathbf{Z},{\mathbf{x}}_0)\big)\nn
    &\qquad\times\delta(Nq_{01} - \bx^\intercal \bx_0)\delta(N v_{1} - \|\bx\|^2) \delta(N M_1-\bx^\intercal\bZ\bx)\delta(N \kappa_{1}-\bx^\intercal\bZ\bx_0)\nn
    &=\, \int  dP_X(\bx) d \btau d\hat\btau \exp\big(-H_N(\btau,\hat \btau,\bx;\bx_0,\bZ)\big),
    \end{align}
    where
    \begin{align}
H_N(\btau,\hat \btau,\bx;\bx_0,\bZ):=Nh(\btau,\hat \btau)  + \bx^\intercal \bJ_1(\btau,\hat \btau,\bZ)\bx+\bx^\intercal\bJ_0(\btau,\hat \btau, \bZ)\bx_0\label{28}
\end{align}
and
\begin{align}
   h(\btau,\hat \btau)&:=f_1-\hat{q}_{01}q_{01}-\frac{\hat v_1v_1}{2}-\frac{\hat M_1M_1}{2}- \hat \kappa_1   \kappa_1,\\
\bJ_1(\btau,\hat \btau,\bZ)&:=\frac{\hat v_1}{2}I_N+\frac{\hat M_1}2 \bZ+\gamma\frac{\lambda^2} 2  v_1 \mathbf{Z}^2 -\gamma\frac\lambda2\mathbf{Z}^3,\\
\bJ_0(\btau,\hat \btau,\bZ) &:=\hat q_{01}I_N+\hat \kappa_1 \bZ-\gamma q_{01}\lambda^2\mathbf{Z}^2.\label{34}
\end{align}
So what this shows is that by introducing new variables (order parameters and conjugate Fourier parameters), the original model turns out being equivalent to an extended system with Hamiltonian \eqref{28}. The key point of all this analysis is that by introducing the new variables $\btau,\hat\btau$ we have turned the interactions between the $(x_i)_{i\le N}$ into purely quadratic ones. This form is now more approriate to be solved using (generalizations of) known techniques. We emphasize that despite the algebraic manipulations leading from \eqref{Hamilt} to \eqref{28} are cumbersome, given a more complicated polynomial potential $V$ the very same strategy could be applied but would require the introduction of more order parameters. Yet, the equivalent model would still collapse into a quadratic one of the above form but with a more complicated function $h$ and matrices $\bJ_1,\bJ_0$ (still being polynomials of the noise $\bZ$ of order one less than the order of $V$). The reason is that the key mechanisms behind these simplifications when expanding the original Hamiltonian \eqref{Hamilt} are stemming from the low-rank structure of the spike.

\subsection{Replica symmetric free entropy using the inhomogeneous spherical integral}\label{replica_comp_section}

Having reduced the model to a quadratic one, we are now ready to replicate the system to compute the free entropy. The partition function $\mathcal Z$ is now computed using the equivalent model \eqref{28}. The expected replicated partition function is
\begin{align}
    \EE \mathcal{Z}(\bx_0,\bZ)^n=\int \prod_{\ell=0}^n dP_X(\bx_\ell) \prod_{\ell\le n} d\btau_\ell d\hat \btau_\ell \, \EE_{\bZ} \exp\Big(- \sum_{\ell \le n}H_N(\btau_\ell,\hat \btau_\ell,\bx_\ell;\bx_0,\bZ)\Big),\label{replicatedZ}
\end{align}
with replicas $(\bx_\ell,\btau_\ell,\hat \btau_\ell)_{\ell \le n}$ and shared quenched disorder $\bx_0,\bZ$. What we do next is to replace $\bZ$ by $\bO^\intercal \bD \bO$ and fix the overlap structure between replicas 
\begin{align}
\bx_\ell^\intercal \bx_{\ell'}=Nq_{\ell\ell'}  , \qquad \ell  ,\ell'\le n
\end{align}
by introducing further variables and their Fourier conjugates (this is already taken care of for the overlaps $\bx_\ell^\intercal \bx_0$ with the planted signal). The purpose will become clear soon. Redefining $\btau_\ell:=(v_\ell,M_\ell,\kappa_\ell)$ and similarly for $\hat \btau_\ell$, and defining the overlaps $\bq=(q_{\ell\ell'})_{0\le \ell<\ell'\le n}$ and similarly for $\hat \bq$, the log-partition function can be recast as
\begin{align}\label{replicated_free_entropy_withdeltas}
     \EE \mathcal{Z}^n&=\int  d\bq d\hat{\mathbf{q}}\prod_{\ell \le n} d\btau_\ell d\hat \btau_\ell  \,\exp N\Big(\sum_{\ell\le n}\Big(\frac{\hat v_\ell v_\ell}{2}+\frac{\hat M_\ell M_\ell}{2}+ \hat \kappa_\ell   \kappa_\ell-f_\ell\Big)+\sum_{0\le \ell<\ell'\le n}\hat q_{\ell\ell'}q_{\ell\ell'}\Big)\nn 
     &\qquad\times \int \prod_{\ell=0}^n  dP_X(\bx_\ell)\exp\Big(-\sum_{0\le \ell<\ell'\le n}\hat q_{\ell\ell'} \bx_\ell^\intercal \bx_{\ell'}-\frac12\sum_{\ell \le n} \hat v_\ell \|\bx_\ell\|^2 \Big)\nn
    &\qquad\times\EE_\bO  \exp\sum_{\ell\le n} \Big((\bO \bx_{0})^\intercal \bA_{\ell} \bO \bx_{\ell}+\frac{1}2(\bO \bx_{\ell})^\intercal \bB_{\ell}\bO \bx_{\ell} \Big)
\end{align}
where the $N\times N$ ``replica coupling matrices'' are
\begin{align}
    &\mathbf{A}_\ell:=-\hat \kappa_\ell\mathbf{D}+\gamma q_{0\ell}\lambda^2  \mathbf{D}^2,\\
    &\mathbf{B}_\ell:=-\hat M_\ell \mathbf{D}-\gamma\lambda^2  v_\ell \mathbf{D}^2+\gamma\lambda\mathbf{D}^3.        
\end{align}

We now assume a replica-symmetric ansatz which should lead to the correct solution due to the strong concentration-of-measure effects taking place in the Bayes-optimal setting as well as the Nishimori identities \cite{nishimori01,barbier_strong_CMP}. It means that we assume that the saddle point over the order parameters dominating the partition function as $N\to\infty$, which are finitely many, lies in the subset verifying the following (note the minus sign introduced for $-\hat q$ and $-\hat m$ for convenience): for all $\ell\neq \ell'=1,\ldots,n$
\begin{align}\label{OPs}
\mbox{Replica Symmetry Ansatz:} \
    \begin{cases}
    M_\ell=M,\qquad \hat M_\ell=\hat M,\\
     \kappa_\ell=\kappa, \qquad \hat\kappa_\ell=\hat \kappa,\\
     v_\ell=v ,\qquad \hat v_\ell=\hat v,\\
     q_{\ell\ell'}= q, \qquad \hat q_{\ell\ell'}=-\hat q, \\
     q_{0\ell}= m, \qquad \hat q_{0\ell}=-\hat m.
    \end{cases}
\end{align}

Using this ansatz, the matrices $(\bA_\ell,\bB_\ell)_{\ell \le n}$ become independent of $\ell$. We thus call their common value $\bA,\bB$. As a consequence the term $\EE_\bO(\,\cdot\,)$ at the third line in \eqref{replicated_free_entropy_withdeltas} is recognized to be what we call an \emph{inhomogeneous spherical integral} defined and analyzed in a devoted Section \ref{Low-rank_int_for_replicas}. From Section \ref{appendix_spherical_generalized} we know that the result of such integral depends only on the overlap structure; this is the reason why we fixed it earlier. We will thus replace it by $\exp N I^{\rm RS}_{A,B}(n,v,m,q)$ whose formula is \eqref{SphIntRS_2} and which is parametrized by the random variables (below $D\sim \rho$)
\begin{align}
    &A=-\hat \kappa D+\gamma m \lambda^2  D^2,\\
    &B=-\hat M D-\gamma\lambda^2  v D^2+\gamma\lambda D^3.    
\end{align}
Notice that at this point the only $\mathbf{x}$-integrals remaining (second line of \eqref{replicated_free_entropy_withdeltas}) are completely factorized over the spin indices $i$. Hence after taking the saddle point the log-replicated free entropy becomes in the limit $N\to\infty$
\begin{align*}
     &\frac1N\ln \EE \mathcal{Z}^n\to  {\rm extr}\Big\{n\Big(\frac{\hat v v}{2}+\frac{\hat M M}{2}+ \hat \kappa   \kappa-\hat m m+\frac{1-n}{2}\hat q q-f(m,v,M,\kappa)\Big) \nn
     & + I^{\rm RS}_{A,B}(n,v,m,q)+\ln\int\prod_{\ell=0}^n dP_X(x_\ell)e^{\hat q
    \sum_{\ell<\ell'\le n} x_\ell x_{\ell'}+\hat m \sum_{\ell \le n}x_0 x_\ell-\frac{\hat{v}}{2}\sum_{\ell\le n} x^2_\ell
    }\Big)\Big\}
\end{align*}
where the extremum is over all scalars in \eqref{OPs}. The last line can be treated by a Hubbard-Stratonovi\v{c} transform (i.e., Gaussian integral formula) to decouple the integral over the replica indices. Doing so it becomes
\begin{align*}    
    \mathbb{E} \Big(\int dP_X(x)\exp\Big(
    \sqrt{\hat{q}}Zx-\frac{\hat{q}+\hat{v}}{2}x^2+\hat mX_0x \Big)\Big)^n,
\end{align*}
with $Z\sim\mathcal{N}(0,1), X_0\sim P_X$. 

We now consider the limit of number of replicas going to $0$ assuming the analytic continuation of our formulas from integer $n$ to real. To expand the latter term we use $\ln \EE X^n=n\EE\ln X + O(n^2)$. The inhomogeneous spherical integral given by \eqref{SphIntRS_2} (with $v_0=1$) also has to be expanded in $n$. We get
\begin{align*}
   &I^{\rm RS}_{A,B}(n,v,m,q)=\frac n2 {\rm extr}_{(\tilde v,\tilde m,\tilde q)} \Big\{  2 m\tilde m + v\tilde v-q\tilde q-\EE\ln(\tilde v-B-\tilde q)\nonumber \\
    &\quad-\EE\frac{\tilde q-(\tilde m-A)^2}{\tilde v-B-\tilde q}\Big\} -\frac n2\big(1+\ln(v-q)\big)-\frac n2\frac{q-m^2}{v-q}+O(n^2) 
\end{align*}
with an expectation over $D\sim \rho$ entering $A,B$. Now we plug the previous expressions in the log-replicated partition function and expand up to $O(n)$ the resulting expression:
\begin{align*}
    &\frac1N\ln \EE \mathcal{Z}^n\to\text{extr}\Big\{n\Big(\frac{\hat v v}{2}+\frac{\hat M M}{2}+ \hat \kappa   \kappa-\hat m m+\frac{1-n}{2}\hat q q-f(m,v,M,\kappa)\Big)\nn
    & \quad + I^{\rm RS}_{A,B}(n,v,m,q) +n\EE\ln \int dP_X(x)\exp\Big(
    \sqrt{\hat{q}}Zx-\frac{\hat{q}+\hat{v}}{2}x^2+\hat mX_0x \Big)\Big\}+O(n^2).
\end{align*}
One can check that as it should $\lim_{N\to\infty}N^{-1}\ln\mathbb{E}Z^n$ vanishes when $n\to 0$. Taking the $n$-derivative (recall \eqref{replicatrick}) and then sending $n\to 0$ the final formula for the free entropy is obtained (and recalling that we dropped irrelevant constants along the computation):
\begin{align}\label{free_energy_RS}
    &\frac1N\EE\ln  \mathcal{Z}\to\text{extr}\Big\{\frac{\hat v v}{2}+\frac{\hat M M}{2}+ \hat \kappa   \kappa-\hat m m+\frac{\hat q q}{2}+  m\tilde m + \frac{v\tilde v}2-\frac{q\tilde q}2\nn
    &\qquad-\gamma\frac{\lambda^4}8\Big(2 m^4+ v^4-4 m^2( v^2+1- v)\Big)+\gamma\frac {\lambda^3}2 M( v^2- m^2)\nn
    &\qquad - \gamma\lambda^3m(v -1)\kappa  - \gamma\frac{\lambda^2} 4  M^2+ \gamma\frac{\lambda^2}2  \kappa^2-\mu\frac{\lambda^2 }{2}\Big(\frac12v^2 -m^2\Big)+\mu\frac{\lambda}{2}{M}\nn
    & \qquad +\EE\ln \int dP_X(x)\exp\Big(
    \sqrt{\hat{q}}Zx-\frac{\hat{q}+\hat{v}}{2}x^2+\hat mX_0x \Big) \nonumber \\
    &\qquad-\frac12\EE\ln(\tilde v-\tilde q+\hat M D+\gamma\lambda^2  v D^2-\gamma\lambda D^3)-\frac 12\ln(v-q)-\frac{q-m^2}{2(v-q)}\nn
    &\qquad+\frac12\EE\frac{(\tilde m+\hat \kappa D-\gamma m \lambda^2  D^2)^2-\tilde q}{\tilde v-\tilde q+\hat M D+\gamma\lambda^2  v D^2-\gamma\lambda D^3} \Big\} + \mbox{constant}.
\end{align}

The extremization is intended over the set of 13 variational parameters $v,\hat{v},\tilde v, m$, $\hat{m},\tilde m, q, \hat{q},\tilde q, M,\hat M,\kappa,\hat \kappa$. However, as we shall see later the saddle point equations will reduce only to two, because thanks to the Nishimori identities the saddle point values of many order parameters can be found right away. This is a specific and rather convenient feature of the Bayesian-optimal setting.

\subsection{Replica saddle point equations}\label{sec:RSFP}
Define the following random local measure
\begin{align}\label{local_measure}
   \langle\,\cdot\,\rangle_{\hat{m},\hat{q},\hat{v}}=\frac{\int dP_X(x)e^{\sqrt{\hat{q}}Zx+\hat{m}xX_0-\frac{\hat q+\hat v}{2}x^2}(\,\cdot\,)}{\int dP_X(x)e^{\sqrt{\hat{q}}Zx+\hat{m}xX_0-\frac{\hat q+\hat v}{2}x^{2}}} ,
\end{align}
the randomness being $Z\sim\mathcal{N}(0,1)$ and $X_0\sim P_X$,
and the random functions (random in $D\sim \rho$)
\begin{align}
    &H=(\tilde v-\tilde q+\hat M D+\gamma\lambda^2  v D^2-\gamma\lambda D^3)^{-1},\\
    &Q=\gamma m\lambda^2D^2-\hat \kappa D -\tilde m.
\end{align}
Below follow the saddle point equations obtained by equating to 0 the gradient w.r.t. the variational parameters of the variational free entropy in \eqref{free_energy_RS}. The parameter associated to each equation are reported in the round parenthesis:
\begin{align*}
    &(m)\quad \mu\lambda^2m+\gamma\lambda^4m(v^2+1-v-m^2)-\gamma\lambda^3Mm-\hat m+\tilde m+\frac{m}{v-q}\\
    &\qquad\qquad-\gamma\lambda^3(v-1)\kappa+\gamma\lambda^2\mathbb{E}QHD^2=0\\
    &(\hat m)\quad m=\mathbb{E}X_0\langle X\rangle_{\hat{m},\hat{q},\hat{v}}\\
    &(\tilde m)\quad m=\mathbb{E}QH\\
    &(q)\quad \hat q-\tilde q=\frac{q-m^2}{(v-q)^2}\\
    &(\hat q)\quad q=\mathbb{E}\langle X\rangle^2_{\hat{m},\hat{q},\hat{v}}\\
    &(\tilde q)\quad q=\mathbb{E}(Q^2-\tilde q)H^2\\
    &(v)\quad -\mu\lambda^2 v-\gamma\lambda^4(v^3-m^2(2v-1))+2\gamma\lambda^3Mv+\hat v+\tilde v-\frac{1}{v-q}-\frac{m^2-q}{(v-q)^2}\\
    &\qquad\qquad-\gamma\lambda^2\mathbb{E}HD^2 -2\gamma m\lambda^3\kappa+\gamma\lambda^2\mathbb{E}D^2(\tilde q-Q^2)H^2=0\\
    &(\hat v)\quad v=\mathbb{E}\langle X^2\rangle_{\hat{m},\hat{q},\hat{v}}\\
    &(\tilde v)\quad v=\mathbb{E}[H+H^2(Q^2-\tilde q)]\\
    &(M)\quad \mu\lambda+\gamma\lambda^3(v^2-m^2)-\gamma\lambda^2M+\hat M=0\\
    &(\hat M)\quad M=\mathbb{E}D[H+H^2(Q^2-\tilde q)]\\
    &(\kappa)\quad \hat\kappa=\gamma\lambda^3 m(v-1)-\gamma\lambda^2\kappa\\
    &(\hat\kappa)\quad \kappa=\mathbb{E}DQH
\end{align*}

As in any replica symmetric mean-field theory, the physical meaning of some order parameters makes it possible to fix their values to their expectation, obtainable using the Nishimori identities and, as a consequence, to drastically reduce this $13$-dimensional system. To begin with, recall that we fixed $v$ to be the squared norm of a sample from the posterior re-scaled by the number of components. Assuming concentration effects take place as they should in this optimal setting, and denoting the posterior mean by $\langle\,\cdot\,\rangle$, using the Nishimori identity we have that
\begin{align}
    v=\lim_{N\to\infty}\frac{1}{N}\mathbb{E}\langle \|\bx\|^2\rangle=\lim_{N\to\infty}\frac{1}{N}\mathbb{E} \|\bX^*\|^2=1.
\end{align}
We have $\hat{v}=0$ because the constraint $v=1$ is enforced by the prior without the need of a delta constraint. The $(\kappa)$-equation can then be used to directly eliminate $\hat \kappa$ by inserting $\hat\kappa=-\gamma\lambda^2\kappa$ into $Q$. The Nishimori identity also imposes
\begin{align}
    m=\mathbb{E}X_0\langle X\rangle_{\hat{m},\hat{q},0}=q=\mathbb{E}\langle X\rangle_{\hat{m},\hat{q},0}^2.
\end{align}
It is not difficult to realize that for this to be true one also needs necessarily $\hat{m}=\hat{q}$. So we have $8$ variables left.
The most tricky parameter is $M$, that we introduced to decouple the four body interactions in the Hamiltonian. Notice first that (recall definitions \eqref{22} and \eqref{defq})
\begin{align*}
        \frac1N\EE\langle \bx^\intercal \bZ\bx\rangle &=\frac1N\mathbb{E}\Big\langle \bx^\intercal\Big(\bY-\frac\lambda N\bP^*\Big)\bx\Big\rangle \nn
        &=\frac1N\mathbb{E} \bX^{*\intercal}\bY\bX^* -\lambda\EE\Big\langle\Big(\frac1N\bx^\intercal \bX^*\Big)^2\Big\rangle\nn
        &=\lambda \Big(1-\EE\Big\langle\Big(\frac1N\bx^\intercal \bX^*\Big)^2\Big\rangle\Big)+O\Big(\frac{1}{N}\Big).
\end{align*}
We used that by the Nishimori identity
\begin{align}
     \frac{\mathbb{E} \langle\bx^{\intercal}\bY\bx\rangle}N=\frac{\mathbb{E} \bX^{*\intercal}\bY\bX^*}N=\frac1N\EE\bX^{*\intercal} \Big(\frac\lambda N\bX^*\bX^{*\intercal}+\bZ\Big)\bX^*=(\EE(X_1^*)^2)^2\lambda=\lambda.\label{xyx}
\end{align}
Indeed, by diagonalizing the noise, $$\EE \bX^{*\intercal}\bZ\bX^*=\EE \sum_{i\le N} s_i^2D_i=\EE\|\bX^*\|^2\EE D_1=0,$$ where $\bs$ is a uniform spherical vector of same norm as $\bX^*$, and $\EE D_1=0$ by symmetry. By concentration happening on the Nishimori line \cite{barbier_strong_CMP} we have $$\EE\Big\langle\Big(\frac1N\bx^\intercal \bX^*\Big)^2\Big\rangle=\Big(\EE\Big\langle\frac1N\bx^\intercal \bX^*\Big\rangle\Big)^2+o_N(1)=m^2+o_N(1).$$
Hence
\begin{align}
\begin{split}
    M=\lim_{N\to\infty}\frac1N\EE\langle \bx^\intercal \bZ\bx\rangle=\lambda(1-m^2).
\end{split}
\end{align}
The $(M)$-equation together with the other identities implies $\hat{M}=-\mu\lambda$. To summarize the Nishimori identities and concentration properties enforce five constraints:
\begin{align}
    \begin{split}
        v=1,\quad \hat{v}=0,\quad m=q,\quad \hat{m}=\hat{q},\quad M=\lambda(1-m^2)
    \end{split}
\end{align}
and we have $6$ variables left.
Our updated definitions of $Q$ and $H$ are
\begin{align}
    &Q=\gamma m\lambda^2D^2+\gamma \lambda^2 \kappa D -\tilde m,\\
    &H=(\tilde v-\tilde q-\mu\lambda D+\gamma\lambda^2   D^2-\gamma\lambda D^3)^{-1}.
\end{align}
Using the Nishimori identities we see from the $(\tilde v)$ and $(\tilde q)$-equations that
\begin{align}
    m=\mathbb{E}H^2(Q^2-\tilde q)\quad\Rightarrow\quad \mathbb{E}H=1-m.
\end{align}
The latter has to be interpreted as an equation for the quantity $\tilde V:=\tilde v-\tilde q$ as a function of $m$. Furthermore, one can now express $\tilde{m}$ as a function of $\kappa $ and $m$. In fact from equation $(\tilde m)$, unfolding $Q$ and then solving for $\tilde{m}$, one gets
\begin{align}
    \tilde m=\frac{\gamma\lambda^2}{1-m}\mathbb{E}D(mD+\kappa)H-\frac{m}{1-m}.
\end{align}
Plugging this back into the $(m)$-equation we get $\hat m$, equation \eqref{hatmReplica}. We stress that inside $H$ there is still an $m$ dependency through $\tilde V$.

With all these simplifications we can close the equations on $(m,\kappa)$ only:
\begin{align}
    \label{m_replica_simplified}
    &(\hat m)\quad m=\mathbb{E}X_0\langle X\rangle_{\hat{m},\hat{m},0}\\
    &(\tilde\kappa)\quad \kappa=\mathbb{E}DQH, \label{systemRS_MMSE}   
\end{align}
where the random variables $Q=Q(m,\kappa,D)$ and $H=H(m,D)$ are
\begin{align}\label{Q_final}
    &Q=\gamma m\lambda^2D^2+\gamma \lambda^2 \kappa D - \frac{\gamma\lambda^2}{1-m}\mathbb{E}D(mD+\kappa)H+\frac{m}{1-m},\\
    \label{H_final}
    &H=(\tilde V-\mu\lambda D+\gamma\lambda^2  D^2-\gamma\lambda D^3)^{-1},
\end{align}
with $\tilde V=\tilde V(m)$ and $\hat{m}=\hat m(m,\kappa)$ being determined respectively by
\begin{align}\label{tildeV_eq_BO}
    &\mathbb{E}H=1-m,\\
    &\hat m=\gamma \lambda^2\EE H D\Big(\frac{m D+\kappa }{1-m}+D Q\Big)+\mu \lambda^2 m.\label{hatmReplica}
\end{align}
Then the replica prediction for the MMSE is 
\begin{align}
    \lim_{N\to\infty}\frac{1}{2N^2}\EE\| \bX^*\bX^{*\intercal}-\EE[\bX^*\bX^{*\intercal}\mid \bY]\|_{\rm F}^2=\frac12(1-m^2).\label{replicaMMSE}
\end{align}
From \eqref{hatmReplica} it is evident that when $\gamma=0$ and $\mu=1$ (to preserve unit variance of the noise), $\kappa$ and $\hat{m}$ decouple, $\hat{m}=\lambda^2 m$, and the equation \eqref{m_replica_simplified} reduces to the standard replica saddle point equation for the Wigner spike model.

There would be also an equation for $\tilde{q}$, that is decoupled though, meaning that $\tilde{q}$ is a simple function of $m$ and $\kappa$ in the end:
\begin{align}
    (q)\quad \tilde{q}=\hat{m}(m,\kappa)-\frac{m}{1-m}.
\end{align}

{
\subsection{The replica formula for the pure sestic potential}\label{sec:sestic_replica}
The same procedure can be followed to obtain a replica symmetric formula in the case of a pure sestic potential $V(x)=\xi x^6/6$. In this subsection we overview the main steps of the related computation.

The Hamiltonian takes the form
\begin{align}
    \begin{split}
        H_N&=\frac{N\xi}{12}\Big[\Tr(\mathbf{Z}+\lambda \Delta)^6-\Tr\mathbf{Z}^6\Big]=\\
        &=\frac{N\xi}{12}\Tr\Big[6\lambda\bZ^5 \Delta+6\lambda^2\bZ^4 \Delta^2+6\lambda^3\bZ^3 \Delta^3+6\lambda^4\bZ^2\Delta^4+6\lambda^5\bZ \Delta^5+\lambda^6\Delta^6\\\
        &\qquad+6\lambda^2\bZ^3\Delta\bZ\Delta+3\lambda^2\bZ^2\Delta\bZ^2\Delta +12\lambda^3\bZ^2\Delta^2\bZ\Delta+2\lambda^3\bZ\Delta\bZ\Delta\bZ\Delta\\
        &\qquad+6\lambda^4\bZ\Delta\bZ\Delta^3+3\lambda^4\bZ\Delta^2\bZ\Delta^2
        \Big]\,.
    \end{split}
\end{align}
We introduce two additional (candidate) order parameters:
\begin{align}
    &M_{k}=\frac{1}{N}\Tr\bZ^k\bx\bx^\intercal\,,\quad \kappa_k=\frac{1}{N}\Tr\bZ^k\bx^*\bx^\intercal\,,\quad k=1,2,3\,.
\end{align}As before $v_0:=\frac{1}{N}\Vert\bx^*\Vert^2$ is replaced by $1$ and $M^*_{1}=M^*_{3}=M^*_{5}=o_N(1)$ thanks to concentration. The other order parameters have the same meaning as in the quartic potential case.
For the sake of brevity, we do not report here the full simplification of the 12 terms appearing in the Hamiltonian, but only the final result: 
\begin{multline}
    H_N(\bx;\bx^*,\bZ)=-\frac{N\xi}{2}\Tr\big[\lambda\bZ^5-\lambda^2v\bZ^4\big]\bP-\xi\lambda^2q_{01}\Tr\bZ^4\bx^*\bx^\intercal\\+N\xi f_1(v,q_{01},(M_k,\kappa_k)_{k=1}^3)
\end{multline}
with
\begin{align}
    \begin{split}
        f_1&=-\frac{\lambda^3}{2}M_3(v_1^2-q_{01}^2)-\lambda^3q_{01}(1-v_1)\kappa_3 +\frac{\lambda^4}{2}M_2(v_1^3-2v_1q_{01}+q^2)\\
        &+\lambda^4\kappa_2q_{01}(v_1+q_{01}^2-v_1^2-1)\\
        &+\frac{\lambda^4}{2}(1+v_1q_{01}^2-2q_{01}^2)-\frac{\lambda^5}{2} M_1\big[(v_1^2-q_{01}^2)v_1^2-(1-q_{01}^2)q_{01}^2+2v_1q_{01}(1-v_1)\big]\\
        &-\lambda^5 \kappa_1\big[(1-q_{01}^2)q_{01}-(v_1^2-q_{01}^2)v_1q_{01}-q_{01}^3(1-v_1)-v_1q_{01}(1-v_1)\big]\\
        &+\frac{\lambda^6}{12}\big[(v_1^3-2v_1q_{01}^2+q^2_{01})^2+(1-2q_{01}^2+v_1q_{01}^2)^2-2q_{01}^2(v_1^2+1-v_1-q_{01}^2)^2\big]\\
        &+\frac{\lambda^2}{2}M_3M_1-\lambda^2\kappa_3\kappa_1+\frac{\lambda^2}{4}M_2^2-\frac{\lambda^2}{2}\kappa_2^2\\
        &-\lambda^3v_1M_2M_1-\lambda^3(1-v_1)\kappa_2\kappa_1+\lambda^3q_{01}(\kappa_2M_1+\kappa_1M_2-\kappa_1)-\frac{\lambda^3}{6}M_1^3+\frac{\lambda^3}{2}\kappa_1^2M_1\\
        &+\frac{\lambda^4}{2}(v_1^2-q_{01}^2)M_1^2-\frac{\lambda^4}{2}(v_1^2+1-2q_{01}^2)\kappa_1^2+\lambda^4q_{01}(1-v_1)\kappa_1M_1\\
        &+\frac{\lambda^4}{2}\kappa_1^2(v_1+q_{01}^2)+\frac{\lambda^4}{4}v_1^2M_1^2-\lambda^4v_1q_{01}\kappa_1M_1.
    \end{split}
\end{align}

Once we fix the new order parameters with some additional conjugates $(\hat{M}_k,\hat{\kappa}_k)_{k=1}^3$, using the inhomogenous spherical integral it is easy to cast a replica symmetric formula for the free entropy of this model:
\begin{align}
    \label{RSfreeent_sestic}
    \begin{split}
        \frac{1}{nN}\log\mathbb{E}\mathcal{Z}^n&\xrightarrow[]{}\text{extr}\Big\{
        \frac{v\hat v}{2}+\frac{q\hat q}{2}-m\hat m+\sum_{k=1}^3\Big(\frac{M_k\hat{ M_k}}{2}+\kappa_k\hat \kappa_k\Big)-\xi f(v,m,(M_k,\kappa_k)_{k=1}^3)\\
        &+\EE\ln \int dP_X(x)\exp\Big(
        \sqrt{\hat{q}}Zx-\frac{\hat{q}+\hat{v}}{2}x^2+\hat mX_0x \Big)+m\tilde m+\frac{v\tilde v-q\tilde q}{2}\\
        &-\frac{1}{2}\EE\log\Big(\tilde v-\tilde q-\xi\lambda D^5+\xi\lambda^2vD^4+\hat{M}_3D^3+\hat{M}_2D^2+\hat{M}_1D\Big)\\
        &-\frac{1}{2}\EE\frac{\tilde q-\big(\xi\lambda^2mD^4-\hat\kappa_3D^3-\hat\kappa_2D^2-\hat\kappa_1D-\tilde m\big)^2}{\tilde v-\tilde q-\xi\lambda D^5+\xi\lambda^2vD^4+\hat{M}_3D^3+\hat{M}_2D^2+\hat{M}_1D}\\
        &-\frac{1}{2}\log(v-q)-\frac{1}{2}\frac{q-m^2}{v-q}
        \Big\}
    \end{split}
\end{align}
where extremization is intended w.r.t. the set of parameters: $v$, $\hat v$, $\tilde v$, $m$, $\hat m$, $\tilde m$, $q$, $\hat q$, $\tilde q$, $\kappa_1,\kappa_2.\kappa_3$, $\hat \kappa_1,\hat\kappa_2,\hat\kappa_3$, $M_1,M_2,M_3,\hat M_1,\hat M_2,\hat M_3$, for a total of 21.

\subsection{Replica saddle point equations for the pure sestic potential}
As we did for $M_1=\lambda(1-m)^2$, $q=m$, $\hat q=\hat m$ and $v=1$, $\hat{v}=0$, we are able to evaluate $M_2,M_3$ in terms of the other parameters too. 

First of all, using Nishimori identities (see \eqref{Nishif_12}-\eqref{Nishif_3} later) it is possible to show that $\EE\Tr\langle\bY\bP\rangle=\lambda$, $\EE\Tr\langle\bY^2\bP\rangle=\lambda^2+1$, $\EE\Tr\langle\bY^3\bP\rangle=\lambda^3+2\lambda$. In the limit we can write formally:
\begin{align}
\begin{split}\label{eq:M_2id}
    M_2&\simeq \EE\langle\Tr\bZ^2\bP\rangle=\EE\langle\bY^2\bP\rangle-2\lambda\EE\langle\bY\bP^*\bP\rangle+\lambda^2\EE\langle\bP^{*2}\bP\rangle\\
    &=1+\lambda^2-2\lambda\EE\langle\Tr(\bZ+\lambda\bP^*)\bP^*\bP\rangle+\lambda^2m^2=1+\lambda^2(1-m^2)-2\lambda m\kappa_1.
\end{split}
\end{align}
Analogously:
\begin{align}\label{eq:M_3id}
    \begin{split}
        M_3&\simeq\EE\langle(\bY-\lambda\bP^*)^3\bP\rangle=\lambda^3(1-m^2)-2\lambda^2 m\kappa_1+\lambda(2-2m\kappa_2-\kappa_1^2)\,.
    \end{split}
\end{align}
Hence the only non trivial parameters we have to look for are $m,\kappa_1,\kappa_2,\kappa_3$.
With the previous identities we can simplify further for the $\hat{M}$'s. Indeed, starting from $\hat{M}_1$ and imposing the Nishimori identities a derivative yields:
\begin{align}
    \begin{split}
        \xi^{-1}\hat{M}_1&=2\frac{\partial f}{\partial M_1}=-\lambda^5(1-m^2)^2+\lambda^2M_3-2\lambda^3M_2+2\lambda^3m\kappa_2-\lambda^3M_1^2+\lambda^3\kappa_1^2\\
        &+2\lambda^4(1-m^2)M_1+\lambda^4M_1-2\lambda^4m\kappa_1\,.
    \end{split}
\end{align}
Using $M_1=\lambda(1-m^2)$ we readily get
\begin{align}
     \begin{split}
        \xi^{-1}\hat{M}_1&=2 \frac{\partial f}{\partial M_1}=\lambda^2M_3-2\lambda^3M_2+2\lambda^3m\kappa_2+\lambda^5(1-m^2)-2\lambda^4m\kappa_1+\lambda^3\kappa_1^2.
    \end{split}
\end{align}
Now, thanks to the identities \eqref{eq:M_2id} and \eqref{eq:M_3id}, used in this order, we obtain the surprisingly simple result:
\begin{align}
    \begin{split}
        \xi^{-1}\hat{M}_1&=2 \frac{\partial f}{\partial M_1}=\lambda^2M_3-2\lambda^3+2\lambda^3m\kappa_2-\lambda^5(1-m^2)+2\lambda^4m\kappa_1+\lambda^3\kappa_1^2=0\,.
    \end{split}
\end{align}
Continuing on $\hat{M}_2$, using identities in the same order:
\begin{align}
    \begin{split}
        \xi^{-1}\hat{M}_2&=2 \frac{\partial f}{\partial M_2}= \lambda^4(1-m^2)+\lambda^2 M_2-2\lambda^3 M_1+2\lambda^3\kappa_1m\\
        &=-\lambda^4(1-m^2)+\lambda^2 M_2+2\lambda^3\kappa_1m=\lambda^2\,.
    \end{split}
\end{align}
Concerning $M_3$ instead:
\begin{align}
    \begin{split}
        \xi^{-1}\hat{M_3}=2 \frac{\partial f}{\partial M_3}=-\lambda^3(1-m^2)+\lambda^2 M_1=0\,.
    \end{split}
\end{align}

Again by \eqref{eq:M_2id} and \eqref{eq:M_3id} we can also fix the values of the $\hat{\kappa}$'s. Starting from $\hat{\kappa}_1$:
\begin{align}
    \begin{split}
        \xi^{-1}\hat{\kappa}_1&=\frac{\partial f}{\partial \kappa_1}=-\lambda^2\kappa_3+\lambda^3mM_2-\lambda^3m+\lambda^3\kappa_1M_1-2\lambda^4(1-m^2)\kappa_1\\
        &+\lambda^4\kappa_1(1+m^2)-\lambda^4mM_1\\
        &=-\lambda^2\kappa_3+\lambda^3mM_2-\lambda^3m + 2\lambda^4 m^2 \kappa_1-\lambda^5m(1-m^2)=-\lambda^2\kappa_3\,.
    \end{split}
\end{align}
Continuing for $\kappa_2$:
\begin{align}
    \begin{split}
        \xi^{-1}\hat{\kappa}_2&=\frac{\partial f}{\partial \kappa_2}=-\lambda^4m(1-m^2)-\lambda^2\kappa_2+\lambda^3mM_1=-\lambda^2\kappa_2\,.
    \end{split}
\end{align}
And finally:
\begin{align}
    \begin{split}
        \xi^{-1}\hat{\kappa}_3&=\frac{\partial f}{\partial \kappa_3}= -\lambda^2\kappa_1 \,.
    \end{split}
\end{align}

Define now the quantities
\begin{align}
    &H=(\tilde v-\tilde q-\xi\lambda D^5+\xi\lambda^2 D^4+\xi\lambda^2 D^2)^{-1}\\
    &Q=\xi\lambda^2mD^4+\xi\lambda^2\kappa_1D^3+\xi\lambda^2\kappa_2D^2+\xi\lambda^3\kappa_3D-\tilde m\,.
\end{align}
The equations for $\tilde v$ and $\tilde q$ appear respectively as
\begin{align}
    &v=1=\EE[H+H^2(Q^2-\tilde q)]\\
    &q=m=\EE H^2(Q^2-\tilde q)
\end{align}
which implies again an equation for $\tilde V:=\tilde v-\tilde q$:
\begin{align}
    \EE H=1-m\,.
\end{align}
Similarly, the equation for $\tilde m$
\begin{align}
    m=\EE QH
\end{align}
can be inverted to find $\tilde m$:
\begin{align}
    \tilde m=\frac{\xi\lambda^2}{1-m}\EE D(mD^3+\kappa_1 D^2+\kappa_2 D+\kappa_3)H-\frac{m}{1-m}\,.
\end{align}

The equations for the $\kappa$'s are obtained with a simple deviative w.r.t. $\hat{\kappa}$'s:
\begin{align}
    &\kappa_k=\EE D^k QH\,,\quad k=1,2,3\,.
\end{align}

Now we just miss the equation for $\hat{m}$ that can be obtained deriving w.r.t. $m$:
\begin{align}
    \begin{split}
        &\tilde m-\hat m+\frac{m}{1-m}+\xi\lambda^2\EE D^4 QH=\xi\Big[
        \lambda^3 mM_3-\lambda^4 M_2m-\lambda^4\kappa_2(1-m^2)+2\lambda^4\kappa_2m^2\\
        &-\lambda^4m+2\lambda^5 M_1m(1-m^2)-\lambda^6m(1-m^2)^2+\lambda^3(M_2\kappa_1+M_1\kappa_2-\kappa_1)-\lambda^4mM_1^2\\
        &+3\lambda^4m\kappa_1^2-\lambda^4\kappa_1M_1
        \Big]=
        \xi\Big[
        \lambda^3 mM_3-\lambda^4 M_2m+2\lambda^4\kappa_2m^2-\lambda^4m+\lambda^3(M_2\kappa_1-\kappa_1)\\
        &+3\lambda^4m\kappa_1^2-\lambda^5\kappa_1(1-m^2)
        \Big]=\\
        &=\xi\Big[
        \lambda^3 mM_3-\lambda^4 m(1+\lambda^2(1-m^2)-2\lambda\kappa_1m)+2\lambda^4\kappa_2m^2-\lambda^4m+\lambda^4m\kappa_1^2
        \Big]\\
        &=\xi\Big[
        \lambda^3 m\big(\lambda^3(1-m^2)-2\lambda^2 m\kappa_1+\lambda(2-2m\kappa_2-\kappa_1^2)\big)\\
        &-\lambda^4 m(1+\lambda^2(1-m^2)-2\lambda\kappa_1m)+2\lambda^4\kappa_2m^2-\lambda^4m+\lambda^4m\kappa_1^2
        \Big]\\
        &=\xi\Big[
        \lambda^4 m (-2m\kappa_2-\kappa_1^2) +2\lambda^4\kappa_2m^2+\lambda^4m\kappa_1^2
        \Big]=0.
    \end{split}
\end{align}
Hence the system of saddle point equations reduces to 
\begin{align}
    &m=\EE X_0\langle X\rangle_{\hat{q}=\hat{m},\hat{m},\hat{v}=0},\\
    &\kappa_k=\EE D^kQH\,,\quad k=1,2,3\\
    &\EE H=1-m\,.
\end{align}
where 
\begin{align}\label{eq:hatm_sestic}
    \hat{m}=\xi\lambda^2\EE H\Big[\frac{mD^4+\kappa_1D^3+\kappa_2D^2+\kappa_3D}{1-m}+D^4Q\Big]\,.
\end{align}
The first four have to be initialized and iterated in parallel. At each iteration instead one has to impose \eqref{eq:hatm_sestic} and to solve $\EE H=1-m$ exactly by dichotomy, obtaining $\tilde{V}=\tilde{V}(m)$.

}

\subsection{Spectral PCA is optimal for rotation-invariant signals}\label{sec:PCAopt}

{Let us start by pointing out that PCA has the same SNR threshold to obtain non-zero overlap for any signal prior. This readily follows from the analysis of \cite{benaych2011eigenvalues}: there, it is proved that both the spectral threshold and the overlap do not depend on the prior of the rank-1 perturbation, as long as its tails are sufficiently well-behaved and the noise is rotationally invariant, as assumed in our work.}
In this section we show that spectral PCA \cite{benaych2011eigenvalues} is optimal for inferring $\bX^*$ such that $\bX^*$ equals in law $\bO\bX^*$ for any orthogonal matrix $\bO$. This is the case for Gaussian and spherically uniformly distributed $\bX^*$.

To do so, we first show that the previous computations can be straightforwardly modified to accommodate the case of spherical prior. Let us assume that the signal $\mathbf{X}^*$ is uniformly distributed on a sphere of radius $\sqrt{N}$. We denote the uniform measure on this sphere by $\omega$. Thanks to the invariance property of the measure on the sphere under rotations we know that $\mathbf{x}$ equals in law $\bO\mathbf{x}$ for  $\mathbf{x}\sim\omega$ and any orthogonal matrix $\bO$. Therefore, we can directly diagonalize the noise without loss of generality and work with the equivalent model
\begin{align}
    \bY=\frac\lambda N\bP^* + \bD.
\end{align}
In this way we can get rid of $\bO$ and as a consequence replicating the system and the inhomogeneous spherical integral becomes useless. Only Gaussian integrations and a saddle point estimation are needed. 

The partition function is \eqref{28}--\eqref{34} but with the diagonal matrix $\bD$ replacing $\bZ$  (the constraint $\|\bx\|^2=N$ is taken care of by the Hamiltonian):
\begin{align}
\int  d\bx d \btau d\hat\btau \exp\big(-Nh(\btau,\hat \btau)  - \bx^\intercal \bJ_1(\btau,\hat \btau,\bD)\bx-\bx^\intercal\bJ_0(\btau,\hat \btau, \bD)\bx_0\big).\label{70}
    \end{align}
Because now $\bJ_1$ and $\bJ_0$ are diagonal matrices, the $\bx$-integral in the partition function is just a Gaussian integral: it is (up to an irrelevant multiplicative constant)
\begin{align}
\int   d \btau d\hat\btau \exp N\Big(-h(\btau,\hat \btau)  -\frac1{2N}\sum_{i\le N}\ln J_{1,i}+\frac1{4N}\sum_{i\le N}x_{0,i}^2\frac{J_{0,i}^2}{J_{1,i}}\Big)
    \end{align}
with $v_1=1$ (appearing in $h$). Because $\bx_0$ is a uniform spherical vector combined with the convergence of the empirical law of $(D_i)$ we have $$-\frac1{2N}\sum_{i\le N}\ln J_{1,i}+\frac1{4N}\sum_{i\le N}x_{0,i}^2\frac{J_{0,i}^2}{J_{1,i}}=-\frac12\EE\ln J_{1,1}+\frac14\EE \frac{J_{0,1}^2}{J_{1,1}}+o_N(1).$$
Thus saddle point estimation of \eqref{70} yields
\begin{align}
    &\frac1N\ln \mathcal{Z}\to \mbox{const}+{\rm extr}\Big\{-f(m,1,M,\kappa) + \hat m m +\frac{\hat v v}{2}+\frac{\hat M M}2+\hat \kappa \kappa\nn
    &\qquad-\frac12\EE\ln \big({\hat v}+\hat M D+\gamma \lambda^2 D^2-\gamma\lambda D^3 \big)+\frac12\EE \frac{(\hat m + \hat \kappa D -\gamma\lambda^2 m D^2 )^2}{{\hat v}+\hat M D+\gamma \lambda^2 D^2-\gamma\lambda D^3} \Big\},
\end{align}
where recall that $f$ is defined by \eqref{f}. Note that this strategy does not require the replica method, and it could also be applied in the case of Gaussian prior $P_X=\mathcal{N}(0,1)$, due to its rotational invariance.

At this point, the saddle point equations can be written and simplified similarly as in the previous section. After doing so and from the numerical solution of the saddle point equations, one can deduce that: $(i)$ in the case of spherical and Gaussian priors the MMSE is the same; and $(ii)$ this MMSE matches the performance of the spectral PCA algorithm studied in \cite{benaych2011eigenvalues}. Additionally, $(iii)$ the MMSE obtained from this exact approach matches the replica prediction of the previous section in the case of Gaussian prior (a special case of factorized $P_X$ tackled by our replica theory). This further confirms the validity and consistency of our methodology. Therefore we conclude that spectral PCA is Bayes-optimal in the special case of rotationally invariant priors and noise.

Let us provide a further argument in support of Bayes-optimality of PCA in the present setting. In this argument we consider the noise eigenvalues as quenched random variables, and we are going to average over them. We first notice that the MMSE estimator is diagonal in the basis of the matrix of data $\bY$. Indeed, letting $\bY$ be diagonalized as $\bY=\bU^\intercal\bS\bU$ then using the posterior \eqref{posterior},
\begin{align}
    \EE[\bX^*\bX^{*\intercal}\mid \bY]& = \frac{C_V}{P_Y(\bY)} \int dP_X(\mathbf{x})\exp\Big(-\frac{N}{2}\Tr V\Big(\bS-\frac{\lambda}{N}(\bU\bx)(\bU\bx)^\intercal\Big)\Big) \bx\bx^{\intercal} \nn
    &= \frac{C_V}{P_Y(\bY)} \bU^\intercal\Big(\int dP_X(\mathbf{x})\exp\Big(-\frac{N}{2}\Tr V\Big(\bS-\frac{\lambda}{N}\bx\bx^\intercal\Big)\Big) \bx\bx^{\intercal}\Big)\bU
\end{align}
where we changed $\bU\bx$ to $\bx$, which leaves the prior invariant by rotational invariance. We would then like to see that the matrix
\begin{equation*}
    \bL = \frac{C_V}{P_Y(\bY)}\int dP_X(\mathbf{x})\exp\Big(-\frac{N}{2}\Tr V\Big(\bS-\frac{\lambda}{N}\bx\bx^\intercal\Big)\Big) \bx\bx^{\intercal}
\end{equation*}
is a diagonal. Indeed, because $\bS=\mbox{diag}(s_1,\dots,s_N)$ is diagonal, $\Tr V(\bS-(\lambda/N)\bx\bx^\intercal)$ can be easily seen (see, e.g., the steps leading to \eqref{polyform}) to be a polynomial of degree $k$ of the $k$ variables 
\begin{equation*}
    \Big(\sum_{i\leq N} x_i^2,\,\sum_{i\leq N} s_i x_i^2,\dots,\,\sum_{i\leq N} s^{k-1}_i x_i^2\Big).
\end{equation*}
Then, for every $1\leq j \leq N$, the integrand that defines $\bL$ takes the same value for $\bx$ and the point $\bx'$ which results from changing the sign of the $j$-th coordinate of $\bx$. We thus have that $\bL$ is a diagonal matrix. 

For $1\leq k \leq N$, let $\bu_k$ be the eigenvector of the $k$-largest eigenvalue of $\bY$. Then we can express $\bL(\bY)$ as $\mbox{diag}(\gamma_1(\bY),\dots,\gamma_N(\bY))$, where by definition we have that 
\begin{align}
    \EE[\bX^*\bX^{*\intercal}\mid \bY]=\sum_{k\le N} \gamma_{k} \bu_k\bu_K^\intercal,
\end{align}
i.e., $\gamma_k = \bu_k^\intercal\EE[\bX^*\bX^{*\intercal}\mid \bY] \bu_k$ with the ordering $\gamma_1\ge \gamma_2\ge\cdots\ge \gamma_N$. This therefore means that the ``matrix magnetization'' may be written according to
\begin{equation*}
    \frac{1}{N^2}\EE \Tr(\EE[\bX^*\bX^{*\intercal}\mid \bY]\bX^* \bX^{*\intercal}) = \frac{1}{N^2}\sum_{k\leq N} \EE[ (\bu_k^\intercal \bX^*)^2\gamma_k].
\end{equation*}

We would like now to compute the asymptotic magnetization of the Bayes estimator. For this we will use Nishimori identities and a bound over the projections of $\bX^*$ onto the eigenvectors of $\bY$ that we verify numerically. More specifically, we will assume that there is some constant $K > 0$ such that for all $k\geq2$ it holds that
\begin{equation}\label{eq:bound_proj}
    (\bu_k^\intercal \bX^*)^2 \leq K.
\end{equation}
As mentioned before, inequality \eqref{eq:bound_proj}, which is an explicit rate of convergence for the limit in \cite[Theorem 2]{benaych2011eigenvalues}, has been verified through many numerical experiments for different noise potentials and SNRs. In every case, a bound of this type is observed, although for experiments close to the corresponding phase transition, the constant $K$ takes larger values and the quantity bounded exhibits a larger variance (this type of behavior is expected to hold very close to the transition point).

Now, notice that by Nishimori identities the following holds
\begin{equation}\label{eq:nishi_proj}
    \EE \gamma_k = \EE (\bu_k^\intercal\bX^*)^2.
\end{equation}
Also, by \cite[Theorem 2]{benaych2011eigenvalues} we have that (below $R$ is the R-transform associated with the noise spectral density $\rho$)
\begin{equation*}
    \begin{split}
        \frac{1}{N^2}\EE[\gamma_1(\bu_1^\intercal\bX^*)^2] & = \frac{1}{N}\Big(1-\frac{R'(1/\lambda)}{\lambda^2}\Big)\EE\gamma_1+\frac{1}{N}\EE\Big[\gamma_1\Big(\frac{(\bu_1^\intercal\bX^*)^2}{N}-1+\frac{R'(1/\lambda)}{\lambda^2}\Big)\Big],
    \end{split}
\end{equation*}
where the second term on the r.h.s. is a vanishing function of $N$. If we use \eqref{eq:nishi_proj} and \cite[Theorem 2]{benaych2011eigenvalues} a second time, we get that
\begin{equation*}
    \begin{split}
        \lim_{N\to\infty}\frac{1}{N^2}\EE[\gamma_1(\bu_1^\intercal\bX^*)^2] & = \Big(1-\frac{R'(1/\lambda)}{\lambda^2}\Big)^2.
    \end{split}
\end{equation*}

On the other hand, by inequality \eqref{eq:bound_proj} and the Nishimori identities \eqref{eq:nishi_proj} we get
\begin{equation*}
    \begin{split}
        \frac{1}{N^2}\sum_{2\leq k \leq N} \EE[\gamma_k(\bu_k^\intercal \bX^*)^2] & \leq \frac{K}{N^2}\sum_{2\leq k \leq N} \EE \gamma_k = \frac{K}{N^2}\sum_{2\leq k \leq N} \EE(\bu_k^\intercal \bX^*)^2.
    \end{split}
\end{equation*}
that by \cite[Theorem 2]{benaych2011eigenvalues}, vanishes in the limit. We then conclude that
\begin{equation*}
    \frac{1}{N^2}\EE \Tr(\EE[\bX^*\bX^{*\intercal}\mid \bY]\bX^* \bX^{*\intercal}) = \Big(1-\frac{R'(1/\lambda)}{\lambda^2}\Big)^2 + o_N(1).
\end{equation*}
This in turn implies that
\begin{equation*}
    \lim_{N\to\infty} \frac{1}{2N^2}\EE\|\EE[\bX^*\bX^{*\intercal}\mid \bY]-\bX^*\bX^{*\intercal}\|^2 = 1 - \Big(1-\frac{R'(1/\lambda)}{\lambda^2}\Big)^2,
\end{equation*}
which is the MSE of the optimally scaled PCA estimator \cite{benaych2011eigenvalues}.

\section{Sub-optimality of the previously proposed AMP}\label{sec:sub-opt}

Consider the following AMP iteration for $t\ge 1$:
\begin{equation}\label{eq:AMPZF}
    \bff^t = \bY \bu^t - \sum_{i=1}^t{\sf b}_{t, i}\bu^i, \quad \bu^{t+1} = h_{t+1}(\bff^t).
\end{equation}
Here, $\bff^t=(f^t_1, \ldots, f^t_N), \bu^{t+1}=(u^{t+1}_1, \ldots, u^{t+1}_N)\in \mathbb R^N$ and the \emph{denoiser} function $h_{t+1}:\mathbb R\to\mathbb R$ is continuously differentiable, Lipschitz and applied  component-wise, namely $h_{t+1}(\bff^t)=(h_{t+1}(f^t_1), \ldots, h_{t+1}(f^t_N))$. The time-dependent AMP estimate of the spike $\bP^*$ is $(\bu^t)^\intercal \bu^t$.

The Onsager coefficients $\{{\sf b}_{t, i}\}_{i\in [t], t\ge 1}$ are carefully chosen so that, conditioned on the signal, the empirical distribution of the components of iterate $\bff^t$ is Gaussian. The form of these Onsager coefficients was derived by \cite{opper2016theory} using non-rigorous dynamic functional theory techniques, and a rigorous state evolution result was recently proved in \cite{fan2022approximate}. More formally, assume that $\bX^*\stackrel{\mathclap{W_2}}{\longrightarrow} X^*$. Then, the state evolution result of  \cite{fan2022approximate} gives that
\begin{equation}\label{eq:SEZF}
    (\bff^1, \ldots, \bff^t)\stackrel{\mathclap{W_2}}{\longrightarrow} (F_1, \ldots, F_t) := \bmu_t X^* + \bW_t,
\end{equation}
where $\bmu_t = (\mu_1, \ldots, \mu_t)$ and $\bW_t=(W_1, \ldots, W_t)$ is a multivariate Gaussian with zero mean and covariance $\bSigma_t=(\sigma_{ij})_{i,j\le t}$ independent of $X^*$. Furthermore, the mean vectors $\{\bmu_t\}_{t\ge 1}$ and the covariance matrices $\{\bSigma_t\}_{t\ge 1}$ are tracked by a deterministic state evolution recursion. We refer to \cite{fan2022approximate} for more details on this AMP and associated state evolution. Such details won't be crucial for our argument, as we are going to focus directly on the fixed point performance, and not on the dynamics.

For this section, we restrict the analysis to $(i)$ Rademacher prior $P_X=\frac12(\delta_1+\delta_{-1})$, and 
$(ii)$ a ``large enough'' signal-to-noise ratio.
We emphasize that our methodology extends to more generic factorized priors. However, since our goal is to prove sub-optimality of AMP, this setting suffices. Moreover, we will further restrict our proof of sub-optimality to $(iii)$ the ``one-step memory'' version of the AMP in \cite{fan2022approximate}. This means that the denoiser $h_{t+1}$ in \eqref{eq:AMPZF} is allowed to depend only on the past iterate $\bff^t$. A more general ``multi-step memory AMP'' was proposed in \cite{zhong2021approximate}, where the denoiser $h_{t+1}$ can depend on \emph{all} the past iterates $\bff^1, \ldots, \bff^t$. {We remark that the analysis of \cite{opper2016theory} suggests that the fixed points of both these versions are the same (see Sec. 4.2 there); the longer memory of the latter AMP being only useful to improve its convergence properties. Note, however, that the setting of the aforementioned reference is different from ours as we have the presence of a spike, not present in \cite{opper2016theory}. We thus extrapolate the conclusions of \cite{opper2016theory} for the setting without a spike, in order to conjecture that a multi-step denoiser would not improve the fixed point performance of the AMP of \cite{fan2022approximate} for spiked matrix inference with structured noise. This is further validated by our numerical experiments of Sec.~\ref{sec:7.2}.} Therefore, despite our analysis below holds under hypotheses $(i)$--$(iii)$, we conclude more generically that \emph{the existing AMP algorithms for structured PCA in \cite{fan2022approximate,zhong2021approximate} are sub-optimal, and this is the case for most SNR values and prior/signal's distributions that are not rotationally invariant}\footnote{We do not discard the possibility that for very peculiar choices of SNR regimes and/or priors these generically sub-optimal AMPs end-up being optimal, but that would be for highly specific setting-dependent reasons. One case where the AMPs of \cite{fan2022approximate}, and also the spectral PCA algorithm \cite{benaych2011eigenvalues}, are actually optimal is when the prior is rotationally invariant (spherical or Gaussian prior), see Section \ref{sec:PCAopt}.}. From the findings in the following sections, the reason for the sub-optimality of these AMPs will become clear. {Essentially, the data $\bY$ is \emph{not} the best choice of matrix to use in the AMP iterates, despite being the most natural one, and existing AMP algorithms only exploit the noise structure as a mean to converge rather than a way to increase the inference performance, see Sec.~\ref{sec:4.3} for more details.}

\subsection{Analysis of the one-step AMP fixed point performance}

In this section we analyse the AMP algorithm \eqref{eq:AMPZF} for structured PCA proposed in \cite{fan2022approximate}, with a posterior mean denoiser with a single-step memory term:
\begin{align} \label{eq:ZFden}  
h_{t+1}(f^t_i) = \mathbb E[X \mid f^t_i].
\end{align}
In [\cite{fan2022approximate}, Section 3] it is shown that the fixed point of this AMP algorithm is, for $\lambda$ sufficiently large, described by the following system:
\begin{align}
    1-\Delta_*={\rm mmse}\Big(\frac{\lambda^2 \Delta^2_*}{\Sigma_*}\Big), \qquad \Sigma_*= \Delta_* R'\Big(\frac{\lambda \Delta_*(1-\Delta_*)}{\Sigma_*}\Big).\label{ZF_FP}
\end{align}
Here, $R'(\cdot)$ denotes the derivative of the $R$ transform of the (limiting) distribution of the noise eigenvalues $D$. For details about the $R$-transform, the interested reader is referred to \cite{novak2014three}. The above is related to the asymptotic overlap of the AMP estimator through 
\begin{align}
    \lim_{t\to \infty}\lim_{N\to\infty} \Big|\frac{1}{N} \bX^{*\intercal}\hat \bx^t\Big| =\lim_{t\to \infty}\lim_{N\to\infty} \frac{1}{N} \|\hat \bx^t\|^2=\Delta_*
\end{align}
and thus the AMP mean-square error is
\begin{align}
    \lim_{t\to \infty}\lim_{N\to\infty} \frac{1}{2N^2} \EE\|\hat \bx^t(\hat \bx^t)^\intercal-\bX^*\bX^{*\intercal}\|^2=\frac12(1-\Delta_*^2).\label{MSEAMP}
\end{align}
In the case of Rademacher prior the explicit form of the posterior-mean denoiser is
\begin{align}
 h_{t+1}(f^t) = \tanh\Big(\frac{f^t \mu_t}{\sigma_{tt}^2}\Big)
\end{align}
where $(\mu_t,\sigma_{tt})$ are the mean and variance of the (empirically) ``Gaussian observation'' $f^t$ computed from the state evolution of \cite{fan2022approximate}. The associated mmse function is (below $Z\sim \mathcal{N}(0,1)$ is a standard Gaussian random variable and $X^*\sim\frac12(\delta_{-1}+\delta_1)$)
{
\begin{align}
    {\rm mmse}(x)&=1-\mathbb{E}\Big[X^*\frac{\int dP_X(y)\, y\,e^{(Z\sqrt{x}+xX^*)y-\frac{x}{2}y^2} }{\int dP_X(y')e^{(Z\sqrt{x}+xX^*)y'-\frac{x}{2}y^{'2}}}\Big]\\
    &=1-\EE \tanh(x +\sqrt{x} Z).\label{mmseRad}
\end{align}}
We now consider the limit $(\lambda,\Delta_*,\Sigma_*)\to(\infty,1,1)$ which indeed is a fixed point of \eqref{ZF_FP} as we verify at the end of this section. Moreover it is unique, see [\cite{fan2022approximate}, Theorem 3.1]. It implies $x:=\lambda^2 \Delta^2_*/\Sigma_*\to\lambda^2\to \infty$. We have in this limit 
\begin{align}
   {\rm mmse}(x)
    &=1-\int dt \frac{e^{-\frac1{2x}(t-x)^2}}{\sqrt{2\pi x}}\tanh(t) \nn
    &=\sqrt{\frac{\pi}{2}}\frac{e^{-\frac x2}}{\sqrt{x}}(1+O(1/x))\nn
    &=\exp\Big(-\frac x2(1+o_x(1))\Big).\label{scalingMMSElargeSNR}
\end{align}
%
%
%
We plug this in the first equation of \eqref{ZF_FP} which gives at leading order
\begin{align}
 \Delta_*
 &=1-\exp\Big(-\frac{\lambda^2}{2}(1+o_\lambda(1))\Big).\label{scalingAMP}
\end{align}

It just remains to check that $(\lambda,\Delta_*,\Sigma_*)=(\infty,1,1)$ is indeed the unique fixed point of \eqref{ZF_FP} in the large SNR regime. From our analysis we already know that this fixed point is consistent with the first equation of \eqref{ZF_FP}. So we simply need to verify the second one, namely,
\begin{align}
R'\big(\lambda (1-\Delta_*)(1+o_\lambda(1))\big)\to 1
\end{align}
as $\lambda\to\infty$. From \eqref{scalingAMP} we have in this limit $\lambda (1-\Delta_*)\to 0$ exponentially fast in $\lambda$, and it can be readily verified that $R'(0)=1$, as the noise distribution $D$ has unit second moment. This ends the argument.

\subsection{Analysis of the replica Bayes-optimal fixed point}

We now analyse in the same large SNR regime the replica fixed point equations that we recall below for convenience: let us rename $\tilde V:=\tilde v-\tilde q$ as they always appear together. We consider that all quantities below are at their saddle point values maximizing the replica free entropy \eqref{free_energy_RS}.

Let us recall the outcome of the Section \ref{sec:RSFP} on the saddle point equations. Consider the random variables (random through their dependence in $D$)
\begin{align}
    &Q=\gamma m\lambda^2D^2+\gamma \lambda^2 \kappa D - \frac{\gamma\lambda^2}{1-m}\mathbb{E}D(mD+\kappa)H+\frac{m}{1-m},\\
    &H=(\tilde V-\mu\lambda D+\gamma\lambda^2  D^2-\gamma\lambda D^3)^{-1}.\label{defH}
\end{align}
For a given value of the parameter $m$, the saddle point equations require $\tilde V=\tilde V(m)$ to be the solution of the implicit equation
\begin{align}
    \EE H &=1-m.\label{EH=1-m}
\end{align}
Using this implicit solution, $H$ is a function $H(m)$ and $Q=Q(m,\kappa)$. Let $Z\sim\mathcal{N}(0,1)$ and $X^*\sim P_X$.  The saddle point equations over the order parameters $(m,\kappa)$ read
\begin{align}
    m&=1-{\rm mmse}(\hat m) ,\label{RS:m}\\
    \kappa&= \EE DQH\label{RS:kappa},
\end{align}
where ${\rm mmse}(\hat m)$ is the same function \eqref{mmseRad} as before and
\begin{align}
    \hat m&=\hat m(m,\kappa)=\gamma \lambda^2\EE H \Big(\frac{m D^2+\kappa D }{1-m}+D^2 Q\Big)+\mu \lambda^2 m.
\end{align}

Recall that the replica prediction for the MMSE is \eqref{replicaMMSE}. In the regime $\lambda\to\infty$ we thus necessarily have $m \to 1^-$. Since the solution $(m,\kappa)$ of the replica saddle point equations yields the MMSE \eqref{replicaMMSE} which must be at least as good as the AMP MSE \eqref{MSEAMP} then $m\ge \Delta_*$. Thus from \eqref{scalingAMP} we deduce  
\begin{align}
 1-m = O\Big(\exp\Big(-\frac{\lambda^2}{2}(1+o_\lambda(1))\Big)\Big).\label{scaling1-m}   
\end{align}
The support of the density of $D$ is bounded, therefore from \eqref{defH} it is then clear that for \eqref{EH=1-m} to be verified under the scaling \eqref{scaling1-m} in the large $\lambda$ limit, the solution $\tilde V$ of \eqref{EH=1-m} must verify 
\begin{align}
\frac{\lambda^2}{\tilde V}= o_\lambda(1).   \label{89} 
\end{align}
Thus from \eqref{EH=1-m} we obtain
\begin{align}
    (1-m)\tilde V=\EE\Big(1+\frac{\gamma \lambda D^2(\lambda-D)-\mu \lambda D}{\tilde V}\Big)^{-1}=1+o_\lambda(1)\label{(1-m)tildeV}
\end{align}
from which we deduce using \eqref{scaling1-m} that
\begin{align}
\tilde V= \Theta\Big(\frac1{1-m}\Big)=\Omega\Big(\exp\Big(\frac{\lambda^2}{2}(1+o_\lambda(1))\Big)\Big).\label{scalingtildeV}      
\end{align}
This also implies that in the limit of large SNR, $H$ becomes deterministic:
\begin{align}
    H=\tilde V^{-1}+O\Big(\frac{\lambda^2}{\tilde V^2}\Big).\label{HistildeV}
\end{align}
This equality means that $H$ can be written as $\tilde V^{-1}$ plus a possibly random term dependent of $D$, which can be bounded by a non-random constant of order $O(\lambda^2/\tilde V^2)$. Similarly for $Q$: using that $\kappa$ is bounded (recall that it is the limit of the expectation of \eqref{kappa}), \eqref{HistildeV} and \eqref{(1-m)tildeV}, we get the following deterministic scaling in the large SNR regime:
\begin{align}
  Q=\frac{m}{1-m}  +O(\lambda^2).
\end{align}
Using all these scalings together with the fact that $\EE D=0$ and $\kappa$ is bounded (actually it can now be seen from the $(\hat \kappa)$-equation of Section \ref{sec:RSFP} that $\kappa=o_\lambda(1)$) we reach, using $\EE D^2=1$ and \eqref{89}, \eqref{(1-m)tildeV},
\begin{align}
    \hat m&=\gamma \lambda^2\EE H \Big(\frac{m D^2+\kappa D }{1-m}+ D^2 Q\Big)+\mu \lambda^2 m\nn
    &=\gamma \lambda^2 \Big(\tilde V^{-1}+O\Big(\frac{\lambda^2}{\tilde V^2}\Big)\Big) \Big(\frac{2m}{1-m}+O(\lambda^2)\Big)+\mu \lambda^2 m\nn
     &=\gamma \lambda^2  \Big(\frac{2m}{\tilde V(1-m)}+O\Big(\frac{\lambda^2}{\tilde V}\Big)+O\Big(\frac{\lambda^2}{\tilde V}\times \frac{1}{\tilde V(1-m)}\Big)+O\Big(\frac{\lambda^4}{\tilde V^2} \Big)\Big)+\mu \lambda^2 m\nn
     &=\gamma \lambda^2 (2m+o_\lambda(1))+\mu \lambda^2 m\nn
     &=\lambda^2(2\gamma  +\mu)(1+o_\lambda(1))
\end{align}
where also used $m=1+o_\lambda(1)$, see \eqref{scaling1-m}. Recall $m=1-{\rm mmse}(\hat m)$ as well as the scaling \eqref{scalingMMSElargeSNR}. So we have
\begin{align}
    m=1-\exp\Big(- \frac{\lambda^2}2(2\gamma +\mu)(1+o_\lambda(1))\Big).
\end{align}
By comparing with \eqref{scalingAMP} we see that $m\neq \Delta_*$. Moreover, since $m$ is the Bayes-optimal overlap, it has to be the case that $m\ge \Delta_*$, namely, $2\gamma +\mu\ge 1$. From \eqref{gamma(mu)} it can be verified that $2\gamma +\mu> 1$ strictly for $\mu<1$. Equality holds for the pure Wigner case $(\mu=1,\gamma=0)$, as expected. This ends the proof that the MMSE \eqref{replicaMMSE} is asympotically in $\lambda$ strictly exponentially smaller than the MSE of AMP with one-term memory \eqref{MSEAMP} whenever $\mu <1, \gamma >0$.

\subsection{What is actually doing this sub-optimal AMP?\\ Mismatched estimation with Gaussian likelihood}\label{sec:4.3}
In the same spirit as \cite{price_of_ignorance22}, we study here a mismatched estimation where the statistician assumes the noise to be Gaussian, thus a wrong likelihood, whereas the noise is drawn from the quartic ensemble with potential \eqref{quartic_potential}.
In the same way as we did for the quartic potential, the mismatched posterior associated to \eqref{channel} is written as
\begin{align}\label{Glike_posterior}
        d\bar{P}_{X|Y}(\mathbf{x}\mid \mathbf{Y})=\frac{1}{\bar{\mathcal{Z}}(\mathbf{Y})} dP_X(\mathbf{x})\exp\Big(
        \frac{\lambda}{2}\Tr\mathbf{Y}\mathbf{x}\bx^\intercal-\frac{\lambda^2}{4N}\Vert\mathbf{x}\Vert^4
        \Big)
\end{align}
where we have re-absorbed $\mathbf{x}$-independent terms in the normalization. The corresponding log-partition function is
\begin{align}\label{mismatched_free_ent}
    \mathbb{E}\ln \bar{\mathcal{Z}}(\mathbf{Y}).
\end{align}
Notice that we have barred some quantities to distinguish them from their Bayes-optimal analogues. We further stress that, with Gaussian likelihood, the spin-glass model that arises already contains only two body interactions.

We aim at \emph{approximating} \eqref{mismatched_free_ent}. Indeed, we are going to perform a replica symmetric computation, which has no a-priori reasons to be exact as we are not anymore in the Bayesian-optimal setting \cite{barbier_strong_CMP} (nor the mismatched posterior is log-concave \cite{Barbier_IMAIAI21}, see also \cite{Camilli_mismatch} as a counter-example). We denote jointly $\boldsymbol{\tau}=(v_1,q_{01})$ and $\hat{\boldsymbol{\tau}}$ their Fourier conjugates. The partition function can then be expressed using deltas to fix the $\boldsymbol{\tau}$ parameters and expanding $\mathbf{Y}$ as in \eqref{channel}. Up to irrelevant constants it reads
\begin{align}
    \bar{\mathcal{Z}}(\mathbf{Y})=\int dP_X(\mathbf{x})d\boldsymbol{\tau}d\hat{\boldsymbol{\tau}}\exp\big(-\bar{H}_N(\boldsymbol{\tau},\hat{\boldsymbol{\tau}},\mathbf{x};\mathbf{x}_0,\mathbf{Z})\big)
\end{align}
where
\begin{align}
    \begin{split}
        \bar{H}_N(\boldsymbol{\tau},\hat{\boldsymbol{\tau}},\mathbf{x};\mathbf{x}_0,\mathbf{Z}):=N\bar{h}(\boldsymbol{\tau},\hat{\boldsymbol{\tau}})+\mathbf{x}^\intercal \bar{\mathbf{J}}(\boldsymbol{\tau},\hat{\boldsymbol{\tau}},\mathbf{Z})\mathbf{x}+\hat{q}_{01}\mathbf{x}^\intercal\mathbf{x}_0
    \end{split}
\end{align}
and
\begin{align}
    &\bar{h}(\boldsymbol{\tau},\hat{\boldsymbol{\tau}}):=\frac{\lambda^2}{4}v_1^2-\frac{\lambda^2}{2}q_{01}^2-q_{01}\hat{q}_{01}-\frac{v_1\hat{v}_1}{2},\\
    &\bar{\mathbf{J}}(\boldsymbol{\tau},\hat{\boldsymbol{\tau}},\mathbf{Z}):=\frac{\hat{v}_1}{2}I_N-\frac{\lambda}{2}\mathbf{Z}.
\end{align}
While replicating we will need as before to fix the entire overlap structure (and not only $q_{01}$), i.e., $(N\mathbf{q})_{\ell\ell'}=Nq_{\ell\ell'}=\mathbf{x}_\ell^\intercal\mathbf{x}_{\ell'}$, the diagonal elements being denoted as $v_\ell$. As usual, we also introduce the corresponding Fourier conjugates $\hat{\mathbf{q}}$. The expected  replicated partition function then reads as
\begin{align}
    \mathbb{E}\bar{\mathcal{Z}}^n&=\int d\mathbf{q}d\hat{\mathbf{q}}\exp N\Big(
    \sum_{\ell\leq n}\Big(\frac{\lambda^2}{2}q_{0\ell}^2-\frac{\lambda^2}{4}v_\ell^2+\frac{v_\ell\hat{v}_\ell}{2} \Big)+\sum_{0\leq\ell<\ell'\leq n}\hat{q}_{\ell\ell'}q_{\ell\ell'}
    \Big)\nn
    &\qquad\times \int\prod_{\ell=0}^ndP_X(\mathbf{x}_\ell)\exp\Big(
    -\sum_{0\leq\ell<\ell'\leq n}\hat{q}_{\ell\ell'}\mathbf{x}^\intercal_\ell\mathbf{x}_{\ell'}-\frac{1}{2}\sum_{\ell\leq n}\hat{v}_\ell\Vert\mathbf{x}_\ell\Vert^2
    \Big)\nn
    &\qquad\qquad\times\mathbb{E}_\mathbf{O}\exp\Big(
    \frac{\lambda}{2}\Tr\mathbf{O}\mathbf{D}\mathbf{O}^\intercal\sum_{\ell\leq n}\mathbf{x}_\ell\mathbf{x}_\ell^\intercal
    \Big).
\end{align}
In the last line we recognize a rank-$n$ (standard) spherical integral, see Section \ref{sec:standardHCIZ} and \cite{guionnet2005fourier}. Recall the spectrum is deterministic with empirical law tending weakly to $\rho$. Hence we can use the results from Section \ref{Low-rank_int_for_replicas}, with the difference that $\mathbf{C}=I_n\frac{\lambda}{2}D$ is virtually a scalar random variable, and thus w.l.o.g. we can also assume $\mathbf{q}$, and thus $\tilde{\mathbf{q}}$ to be diagonal in \eqref{varFormIN}. If we aim for a replica symmetric ansatz
\begin{align}
    \text{Replica Symmetric Ansatz: }\begin{cases}
    v_\ell=v ,\quad \hat{v}_\ell=\hat{v}\\
    q_{0\ell}=m,\quad \hat{q}_{0\ell}=-\hat{m}\\
    q_{\ell\ell'}=q,\quad\hat{q}_{\ell\ell'}=-\hat{q}\;(\ell\neq\ell')
    \end{cases}
\end{align}
then $\mathbf{q}$ has a non degenerate eigenvalue $v+(n-1)q$ and $n-1$ degenerate eigenvalues $v-q$. Within this ansatz we can thus replace the mentioned spherical integral with
\begin{multline}
    \mathbb{E}_\mathbf{O}\exp\Big(
    \frac{\lambda}{2}\Tr\mathbf{O}\mathbf{D}\mathbf{O}^\intercal\sum_{\ell\leq n}\mathbf{x}_\ell\mathbf{x}_\ell^\intercal
    \Big)=\exp N\Big(
    (n-1)I_D(v-q)+I_D(v-q+nq)
    \Big)\\
    =\exp Nn\Big(
    I_D(v-q)+I'_D(v-q)q +O(n)
    \Big)
\end{multline}
as done in \cite{adatap}, where $I_D(\cdot)$ are rank-one spherical integrals. The rest can be treated exactly as in Section \ref{replica_comp_section}, yielding
\begin{align}
    &\lim_{N\to\infty}\frac{1}{N}\mathbb{E}\ln\bar{\mathcal{Z}}(\mathbf{Y})=\text{extr}\Big\{
    \frac{\lambda^2}{2}m^2-\frac{\lambda^2}{4}v^2+\frac{\hat{v}v}{2}-\hat{m}m+\frac{\hat{q}q}{2}+I_D(v-q)\nn
    &\qquad+qI'_D(v-q)+\EE\ln \int dP_X(x)\exp\Big(
    \sqrt{\hat{q}}Zx-\frac{\hat{q}+\hat{v}}{2}x^2+\hat mX_0x \Big)
    \Big\}
\end{align}
where extremization is intended over $m,\hat{m},q,\hat{q},v,\hat{v}$. With the same notation for the local measure \eqref{local_measure}, the fixed point equations read
\begin{align}\label{eqs_Glike_1}
    &(m)\qquad \hat{m}=\lambda^2m\\
    \label{eqs_Glike_2}
    &(\hat{m})\qquad m=\mathbb{E}X_0\langle X\rangle_{\hat{m},\hat{q},\hat{v}}\\
    \label{eqs_Glike_3}
    &(q)\qquad \hat{q}=2qI_D''(v-q)\\
    \label{eqs_Glike_4}
    &(\hat{q})\qquad q=\mathbb{E}\langle X\rangle^2_{\hat{m},\hat{q},\hat{v}}\\
    \label{eqs_Glike_5}
    &(v)\qquad \hat{v}=\lambda^2 v-2I_D'(v-q)-2qI_D''(v-q)\\
    \label{eqs_Glike_6}
    &(\hat{v})\qquad v=\mathbb{E}\langle X^2\rangle_{\hat{m},\hat{q},\hat{v}}.
\end{align}
The computation above follows the same lines as that in \cite{adatap}, with the only difference being the presence of a planted signal. In case of Gaussian likelihood, the term arising from the spike though is easily tractable, as well as the term containing the fourth norm of the estimator (see \eqref{Glike_posterior}). This suggests that the AMP algorithm designed in \cite{fan2022approximate}, whose aim was to make the results in \cite{adatap,opper2016theory} rigorous, has to match the performance predicted by our replica computation, measured by the MSE
\begin{align}
    \lim_{N\to\infty}\frac{1}{2N^2}\EE\| \bX^*\bX^{*\intercal}-\bar{\EE}[\bX^*\bX^{*\intercal}\mid \bY]\|_{\rm F}^2=\frac12(1-2m^2+q^2).\label{Glike_replicaMSE}
\end{align}in the large $N$ limit, where the $\bar{\EE}$ denotes the expectation w.r.t. \eqref{Glike_posterior}, and $m$ and $q$ solve \eqref{eqs_Glike_1}--\eqref{eqs_Glike_6}.

An alternative to \eqref{eqs_Glike_1}--\eqref{eqs_Glike_6}, which turns out to be more practical from the numerical point of view, can be obtained by keeping $\mathbf{q}$ as it is, without diagonalizing it. In this case one needs the entire formula \eqref{varFormIN}, with $\tilde{\mathbf{q}}$ having the same RS structure as $\mathbf{q}$, in a similar fashion as that of Section \ref{Low-rank_int_for_replicas}. The spherical integral then takes the form (up to constants)
\begin{align}
        &\mathbb{E}_\mathbf{O}\exp\Big(
    \frac{\lambda}{2}\Tr\mathbf{O}\mathbf{D}\mathbf{O}^\intercal\sum_{\ell\leq n}\mathbf{x}_\ell\mathbf{x}_\ell^\intercal
    \Big)\propto \exp \Big(Nn\,\text{extr}\Big\{\frac{v\tilde{v}-q\tilde{q}}{2}-\frac{1}{2}\mathbb{E}\ln(\tilde{v}-\tilde{q}-\lambda D)\nn
    &\qquad-\frac{\tilde{q}}{2}\mathbb{E}(\tilde{v}-\tilde{q}-\lambda D)^{-1}-\frac{1}{2}\ln (v-q)-\frac{q}{2(v-q)}+O(n)\Big\}\Big)
\end{align}
where extremization is w.r.t. the tilded variables only, for now. Consequently, the free entropy rewrites as follows
\begin{align}
    &\lim_{N\to\infty}\frac{1}{N}\mathbb{E}\ln\bar{\mathcal{Z}}(\mathbf{Y})=\text{extr}\Big\{
    \frac{\lambda^2}{2}m^2-\frac{\lambda^2}{4}v^2+\frac{(\hat{v}+\tilde{v})v}{2}-\hat{m}m+\frac{(\hat{q}-\tilde{q})q}{2}\nn
    &\qquad-\frac{1}{2}\mathbb{E}\ln(\tilde{v}-\tilde{q}-\lambda D)-\frac{\tilde{q}}{2}\mathbb{E}(\tilde{v}-\tilde{q}-\lambda D)^{-1}-\frac{1}{2}\ln (v-q)-\frac{q}{2(v-q)}\nn
    &\qquad+\EE\ln \int dP_X(x)\exp\Big(
    \sqrt{\hat{q}}Zx-\frac{\hat{q}+\hat{v}}{2}x^2+\hat mX_0x \Big)
    \Big\}.
\end{align}
Here instead, extremization is intended over the tilded and hatted variables, together with $m,q,v$.

The fixed point equations are
\begin{align}
    \label{m_equation}
    &(m)\qquad \hat{m}=\lambda^2 m\\
    &(\hat{m})\qquad m=\mathbb{E}X_0\langle X\rangle_{\hat{m},\hat{q},\hat{v}}\\
    &(q)\qquad \hat{q}-\tilde{q}=\frac{q}{(v-q)^2}\\
    &(\hat{q})\qquad q=\mathbb{E}\langle X\rangle^2_{\hat{m},\hat{q},\hat{v}}\\
    &(\tilde{q})\qquad q=-\tilde{q}\mathbb{E}(\tilde{v}-\tilde{q}-\lambda D)^{-2}\\
    &(v)\qquad \hat{v}+\tilde{v}-\lambda^2v-\frac{1}{v-q}+\frac{q}{(v-q)^2}=0\\
    &(\hat{v})\qquad v=\mathbb{E}\langle X^2\rangle_{\hat{m},\hat{q},\hat{v}}\\
    \label{tildev_equation}
    &(\tilde{v})\qquad v-\mathbb{E}(\tilde{v}-\tilde{q}-\lambda D)^{-1}+\tilde{q}\mathbb{E}(\tilde{v}-\tilde{q}-\lambda D)^{-2}=0.
\end{align}
Plugging $(\tilde{q})$ into $(\tilde{v})$ we readily see that
\begin{align}\label{tildev_tildeq_equation}
    v-q=\mathbb{E}(\tilde{V}-\lambda D)^{-1}
\end{align}
that works as an equation for $\tilde V:=\tilde{v}-\tilde{q}$ as a function of $v,q$. Analogously, we can plug $(q)$ into $(v)$ obtaining
\begin{align}\label{hatv_hatq_equation}
    \hat{v}+\hat{q}=\lambda^2v+\frac{1}{v-q}-\tilde{V}
\end{align}
that determines $\hat{v}+\hat{q}$ as a function of $v$ and $q$, thanks to the above equation for $\tilde{V}$.
Finally, from $(\tilde{q})$ and $(q)$ we have respectively
\begin{align}
    \label{tildeq_equation_final}
    &\tilde{q}=-\frac{q}{\mathbb{E}(\tilde{V}-\lambda D)^{-2}}\\
    \label{hatq_equation_final}
    &\hat{q}=\frac{q}{(v-q)^2}+\tilde{q}.
\end{align}
Notice that, being in a mismatched setting, there cannot be any simplifications due to the Nishimori identities. 

It is not difficult to verify a posteriori that the systems \eqref{eqs_Glike_1}--\eqref{eqs_Glike_6} and \eqref{m_equation}--\eqref{tildev_equation} are equivalent. The extremization over the tilded variables has indeed the purpose of reproducing $I_D$ and its derivatives. From \eqref{tildev_tildeq_equation} one can infer
\begin{align}
    \tilde{V}=R_{\lambda \mathbf{D}}(v-q)+\frac{1}{v-q}
\end{align}
where $R_{\lambda \mathbf{D}}$ denotes the R-transform of $\lambda \mathbf{D}$, and deriving both sides w.r.t. $v$ one also has
\begin{align}
    \tilde{V}'=-\frac{1}{\mathbb{E}(\tilde{V}-\lambda D)^{-2}}=R'_{\lambda \mathbf{D}}(v-q)-\frac{1}{(v-q)^2}.
\end{align}
Therefore, from \eqref{tildeq_equation_final}
\begin{align}
    \tilde{q}=qR'_{\lambda \mathbf{D}}(v-q)-\frac{q}{(v-q)^2}\quad\Rightarrow\quad \hat{q}=qR'_{\lambda \mathbf{D}}(v-q),
\end{align}
and from \eqref{hatv_hatq_equation}
\begin{align}
    \hat{v}+\hat{q}=\lambda^2 v-R_{\lambda\mathbf{D}}(v-q),
\end{align}
both in perfect agreement with \eqref{eqs_Glike_3} and \eqref{eqs_Glike_5}, as long as $R_{\lambda\mathbf{D}}=2I_D'$ \cite{guionnet2005fourier}.

\begin{figure}[t!]
\begin{center}
                \includegraphics[width=0.7\linewidth]{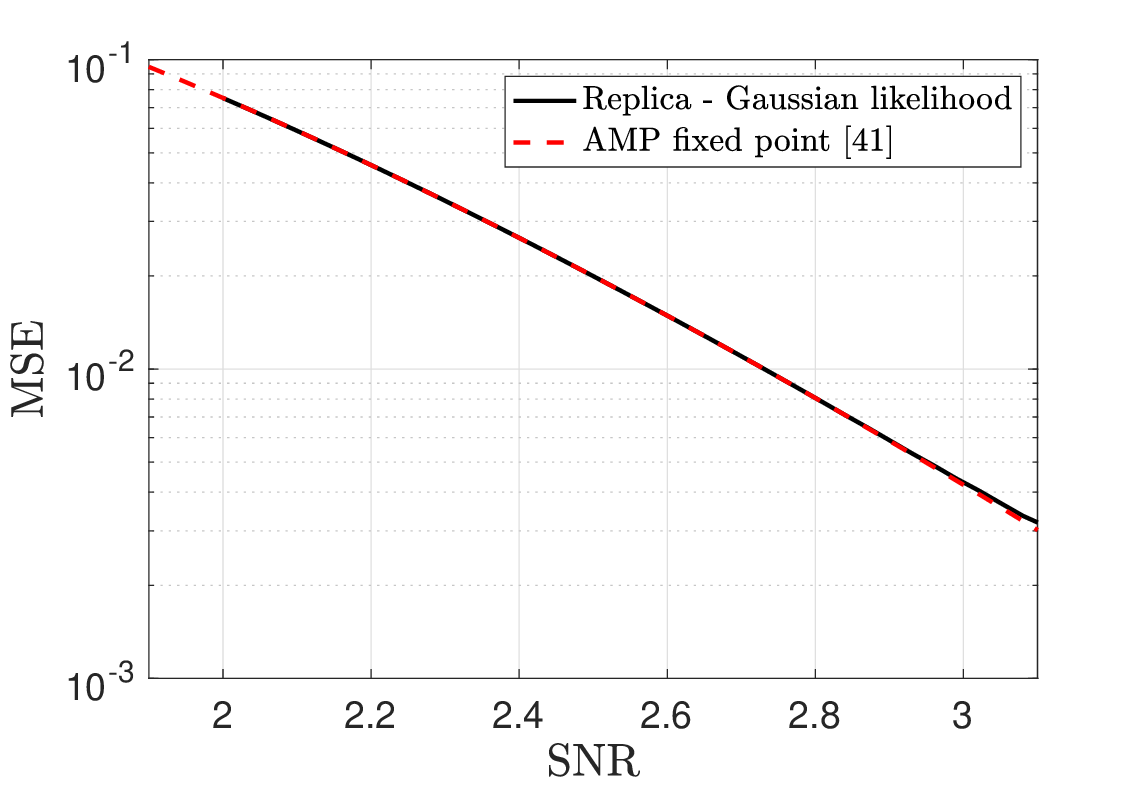}
                 \caption{Comparison between the fixed point of the AMP algorithm in \cite{fan2022approximate} and that obtained via the replica computation (cf. \eqref{m_equation}--\eqref{tildev_equation}), for i.i.d. Rademacher distributed $(X_i^*)_i$. The agreement between these two fixed points  is excellent when the SNR is between 2 and 3.}
                 \label{fig:Glike}
\end{center}
\end{figure}

The system of fixed point equations \eqref{m_equation}--\eqref{tildev_equation} can be solved numerically as follows: \emph{(i)} initialize $m=m_0$, $q=q_0$, $v=v_0$ (the latter being identically $1$ if we use a Rademacher prior); \emph{(ii)}
solve \eqref{tildev_tildeq_equation} for $\tilde{V}$; \emph{(iii)} compute $\hat{q}$, $\tilde{q}$, $\hat{m}$ and $\hat{v}+\hat{q}$ from \eqref{hatq_equation_final}, \eqref{tildeq_equation_final}, \eqref{m_equation} and \eqref{hatv_hatq_equation} respectively; \emph{(iv)} update the values of $m,q,v$ through $(\hat{m})$, $(\hat{q})$ and $(\hat{v})$ obtaining $m^1,q^1$ and $v^1$; \emph{(v)} repeat the steps \emph{(i)}--\emph{(iv)} starting from $m=m^1$, $q=q^1$ and $v=v^1$, thus obtaining $m^2$, $q^2$ and $v^2$, and so forth. 

The numerics arising from this procedure though turns out to be delicate for extreme values of the overlap, namely when $v-q$ is really small, which in turn happens when $\lambda$ is large (typically $>3$ for Rademacher prior). The equation that seems to generate numerical instability is \eqref{hatq_equation_final}, and in particular the two contributions there appearing. With reference to the Rademacher prior, and the related \figurename\,\ref{fig:Glike}, when $\lambda >3$ the overlap gets close to $\sim 0.999$. At this value $1/(v-q)^2\sim 10^6$. $\tilde{q}$, that is also contributing to \eqref{hatq_equation_final}, on the contrary becomes really negative, and is such that $\hat{q}$ is typically $\sim 10$ near $\lambda\sim 3$. The subtraction of these two big numbers apparently dooms the iterations for larger SNRs. This was not the case in the Bayes-optimal setting, thanks to the simplifications introduced by the Nishimori identities. Indeed, from \eqref{Q_final}, \eqref{H_final}, \eqref{tildeV_eq_BO} and \eqref{hatmReplica} we see that $1-m$ appears at most at the first power in denominators. The only issue there was that $\tilde{V}$ can grow exponentially fast, and this can be solved by allowing for a wide range of search of the solution of \eqref{tildeV_eq_BO}.

The fixed point of the MSE arising from \eqref{m_equation}--\eqref{tildev_equation} is compared with the fixed point \eqref{ZF_FP}, which corresponds to the MSE of the AMP proposed in \cite{fan2022approximate}. The match between these two computations is excellent, as long as the SNR is not too large, because of the aforementioned numerical issues in iterating  \eqref{m_equation}--\eqref{tildev_equation}. The plot of Figure \ref{fig:Glike} is a compelling numerical confirmation of the arguments put forward in this section. The conclusion is the following: the AMP algorithm of \cite{fan2022approximate} is solving a \emph{replica symmetric approximation} to the TAP equations associated with the mismatched posterior distribution \eqref{Glike_posterior}. {This analysis, in turns, shows that despite the existing AMP algorithm \cite{fan2022approximate} is aware of the noise structure/statistics, it turns out that it nevertheless makes an implicit assumption of i.i.d. Gaussian noise, and the noise structure is ``only'' exploited to enforce convergence despite this mismatched assumption, rather than as a source of improvement in statistical accuracy. In contrast, our AMP algorithm proposed in the next sections exploits noise structure for both convergence and statistical accuracy.}

\section{Towards an optimal AMP: AdaTAP formalism}\label{sec:AdaTAPtoward}

We have previously shown that the AMP found in the literature for structured PCA \cite{fan2022approximate} is sub-optimal. In this section we understand the fundamental reason behind this issue by generalizing the Adaptive Thouless-Anderson-Palmer (AdaTAP) formalism of \cite{adatap,opper2016theory}. Using our new insights we will then be able in the next section to cure the issue and derive a Bayes-optimal AMP. Like in the replica method and in particular Section \ref{sec:quadraticModel}, a key ingredient will be to reduce the model to a quadratic one of the Ising type.

{Let us mention that, to the best of our knowledge, it is the first time that AdaTAP is used for a planted model with a spike; usually the interaction matrix of the Ising-type model for which AdaTAP was designed is rotationally invariant. Our motivation is that in our setting, the spike being low-rank, this should not affect much the ``macroscopic'' properties of the data matrix compared to the null model, i.e., with pure noise, for which AdaTAP was developed. By macroscopic quantities we mean here density of eigenvalues, its cumulants, or the fact that the eigenbasis of the data remains ``almost'' uniformly (Haar) distributed in the set of orthogonal matrices in spite of the presence of the rank-one spike. Of course these are not proper arguments for its validity in the current setting. The real confirmation of the validity of AdaTAP for our setting will be the a-posteriori perfect match between the BAMP derived from AdaTAP (in particular its state evolution fixed point) and the replica prediction for the MMSE, obtained in a completely different manner.}

\subsection{The AdaTAP single-instance free entropy}\label{sec:adatapFreeEn} 

Recall that the posterior distribution is given by \eqref{posterior}. Denoting $\bp:=\bx\bx^\intercal/N$ and $v:=\|\bx\|^2/N$ the trace of the matrix potential \eqref{quartic_potential} can be expanded as follows:
\begin{align*}
 &\Tr V(\bY-\lambda \bp)= C+\frac \mu 2\Tr\big\{\lambda^2v^2 - 2 \lambda{\bY\bp}\big\} \nonumber\\ 
 &\qquad+\frac\gamma 4\Tr\big\{\lambda^4v^4-4\lambda^3v^2 \bY\bp +4\lambda^2 v\bY^2\bp  - 4{\lambda}\bY^3\bp  + 2\lambda^2 \bY\bp\bY\bp\big\}
\end{align*}
where $C$ is independent of $\bx$. Define the matrix polynomial:
\begin{align}
    \bR(v,\bY):=-(\mu\lambda +\gamma\lambda^3v^2)\bY+\gamma\lambda^2v\bY^2-\gamma\lambda\bY^3.
\end{align}
Then 
\begin{align}
&-\frac N2\Tr V(\bY-\lambda \bp)\propto\nn
&\qquad-\frac N4\lambda^2v^2\Big(\mu+\frac{\gamma\lambda^2v^2}2\Big)-\frac{\bx^\intercal \bR(v,\bY) \bx}2-\frac{N}4\gamma\lambda^2\Big(\frac{\bx^\intercal \bY\bx}N\Big)^2.      \label{polyform}
\end{align}  
The partition function of the model defined by \eqref{partFunc} can then be written in the form
\begin{align}\label{Zadatap}
\mathcal{Z}&\propto \int dP_X(\bx)dvdf \delta(Nv-\|\bx\|^2)\delta(Nf - \bx^\intercal \bY\bx) \nonumber\\
&\qquad\times\exp\Big(-\frac N4\lambda^2v^2\Big(\mu+\frac{\gamma\lambda^2v^2}2\Big)-\frac12 \bx^\intercal \bR(v,\bY) \bx-\frac N4{\gamma\lambda^2}f^2\Big)\nonumber\\
&=\int dvd\hat vdfd\hat f \exp\Big(N\hat vv +N\hat f f -\frac N4\lambda^2v^2\Big(\mu+\frac{\gamma\lambda^2v^2}2\Big)-\frac N4{\gamma\lambda^2}f^2\Big)\nonumber\\
&\qquad\times\int dP_X(\bx)\exp\Big(- \hat v \|\bx\|^2-\hat f \bx^\intercal \bY\bx -\frac12 \bx^\intercal \bR(v,\bY) \bx\Big)\nonumber \\
&=\int dvd\hat vdfd\hat f \exp\Big(N\hat vv +N\hat f f -\frac N4\lambda^2v^2\Big(\mu+\frac{\gamma\lambda^2v^2}2\Big)-\frac N4{\gamma\lambda^2}f^2\Big)\nonumber\\
&\qquad\times\int dP_X(\bx)\exp\Big(\frac12{\bx^\intercal \bJ(v,\hat v,\hat f,\bY)\bx}\Big),
\end{align}
where the overall symmetric interaction matrix of this ``Ising model'' is
\begin{align}
 \bJ(v,\hat v,\hat f,\bY):=-\bR(v,\bY) -2\hat v \boldsymbol{I}_N -2\hat f \bY.
\end{align}

Now, defining the free entropy at fixed $(v,\hat v,\hat f)$
\begin{align}
\Phi_N(v,\hat v,\hat f,\bY):=\ln \int dP_X(\bx)\exp\Big(\frac12{\bx^\intercal \bJ(v,\hat v,\hat f,\bY)\bx}\Big)    ,
\end{align}
because the prior is factorized and we have an Ising-type of model, we can directly use the AdaTAP result \cite{adatap}: it tells us that
%
%
%
\begin{align}
&\Phi_N(v,\hat v,\hat f,\bY)= -{\rm extr}_{\bmm,\boldsymbol\tau,\bV}\Big\{\frac12\bmm^\intercal \bJ(v,\hat v,\hat f,\bY)\bmm\nonumber\\
&\quad+\frac12\ln \det\big(\boldsymbol\Omega-\bJ(v,\hat v,\hat f,\bY)\big)-\frac12 \bV^\intercal\bmm^2+\frac12\sum_{i\le N}\ln (\tau_i-m_i^2)\nonumber \\
& \quad-\sum_{i\le N}\ln\int dP_X(x)\exp\Big(\frac12V_ix^2+\big((\bJ(v,\hat v,\hat f,\bY)\bmm)_i -V_im_i\big) x\Big)\Big\}+o_N(1).\label{adatap_generic_simple}
\end{align}
The extremization is over $(\bmm,\boldsymbol\tau,\bV)\in\mathbb{R}^N\times(\mathbb{R}_{\ge 0}^N)^2$,  $\bmm^2=(m_i^2)_{i\le N}$, and the diagonal matrix 
\begin{align}
\boldsymbol\Omega:={\rm diag}(\bV+(\btau-\bmm^2)^{-1}).\label{Omega}    
\end{align}
Let the bracket notation  $\langle \,\cdot\, \rangle$ be used as expectation with respect to the posterior \eqref{posterior}, while $\langle \,\cdot\,\rangle_{\backslash i}$ is the mean with respect to the Gibbs measure of the ``cavity graph'' where $(J_{ij})_{j}$ are set to $0$. Define also the cavity fields $$h_i:=(\bJ\bx)_i.$$ The 
various variables at their extremum values are (asymptotically exact approximations to) the marginals means, second moments and variances of the cavity fields $$ m_i=\langle x_i\rangle,\quad \tau_i=\langle x_i^2\rangle, \quad V_i=\langle h_i^2\rangle_{\backslash i}-\langle h_i\rangle_{\backslash i}^2.$$
From the AdaTAP free entropy at fixed $(v,\hat v,\hat f)$ we can compute the total log-partition function by saddle-point and get 
\begin{align}\label{AdatapF}
&\frac{1}N\ln \mathcal{Z}(\bY)\propto o_N(1) \nn
&\quad+{\rm extr}\Big\{\hat vv +\hat f f -\frac 14\lambda^2v^2\Big(\mu+\frac{\gamma\lambda^2v^2}2\Big)-\frac 14{\gamma\lambda^2}f^2+\Phi_N(v,\hat v,\hat f,\bY)\Big\}
\end{align}
where the extremization is over $(v,\hat v,f,\hat f)$.

\subsection{Saddle point: reduction to an Ising model, AdaTAP equations and optimal pre-processing of the data}\label{statCondADATAP} 
By extremization of the AdaTAP single-instance free entropy \eqref{AdatapF} we derive the AdaTAP equations. We start with the intensive parameters. The extremization with respect to $f$ is trivial and gives $$\hat f=\frac12\gamma\lambda^2f.$$ So the leading order of the AdaTAP free entropy simplifies to
\begin{align}
{\rm extr}_{v,\hat v,f}\Big\{\hat vv +\frac14\gamma\lambda^2f^2 -\frac 14\lambda^2v^2\Big(\mu+\frac{\gamma\lambda^2v^2}2\Big)+\Phi_N\Big(v,\hat v,\frac12\gamma\lambda^2f,\bY\Big)\Big\}.
\end{align}
The remaining saddle point equations can simply be written down. But this is not necessary as
%
the solution of the three remaining intensive order parameters at the saddle point is simply deduced from their physical meaning, concentration properties, and the Nishimori identity: in the large size limit, $$v\to \lim_{N\to \infty}\frac1N\EE\langle \|\bx\|^2\rangle=\lim_{N\to \infty}\frac1N\EE \|\bX^*\|^2=1,$$ 
as well as (recall \eqref{xyx})
$$f\to\lim_{N\to \infty} \frac1N\EE\langle \bx^\intercal \bY\bx\rangle=\lambda.$$
Moreover we know that $$\hat v\to 0$$ because the prior is already enforcing the constraint that $v=\|\bx\|^2/N\to 1$ in \eqref{Zadatap} without the need to introducing a further, redundant, delta constraint; note that for Rademacher or spherical prior this is simply true as no delta function is needed. Therefore the AdaTAP free entropy becomes
\begin{align}
\frac18\gamma\lambda^4 -\frac 14\mu\lambda^2+\Phi_N\Big(1,0,\frac12\gamma\lambda^3,\bY\Big)+o_N(1).\label{finalAdaTAP}
\end{align}

From this AdaTAP free entropy we see that the values of the marginal means and variances correspond to the solution of the variational problem \eqref{adatap_generic_simple} with interaction matrix 
\begin{align}\label{optimalJ}
 \bJ\Big(1,0,\frac12\gamma\lambda^3,\bY\Big)=\mu\lambda \bY-\gamma\lambda^2\bY^2+\gamma\lambda\bY^3 =:\bJ(\bY) .
\end{align}
So we end-up with the following effective partition function of an Ising-like model:
\begin{align}
 \int dP_X(\bx)\exp\Big(\frac12{\bx^\intercal \bJ(\bY)\bx}\Big)  . 
\end{align}
This shows that the original model is equivalent to an Ising model with interaction matrix $\bJ(\bY)$, which can thus be interpreted as a \emph{Bayes-optimal pre-processing of the data}. This will be verified in Section \ref{sec:AMP}, as the use of $\bJ(\bY)$ instead of $\bY$ will turn AMP into an optimal algorithm. Ising models like this are precisely studied in \cite{adatap} and we can therefore again exploit directly the AdaTAP formalism.
Let 
\begin{align}
    \eta_i(\bJ,\bmm,V_i)&:=\frac{\int dP_X(x) \,x\, e^{\frac12V_ix^2+((\bJ\bmm)_i -V_im_i) x}}{\int dP_X(x)e^{\frac12V_ix^2+((\bJ\bmm)_i -V_im_i) x}},\label{TAPm}\\
    g_i(\bJ,\bmm,V_i)&:=\frac{\int dP_X(x) \,x^2\, e^{\frac12V_ix^2+((\bJ\bmm)_i -V_im_i) x}}{\int dP_X(x)e^{\frac12V_ix^2+((\bJ\bmm)_i -V_im_i) x}}.
\end{align}
The associated AdaTAP equations over 
$(\bmm,\btau,\bV)$, namely the saddle point equations associated with the AdaTAP free entropy \eqref{adatap_generic_simple} with $\bJ(v,\hat v,\hat f,\bY)$ replaced by $\bJ=\bJ(\bY)$, read
\begin{align}
    m_i&=\eta_i(\bJ,\bmm,V_i),\label{tap1}\\
    \tau_i&=g_i(\bJ,\bmm,V_i),\\
    \tau_i-m_i^2&=\big([{\rm diag}(\bV+(\btau-\bmm^2)^{-1})- \bJ]^{-1}\big)_{ii},\label{tap3}
\end{align}
where the last equation is understood as an implicit equation for $\bV$.

\subsection{Optimal pre-processing for the order 6 potential}\label{sec:preprocessing_sestic}
{Let us now consider the pure sestic ensemble with matrix potential $V(x)=\xi x^6/6$, and $\xi=27/80$.}
With the same notations as in the previous section, the trace of the matrix potential now reads
\begin{align}
    &\frac{6}{\xi}\Tr V(\mathbf{Y}-\lambda \mathbf{p})=C+ \Tr\mathbf{p}\Big[
    -6\lambda \mathbf{Y}^5+6\lambda^2v\bY^4-6\lambda^3v^2\bY^3+6\lambda^4v^3\bY^2-6\lambda^5v^4\bY
    \Big]\nonumber\\
    &\qquad+6\lambda^2 f_3f_1+3\lambda^2f_2^2-12\lambda^3vf_2f_1-2\lambda^3f_1^3+9\lambda^4v^2f_1^2+\lambda^6v^6
\end{align}
where we have introduced the parameters $f_j=\Tr\bY^j\mathbf{p}$, $j=1,2,3$. Hence the Hamiltonian in the posterior measure \eqref{posterior} can be written as
\begin{align}
        &-\frac{N}{2}\Tr V(\bY-\lambda \mathbf{p})\propto-\frac{\xi\mathbf{x}^\intercal \mathbf{R}_6(v,\bY)\mathbf{x}}{2}\nonumber\\
        &\qquad-\frac{N\xi}{2}\Big(\lambda^2 f_3f_1+\frac{1}{2}\lambda^2f_2^2-2\lambda^3vf_2f_1-\frac{1}{3} \lambda^3 f_1^3 +\frac{3}{2}\lambda^4v^2f_1^2+\frac{\lambda^6v^6}{6}\Big)
\end{align}
with $\mathbf{R}_6(v,\bY)=-\lambda \mathbf{Y}^5+\lambda^2v\bY^4-\lambda^3v^2\bY^3+\lambda^4v^3\bY^2-\lambda^5v^4\bY$.

Now, as for the quartic potential, we need to fix the order parameters $(f_j)_{j=1,2,3}$ and $v$ introducing the Fourier conjugates $(\hat{f}_j)_{j=1,2,3}$ and $\hat{v}$, which produces additional two-body interaction terms. The partition function then reads
\begin{align}\label{Zadatapsestic}
&\mathcal{Z}=\int dvd\hat v \prod_{j=1}^3df_j d\hat f_j \exp\Big[N\hat vv +N\sum_{j=1}^3\hat f_j f_j -\frac{N\xi}{2}\Big(\lambda^2 f_3f_1+\frac{1}{2}\lambda^2f_2^2-2\lambda^3vf_2f_1\nonumber\\
&-\frac{1}{3} \lambda^3 f_1^3 +\frac{3}{2}\lambda^4v^2f_1^2+\frac{\lambda^6v^6}{6}\Big)
\Big]
\int dP_X(\bx)\exp\Big(\frac12{\bx^\intercal \bJ(v,\hat v,(\hat f_j)_{j=1,2,3},\bY)\bx}\Big)\,,
\end{align}
where
\begin{multline}
    \bJ(v,\hat v,(\hat f_j)_{j=1,2,3},\bY)=\xi\lambda\bY^5-\xi\lambda^2v\bY^4+(\xi\lambda^3v^2-2\hat{f}_3)\bY^3-(\xi\lambda^4v^3+2\hat{f}_2)\bY^2\\
    +(\xi\lambda^5v^4-2\hat{f}_1)\bY-2\hat{v}\boldsymbol{I}_N\,.
\end{multline}
Using the Nishimori identities we are already able to fix the values of $v$ and $\hat{v}$ to $1$ and $0$ respectively.
Furthermore, in order to have an explicit $\bJ(\bY)$, we also need to fix the remaining $\hat{f}_j$'s. Without repeating all the procedure, we notice that their values are determined by the argument of the first exponential in \eqref{Zadatapsestic}. In particular, it suffices to equate to zero its gradient w.r.t. $(f_j)_{j=1,2,3}$ to obtain the system of equations:
\begin{align}
    \hat{f}_1&=\frac{\xi\lambda^2}{2}f_3-\xi\lambda^3 f_2-\frac{\xi\lambda^3}{2}f_1^2+\frac{3\xi\lambda^4}{2}f_1\,,\\
    \hat{f}_2&=\frac{\xi\lambda^2}{2}f_2-\xi\lambda^3f_1\,,\\
    \hat{f}_3&=\frac{\xi\lambda^2}{2}f_1\,,
\end{align}
where we have already set $v=1\,,\,\, \hat{v}=0$.
The values of $(f_j)_{j=1,2,3}$ can be fixed again by the Nishimori identities; indeed, in the thermodynamic limit one has
\begin{align}\label{Nishif_12}
    f_1\to \lambda\,,\quad f_2\to \lim_{N\to\infty}\frac{1}{N}\mathbb{E}\langle\mathbf{x}^\intercal\bY^2\bx\rangle=1+\lambda^2
\end{align}
and also
\begin{align}\label{Nishif_3}
    f_3\to\lim_{N\to\infty}\frac{1}{N}\mathbb{E}\langle\mathbf{x}^\intercal\bY^3\bx\rangle=\lambda^3+2\lambda +\lambda\lim_{N\to\infty}\mathbb{E}\Big[\frac{1}{N}\Tr\bZ\bX^*\bX^{*\intercal}\Big]^2=\lambda^3+2\lambda\,.
\end{align}
The expectation of $(\Tr\bZ\bX^*\bX^{*\intercal}/N)^2$ vanishes because the variable $\Tr\bZ\bX^*\bX^{*\intercal}/N$ concentrates around $0$. Gathering all these results and plugging them into $\bJ$ we finally get the pre-processing matrix that should lead to an AMP algorithm with Bayes-optimal performance:
\begin{align}
    \label{pre-processed-sestic}
    \bJ(\mathbf{Y})&=\bJ\Big(1,0,\hat f_1=\xi\frac{\lambda^5}{2},\hat f_2
    =\xi\frac{\lambda^2-\lambda^4}{2},
    \hat f_3=\xi\frac{\lambda^3}{2},\bY\Big)\nonumber\\
    &=\xi\lambda\bY^5-\xi\lambda^2\bY^4-\xi\lambda^2\bY^2\,.
\end{align}

\subsection{Simplifying the AdaTAP equations by self-averaging of the Onsager reaction term}\label{sec:selfaveraging}

The variances $V_i$ are expected to be self-averaging with respect to the interaction matrix, i.e., in the large size limit $V_i=\bar V:=\lim_{N\to \infty}\EE_\bJ V_i$. The computation we are going to carry out now could be performed in various ways leading to different but equivalent expressions. For pedagogical reasons we take a path that remains as close as possible to the approach of \cite{adatap}. Following this reference we compute the expectation of the AdaTAP equation for $\bV$. In this section, all quantities $\bV$, $\bmm$ and $\btau$ are fixed to a solution of the AdaTAP equations \eqref{tap1}--\eqref{tap3}. 

We start from the convenient identity
\begin{align}
     \big([\boldsymbol\Omega- \bJ]^{-1}\big)_{ii}=\partial_{\Omega_{ii}}\ln \det (\boldsymbol\Omega- \bJ).\label{111}
\end{align}
We are going to average the right-hand side. As for a Gaussian model there is no spin glass phase and strong concentrations take place, the quenched and annealed averages match \cite{adatap}: we can thus simply compute the logarithm of the average of the determinant. A Gaussian identity then gives
\begin{align}
\EE \det (\boldsymbol\Omega- \bJ)^{-1/2}= \int \frac{d\bz}{(2\pi)^{N/2}}\exp\Big(-\frac12 \bz^\intercal\boldsymbol\Omega\bz\Big) \EE\exp\Big(\frac12\bz^\intercal \bJ\bz\Big) .   \label{112}
\end{align}
We denote $\bJ=\sum_{k\le 3}c_k\bY^k$ where $\bc=(\mu\lambda,-\gamma\lambda^2,\gamma\lambda)$. The term we need to compute therefore reads
\begin{align}
   &\EE\exp\Big(\frac12\bz^\intercal \bJ\bz\Big) 
   =\EE \exp\frac12(\bz^\intercal (c_1 \bY+c_2\bY^2+c_3\bY^3)\bz )\label{starting}
\end{align}
Define the order parameters
\begin{align}
p:=\frac{1}N\bz^\intercal \bX^*,\quad v:=\frac{1}{N}\|\bz\|^2, \quad p_D:=\frac1N(\bO \bz)^\intercal\bD\bO\bX^*.
\end{align}
We also have $\|\bX^*\|^2/N=1+o_N(1)$.
Our goal is to identify the generalized spherical integral \eqref{gene_sphInt}. Replacing $\bY$ by $\lambda \bp^* +\bO^\intercal\bD\bO$ (with $\bp^*:=\bX^*\bX^{*\intercal}/N$) we expand the various terms. The first term is then simply
\begin{align}
c_1\bz^\intercal \big(\lambda\bp^*+\bO^\intercal\bD\bO\big)\bz=c_1\big(\lambda Np^2 +(\bO\bz)^\intercal \bD\bO\bz\big).
\end{align}
The second term is 
\begin{align}
    &c_2\bz^\intercal \big(\lambda^2(\|\bX^*\|^2/N) \bp^* +\lambda\bp^*\bO^\intercal\bD\bO + \lambda\bO^\intercal\bD\bO \bp^* +\bO^\intercal\bD^2\bO\big)\bz\nonumber\\
    &=c_2\big(N\lambda^2  p^2+2N\lambda p p_D  +(\bO\bz)^\intercal \bD^2 \bO\bz\big) +o(N).
\end{align}
Finally the last term is a bit more cumbersome:
\begin{align}
    &c_3\bz^\intercal \big(\lambda^3(\|\bX^*\|^4/N^2) \bp^* +\lambda^2\bp^*\bO^\intercal\bD\bO\bp^* + \lambda^2(\|\bX^*\|^2/N)\bO^\intercal\bD\bO \bp^* +\lambda\bO^\intercal\bD^2\bO\bp^*\nonumber\\
    &\quad+\lambda^2(\|\bX^*\|^2/N) \bp^*\bO^\intercal\bD\bO +\lambda\bp^*\bO^\intercal\bD^2\bO + \lambda\bO^\intercal\bD\bO \bp^*\bO^\intercal\bD\bO +\bO^\intercal\bD^3\bO\big)\bz\nonumber\\
    &=c_3 \big(N  \lambda^3 p^2+\lambda^2p^2 (\bO\bX^*)^\intercal \bD\bO\bX^*+2\lambda p (\bO\bz)^\intercal\bD^2\bO\bX^* \nonumber\\
    &\quad+2N\lambda^2 pp_D+\lambda N p_D^2+(\bO\bz)^\intercal\bD^3\bO\bz\big)+o(N).
\end{align}
Combining all we reach
\begin{align}
   &\EE\exp\Big(\frac12\bz^\intercal \bJ\bz\Big) =\int d\btau d\hat\btau \,\exp\Big(NK+\frac12\hat v\|\bz\|^2-\frac N2\hat v v+o(N)\Big)\nonumber\\
  &\qquad\times\EE_\bO\exp\Big((\bO\bz)^\intercal \bC_{\bz ,\bz} \bO\bz+ (\bO\bX^*)^\intercal \bC_{*,*}\bO\bX^*+ (\bO\bX^*)^\intercal\bC_{\bz,*}\bO\bz\Big)\label{114}
\end{align}
with $d\btau:=(dp,dv,dp_D)$ and $d\hat \btau:=(d\hat p,d\hat v,d\hat p_D)$, and (all coupling matrices below are $N\times N$ and symmetric)
\begin{align*}
     K&:=\frac12\big(\mu\lambda^2 p^2-\gamma\lambda^2(\lambda^2p^2+2\lambda p p_D) +\gamma\lambda(\lambda^3p^2+2\lambda^2pp_D+\lambda p_D^2)+\hat p p  +\hat p_D p_D\big),\\
    \bC_{*,*}&:= \frac12\gamma\lambda^3p^2\bD,\\
    \bC_{\bz,\bz}&:=\frac12\big(\mu\lambda \bD-\gamma\lambda^2  \bD^2+\gamma \lambda \bD^3\big),\\
    \bC_{\bz,*}&:=\frac12\big(-\hat pI_N -\hat p_D \bD+2\gamma\lambda^2 p\bD^2\big).
\end{align*}
Note the asymmetry for the variable $\hat v$ compared to the other hat-variables, which has not been injected in the definition of the coupling matrices as the others, but instead leads to a term appearing explicitly in \eqref{114} (both choices are equivalently valid ones). The term averaged over $\bO$ is an inhomogeneous spherical integral as studied in Section~\ref{appendix_spherical_generalized}. In particular, we are in the case of Section \ref{Low-rank_int_for_replicas} with $\ell\in\{0,1\}$ with the exception that $\bX^*$ also (playing the role of the $0$th replica) has a non-zero self-coupling. So this trivial modification of the computation of Section \ref{Low-rank_int_for_replicas} yields 
\begin{align*}
  \EE e^{\frac12\bz^\intercal \bJ\bz} &=\int d\btau d\hat\btau \,\exp \Big(NK+\frac12\hat v\|\bz\|^2-\frac N2\hat v v +NI_\bC(p,v,\hat p, \hat p_D) +o(N)\Big)
\end{align*}
where the $2\times 2$ random coupling matrix $\bC$ has entries
\begin{align}
    2C_{00} &=\gamma\lambda^3 p^2 D, \\
    2C_{11}&=\mu\lambda D-\gamma\lambda^2  D^2+\gamma \lambda D^3,\\
    2C_{01}=2C_{10}&=\frac12(-\hat p -\hat p_D D+2\gamma\lambda^2 pD^2),
\end{align}
with $D\sim \rho$ drawn from the noise asymptotic spectral density, and  
\begin{align}
    &I_\bC(p,v,\hat p, \hat p_D)=\frac12 {\rm extr}_{(\tilde v_0,\tilde v,\tilde p)} \Big\{\tilde v_0  + 2 \tilde p p + \tilde v v \nonumber\\
    &\qquad-\EE\ln\big((\tilde v_0-2C_{00})(\tilde v-2C_{11})-(\tilde p-2C_{01})^2\big)\Big\}-\frac12\ln( v-p^2)-1.\label{IC_125}
\end{align}
One can check that $I_\bC$ is null when $C_{00}=C_{11}=C_{01}$ as it should. Therefore equation \eqref{112} becomes at leading exponential order
\begin{align*}
&\ln\EE \det (\boldsymbol\Omega- \bJ)^{-1/2}\\
&\qquad= \ln\int \frac{d\bz}{(2\pi)^{N/2}} d\btau d\hat\btau \exp\Big(-\frac12 \bz^\intercal(\boldsymbol\Omega-\hat v I_N)\bz+NK -\frac N2\hat v v+NI_\bC+o(N)\Big) \nonumber  \\
&\qquad= \ln\int d\btau d\hat\btau \exp\Big(NK-\frac N2\hat v v+ NI_\bC-\frac 12\ln\det (\boldsymbol\Omega-\hat v I_N)+o(N) \Big)\\
&\qquad={\rm extr}\Big\{NK-\frac N2\hat v v+ NI_\bC-\frac 12\ln\det (\boldsymbol\Omega-\hat v I_N) \Big\}+o(N),
\end{align*}
where we used Gaussian integration followed by a saddle point estimation. By the aforementioned strong concentration properties of the Gaussian model, this is also equal to $-\frac12\ln \EE\det (\boldsymbol\Omega- \bJ)\approx -\frac12\EE\ln \det (\boldsymbol\Omega- \bJ)$ so we reach at leading order
\begin{align}
\EE\ln \det (\boldsymbol\Omega- \bJ)&\approx {\rm extr}\big\{-2NK+N\hat v v -2NI_\bC+\ln\det (\boldsymbol\Omega -\hat v I_N) \big\}\nn
&={\rm extr}_{(\hat v,v)}\Big\{N\hat v v +\sum_{i\le N}\ln(\Omega_{ii} -\hat v)-2N \tilde G(v)\Big\}\label{ToExtr}
\end{align}
where the extremization is over all variables and 
\begin{align}
\tilde G(v):={\rm extr}_{(p,p_D,\hat p, \hat p_D)}\big\{I_\bC(p,v,\hat p, \hat p_D)+K(p,p_D,\hat p,\hat p_D)\big\}.  \label{tildeG}  
\end{align}
This is the analogue of the G-function appearing, e.g., in \cite{adatap}. The extremization over $\hat v$ in \eqref{ToExtr} yields that at the saddle point, $$v = \frac1N \sum_{i\le N}\frac1{\Omega_{ii}-\hat v}.$$ Moreover, combining the TAP equation \eqref{tap3} with \eqref{111} and \eqref{ToExtr} we have
\begin{align}
\EE(\tau_i-m_i^2)=\partial_{\Omega_{ii}}\EE\ln \det (\boldsymbol\Omega- \bJ)=\frac1{\Omega_{ii}-\hat v}\label{usefulId}    
\end{align}
where $\hat v$ is evaluated at its saddle point value. Therefore, summing over $i$ the last identity and recalling the definition of $\Omega_{ii}$ we reach  
\begin{align}
\bar \chi:=\frac1N \EE\sum_{i\le N}(\tau_i-m_i^2)=v=\frac 1N\EE\sum_{i\le N}\frac1{V_i+(\tau_i-m_i^2)^{-1}-\hat v}.    
\end{align}
Under the concentration assumption $V_i=\bar V$ for all $i\le N$, this identity implies 
\begin{align}
    V_i=\hat v.
\end{align}
Additionally the saddle point equation for $v$ extracted from \eqref{ToExtr} yields
\begin{align}
\hat v=2 \partial_v\tilde G(v)|_{v=\bar \chi} \quad \Rightarrow \quad V_i= \bar V :=   2 \partial_v\tilde G(v)|_{v=\bar \chi}.\label{barVtildeG}
\end{align}
The variable $\bar \chi$ is instance-independent and can be deduced from our replica theory: it is equal to twice the MMSE \eqref{replicaMMSE}, namely,
\begin{align}
\bar \chi = 1-m^2    
\end{align}
where $m$ is solution to the replica fixed point equations \eqref{m_replica_simplified}--\eqref{hatmReplica}. Computing $\bar V$ from \eqref{barVtildeG} is then easy, as taking a derivative w.r.t. $v$ of $\tilde G(v)$ is straightforward: all the quantities appearing on the right-hand side of \eqref{tildeG} are at the saddle point, so it simply amounts to a partial derivative of \eqref{IC_125}. It gives 
\begin{align}
 \bar V =    \tilde v-\frac{1}{1-m^2-p^2}
\end{align}
where $\tilde v=\tilde v(p,v)$ takes its saddle point value from \eqref{IC_125} while $p=p(v)$ from \eqref{tildeG} with $v=\bar \chi$ fixed.

Thanks to these simplifications the AdaTAP equation reads in the large size limit
\begin{align}
    m_i&=\eta_i(\bJ,\bmm,\bar V).
\end{align}
Or, when written in a fashion closer to the form of AMP algorithms, the AdaTAP equations read
\begin{align}
  \bff= \bJ\bmm -\bar V \bmm,\qquad \bmm=\eta_{\bar V}(\bff),
\end{align}
where the ``denoiser'', which is applied component-wise above, is
\begin{align}
    \eta_{\bar V}(f):=\frac{\int dP_X(x) \,x\, e^{\frac12\bar V x^2+ fx}}{\int dP_X(x)e^{\frac12\bar Vx^2+f x}}.\label{denoiserV}
\end{align}

\section{Approximate message passing, optimally}\label{sec:AMP}

We will now describe an AMP algorithm that matches the replica prediction for the minimum mean-square error. We therefore conjecture it to be Bayes-optimal and refer to it as BAMP. The main difference between this new AMP and the previously proposed one for structured PCA is that it is constructed from iterates based on the pre-processed matrix $\bJ(\bY)$ rather than $\bY$ as in \cite{fan2022approximate}. Consequently, the Onsager reaction terms will have to be adapted. {Finally, inspired by the structure of BAMP, we present a choice of denoisers in the AMP of \cite{fan2022approximate} which alternates between linear functions and posterior means given all the previous iterates (AMP with Alternating Posteriors, AMP-AP). The numerical results of the following section will show that AMP-AP matches the BAMP performance and, therefore, the replica prediction.}

\subsection{BAMP: Bayes-optimal AMP}\label{subsec:BoptAMP}

The AdaTAP approach described in Section \ref{sec:AdaTAPtoward} suggests that, in order to achieve Bayes-optimal performance, one should consider the BAMP iteration which is of the form
\begin{equation}\label{eq:AMPnew}
    \bff^t = \bJ(\bY) \bu^t - \sum_{i=1}^t{\sf c}_{t, i}\bu^i, \quad \bu^{t+1} = g_{t+1}(\bff^t), \quad t\ge 1.
\end{equation}
As in the AMP iteration \eqref{eq:AMPZF}, the \emph{denoiser} function $g_{t+1}:\mathbb R\to\mathbb R$ is continuously differentiable, Lipschitz and applied  component-wise. Crucially, the Onsager coefficients $\{{\sf c}_{t, i}\}_{i\in [t], t\ge 1}$ need to ensure that, conditioned on the signal, the empirical distribution of the iterate $\bff^t$ is Gaussian, namely, the convergence result in \eqref{eq:SEZF} holds for some mean vector $\bmu_t$ and covariance matrix $\bSigma_t$.

We highlight that the matrix $\bY$ in \eqref{eq:AMPZF} is replaced by the matrix $\bJ(\bY)$ in \eqref{eq:AMPnew}. This means that the state evolution result of \cite{fan2022approximate} cannot be applied and the Onsager coefficients $\{{\sf c}_{t, i}\}_{i\in [t], t\ge 1}$ will have a different form with respect to $\{{\sf b}_{t, i}\}_{i\in [t], t\ge 1}$.

In what follows, we will consider the general case in which $\bJ(\bY)$ is an arbitrary polynomial of degree $K$ in $\bY$, namely, $$\bJ(\bY) = \sum_{i\le K} c_i \bY^i.$$ To compute $\{{\sf c}_{t, i}\}_{i\in [t], t\ge 1}$ and obtain a state evolution result for the iteration \eqref{eq:AMPnew}, the key idea is to map the first $T$ iterations of \eqref{eq:AMPnew} to the first $K \times T$ iterations of an \emph{auxiliary} AMP with iterates $(\tilde \bz^t, \tilde \bu^t)_{t\in [KT]}$ and denoisers $\{\tilde h_{t+1}\}_{t\in [K T]}$, whose state evolution can be deduced from \cite{fan2022approximate}. The denoisers $\{\tilde h_{t+1}\}_{t\in [KT]}$ of this auxiliary AMP are chosen so that, for $t\in [T]$  and $\ell\in [K]$, 
\begin{align}
    \lim_{N\to\infty}\frac{1}{N}&\|\tilde\bu^{K(t-1)+\ell} - \bY^{\ell-1}\bu^t\|_2^2=0.\label{eq:mapping1}
\end{align}
More specifically, for $t\in [T]$  and $\ell\in \{2, \ldots, K\}$, the denoiser $\tilde h_{K(t-1)+\ell}$ giving $\tilde\bu^{K(t-1)+\ell}$ is a linear combinations of the past iterates $\tilde \bu^1, \ldots, \tilde \bu^{K(t-1)+\ell-1}$ and of $\tilde \bz^{K(t-1)+\ell-1}$; furthermore, the coefficients of these linear combinations are chosen to ensure that $\tilde\bu^{K(t-1)+\ell}\approx \bY^{\ell-1}\bu^t$. Hence, from $\tilde \bz^{Kt}$ and $(\tilde \bu^{K(t-1)+\ell})_{\ell\in \{2, \ldots, K\}}$, one obtains $(\bY^{\ell}\bu^t)_{\ell\in [K]}$ (up to an $o_N(1)$ error). As a result, $\bJ(\bY) \bu^t$ can be expressed as a linear combination of $(\tilde \bu^1, \ldots, \tilde \bu^{Kt}, \tilde \bz^{Kt})$, which in turn is a linear combination of \emph{(i)} the past iterates $\{\bu^i\}_{i\in [t]}$, \emph{(ii)}
the signal $\bX^*$, plus \emph{(iii)} independent Gaussian noise. By inspecting the coefficients of this linear combination, one deduces \emph{(i)} the values of 
the Onsager coefficients $\{{\sf c}_{t, i}\}_{i\in [t], t\ge 1}$ (as the coefficients multiplying the past iterates $\{\bu^i\}_{i\in [t]}$), \emph{(ii)} the mean $\mu_t$ (as the coefficient multiplying the signal $\bX^*$), and \emph{(iii)} the covariance matrix $\bSigma_t$ (as the covariance matrix of the remaining noise terms). Finally, by making $\tilde h_{Kt+1}$ depend on $g_{t+1}$, we enforce that $\tilde\bu^{Kt+1}\approx \bu^{t+1}$. We highlight that the auxiliary AMP is employed purely as a proof technique. Its formal description is deferred to Appendix \ref{appsubsec:aux}, and its state evolution follows in Appendix \ref{appsubsec:auxAMPSE}. 

For simplicity, we assume to have access to an initialization $\bu^1\in \mathbb R^N$, which is independent of the noise $\bZ$ and has a strictly positive correlation with $\bX^*$, i.e., 
\begin{equation}\label{eq:AMPinit}
(\bX^*, \bu^1)\stackrel{\mathclap{W_2}}{\longrightarrow} (X^*, U_1), \quad \mathbb E[X^* \,U_1]:=\epsilon>0, \quad \mathbb E[U_1^2]= 1.
\end{equation}
The requirement \eqref{eq:AMPinit} is rather standard in the analysis of AMP algorithms. However, as having access to such an initialization is often impractical, a recent line of work has designed AMP iterations which are initialized with the eigenvector of the data matrix $\bY$ associated to the largest eigenvalue, see \cite{montanari2017estimation,mondelli2021pca,zhong2021approximate}. By following the approach detailed in \cite{mondelli2021pca}, one can design a Bayes-optimal AMP with spectral initialization. As this would be out of the scope of the current contribution -- whose goal is to obtain an algorithm with a Bayes-optimal \emph{fixed point} -- we will not pursue this extension here. 

\subsection{Onsager coefficients and state evolution recursion}
\label{subsec:OnsSE}

We now detail the calculation of the Onsager coefficients $\{{\sf c}_{t, i}\}_{i\in [t], t\ge 1}$ and of the state evolution parameters $\bmu_t, \bSigma_t$ associated to the AMP algorithm \eqref{eq:AMPnew}. We obtain these quantities from the state evolution recursion of the auxiliary AMP which, up to a $o_N(1)$ error, tracks $(\bY^{\ell-1}\bu^t)_{\ell\in [K]}$ and, as such, has a number of iterations $K$ times larger. To express the latter, we define a number of auxiliary quantities: the vector $\tilde \bmu_{Kt}\in \mathbb R^{Kt}$, the matrices $\tilde\bDelta_{Kt},\tilde\bPhi_{Kt},\tilde \bSigma_{Kt},\tilde \bB_{Kt} \in \mathbb R^{Kt\times Kt}$, and the coefficients $\{\alpha_{i, j}\}_{j\in [i], i\in [Kt]}$, $\{\beta_{i, j}\}_{j\in [\lfloor(i-1)/K\rfloor+1], i\in [Kt]}$, $\{\gamma_i\}_{i\in [Kt]}$, $\{\theta_{i, j}\}_{i\in [t], j\in [Kt]}$. The quantities $\tilde \bmu_{Kt}$, $\tilde\bDelta_{Kt}$, $\tilde\bPhi_{Kt}$, $\tilde \bSigma_{Kt}, \tilde \bB_{Kt}$ are directly connected to the state evolution of the auxiliary AMP (see the remark at the end of Appendix \ref{appsubsec:auxAMPSE}). Furthermore, the coefficients $\{\alpha_{i, j}\}_{j\in [i], i\in [Kt]}$, $\{\beta_{i, j}\}_{j\in [\lfloor(i-1)/K\rfloor+1], i\in [Kt]}$, $\{\gamma_i\}_{i\in [Kt]}$, $\{\theta_{i, j}\}_{i\in [t], j\in [Kt]}$ allow for a useful (approximate) decomposition of the vectors $(\bY^{\ell}\bu^t)_{\ell\in [K-1]}$, see the remark at the end of this section.

We start with the initialization 
\begin{equation}
    \tilde U_1 := U_1,
\end{equation}
where $U_1$ satisfies \eqref{eq:AMPinit}, and we set
\begin{equation}
\begin{split}\label{eq:SEnewinit}
    \tilde\mu_1 &:= \lambda \epsilon, \quad (\tilde\bDelta_1)_{1, 1} := 1, \quad (\tilde\bPhi_1)_{1, 1} := 0,\quad (\tilde\bB_1)_{1, 1} :=\bar\kappa_1, \quad (\tilde\bSigma_1)_{1, 1} :=\bar\kappa_2,\\
    \alpha_{1, 1}&:=0, \quad \beta_{1, 1} := 1, \quad \gamma_1 := 0.
\end{split}
\end{equation}
Here and in what follows, we denote by $\{\bar \kappa_k\}_{k\ge 1}$ the sequence of free cumulants associated to $D$. The free cumulants can be recursively computed from the moments, see e.g. \cite[Section 2.5]{novak2014three}.  

For $t\ge 1$, let us define
\begin{align}
        \tilde U_{K(t-1)+1+\ell} &:= \tilde Z_{K(t-1)+\ell} + \tilde \mu_{K(t-1)+\ell} X^* + \hspace{-1em}\sum_{j=1}^{K(t-1)+\ell} \hspace{-.5em}(\tilde \bB_{K(t-1)+\ell})_{K(t-1)+\ell, j}\tilde U_j,\, \ell\in [K-1],\label{eq:3ip0}\\ 
        \tilde U_{Kt+1} &:= g_{t+1}\Big(\mu_{t}X^* + \sum_{j=1}^{Kt}\theta_{t, j}\tilde Z_{j}\Big),\label{eq:3ip1}\\ 
    (\tilde Z_1, \ldots, \tilde Z_{Kt})&\sim \normal (0, \tilde\bSigma_{Kt}) \mbox{ and independent of } X^*, U_1. \label{eq:Zdef}
\end{align}
We note that the function $g_{t+1}$ in \eqref{eq:3ip1} is the AMP denoiser in \eqref{eq:AMPnew}. 
Let us also define
\begin{align}
    \tilde\mu_{K(t-1)+1+\ell} &= \lambda \mathbb E[\tilde U_{K(t-1)+1+\ell}X^*],\label{eq:tildemuup}\\
    (\tilde\bDelta_{K(t-1)+1+\ell})_{K(t-1)+1+\ell, j} &=(\tilde\bDelta_{K(t-1)+1+\ell})_{j, K(t-1)+1+\ell} = \mathbb E[\tilde U_{K(t-1)+1+\ell}\tilde U_{j}], \label{eq:Deltaup}\\
    &\hspace{14em} j\in [K(t-1)+1+\ell],\notag\\
    (\tilde\bPhi_{K(t-1)+1+\ell})_{K(t-1)+1+\ell, j} &= \mathbb E[\partial_{\tilde Z_j}\tilde U_{K(t-1)+1+\ell}], \quad j\in [K(t-1)+\ell],\label{eq:Phiup}\\
    \tilde \bB_{K(t-1)+1+\ell} &= \sum_{j=0}^{K(t-1)+\ell}\bar\kappa_{j+1}\tilde \bPhi_{K(t-1)+1+\ell}^j,\label{eq:Bup}\\
    \tilde \bSigma_{K(t-1)+1+\ell} = \sum_{j=0}^{2(K(t-1)+\ell)}&\bar\kappa_{j+2}\sum_{k=0}^j(\tilde \bPhi_{K(t-1)+1+\ell})^k\tilde\bDelta_{K(t-1)+1+\ell}(\tilde\bPhi_{K(t-1)+1+\ell}^\intercal)^{j-k}.\label{eq:tildeBup}
\end{align}
Now, we obtain $\tilde\bmu_{K(t-1)+1}, \tilde\bDelta_{K(t-1)+1}, \tilde\bPhi_{K(t-1)+1}, \tilde \bB_{K(t-1)+1}, \tilde \bSigma_{K(t-1)+1}$ by setting $\ell=0$ in \eqref{eq:tildemuup}--\eqref{eq:tildeBup} (and by using the initialization \eqref{eq:SEnewinit} for $t=1$). This allows us to define $\tilde U_{K(t-1)+2}$ by setting $\ell=1$ in \eqref{eq:3ip0}. Next, we obtain  $\tilde\bmu_{K(t-1)+2}$, $\tilde\bDelta_{K(t-1)+2}$, $\tilde\bPhi_{K(t-1)+2}$, $\tilde \bB_{K(t-1)+2}$, $\tilde \bSigma_{K(t-1)+2}$ by setting $\ell=1$ in \eqref{eq:tildemuup}--\eqref{eq:tildeBup}. This allows us to define $\tilde U_{K(t-1)+2}$ by setting $\ell=2$ in \eqref{eq:3ip0}. We iterate this procedure until we have obtained 
($\tilde\bmu_{K(t-1)+\ell}$, $\tilde\bDelta_{K(t-1)+\ell}$, $\tilde\bPhi_{K(t-1)+\ell}$, $\tilde \bB_{K(t-1)+\ell}$, $\tilde \bSigma_{K(t-1)+\ell}$)$_{\ell\in [K]}$ and $(\tilde U_{K(t-1)+1+\ell})_{\ell\in [K-1]}$.
We note that, for any $i\ge 1$, $\tilde \bB_{i}$ and $\tilde \bSigma_{i}$ are the top left sub-matrices of $\tilde \bB_{i+1}$ and $\tilde \bSigma_{i+1}$, respectively.

At this point, for $\ell\in [K-1]$, we compute the quantities $\{\alpha_{K(t-1)+1+\ell, j}\}_{j\in [K(t-1)+\ell]}$, $\{\beta_{K(t-1)+1+\ell, j}\}_{j\in [t]}$, $\gamma_{K(t-1)+1+\ell}$  as
\begin{align}
    \alpha_{K(t-1)+1+\ell, j} &= \delta_{K(t-1)+\ell, j} + \sum_{\substack{i=1 \\ i\not\equiv 1 ({\rm mod}\, K)}}^{K(t-1)+\ell}\alpha_{i, j}\,(\tilde\bB_{K(t-1)+\ell})_{K(t-1)+\ell, i}, \quad j\in [K(t-1)+\ell],\label{eq:alphaup}\\
    \beta_{K(t-1)+1+\ell, j} &= (\tilde\bB_{K(t-1)+\ell})_{K(t-1)+\ell, K(j-1)+1} +\hspace{-.5em} \sum_{\substack{i=1 \\ i\not\equiv 1 ({\rm mod}\, K)}}^{K(t-1)+\ell}\hspace{-.75em}\beta_{i, j}\,(\tilde\bB_{K(t-1)+\ell})_{K(t-1)+\ell, i}, \,\, j\in [t],\label{eq:betaup}\\
    \gamma_{K(t-1)+1+\ell} &= \tilde\mu_{K(t-1)+\ell} +\sum_{\substack{i=1 \\ i\not\equiv 1 ({\rm mod}\, K)}}^{K(t-1)+\ell}(\tilde\bB_{K(t-1)+\ell})_{K(t-1)+\ell, i}\gamma_{i}.\label{eq:gammaup}
\end{align}
In \eqref{eq:alphaup}, $\delta_{i, j}$ denotes the Kronecker symbol ($\delta_{i, j}=1$ if $i=j$ and $0$ otherwise), and $\alpha_{i, j}$ is assumed to be $0$ if $j\ge i$; in \eqref{eq:betaup}, $\beta_{i, j}$ is assumed to be $0$ if $j> \lceil(i-1)/K\rceil$.

Recall that $\{c_i\}_{i=1}^K$ are the coefficients of the polynomial $\bJ(\bY)$ (in $\bY$), i.e., $\bJ(\bY)=\sum_{i=1}^K c_i \bY^i$. 
Finally, we are ready to express $\mu_{t}$, $\{\theta_{t, j}\}_{j\in [Kt]}$:
\begin{align}
    \mu_{t} &= \sum_{i=1}^K c_i\Big(\tilde\mu_{K(t-1)+i} + \sum_{k=1}^{K(t-1)+i}\gamma_k\,(\tilde\bB_{K(t-1)+i})_{K(t-1)+i, k}\Big),\label{eq:muup}\\
    \theta_{t, j} &= \sum_{i=1}^K c_i\Big(\delta_{K(t-1)+i, j} + \sum_{k=1}^{K(t-1)+i}\alpha_{k, j}\,(\tilde\bB_{K(t-1)+i})_{K(t-1)+i, k}\Big),\quad j\in [Kt]\label{eq:thetaup}.
\end{align}
As before, $\alpha_{i, j}$ is assumed to be $0$ if $j\ge i$. This allows us to define $\tilde U_{Kt+1}$ via \eqref{eq:3ip1} and, after setting $\beta_{Kt+1, t+1}=1$, $\beta_{Kt+1, j}=0$ for all $j\in [t]$, $\alpha_{Kt+1, j}=0$ for all $j\in [Kt+1]$ and $\gamma_{Kt+1}=0$, the definition of the state evolution recursion is complete.

From the state evolution recursion defined above, we can derive the Onsager coefficients $\{{\sf c}_{t, j}\}_{j
\in [t]}$ as
\begin{equation}\label{eq:Ons}
    {\sf c}_{t, j} =\sum_{i=1}^K c_i\sum_{k=1}^{K(t-1)+i}\beta_{k, j}\,(\tilde\bB_{K(t-1)+i})_{K(t-1)+i, k},\quad j\in [t].
\end{equation}

At this point, we are ready to present our result concerning the characterization of the iterates of the AMP algorithm \eqref{eq:AMPnew}, with Onsager coefficients given by \eqref{eq:Ons}, in the high-dimensional limit $N\to\infty$: we prove that the convergence \eqref{eq:SEZF} holds, where $\mu_t$ is given by \eqref{eq:muup} and $W_t=\sum_{j=1}^{Kt}\theta_{t, j}\tilde Z_{j}$, with $\{\theta_{t, j},\tilde Z_j\}_{j\in [Kt]}$ described by the recursion above. Equivalently \cite[Corollary 7.21]{feng2022unifying}, the convergence can be expressed in terms of pseudo-Lipschitz test functions. A function $ \psi\colon\mathbb R^m\to\mathbb R $ is \emph{pseudo-Lipschitz of order $2$}, denoted by $ \psi\in\mathrm{PL}(2) $, if there exists a constant $ C>0 $ such that 
\begin{align}
    \|\psi(\bx) - \psi(\by)\|_2 &\le C\Big(1 + \|\bx\|_2 + \|\by\|_2\Big) \|\bx - \by\|_2, \notag
\end{align}
for all $ \bx,\by\in\mathbb R^m$. 

\begin{theorem}[State evolution of the BAMP]\label{th:SE}
Let $\bY$ be given by \eqref{channel} and which verifies Hypothesis \ref{hyp-density}, and let $\bJ(\bY)=\sum_{i=1}^K c_i \bY^i$. Consider the AMP algorithm \eqref{eq:AMPnew}, with initialization \eqref{eq:AMPinit}, Onsager coefficients $\{{\sf c}_{t, j}\}_{j
\in [t]}$ given by \eqref{eq:Ons} and where, for $t\ge 1$, $g_{t+1}$ is continuously differentiable and Lipschitz. Then, the following limit holds almost surely for any $\mathrm{PL}(2)$ function $\psi: \mathbb R^{2t+2}  \to \mathbb R$, for $t \geq 1$ as $N\to \infty$:
\begin{align}
&  \frac{1}{N} \sum_{i\le N} \psi(u_i^1, \ldots, u_i^{t+1}, f_i^1, \ldots, f_i^t, X^*_i) \to \mathbb \EE\, \psi(U_1, \ldots, U_{t+1}, F_1, \ldots, F_t, X^*) . \label{eq:psiX}
\end{align}
Equivalently, as $N\to\infty$, the joint empirical distribution of $(\bu^1, \ldots, \bu^{t+1}, \bff^1, \ldots, \bff^t, \bX^*)$ converges almost surely in Wasserstein-2 distance to $(U_1, \ldots, U_{t+1}, F_1, \ldots, F_t, X^*)$. Here, for $i\in [t]$, $U_{i+1}=g_{i+1}(F_t)$ and $(F_1, \ldots, F_t)=\bmu_t X^*+(W_1, \ldots, W_t)$, with $W_t=\sum_{j=1}^{Kt}\theta_{t, j}\tilde Z_{j}$ and where $\bmu_t$ can be computed via \eqref{eq:muup}, $\{\theta_{t, j}\}_{j\in [Kt]}$ via \eqref{eq:thetaup} and $\{Z_j\}_{j\in [Kt]}$ is given by \eqref{eq:Zdef}.
\end{theorem}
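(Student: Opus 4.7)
The natural strategy is to reduce the BAMP recursion, which involves the polynomial $\bJ(\bY) = \sum_{i=1}^K c_i \bY^i$, to an auxiliary AMP whose iterates use $\bY$ itself, so that the existing state evolution result of \cite{fan2022approximate} for rotationally invariant noise can be applied as a black box. Concretely, I will construct an auxiliary AMP with iterates $(\tilde \bz^s, \tilde \bu^s)_{s \in [KT]}$ of the form $\tilde \bz^s = \bY \tilde \bu^s - \sum_{j \le s}(\tilde \bB_s)_{s,j}\tilde \bu^j$, $\tilde \bu^{s+1} = \tilde h_{s+1}(\tilde \bz^s, \ldots, \tilde \bz^1, \tilde \bu^1)$, whose denoisers are designed so that, for every $t \in [T]$ and $\ell \in [K]$, the vector $\tilde \bu^{K(t-1)+\ell}$ agrees with $\bY^{\ell-1} \bu^t$ up to an $\ell^2$-error that vanishes as $N\to\infty$. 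To obtain this agreement, the denoiser producing $\tilde \bu^{K(t-1)+1+\ell}$ for $\ell \in [K-1]$ is taken linear and chosen so that $\tilde \bz^{K(t-1)+\ell}$ plus the sum of the aligned components $(\tilde \bB)_{K(t-1)+\ell, j}\tilde \bu^j$ reconstitutes $\bY \cdot \bY^{\ell-1}\bu^t = \bY^\ell \bu^t$; this is exactly the content of \eqref{eq:3ip0}. At the end of each block of $K$ iterations, the denoiser is made nonlinear and set to $g_{t+1}$ applied to the appropriate linear combination, cf. \eqref{eq:3ip1}, so that $\tilde \bu^{Kt+1}$ tracks $\bu^{t+1}$.

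The heart of the argument is then to apply the state evolution for rotationally invariant AMP of \cite{fan2022approximate} to the auxiliary recursion, yielding that the empirical distribution of $(\tilde \bu^1,\ldots,\tilde \bu^{Kt}, \tilde \bz^1,\ldots,\tilde \bz^{Kt}, \bX^*)$ converges in $W_2$ to the Gaussian limit described by $(\tilde \bmu_{Kt}, \tilde \bDelta_{Kt}, \tilde \bPhi_{Kt}, \tilde \bB_{Kt}, \tilde \bSigma_{Kt})$ defined by \eqref{eq:SEnewinit}--\eqref{eq:tildeBup}. The free cumulants $\{\bar \kappa_k\}$ appear through the Onsager correction matrix $\tilde \bB_s = \sum_{j \ge 0} \bar \kappa_{j+1} \tilde \bPhi_s^j$ and noise covariance formula, which are precisely those of \cite{fan2022approximate} since $\bY = \lambda \bP^*/N + \bZ$ is a low-rank perturbation of a rotationally invariant matrix and these perturbations do not affect the cumulant structure. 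Using this, I decompose each $\bY^\ell \bu^t$ into a piece aligned with past iterates $\{\bu^i\}_{i \le t}$, a piece along the signal $\bX^*$, and an asymptotically Gaussian noise part; the bookkeeping coefficients $\alpha_{i,j}, \beta_{i,j}, \gamma_i$ defined by \eqref{eq:alphaup}--\eqref{eq:gammaup} encode exactly this decomposition when one unfolds the recursive definition of $\tilde \bu^{K(t-1)+1+\ell}$ in \eqref{eq:3ip0}.

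Summing the decomposition against the coefficients $c_i$ of the polynomial $\bJ$, the projection of $\bJ(\bY)\bu^t$ onto $\{\bu^i\}_{i \le t}$ produces exactly the Onsager coefficients ${\sf c}_{t,j}$ of \eqref{eq:Ons}, so that $\bff^t = \bJ(\bY)\bu^t - \sum_{i \le t}{\sf c}_{t,i}\bu^i$ in BAMP is asymptotically equal to $\mu_t \bX^* + \sum_{j \le Kt} \theta_{t,j}\tilde \bz^j$, with $\mu_t, \theta_{t,j}$ given by \eqref{eq:muup}--\eqref{eq:thetaup}. Transferring the joint $W_2$-convergence from the auxiliary iterates $(\tilde \bz^s)$ to $(\bff^t, \bu^t)$ requires checking that linear combinations with bounded coefficients and Lipschitz postcomposition with $g_{t+1}$ preserve $W_2$-convergence; this is a standard property of the $\mathrm{PL}(2)$ test function class that allows us to conclude \eqref{eq:psiX} in the statement. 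Then $U_{t+1} = g_{t+1}(F_t)$ follows directly from the BAMP update rule and the induction closes.

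\textbf{Main obstacle.} The delicate point is the inductive control of the $N^{-1}\|\tilde \bu^{K(t-1)+\ell} - \bY^{\ell-1}\bu^t\|_2^2 \to 0$ approximation: since each block of $K$ auxiliary iterations must produce an exact (up to $o_N(1)$) match with $\bY^{\ell-1}\bu^t$, one needs to verify that the coefficients $(\tilde \bB_{s})_{s, j}$ produced by the auxiliary state evolution indeed coincide with the ones required by the linear combinations in \eqref{eq:3ip0}, and that the accumulated approximation error through the $t$ blocks remains $o_N(1)$ when composed with the Lipschitz nonlinearity $g_{t+1}$. This bookkeeping, together with the verification that $\tilde \bPhi_s$ is nilpotent enough for the sums $\sum_{j \ge 0}\bar \kappa_{j+1}\tilde \bPhi_s^j$ to truncate at degree $s$, is the combinatorial core of the proof; all other steps reduce to the direct application of \cite[State evolution theorem]{fan2022approximate} to the auxiliary AMP.
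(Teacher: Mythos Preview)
Your proposal is correct and follows the same overall architecture as the paper: build an auxiliary AMP that unfolds each BAMP step into $K$ ``linear'' sub-steps tracking $\bY^{\ell-1}\bu^t$, apply an existing state evolution as a black box to the auxiliary recursion, and then transfer the $W_2$-convergence to $(\bff^t,\bu^t)$ via the bookkeeping coefficients $\alpha,\beta,\gamma,\theta$ and the $\mathrm{PL}(2)$/Lipschitz stability you describe. The inductive error control you flag as the main obstacle is exactly what occupies the bulk of the paper's proof.

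One implementation difference is worth noting. You set up the auxiliary iteration with $\bY$ itself, i.e.\ $\tilde\bz^s=\bY\tilde\bu^s-\sum_j(\tilde\bB_s)_{s,j}\tilde\bu^j$, and appeal to the spiked state evolution of \cite{fan2022approximate}. The paper instead runs the auxiliary AMP with the pure noise $\bZ$, i.e.\ $\tilde\bz^s=\bZ\tilde\bu^s-\sum_j\bar{\sf b}_{s,j}\tilde\bu^j$, and passes $\bX^*$ to the denoisers as side information, invoking Theorem~2.3 of \cite{zhong2021approximate}. This is precisely why the extra term $\tilde\mu_{K(t-1)+\ell}X^*$ appears in \eqref{eq:3ip0}: it reinserts the spike contribution $\lambda\langle\bX^*,\tilde\bu^s\rangle\bX^*/N$ that is missing when one multiplies by $\bZ$ rather than $\bY$. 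Your $\bY$-based variant would absorb this term into $\tilde\bz^s$ and would not need $\bX^*$ in the denoisers, but then the black-box result must be the spiked multi-step version; the paper's $\bZ$-based choice lets it apply the orthogonally-invariant state evolution directly, with the rank-one part handled explicitly at each step of the induction. Either route closes; your reference to \eqref{eq:3ip0} as ``exactly the content'' of your construction is in fact already the paper's $\bZ$-based formulation. Finally, the nilpotency of $\tilde\bPhi_s$ you mention is automatic (it is strictly lower triangular by construction), so it is not a genuine obstacle; the real work is the convergence $\bar\bB_s\to\tilde\bB_s$, which the paper handles via a continuity lemma for derivatives of Lipschitz functions.
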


The proof of Theorem \ref{th:SE} is deferred to Appendix \ref{appsubsec:proof}. A few remarks are now in order. 
First, we highlight that \eqref{eq:psiX} directly implies a high-dimensional characterization of the performance of the AMP \eqref{eq:AMPnew}. In fact, by taking the pseudo-Lipschitz functions $\psi(U_{t+1}, X^*)=(U_{t+1}- X^*)^2$, $\psi(U_{t+1}, X^*)=U_{t+1}\cdot X^*$ and $\psi(U_{t+1}, X^*)=(U_{t+1})^2$, we obtain the limit mean-square error and overlap of the AMP iterates as
\begin{equation}\label{eq:overlapMSE}
    \begin{split}
\lim_{N\to\infty} \frac{1}{2 N^2} \E\| \bX^* (\bX^*)^\intercal-\bu^{t}(\bu^{t})^\intercal \|_{\rm F}^2&=\frac12\big(1-2\Big(\mathbb E[U_{t}\cdot X^*]\Big)^2+(\mathbb E[(U_{t})^2])^2\big),\\
    \lim_{N\to\infty}\frac{|\langle \bX^*, \bu^{t}\rangle|}{\|\bu^{t}\|\cdot \|\bX^*\|} &= \frac{|\mathbb E[U_{t}\cdot X^*]|}{\sqrt{\mathbb E[(U_{t})^2]}}.
\end{split}
    \end{equation}

Next, note that Theorem \ref{th:SE} holds for any family of denoisers $\{g_{t+1}\}_{t\ge 1}$, subject to some mild regularity requirement. A natural choice is to pick the posterior mean
\begin{equation}\label{eq:postBAMP}
    g_{t+1}(f) = \mathbb E[U_*\mid F_t= f].
\end{equation}
Such a choice requires estimating the state evolution parameters $\mu_t$, $\{\theta_{t, j}\}_{j\in [Kt]}$ and $\tilde \bSigma_{Kt}$. These parameters, as well as the Onsager coefficients \eqref{eq:Ons}, can be estimated consistently from the data. To do so, first we obtain $\tilde \bDelta_{Kt}$ and $\tilde \bPhi_{Kt}$ by replacing expectations with empirical averages in \eqref{eq:Deltaup} and \eqref{eq:Phiup}, respectively. Next, we compute $\tilde \bB_{Kt}$ and $\tilde \bSigma_{Kt}$ by plugging in such estimates in \eqref{eq:Bup} and \eqref{eq:tildeBup}, respectively. Having done that, we obtain $\{\alpha_{K(t-1)+1+\ell, j}\}_{j\in [K(t-1)+\ell],\ell\in [K-1]}$, $\{\beta_{K(t-1)+1+\ell, j}\}_{j\in [t],\ell\in [K-1]}$, $\{\gamma_{K(t-1)+1+\ell}\}_{\ell\in [K-1]}$ via \eqref{eq:alphaup}--\eqref{eq:gammaup}. Finally, $\mu_t$, $\{\theta_{t, j}\}_{j\in [Kt]}$ and $\{{\sf c}_{t, j}\}_{j
\in [t]}$ can be computed from \eqref{eq:muup}, \eqref{eq:thetaup} and \eqref{eq:Ons}, respectively. 

As a final remark, we provide an interpretation of the coefficients $\{\alpha_{i, j}\}$, $\{\beta_{i, j}\}$, $\{\gamma_i\}$. As a by-product of the argument proving Theorem \ref{th:SE}, we will show that, for $\ell\in [K-1]$,  (cf. \eqref{eq:auxAMP1}--\eqref{eq:auxAMPnew2})
\begin{align}\label{eq:charac}
    \lim_{N\to\infty}&\frac{\|\bY^\ell\bu^{t}\hspace{-.2em}-\hspace{-.2em}\sum_{j=1}^{K(t-1)+\ell}\hspace{-.2em}\alpha_{K(t-1)+1+\ell, j}\tilde\bz^j\hspace{-.2em}-\hspace{-.2em}\sum_{j=1}^{t}\hspace{-.2em}\beta_{K(t-1)+1+\ell, j}\bu^j\hspace{-.2em}-\hspace{-.2em}\gamma_{K(t-1)+1+\ell}\bX^*\|^2}{N}=0.
\end{align}
This formalizes the fact that $\bY^\ell\bu^{t}$ can be approximately expressed as a linear combination of \emph{(i)} the past iterates $\{\bu^j\}_{j\in [t]}$, \emph{(ii)} the signal $\bX^*$, plus \emph{(iii)} independent Gaussian noise (represented by the $\tilde\bz^j$'s). The quantities $\{\alpha_{i, j}\}$, $\{\beta_{i, j}\}$, $\{\gamma_i\}$ represent the coefficients of this linear combination. The characterization \eqref{eq:charac} allows to subtract from $\bJ(\bY)\bu^k$ just the right Onsager terms, so that this difference equals a component in the direction of the signal (whose size is captured by $\mu_t$) plus independent Gaussian noise (given by the linear combination of the $\tilde\bz^j$'s via the coefficients $\{\theta_{i, j}\}$). 

\subsection{AMP-AP: AMP with Alternating Posteriors}

{As discussed above, the derivation of the Onsager coefficients for BAMP involves approximating vectors of the form $\{\bY^{\ell}\bu^t\}_{\ell\le K-1}$. This fact suggests an alternative choice for the denoisers of the AMP in \cite{fan2022approximate}. For each batch of $K$ iterations, we pick linear denoisers in the first $K-1$ of them, as this allows to construct the vectors $\{\bY^{\ell}\bu^t\}_{\ell\le K-1}$; then, in the $K$-th iteration, we pick the posterior mean using \emph{all} the past iterates, as this -- in principle -- allows to assemble the vectors $\{\bY^{\ell}\bu^t\}_{\ell\le K-1}$ to obtain the quantity $\bJ(\bY)\bu^t$. We refer to this algorithm as AMP with Alternating Posteriors (AMP-AP). In formulas, AMP-AP is given by a slight generalization of \eqref{eq:AMPZF}, where the entry-wise denoiser function depends on all past iterates, as follows:
\begin{equation}\label{eq:AMPAPden}
    \begin{split}
        h_{t+1}(f_i^1, f_i^2, \ldots, f_i^t) &= f_i^t, \qquad \mbox{for } t \not\equiv 1 \,\,\,(\mbox{mod }K),\\
        h_{t+1}(f_i^1, f_i^2, \ldots, f_i^t) &= \mathbb E[X \mid f_i^1, f_i^2, \ldots, f_i^t], \qquad \mbox{for } t \equiv 1 \,\,\, (\mbox{mod }K).
    \end{split}
\end{equation}
We note that AMP-AP does not require the coefficients of the polynomial $\bJ(\bY)$. In fact, it leaves to the posterior mean denoiser the task of learning them from the data. As such, it provides an efficient alternative to our proposed BAMP.}

\section{Numerics}\label{sec:numerics}

For all experiments in this section, random instances of $\bY$ are generated according to the model \eqref{channel}. 
The noise matrices $\bZ=\bO^\intercal \bD\bO$ are generated by first drawing $N$ i.i.d. eigenvalues $(D_i)_{i\le N}$ according to the density \eqref{densityRho}, or \eqref{densityRho_6} (with $\mu=\gamma=0$), and then multiplying from left and right the diagonal matrix of eigenvalues $\bD$ by a random Haar distributed orthogonal matrix $\bO$ sampled independently for each realization. As mentioned at the end of Section \ref{sec:results}, the results are expected to be the same if we were to draw $\bZ$ according to the harder to sample\footnote{This can be done using the Dyson Brownian motion, see \cite{potters2020first}.} measure \eqref{Z-ensemble}. 

\subsection{Spectral properties of the pre-processed matrix $\bJ(\bY)$}

\begin{figure}[t!]
\begin{center}
                \includegraphics[width=0.48\linewidth,trim={0.6cm  0 1cm 0},clip]{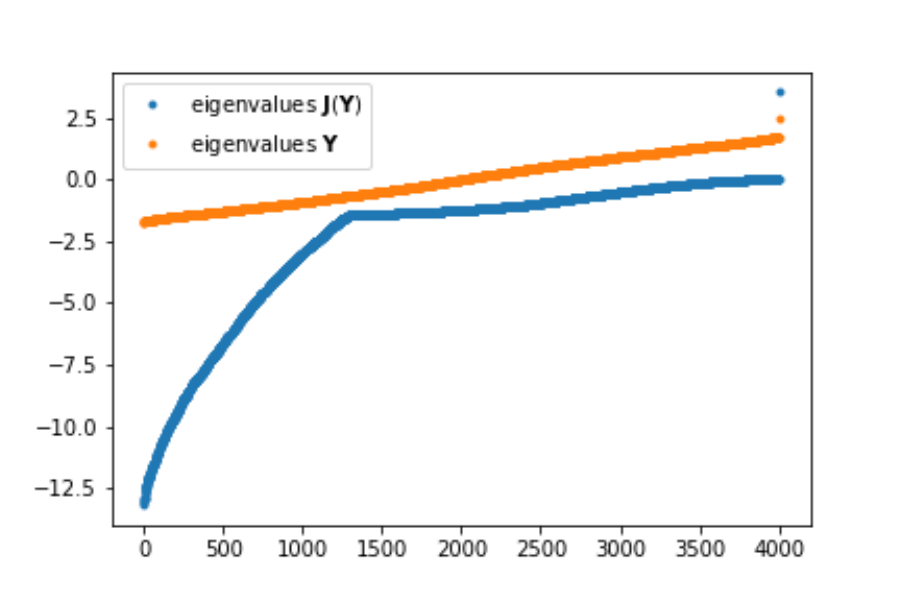}
                \includegraphics[width=0.48\linewidth,trim={0.6cm  0 1cm 0},clip]{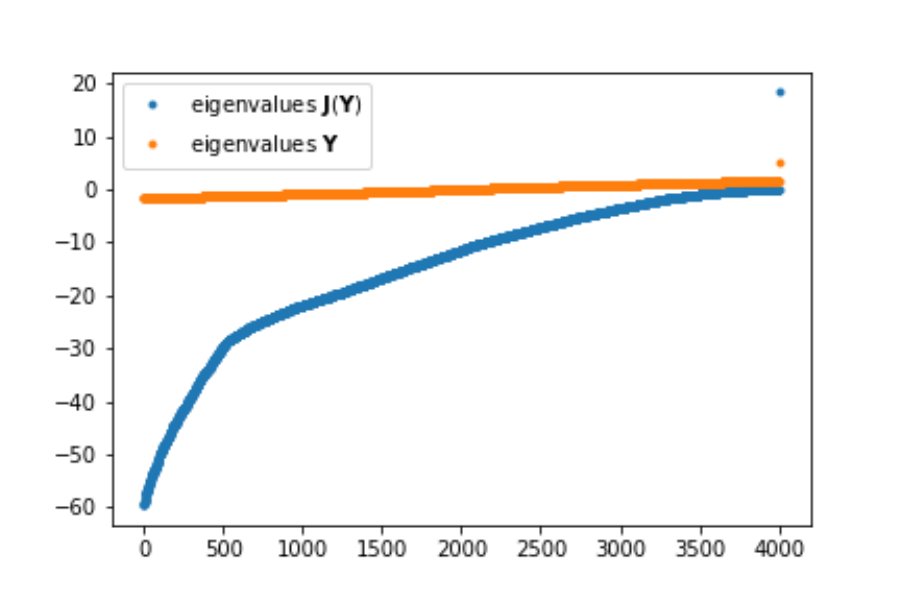}
                \caption{Ranked eigenvalues of the data matrix $\bY$ (orange) and the optimally pre-processed matrix $\bJ(\bY)$ (blue) for $N=4000 \ \mbox{for (left)} \ \lambda=2 \ \mbox{and (right)} \ \lambda=5 $. The gap between the largest detached eigenvalue on the extreme right and the second highest one is much bigger for the pre-processed matrix. Moreover, all the eigenvalues of $\bJ(\bY)$ in its non-informative bulk are negative.}
                \label{fig:spectral}
\end{center}
\end{figure}
                
\begin{figure}[p]
\begin{center}                
                \subfloat[Empirical spectral density of $\bY$. The largest, informative, eigenvalue is emphasized.]{
                \includegraphics[width=0.455\linewidth,trim={0.7cm  0 1cm 1cm},clip]{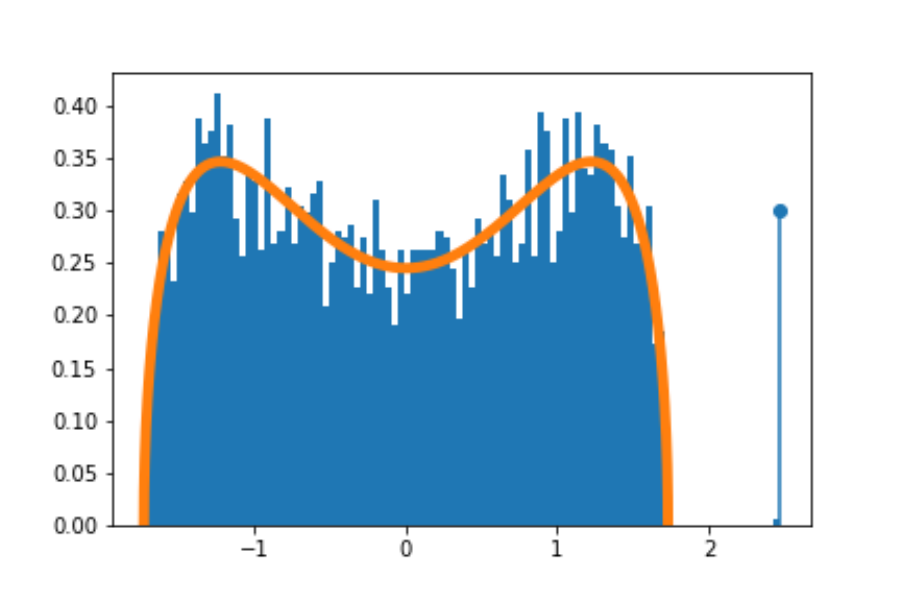}
                \includegraphics[width=0.455\linewidth,trim={0.7cm 0 1cm 1cm},clip]{figures/histo1_2.pdf}
                }\\
                \vspace{-0.1cm}
                                    \subfloat[The function $J(x)=\mu\lambda x-\gamma\lambda^2 x^2+\gamma\lambda x^3$ with $(\mu=0,\gamma(0)=16/27)$ is used to optimally pre-process the (eigenvalues of the) data $\bY$ and obtain $\bJ(\bY)$. The dashed curve indicates $0$. By comparison with the plots (a) above, we understand that the noise bulk will be pushed to negative values, while the spike towards the right, which results in a  ``cleaning'' effect. \label{fig:eigenvals}]{
             \includegraphics[width=0.45\linewidth,trim={1cm  0 1cm 1cm},clip]{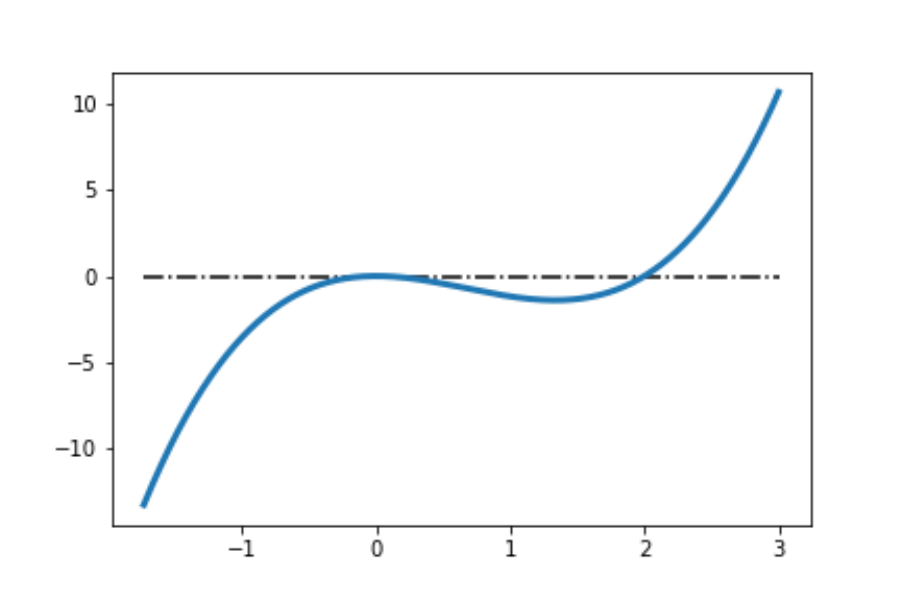}
             \includegraphics[width=0.45\linewidth,trim={1cm  0 1cm 1cm},clip]{figures/preprocess_2.pdf}}\\
             \vspace{-0.1cm}
                 \subfloat[Empirical spectral density of the pre-processed matrix $\bJ(\bY)$. The largest eigenvalue is emphasized and well separated from the negative bulk by the application of $J(x)$.]{
                \includegraphics[width=0.455\linewidth,trim={0.7cm  0 1cm 1cm},clip]{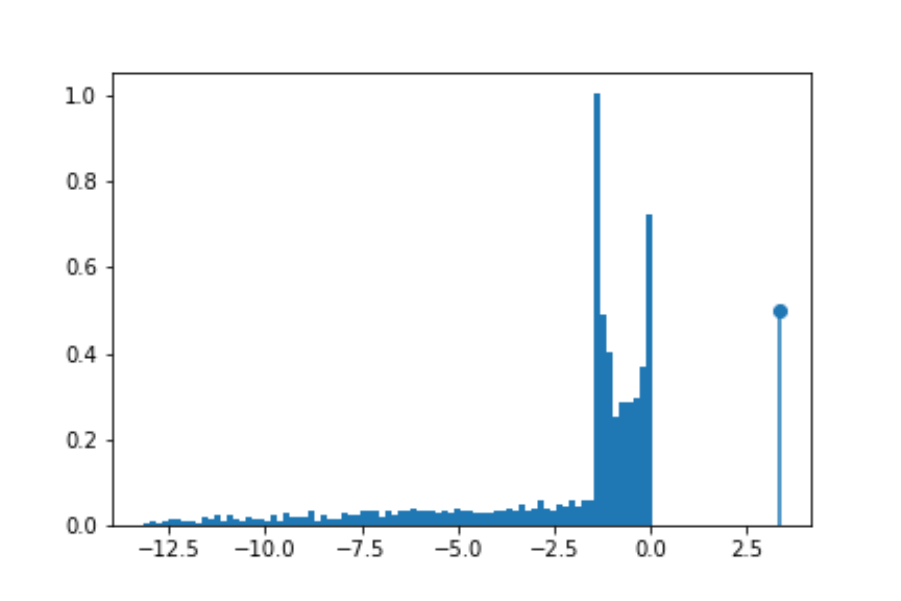}
                \includegraphics[width=0.455\linewidth,trim={0.7cm 0 1cm 1cm},clip]{figures/histo2_2.pdf}}
                    \end{center}
            \caption{Effect of the optimal pre-processing $J(x)$ on the eigenvalues of $\bY$. All experiments are for the most structured noise ensemble  $(\mu=0,\gamma(0)=16/27)$ and $N=4000$. The left column corresponds to $\lambda=2$, while the right column to $\lambda=5$.}
            \label{fig:spectral2}
        \end{figure}
        
        Let us discuss the effect on the spectrum of $\bY$ that has the application of the optimal pre-processing function $J(\cdot)$; clearly, this function does not influence the eigenvectors of $\bY$ which therefore has the same basis as $\bJ(\bY)$. From Figures \ref{fig:spectral} and \ref{fig:spectral2}, the effect is clear: the function $J$ (Figure \ref{fig:spectral2}, middle plots (b)) ``cleans'' the eigenvalues of the data $\bY$ (Figure \ref{fig:spectral2}, upper plots (a)) by shifting the non-informative bulk eigenvalues of $\bY$ to negative values, while the largest, informative, eigenvalue is further separated from the bulk. This results in the histograms (Figure \ref{fig:spectral2}, lower plots (c)) for the processed data $\bJ(\bY)$. It thus becomes much easier to distinguish the informative eigenvalue, which may be of interest for smaller instances where the finite-size effects are stronger.

\subsection{BAMP and AMP-AP improve over the existing AMP and match the replica prediction for the MMSE, and empirical universality of the rotational invariance assumption}\label{sec:7.2}
The plots of Figure \ref{fig:BAMP} consider the quartic ensemble discussed in Section \ref{sec:quartic} for three values of the parameter $\mu$, namely, $\mu\in \{0, 0.5, 1\}$ (recall $\gamma=\gamma(\mu)$ is fixed by relation \eqref{gamma(mu)}), and the power six ensemble \eqref{eq:sestic_potential}. The signal has Rademacher prior, i.e., i.i.d. entries $X_i^*\sim \frac12(\delta_1+\delta_{-1})$. 
The estimators of the spike $\bX^*\bX^{*\intercal}$ are compared in terms of the MSE ($y$-axis) achieved at the fixed point, as a function of the SNR $\lambda$ ($x$-axis). All algorithms are run for $N=8000$ and the results are averaged over $n_{\rm trials}=50$ independent trials; the state evolution recursions (and the replica prediction as well) correspond to $N\to\infty$. We compare the following inference procedures:
\begin{itemize}
    \item In black, we plot the replica prediction \eqref{replicaMMSE}, 
    obtained as the fixed point of \eqref{m_replica_simplified}--\eqref{hatmReplica}.
    
    \item In red, we plot the performance of the BAMP algorithm described in Section \ref{sec:AMP}, where $g_{t+1}$ is the posterior mean denoiser \eqref{eq:postBAMP}. More specifically, the red line corresponds to the fixed point of the MSE given by the state evolution recursion discussed in Section \ref{subsec:OnsSE} (cf. \eqref{eq:overlapMSE}), and the red stars denote the MSE obtained by running the BAMP algorithm \eqref{eq:AMPnew}. 
    
    \item In blue, we plot the performance of the AMP proposed in \cite{fan2022approximate}. More specifically, the blue line corresponds to the fixed point of the MSE \eqref{ZF_FP} obtained by choosing the posterior mean denoiser with a single-step memory term \eqref{eq:ZFden}. The blue diamonds denote the MSE obtained by running the AMP \eqref{eq:AMPZF} with this single-step denoiser.

    \item The ochre squares are MSE values obtained by the AMP of \cite{zhong2021approximate} (without the pre-processing of $\mathbf{Y}$), which employs a full memory posterior mean denoiser: \begin{equation}
        h_{t+1}(f_1, \ldots, f_t) = \mathbb E[X^*\mid (F_1, \ldots, F_t) = (f_1, \ldots, f_t)]\,.
    \end{equation}
    
    \item Finally, the green triangles denote the performance of BAMP when the uniformly distributed matrix $\bO$ (appearing in the spectral decomposition of the noise $\bZ$) is replaced by the product of the Hadamard-Walsh matrix and a diagonal matrix with i.i.d.\ Rademacher entries as in \cite{dudeja2022universality}.

\end{itemize}

\begin{figure}[t!]
            \begin{center}
                    \subfloat[Quartic potential with $\mu=1$.\label{fig:u1}]{
                \includegraphics[width=0.5\linewidth,trim={0 0 1.5cm 1cm},clip]{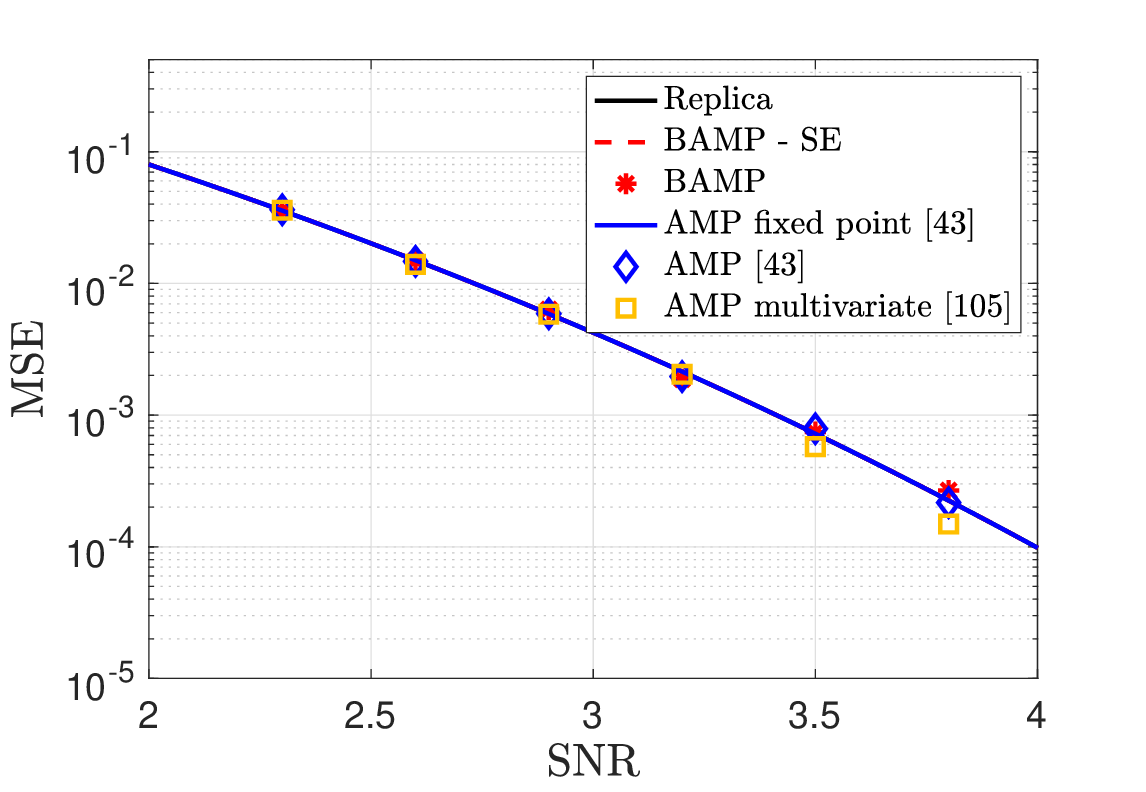}
                }
            \subfloat[Quartic potential with $\mu=0.5$.\label{fig:u0dot5}]{
                \includegraphics[width=0.5\linewidth,trim={0 0 1.5cm 1cm},clip]{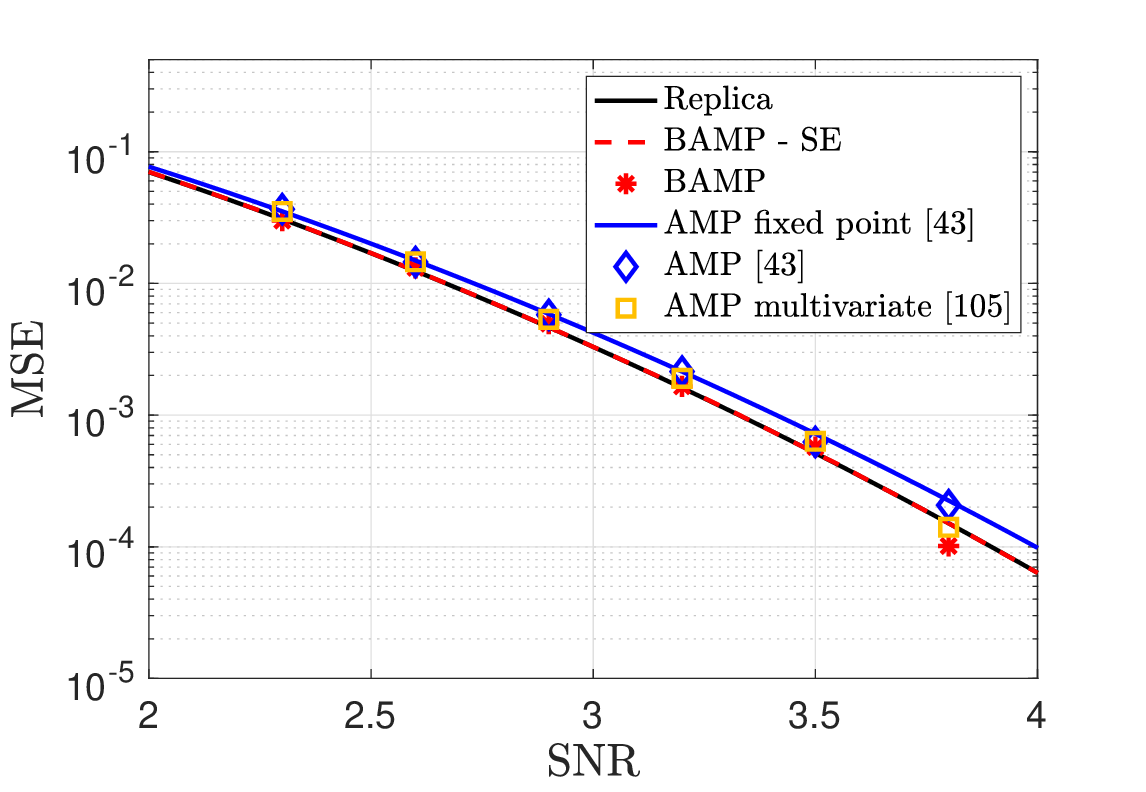}
                }
                \\
            \subfloat[Quartic potential with $\mu=0$.\label{fig:u0}]{
                \includegraphics[width=0.5\linewidth,trim={0 0 1.5cm 1cm},clip]{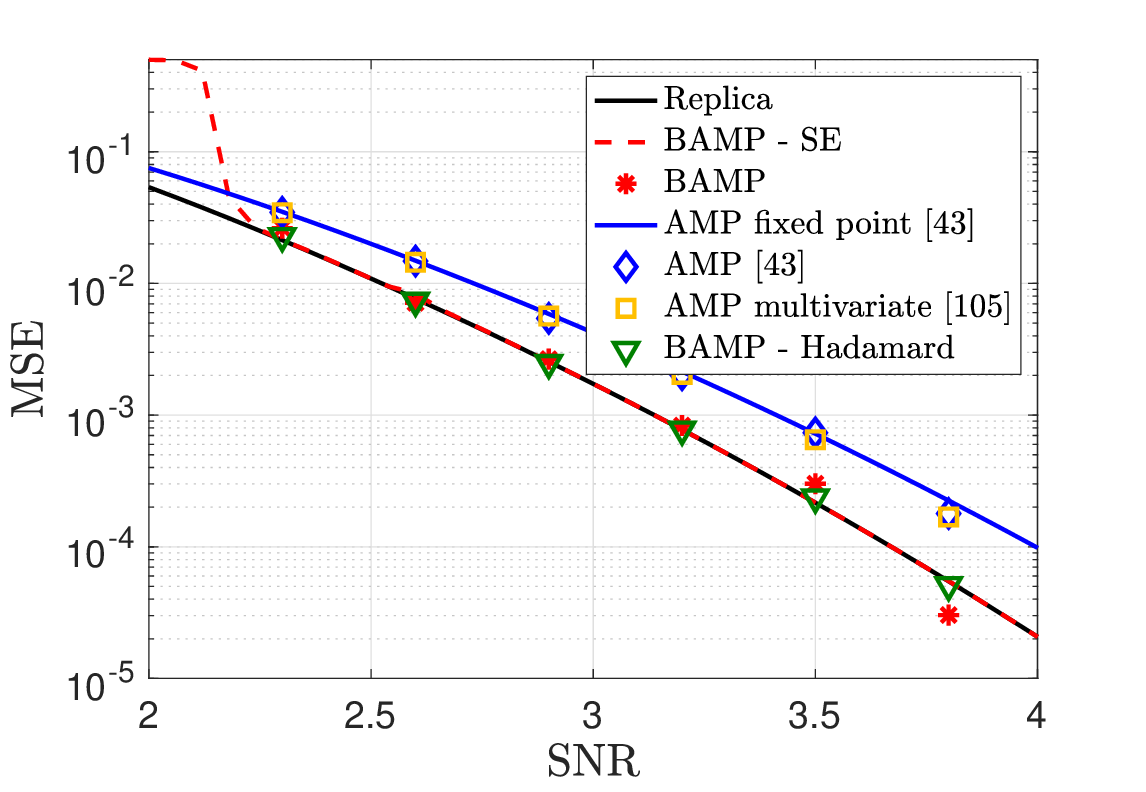}
                }
                \subfloat[Power six potential with $\xi=27/80$.\label{fig:sestic_plot}]{
                \includegraphics[width=0.54\linewidth,clip]{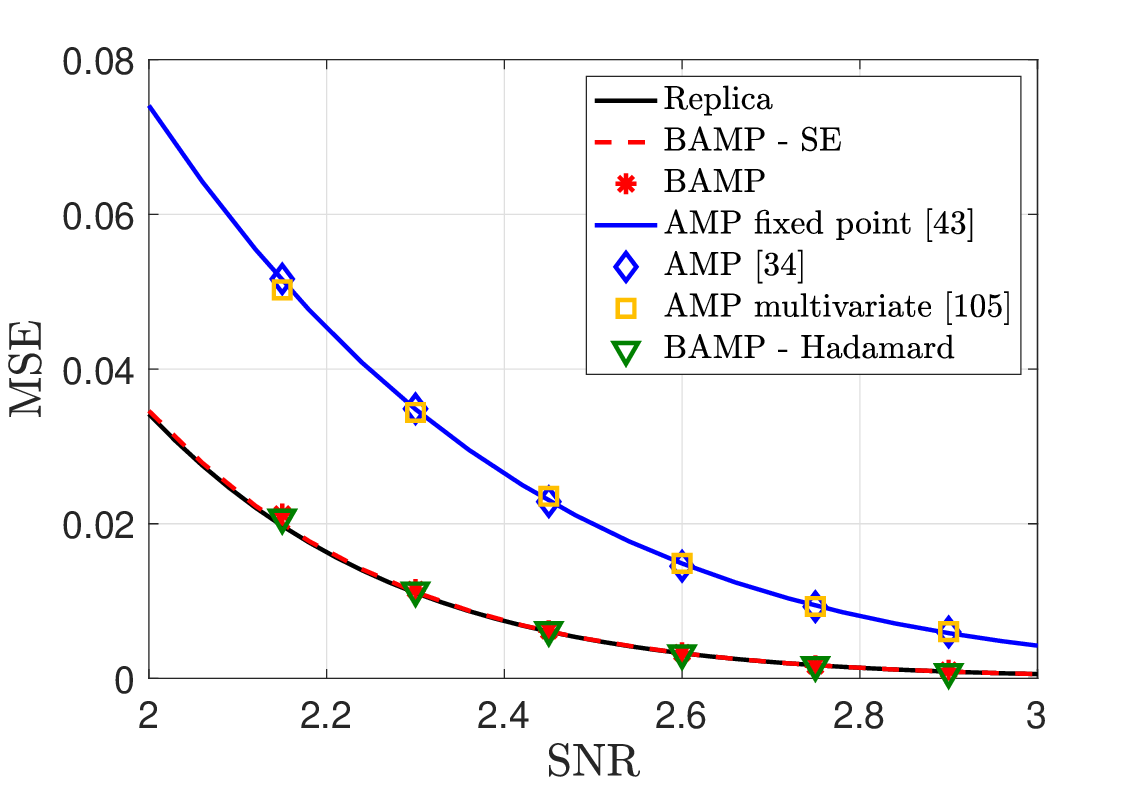}
                }
            \end{center}

            \caption{Performance comparison between the replica prediction for the MMSE (in black), the proposed BAMP (in red), and the existing AMPs \cite{fan2022approximate,zhong2021approximate} (in blue and green). BAMP matches the Bayes-optimal MSE predicted via the replica method, and it outperforms the existing AMP when the noise is not Gaussian. This improvement is more evident as the noise distribution gets further from a Wigner distribution. }\label{fig:BAMP}
            \label{fig:perfo}
        \end{figure}

 We note that all algorithms converge rapidly: $10$ iterations are sufficient to reach the corresponding fixed points. A few remarks concerning the numerical results displayed in Figure \ref{fig:BAMP} are now in order:
 
 \begin{itemize}
     \item {For both the quartic and the sestic potential}, the fixed point of the BAMP state evolution (in red) matches the replica prediction (in black). This is a strong numerical evidence supporting our conjecture that the proposed BAMP algorithm is Bayes-optimal. These theoretical curves for $N\to\infty$ are also remarkably close to the MSE achieved by the BAMP algorithm \eqref{eq:AMPnew} at $N=8000$.
     
     \item When $\mu=0$ in the quartic potential, i.e., the noise is sufficiently far from being independent Gaussian, there is a clear performance gap between our proposed BAMP (in red) and the existing AMP algorithms \cite{fan2022approximate,zhong2021approximate} (single-step denoiser in blue, and multi-step in green). As predicted by our theory, this gap is reduced for $\mu=0.5$, and all curves collapse for $\mu=1$.
   An even greater gap occurs when we consider the power six ensemble in Figure \ref{fig:sestic_plot}, which is ``further'' from the Wigner ensemble. 

     \item Finally, we note that the BAMP algorithm exhibits a numerical instability for low SNR. More specifically, when $\mu=0$ in the quartic potential and $\lambda =2.3$, 5 out of the 50 trials of the iteration in \eqref{BAMP} do not reach the fixed point of state evolution (and are therefore discarded). Furthermore, by inspecting Figure \ref{fig:u0}, one notices that the curve representing the BAMP state evolution detaches from the replica prediction as the SNR gets smaller than $2.3$. As expected, considering an initialization closer to the fixed point mitigates the issue. This numerical instability is likely due to BAMP's state evolution corresponding to the recursion of an auxiliary AMP that \emph{triples} the number of iterations. This leads to an amplification of numerical errors. The same phenomenon occurs with the power six potential, where the number of iterations is multiplied by five. The problem is again mitigated by providing an initialization close to the fixed point. Nevertheless, for $\lambda=2.15$ and $\lambda=2.3$ respectively 2 and 1 BAMP iterations do not reach the fix point of state evolution and are discarded.

 \end{itemize}

Let us re-emphasize that all these results hold in the Bayesian-optimal setting where all hyper-parameters of the model are known and optimally used. In practical situations this may not be the case. In particular the statistical properties of the correlated noise $\bZ$ may be only partially known, preventing one to obtain the coefficients $(c_k)$ defining the optimal pre-processing of the data $\bJ(\bY)=\sum_{k\le K} c_k \bY^k$ as done in Section \ref{sec:AdaTAPtoward}. In Appendix \ref{sec:EM} we provide a learning procedure based on expectation maximization to overcome this issue and which can be of help to practitioners aiming at using BAMP in more realistic situations. Its testing is left for future work.

\begin{figure}[t!]
            \begin{center}
                    \subfloat[Quartic potential with $\mu=0$.\label{fig:uniq}]{
                \includegraphics[width=0.5\linewidth,clip]{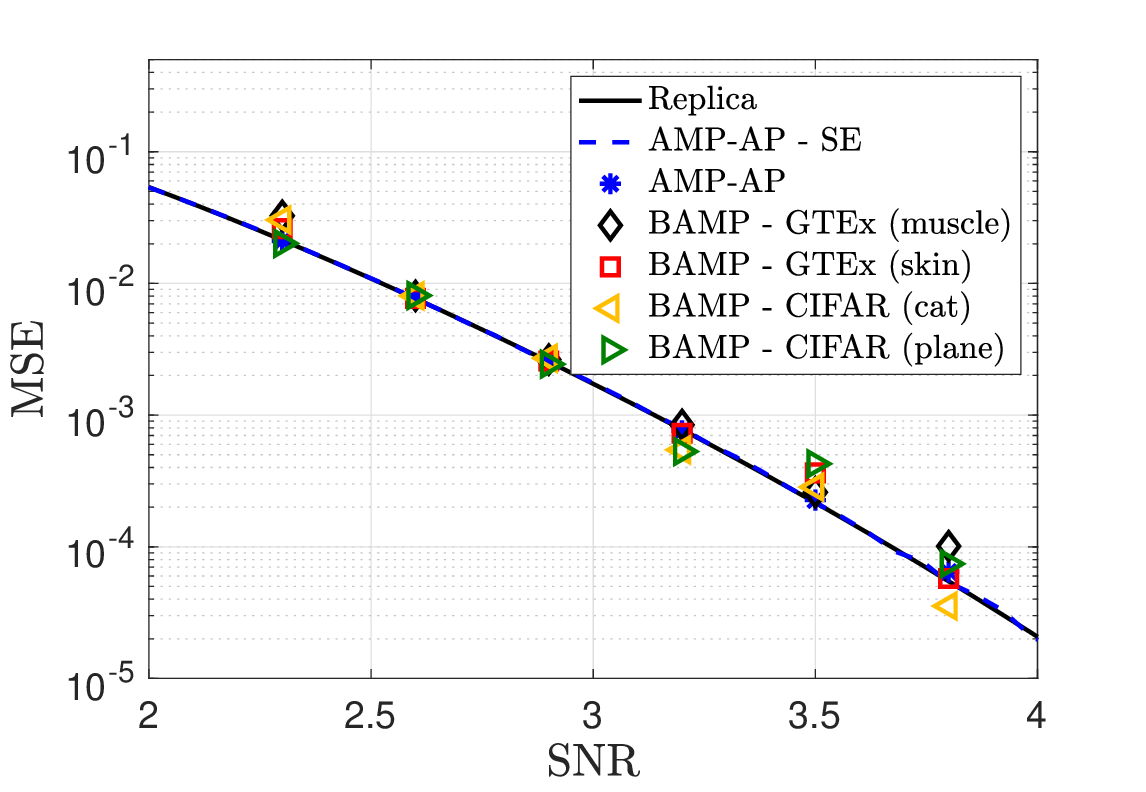}
                }
                \subfloat[Power six potential with $\xi=27/80$.\label{fig:unis}]{
                \includegraphics[width=0.5\linewidth,clip]{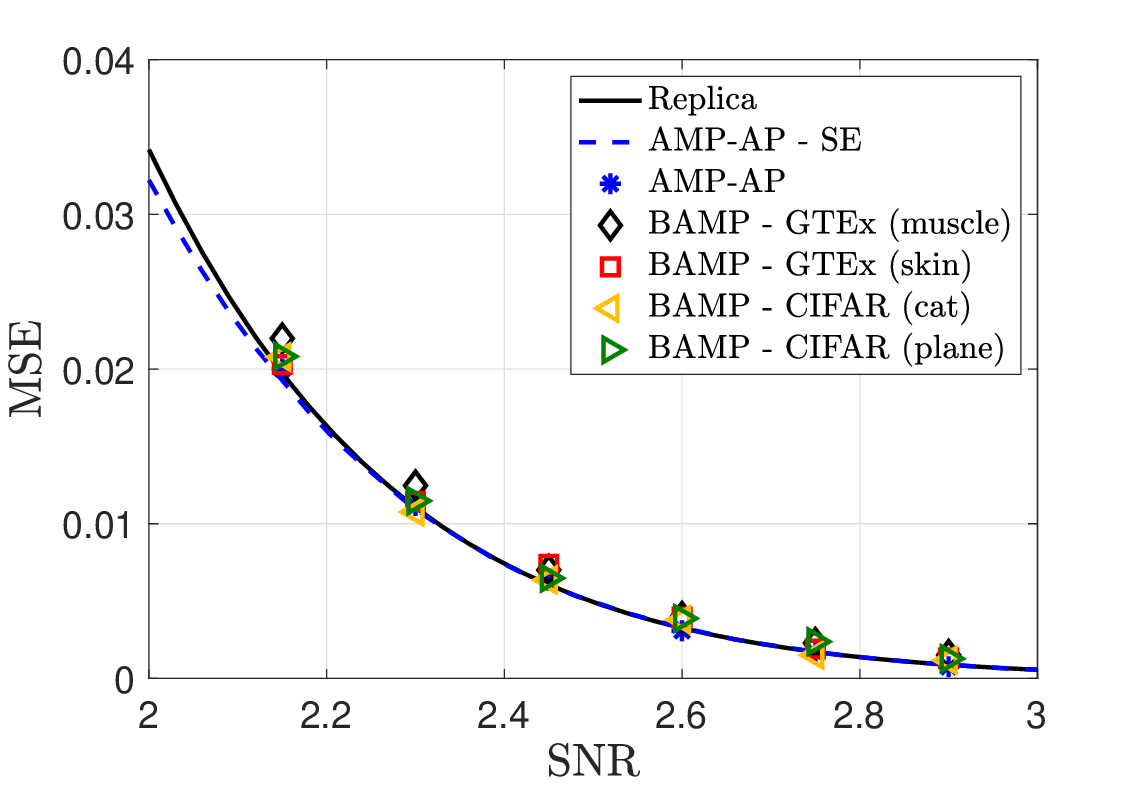}
                }
            \end{center}

            \caption{{Performance comparison between the replica prediction for the MMSE (in black), AMP-AP (in blue), and BAMP run when the noise matrix is not rotationally invariant (black, red, ochre and green symbols). AMP-AP matches the Bayes-optimal MSE predicted via the replica method and, hence, it provides an efficient alternative to BAMP. Furthermore, BAMP displays a remarkable universality behavior, in the sense that its performance is close to the state evolution prediction even when the eigenbasis of the noise is taken from the covariance matrix of datasets commonly employed in practice.} }
            \label{fig:perfouni}
        \end{figure}

{AMP-AP provides an algorithmic alternative that does not require the computation of the coefficients $(c_k)$. Its performance for the quartic potential with $\mu=0$ (left plot) and for the sestic potential (right plot) is represented in  blue in Figure \ref{fig:perfouni}. More specifically, the blue stars denote the MSE obtained by running the algorithm \eqref{eq:AMPZF} with the denoisers given by \eqref{eq:AMPAPden}, and the blue curve is the corresponding state evolution. We remark the excellent agreement with the minimum MSE predicted by the replica formula. In Figure \ref{fig:perfouni}, we also run BAMP when the noise matrix is \emph{not rotationally invariant}, but its eigenbasis comes from the covariance matrix of datasets commonly used in computer vision and quantitative genetics. In particular, we report the results for two CIFAR-10 classes (``plane'' and ``cat''), and two GTEx datasets (``muscle skeletal'' and ``skin sun exposed lower leg'') \cite{lonsdale2013genotype}. For the two CIFAR-10 classes, we have $N=1024$. The two GTEx datasets are matrices of 56200 rows and, respectively, 803 and 701 columns; we pick the first 8000 rows and construct a covariance matrix (hence, $N=8000$). Again, the BAMP performance matches the replica predictions, thus providing an empirical confirmation of the universality of our results.}

\begin{figure}[t!]
            \begin{center}
                \includegraphics[width=0.55\linewidth,clip]{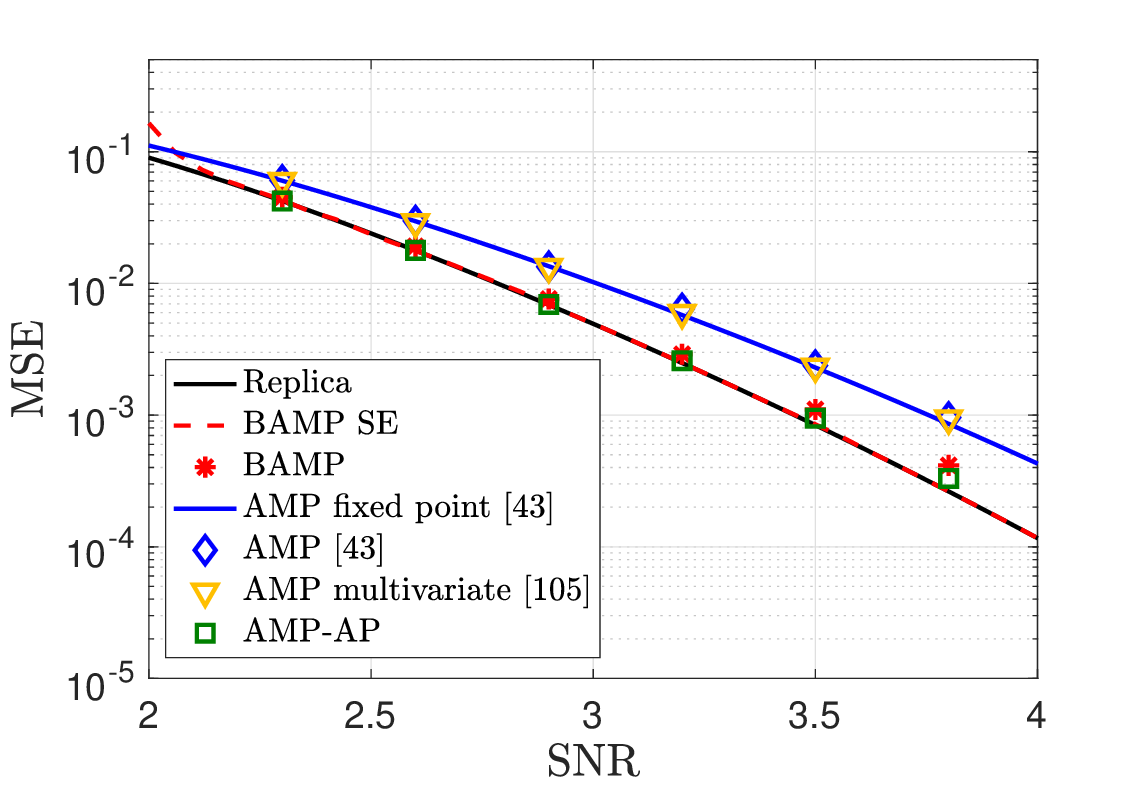}
                \end{center}

            \caption{{Performance comparison between the replica prediction for the MMSE (in black), BAMP (in red), AMP-AP (in green), and the existing AMPs \cite{fan2022approximate,zhong2021approximate} (in blue and ochre). We consider a quartic potential with $\mu=0$ and a signal with a sparse Rademacher prior with $\rho=0.3$. Once again, both BAMP and AMP-AP match the replica prediction and improve upon previously proposed algorithms.} }
            \label{fig:sparse}
        \end{figure}

{Finally, in Figure \ref{fig:sparse}, we consider the quartic potential with $\mu=0$ and a signal having a sparse Rademacher prior, i.e., i.i.d.\ entries $X_i^*\sim (1-\rho)\delta_0+\frac{\rho}{2}\delta_{-1/\sqrt{\rho}}+\frac{\rho}{2}\delta_{1/\sqrt{\rho}}$. We pick $\rho=0.3$. As in the previous cases, BAMP (red) and AMP-AP (green) meet the MMSE predicted via the replica method (black), and they outperform the AMPs previously proposed in \cite{fan2022approximate,zhong2021approximate} (blue and ochre). All algorithms are run for $N=8000$, except the point $\lambda=3.8$ for which we use $N=12000$ in order to improve the convergence to state evolution.}

{Taken all together, the numerical results of Figures \ref{fig:BAMP}-\ref{fig:sparse} provide a clear empirical confirmation of the (Bayes-)optimality of the proposed algorithms, as well as of the universality of BAMP.}


\appendix

\newpage

\section{Approximation of non-polynomial potentials}\label{sec:approx_potential}
In this appendix we will argue that the general strategy presented to study the inference task associated to noise coming from random matrix ensembles with polynomial potentials can be used to approximate the MMSE of noise ensembles with general analytic potentials by considering a proper sequence of polynomials that converges point-wise. In the following argument we will assume that:
\begin{itemize}
    \item[(i)] the potential $V:\R\mapsto\R$ is analytic,
    \item[(ii)] there is a constant $C > 0$ such that, for all $x\in\R$, we have that $V \geq C x^2/2$,
    \item[(iii)] and the coordinates of $\mathbf{X}^*$ are i.i.d. of density $P_X$ with bounded support.
\end{itemize}
Although condition (iii) can be weakened at the cost of some extra technicalities, here we will include it to keep the presentation more simple.

By condition (i), the potentials considered are analytic. Then, there is some sequence $(c_k)_{k\geq1}$ such that, for all $x\in\R$, 
\begin{equation*}
    V(x) = \sum_{k\geq1} c_k x^k.
\end{equation*}
Let $p\geq1$ and define $V_p:\R\mapsto\R$ according to $V_p(x) := \sum_{k\leq p} c_k x^k$. In this way, we will define $\mathbf{Z}_p\in\R^{N\times N}$ to be a random matrix of probability distribution
\begin{align}\label{Zp-ensemble}
    dP_{Z_p}(\mathbf{Z}_p)= C_{V_p} \exp\Big(-\frac{N}{2}\Tr V_p(\mathbf{Z}_p)\Big)\prod_{i\leq j}dZ_{p,ij}\,;
\end{align}
where, as before, $C_{V_p}>0$ is just a normalizing constant. Also define a new data matrix $\mathbf{Y}_p\in\R^{N\times N}$ according to $\mathbf{Y}_p := \frac{\lambda}{N} \mathbf{P}^* + \mathbf{Z}_p$. Here we will introduce the posterior measures
\begin{align}
    dP^{(p)}_{X\mid Y_p}(\mathbf{x}\mid \bY_p)\!=\!\frac{dP_X(\mathbf{x})}{Z_p(\mathbf{Y}_p)}\exp\Big(\!-\frac{N}{2}\Tr V_p\Big(\bY_p-\frac{\lambda}{N}\mathbf{P}\Big)\!\Big)\label{posterior-p}
\end{align}
and
\begin{align}
    d{P'}^{(p)}_{X\mid Y}(\mathbf{x}\mid \bY)\!=\!\frac{dP_X(\mathbf{x})}{Z'_p(\mathbf{Y})}\exp\Big(\!-\frac{N}{2}\Tr V_p\Big(\bY-\frac{\lambda}{N}\mathbf{P}\Big)\!\Big)\label{posterior-p2};
\end{align}
with $Z_p(\mathbf{Y}_p),Z'_p(\mathbf{Y}) >0$ normalization constants. Notice that $P^{(p)}_{X\mid Y_p}(\cdot)$ corresponds to the Bayes-optimal posterior of data with noise $\mathbf{Z}_p$ and ${P'}^{(p)}_{X\mid Y}(\cdot)$ is the mismatched posterior obtained from a signal generated with noise from an ensemble of potential $V$ but wrongly modeled as having noise from potential $V_p$. The free entropies associated with these posteriors will then be $F_N^{(p)}:=\E\ln Z_p(\mathbf{Y}_p)$ and ${F'}_N^{(p)}:=\E\ln Z'_p(\mathbf{Y})$. And finally, the associated mutual information between data and signal for the first of the two posteriors, which is Bayes-optimal, will be given by
\begin{equation*}
    I_p(\bP^*; \bY_p):=-F^{(p)}_N(\bY_p) - \frac{N}{2}\mathbb{E}\Tr  V_p(\mathbf{Z}_p).
\end{equation*}

In this appendix we will argue that
\begin{equation}\label{eq:equiv_mutual_inf}
    \lim_{p\to\infty} \lim_{N\to\infty} \frac{1}{N}|I(\bP^*; \bY)-I_p(\bP^*; \bY_p)| = 0.
\end{equation}
By including side information of the form $\mathbf{\tilde Y} = \tilde \lambda \mathbf{X}_* + \mathbf{\tilde Z}$ with $\tilde \lambda >0$ and $\mathbf{\tilde Z} \in \R^N$ a standard Gaussian vector, the magnetization $m$ of both models can be obtained as a derivative with respect to $\tilde \lambda$ of each asymptotic mutual information. The strategy to derive the free entropy limit in the main text can be easily adapted to include this side. Furthermore, if $\tilde \lambda$ is taken to be small (i.e., the side information has a low signal-to-noise ratio in some proper sense), the asymptotic value of $m$ is not modified by this side information. See for example \cite[Section 5.1.1]{barbier2018optimal}, for more details on this strategy. Finally, because the free entropies are convex functions of $\tilde \lambda$, equation \eqref{eq:equiv_mutual_inf} implies that the asymptotic values of the magnetization $m$ of both models coincide when $p$ goes to infinity whenever the signal-to-noise ratio is not taking a critical value. This then means that the MMSE of the model with noise of potential $V$ differs with respect to the one of the model with noise of potential $V_p$ by a term that is vanishing in $p$. This therefore justifies the fact of studying only models with polynomial potentials.

In the rest of the section we will justify \eqref{eq:equiv_mutual_inf}. To see that this should hold, we will first bound
\begin{equation}\label{eq:2_fe_bounds}
    \begin{split}
        \frac{1}{N}|I(\bP^*; \bY)-I_p(\bP^*; \bY_p)| & \leq \frac{1}{N}\Big|F_N-{F'}^{(p)}_N+\frac{N}{2}\Tr(V(\bZ)-V_p(\bZ))\Big| \\
        & \,\,\,\,\,\,\,\,\,\,\,\,\,\,\,\,\,\,\,\, + \frac{1}{N} \Big|I_p(\bP^*; \bY_p)+{F'}^{(p)}_N + \frac{N}{2} \mathbb{E}\Tr  V_p(\mathbf{Z}_p)\Big|.
    \end{split}
\end{equation}

For bounding the first term on the right of \eqref{eq:2_fe_bounds} we will introduce, for every $t\in[0,1]$, the interpolating measure of mean $\langle\cdot\rangle_t$ corresponding to the Hamiltonian 
\begin{equation*}
    H_{N,t}(\mathbf{X}) = -\frac{N}{2} \Tr\, V_t\Big(\mathbf{Y}-\frac{\lambda}{N}\mathbf{P}\Big);
\end{equation*}
where $V_t(x):= V_p(x)+tE_p(x)$ and $E_p(x) := \sum_{k\geq p+1} c_k x^k$. Let $F_{N,t}$ be the free entropy associated to $H_{N,t}$ and define $G_t := -F_{N,t}-N/2 \Tr V_t(\bZ)$. Clearly, we have that $G_0 = -{F'}^{(p)}_N-N/2 \Tr V_p(\bZ)$ and $G_1=-F_N-N/2 \Tr V(\bZ)$. Notice that for all $t\in[0,1]$,
\begin{equation*}
    \frac{1}{N}\frac{d G_t}{dt} = \, \frac{1}{2}\E\Big\langle \Tr \, E_p\big(\mathbf{Y}-\frac{\lambda}{N}\mathbf{P}\big) \Big\rangle_t - \frac{1}{2}\E\Tr \, E_p\big(\bZ\big) . 
\end{equation*}
We would now like to see that the absolute value of the right hand side of the last equation is $o_p(1)$. For this, denote by $D_1,\dots D_N$ the eigenvalues of $\bZ$ ordered from largest to smallest. Likewise, denote by $\tilde D_1,\dots,\tilde D_N$ the ones of $\bY - \lambda/N \bP$. By Weyl's interlacing inequalities we have that
\begin{equation*}
    D_i \leq \tilde D_i \leq D_{i-2} \,\,\,\,\, \mbox{ for all } i=3,\dots,N.
\end{equation*}
This means that, if we denote (for $i=2,\dots,N$) $\delta_i:=D_i-D_{i-1}$, we then have that, for all $i=3,\dots,N$,
\begin{equation*}
    |E_p(\tilde D_i) - E_p(D_i)| \leq  (\delta_{i-1}+\delta_i) E'_p(\xi_i),
\end{equation*}
for some $D_i\leq\xi_i\leq D_{i-2}$. By condition (ii) above, we know that the limiting distribution $\rho$ of the eigenvalues of $\mathbf{Z}$ and the distribution $\rho_p$ of the ones of $\mathbf{Z}_p$ are, for large enough $p$, both contained in the compact interval $[-2/C,2/C]$. This is so because condition (ii) implies that the potentials $V$ and $V_p$ are more confining than $C x^2/2$ (see \cite[Section 5.2]{potters2020first} for more details). Thus, under $\langle\cdot\rangle_t$, by this and condition (iii) the eigenvalues of $\mathbf{Y}-\lambda/N\mathbf{P}$ and $\bZ$ are contained in the interval $[-(1/C+2\lambda),1/C+2\lambda]$. On $[-(1/C+2\lambda),1/C+2\lambda]$ we will have $|E'_p|\leq \epsilon_p$ for some vanishing sequence $(\epsilon_p)_{p\geq1}$. From which we get that, for all $i=3,\dots,N$,
\begin{equation*}
    |E_p(\tilde D_i) - E_p(D_i)| \leq  (\delta_{i-1}+\delta_i)\epsilon_p
\end{equation*}
Moreover, on the interval $[-(1/C+2\lambda),1/C+2\lambda]$, we have that there is another vanishing sequence $(\epsilon'_p)_{p\geq1}$ such that $|E_p|\leq\epsilon'_p$. From this we have
\begin{equation*}
    \begin{split}
        \Big|\Tr \Big[E_p\big(\mathbf{Y}-\frac{\lambda}{N}\mathbf{P}\big)  -  E_p\big(\bZ\big)\Big]\Big| & \leq |E_p(\tilde D_1)|+|E_p(\tilde D_2)|+|E_p(D_1)|\\
        & \,\,\,\,\,\,\,\,\,\, +|E_p(D_2)|+\sum_{i=3}^N|E_p(\tilde D_i)-E_p(D_i)|\\
        & \leq 4 \epsilon'_p + \epsilon_p \sum_{i=3}^N (\delta_{i-1}+\delta_i) \leq  4 \epsilon'_p + \frac{4\epsilon_p}{C}\xrightarrow{p\to\infty}0.
    \end{split}
\end{equation*}
From this we then conclude that $N^{-1}|G_1-G_0| = o_p(1)$ which means that the first term on the right hand side of \eqref{eq:2_fe_bounds} is vanishing in $p$.

For the second term on the right hand side of \eqref{eq:2_fe_bounds} we will draw some of the conclusions from \cite[Theorem 2.6.1]{anderson2010introduction}. If we define a functional $\Sigma(\cdot)$ over the probability distributions on the line according to $\Sigma(\mu) = \int\int \ln|x-y|d\mu(x)d\mu(y)$ if $\int \ln(|x|+1)d\mu(x)<\infty$ and $\Sigma(\mu)=-\infty$ otherwise, as a consequence of the theorem we have that the empirical eigenvalue measure of $\mathbf{Z}$ obeys a large deviation principle of speed $N^2$ and good rate function
\begin{equation*}
    I_V(\mu) :=
    \begin{cases}
        \int V(x) d\mu(x) - \frac{1}{2}\Sigma(\mu) - c_V & \mbox{if } \int V(x) d\mu(x) < \infty \\
        \infty & \mbox{o.w.};
    \end{cases}
\end{equation*}
where $c_V := \inf_\mu \int V(x) d\mu(x) - 1/2\Sigma(\mu)$. Similarly, the empirical eigenvalue measure of $\mathbf{Z}_p$ obeys a large deviation principle of speed $N^2$ and good rate function
\begin{equation*}
    I_{V_p}(\mu) :=
    \begin{cases}
        \int V_p(x) d\mu(x) - \frac{1}{2}\Sigma(\mu) - c_{V_p} & \mbox{if } \int V_p(x) d\mu(x) < \infty \\
        \infty & \mbox{o.w.};
    \end{cases}
\end{equation*}
with $c_{V_p} := \inf_\mu \int V_p(x) d\mu(x) - 1/2\Sigma(\mu)$. By \cite[Lemma 2.6.2]{anderson2010introduction}, both $I_V$ and $I_{V_p}$ are strictly convex. By condition (ii), for finding the minimum of the rate functions, we can restrict the optimization problem to densities supported on the interval $[-1/C,1/C]$. Then, because $V_p\xrightarrow{p\to\infty}V$, we have that the minimizer of $I_{V_p}$ has to converge to that of $I_V$. We then have that $\rho_p$ converges point-wise to $\rho$. Finally, notice that, for general polynomial potentials, the function $f_\rho$ defining the optimization problem that gives the limiting free entropy should be continuous with respect to $\rho$. Therefore, when the minimizer of $f_\rho$ is unique, we should then have that the minimizer of $f_{\rho_p}$ should approach it when $p$ goes to infinity. Here we implicitly assumed that, if $p$ is sufficiently large, the posterior ${P'}^{(p)}_{X\mid Y}(\mathbf{x}\mid \bY)$ is replica symmetric. This then means that $\mbox{extr}_\tau f_{\rho_p}(\tau) \approx \mbox{extr}_\tau f_{\rho}(\tau)$. We then have that $N^{-1}|F^{(p)}_N-{F'}^{(p)}_N|$ is $o_p(1)$. Finally, $N^{-1}|\mathbb{E}\Tr  V_p(\mathbf{Z}_p)-\mathbb{E}\Tr  V_p(\mathbf{Z})|$ is also $o_p(1)$ because of the convergence of $\rho_p$ towards $\rho$. This means have that the second term in \eqref{eq:2_fe_bounds} is $o_p(1)$ from which we conclude \eqref{eq:equiv_mutual_inf}.

\section{Learning the optimal pre-processing $\bJ(\bY)$}\label{sec:EM}
 
Until now we have assumed that we are in the Bayesian-optimal setting where, in particular, the polynomial potential $V$ defining the noise statistics is completely known and correctly exploited. As seen from section~\ref{statCondADATAP}, given a potential $V$ we could deduce from the AdaTAP formalism an optimal polynomial $$\bJ=\bJ(\bY)=\sum_{k\le K} c_k \bY^k$$ to pre-process the data $\bY$ before using it in AMP. The Bayes-optimal case corresponds to matrix \eqref{optimalJ}, i.e., $\bJ=c_1 \bY+c_2 \bY^2+c_3 \bY^3$ with $\bc=(\mu\lambda,-\gamma\lambda^2 ,\gamma\lambda)$.

We here consider an extension of the previously derived AMP to a case where $V$ is not known and therefore the optimal $\bJ$ cannot be deduced by the AdaTAP approach as we did in Section \ref{statCondADATAP}. What is known instead is an upper bound on the order of $V$. In the base-case model studied in details in the present paper the order is four. The procedure we propose below will not be tested numerically yet, but we believe it may be of interest to practitioners eager to improve the Bayes-optimal AMP for more practical settings than the specific ones studied here.

To directly learn the coefficients $(c_k)_{k\le K}$ from the data, we propose to use an approach inspired by the expectation maximization (EM) algorithm, with a routine inside AMP performing the parameter estimation by maximizing the current estimate of the free entropy, i.e., of the log-likelihood of the observed data $\ln P(\bY\mid \bc )$. 

Assume that, at the AMP iterate $t$, the current estimate of the unknown coefficients $\bc=(c_k)_{k\le K}$ is $\bc(t)=(c_k(t))_{k\le K}$, the AMP estimate of the marginal means is $\bmm(t)$, and of the Onsager reaction term is $\bar V(t)$ (which is related to the set of Onsager coefficients, see Section \ref{subsec:OnsSE}). Let also the data matrix polynomial currently used by AMP be
$$\bJ(t):=\sum_{k\le K} c_k(t) \bY^k.$$

From the analysis of Section \ref{sec:selfaveraging} we know that at the saddle point we can safely replace the Onsager reaction term $V_i$ by $\bar V$ in the AdaTAP equations. When this is plugged back into \eqref{usefulId}, this identity implies that also the following concentration is consistently valid: $\EE(\tau_i-m_i^2)=\tau_i-m_i^2$, which is also equal by exchangeability to $N^{-1}\sum_{i\le N}\EE(\tau_i-m_i^2)$. Let us call $\bar \chi(t)$ the AMP estimate of the variance $\EE(\tau_i-m_i^2)$. Applying these simplifications to the AMP iterates we get that the matrix $\boldsymbol\Omega(t):={\rm diag}(\bV(t)+(\btau(t)-\bmm(t)^2)^{-1})$ can be simplified as  $$\boldsymbol\Omega(t)=(\bar V(t)+\bar \chi(t)^{-1})I_N.$$ 
From section~\ref{sec:adatapFreeEn} the AdaTAP approximation to the free entropy at iterate $t$ then reads, using these simplifications, as
\begin{align}
\Phi_N(t,\bc(t))&= \frac12\bmm(t)^\intercal \bJ(t)\bmm(t)+\frac12\ln \det\big(\boldsymbol\Omega(t)-\bJ(t)\big)-\frac12 \bar V(t) \sum_{i\le N}m_i(t)^2+\frac12\bar \chi(t)\nonumber\\
&\hspace{-1cm}-\sum_{i\le N}\ln\int dP_X(x)\exp\Big(\frac12\bar V(t)x^2+\big((\bJ(t)\bmm(t))_i -\bar V(t)m_i(t)\big) x\Big).\label{adatap_f(t)}
\end{align}
The free entropy $\Phi_N(t,\bc(t))$ is the current best approximation to the marginal log-likelihood of the data $\ln P(\bY\mid \bc )$, which we thus aim at maximizing with respect to the unknown parameters, all other quantities being fixed at their current values:
\begin{align}
    \partial_{c_k} \Phi_N|_{t,\bc(t)}&=\bmm(t)^\intercal\bY^k\Big(\frac12\bmm(t)-\eta(\bJ(t),\bmm(t),\bar V(t))\Big)\nonumber \\
    &\qquad-\frac12\Tr\big(\bY^k(\boldsymbol\Omega(t) - \bJ(t))^{-1}\big),
\end{align}
where we used \eqref{TAPm} and the notation $\eta(\bJ(t),\bmm(t),\bar V(t))= (\eta_i(\bJ(t),\bmm(t),\bar V(t))_{i\le N}$. Because $\bJ$ is diagonalizable in the same basis as the data $\bY$, the eigenvalues of which are denoted $\sigma_i=\sigma_i(\bY)$, we have
\begin{align}
  \Tr\big(\bY^k(\boldsymbol\Omega(t) - \bJ(t))^{-1}\big)  =\sum_{i\le N} \frac{\sigma_i^k}{\bar V(t)+\bar \chi(t)^{-1}-\sum_{\ell\le K} c_\ell(t) \sigma_i^\ell}.
\end{align}
Then
\begin{align}
    \partial_{c_k} \Phi_N|_{t,\bc(t)}&=\bmm(t)^\intercal\bY^k\Big(\frac12\bmm(t)-\eta(\bJ(t),\bmm(t),\bar V(t))\Big)\nonumber \\
    &\qquad-\frac12\sum_{i\le N} \frac{\sigma_i^k}{\bar V(t)+\bar \chi(t)^{-1}-\sum_{\ell\le K} c_\ell(t) \sigma_i^\ell}.
\end{align}
We aim at maximizing the free entropy so given a learning rate $\zeta>0$ the learning rule finally reads
\begin{align}
    c_k(t+1)=c_k(t)+\zeta \partial_{c_k} \Phi_N|_{t,\bc(t)}.
\end{align}

\section{Proofs for BAMP}\label{appsec:BoptAMP}

\subsection{Auxiliary AMP}\label{appsubsec:aux}

The iterates of the auxiliary AMP are denoted by $\tilde{\bz}^t, \tilde{\bu}^t\in\mathbb R^N$, and they are computed as follows, for $t\ge 1$:
\begin{equation}\label{eq:auxAMP}
    \tilde{\bz}^t = \bZ \tilde{\bu}^t - \sum_{i=1}^t \bar {\sf b}_{t, i}\tilde{\bu}^i, \quad \tilde{\bu}^{t+1} = \tilde{h}_{t+1}(\tilde{\bz}^1, \ldots, \tilde{\bz}^t, \bu^1, \bX^*).
\end{equation}
The iteration \eqref{eq:auxAMP} is initialized with $\tilde{\bu}^1=\bu^1$, where $\bu^1$ satisfies \eqref{eq:AMPinit}. For $t\ge 1$, the functions $\tilde{h}_{t+1}:\mathbb R^{t+2}\to\mathbb R$ are applied component-wise, and they are recursively defined as
\begin{equation}\label{eq:deftildeh}
    \begin{split}
        \tilde{h}_{K(t-1)+1+\ell}(&z_1, \ldots, z_{K(t-1)+\ell}, u_1, x^*) = z_{K(t-1)+\ell}+(\tilde\bB_{K(t-1)+\ell})_{K(t-1)+\ell, 1}\, u_1 \\
        &\hspace{-4em}+\hspace{-.5em} \sum_{i=2}^{K(t-1)+\ell}\hspace{-.5em} (\tilde\bB_{K(t-1)+\ell})_{K(t-1)+\ell, i}\,\tilde{h}_{i}\Big(z_1, \ldots, z_{i-1}, u_1, x^*\Big)+\tilde\mu_{K(t-1)+\ell} x^*, \quad \ell\in [K-1],\\
        \tilde{h}_{Kt+1}(&z_1, \ldots, z_{Kt}, u_1, x^*) = g_{t+1}\Big(\mu_t x^*+\sum_{i=1}^{Kt}\theta_{t, i}z_i\Big).
    \end{split}
\end{equation}
The idea is that the choice \eqref{eq:deftildeh} for the denoisers $\{\tilde{h}_{t+1}\}_{t\ge 1}$ ensures that $\tilde \bu^{K(t-1)+\ell}$ tracks the quantity $\bY^{\ell-1}\bu^t$ for $\ell\in [K]$ and $t\ge 1$, where $\{\bu^t\}$ are the iterates of the AMP  iteration \eqref{eq:AMPnew} we are interested in analyzing.

In \eqref{eq:deftildeh}, $g_{t+1}$ is the denoiser of the AMP \eqref{eq:AMPnew}. The parameters $(\tilde \bB_{K(t-1)+\ell}$, $\tilde\mu_{K(t-1)+\ell}$, $\mu_t, \theta_{t, i})$ come from the state evolution recursion detailed in Section \ref{subsec:OnsSE}: $\tilde \bB_{K(t-1)+\ell}$ is given by \eqref{eq:Bup}, $\tilde\mu_{K(t-1)+\ell}$ by \eqref{eq:tildemuup}, $\mu_t$ by \eqref{eq:muup} and $\theta_{t, i}$ by \eqref{eq:thetaup}. We now discuss how to obtain the coefficients $\{\bar{\sf b}_{t, i}\}_{i=1}^t$ needed in \eqref{eq:auxAMP}. Let us define the matrix $\bar\bPhi_{t}\in\mathbb R^{t\times t}$ as
\begin{equation}\label{eq:defbPhi}
        (\bar\bPhi_{t})_{i, j}=0, \quad \mbox{ for }i\le j, \qquad (\bar\bPhi_{t})_{i, j}=\langle\partial_j \tilde{\bu}^i\rangle, \quad \mbox{ for }i> j,
\end{equation}
where, for $j<i$, the vector $\langle\partial_j \tilde{\bu}^i\rangle\in \mathbb R^N$ denotes the partial derivative of $\tilde{h}_i:\mathbb R^{i+1}\to \mathbb R$ with respect to the $j$-th input (applied component-wise). Then, the vector $(\bar{\sf b}_{t, 1}, \ldots, \bar{\sf b}_{t, t})$ is given by the last row of the matrix $\bar{\bB}_{t}\in\mathbb R^{t\times t}$ defined as
\begin{equation}\label{eq:deftildeB}
    \bar{\bB}_{t} = \sum_{j=0}^{t-1} \kappa_{j+1}\bar\bPhi_{t}^j.
\end{equation}
where $\{\kappa_k\}_{k\ge 1}$ denotes the sequence of free cumulants associated to the matrix $\bZ$.

\subsection{State evolution of auxiliary AMP}\label{appsubsec:auxAMPSE}
Using Theorem 2.3 in \cite{zhong2021approximate}, we provide a state evolution result for the auxiliary AMP \eqref{eq:auxAMP}. In particular, we show in Proposition \ref{prop:auxSE} that the joint empirical distribution of $(\tilde{\bz}^1, \ldots, \tilde{\bz}^t)$ converges to a $t$-dimensional Gaussian $\mathcal N(\bzero, \hat{\bSigma}_t)$. 

The covariance matrices $\{\hat{\bSigma}_t\}_{t\ge 1}$ are defined recursively, starting with $\hat{\bSigma}_1=\bar{\kappa}_2\mathbb E[U_1^2]$, where $U_1$ is defined in \eqref{eq:AMPinit}. Given $\hat{\bSigma}_t$, let 
\begin{equation}\label{eq:SElawsaux}
    \begin{split}
        (\hat Z_1&, \ldots, \hat Z_t)\sim\mathcal N(\bzero, \hat{\bSigma}_t) \mbox{ and independent of }(X^*, U_1),\\
         \hat U_{s} &= \tilde{h}_{s}\Big(\hat Z_1, \ldots, \hat Z_{s-1}, U_{1}, X^*\Big), \quad \mbox{ for } s\in \{2, \ldots, t+1\},
    \end{split}
\end{equation}
where $\tilde{h}_s$ is defined via \eqref{eq:deftildeh} and we set $\hat U_{1}=U_1$. Let $\hat{\bPhi}_{t+1}, \hat{\bDelta}_{t+1}\in\mathbb R^{(t+1)\times (t+1)}$ be matrices with entries given by  
\begin{equation}
    \begin{split}
        (\hat{\bPhi}_{t+1})_{i, j}&=0, \quad \mbox{ for }i\le j,\qquad\qquad (\hat{\bPhi}_{t+1})_{i, j}=\mathbb E[\partial_j \hat U_i], \quad \mbox{ for }i> j,\\
        (\tilde{\bDelta}_{t+1})_{i, j} &= \mathbb E[\hat U_{i}\,\hat U_{j}], \quad 1\le i, j\le t+1,
    \end{split}
\end{equation}
where $\partial_j \hat U_i$ denotes the partial derivative $\partial_{\hat Z_j} \tilde{h}_i(\hat Z_1, \ldots, \hat Z_{i-1}, U_1, X)$. Then, we compute the covariance matrix $\hat{\bSigma}_{t+1}$ as
\begin{equation}\label{eq:defSigmat}
    \hat{\bSigma}_{t+1} = \sum_{j=0}^{2t}\bar{\kappa}_{j+2} \sum_{i=0}^j (\hat{\bPhi}_{t+1})^i\hat{\bDelta}_{t+1}(\hat{\bPhi}_{t+1}^\intercal)^{j-i}.
\end{equation}
It can be verified that the $t\times t$ top left sub-matrix of $\hat{\bSigma}_{t+1}$ is given by $\hat{\bSigma}_t$.

\begin{proposition}[State evolution for auxiliary AMP]
    Consider the auxiliary AMP in \eqref{eq:auxAMP} and the state evolution random variables defined in \eqref{eq:SElawsaux}. Let $\tilde{\psi}:\mathbb R^{2t+2} \to \mathbb R$ be a $\mathrm{PL}(2)$ function. Then, for each $t\ge 1$, we almost surely have
    \begin{align}
    & \lim_{N \to \infty}  \frac{1}{N} \sum_{i=1}^N  \tilde{\psi}(\tilde{z}^1_i, \ldots, \tilde{z}^t_i, \tilde{u}^1_i, \ldots, \tilde{u}^{t+1}_i, X^*_i) \notag
    \\
    &\hspace{4em}= \E[ \tilde{\psi}(\hat Z_1, \ldots, \hat Z_t, \hat U_1, \ldots, \hat U_{t+1}, X^*) ]. \label{eq:PL2aux}
    \end{align}
    Equivalently, as $N\to\infty$, almost surely:
    \begin{align}
        & (\tilde{\bz}^1, \ldots, \tilde{\bz}^{t}, \,  \tilde{\bu}^1, \ldots, \tilde{\bu}^{t+1}, \, \bX^*) \, \stackrel{\mathclap{W_2}}{\longrightarrow}  \,  (\hat Z_1, \ldots, \hat Z_t, \hat U_1, \ldots, \hat U_{t+1}, X^*).\label{eq:PL2aux2}
    \end{align}
    Furthermore, 
    \begin{equation}\label{eq:equiv3}
       (\hat Z_1, \ldots, \hat Z_t, \hat U_1, \ldots, \hat U_{t+1}, X^*) \stackrel{\rm d}{=}  (\tilde Z_1, \ldots, \tilde Z_t, \tilde U_1, \ldots, \tilde U_{t+1}, X^*),
    \end{equation}
    where $(\tilde Z_1, \ldots, \tilde Z_t, \tilde U_1, \ldots, \tilde U_{t+1}, X^*)$ are obtained via \eqref{eq:3ip0}--\eqref{eq:Zdef}. 
    
    \label{prop:auxSE}
\end{proposition}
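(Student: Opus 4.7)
The plan is to apply the general state evolution result for AMP with rotationally invariant noise, namely Theorem 2.3 of \cite{zhong2021approximate}, directly to the auxiliary iteration \eqref{eq:auxAMP}, and then verify that the resulting limit variables coincide with those given by \eqref{eq:3ip0}--\eqref{eq:Zdef}. The iteration \eqref{eq:auxAMP} is exactly in the form covered by that theorem: iterates of the shape $\bZ \tilde{\bu}^t - \sum_{i\le t}\bar{\sf b}_{t,i}\tilde{\bu}^i$ with Onsager coefficients obtained as the last row of $\bar\bB_t = \sum_{j=0}^{t-1}\kappa_{j+1}\bar\bPhi_t^j$ in terms of the free cumulants of $\bZ$ and the ``derivative matrix'' $\bar\bPhi_t$, and with memory denoisers depending on all past iterates as well as the side information $(\bu^1,\bX^*)$. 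Under Hypothesis~\ref{hyp-density}, $\bZ=\bO^\intercal \bD \bO$ is rotationally invariant with bounded spectrum, so the empirical eigenvalue distribution converges weakly and the free cumulants of $\bZ$ coincide with those of $D$ (that is, $\kappa_k=\bar\kappa_k$). Together with the side-information convergence \eqref{eq:AMPinit}, this provides all the hypotheses of Theorem~2.3 of \cite{zhong2021approximate}, provided the denoisers $\tilde h_s$ are Lipschitz.

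The first step is therefore to check by induction on $s$ that $\tilde h_s:\mathbb{R}^{s+1}\mapsto \mathbb{R}$ defined in \eqref{eq:deftildeh} is Lipschitz. For $s=K(t-1)+1+\ell$ with $\ell\in[K-1]$, $\tilde h_s$ is an affine combination of $\tilde h_i$ with $i<s$, the new argument $z_{s-1}$, the planted signal $x^*$ and the initializer $u_1$, with bounded coefficients $(\tilde\bB_{K(t-1)+\ell})_{K(t-1)+\ell,i}$ and $\tilde\mu_{K(t-1)+\ell}$ determined by the deterministic recursion \eqref{eq:tildemuup}--\eqref{eq:tildeBup}. For $s=Kt+1$, $\tilde h_s$ is $g_{t+1}$ composed with an affine map in the previous iterates, so Lipschitz by the hypothesis on $g_{t+1}$ in Theorem~\ref{th:SE}. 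Induction thus yields Lipschitzness of every $\tilde h_s$, and Theorem~2.3 of \cite{zhong2021approximate} immediately gives the $\mathrm{PL}(2)$ convergence \eqref{eq:PL2aux} (equivalently, the Wasserstein-$2$ convergence \eqref{eq:PL2aux2}) to the Gaussian-driven process $(\hat Z_1,\dots,\hat Z_t,\hat U_1,\dots,\hat U_{t+1},X^*)$ defined in \eqref{eq:SElawsaux}, with covariance recursion \eqref{eq:defSigmat} built from $\hat\bPhi_{t+1}$ and $\hat\bDelta_{t+1}$.

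What remains is the distributional identification \eqref{eq:equiv3}. I proceed by induction on $s$, showing simultaneously that $\hat U_s\stackrel{\rm d}{=}\tilde U_s$, $(\hat Z_1,\dots,\hat Z_s)\stackrel{\rm d}{=}(\tilde Z_1,\dots,\tilde Z_s)$, and that the matrices $\hat\bPhi_{s+1},\hat\bDelta_{s+1},\hat\bSigma_{s+1}$ coincide respectively with $\tilde\bPhi_{s+1},\tilde\bDelta_{s+1},\tilde\bSigma_{s+1}$ as constructed in Section~\ref{subsec:OnsSE}. The base case $s=1$ reduces to $\hat U_1=U_1=\tilde U_1$ with $\hat \Sigma_1=\bar\kappa_2\E[U_1^2]=\tilde\Sigma_1$ via \eqref{eq:SEnewinit}. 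For the inductive step with $s=K(t-1)+1+\ell$, $\ell\in[K-1]$, substituting the recursive form \eqref{eq:deftildeh} into the definition $\hat U_s=\tilde h_s(\hat Z_1,\dots,\hat Z_{s-1},U_1,X^*)$ and using the induction hypothesis $\hat U_i=\tilde U_i$ for $i<s$ gives exactly \eqref{eq:3ip0}; for $s=Kt+1$, the second line of \eqref{eq:deftildeh} gives \eqref{eq:3ip1}. The equality of the update rules \eqref{eq:Bup}--\eqref{eq:tildeBup} with \eqref{eq:defSigmat} (once one uses $\kappa_k=\bar\kappa_k$) and of $\hat\bPhi_{s+1},\hat\bDelta_{s+1}$ with $\tilde\bPhi_{s+1},\tilde\bDelta_{s+1}$ then propagates the induction, and \eqref{eq:equiv3} follows.

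The only mildly delicate point is the bookkeeping in the induction for \eqref{eq:equiv3}: one must make sure that the ``unfolded'' expression obtained from \eqref{eq:deftildeh} truly matches, termwise, the layered recursion \eqref{eq:3ip0}, and in particular that the indexing of $\tilde\bB$ is consistent between its definition via \eqref{eq:Bup} inside the auxiliary AMP's state evolution and its use in \eqref{eq:3ip0}. This is not conceptually hard, but it is the place where a careless reader could get lost; everything else reduces to invoking \cite{zhong2021approximate} and using rotational invariance of $\bZ$ to identify the free cumulants with those of $D$.
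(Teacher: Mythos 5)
Your proposal is correct and follows essentially the same route as the paper: invoke Theorem~2.3 of \cite{zhong2021approximate} for the auxiliary AMP, verify its assumptions (rotational invariance and boundedness of $\bZ$, initialization convergence, Lipschitz denoisers), and then identify the two state-evolution recursions by inspection. You spell out the Lipschitzness induction and the step-by-step matching of \eqref{eq:SElawsaux}--\eqref{eq:defSigmat} with \eqref{eq:3ip0}--\eqref{eq:Zdef} that the paper dismisses as ``readily obtained,'' but this is elaboration, not a different argument.
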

\begin{proof}
The result follows from Theorem 2.3 in \cite{zhong2021approximate}. In fact, Assumption 2.1 of \cite{zhong2021approximate} holds because of the model assumptions on $\bZ$, Assumption 2.2(a) holds because $(\bX^*, \tilde\bu^1)=(\bX^*, \bu^1)\stackrel{\mathclap{W_2}}{\longrightarrow} (X^*, U_1)$ from \eqref{eq:AMPinit}, and Assumption 2.2(b) follows from the definition of $\tilde{h}_{t+1}$ in \eqref{eq:deftildeh} and the fact that $g_{t+1}$ is continuously differentiable and Lipschitz. As the auxiliary AMP in \eqref{eq:auxAMP} is of the standard form for which the state evolution result of Theorem 2.3 in \cite{zhong2021approximate} holds, we readily obtain \eqref{eq:PL2aux2}. The equivalence between \eqref{eq:PL2aux2} and \eqref{eq:PL2aux} follows from \cite[Corollary 7.21]{feng2022unifying}. Finally, by inspecting the state evolution recursions \eqref{eq:3ip0}--\eqref{eq:Zdef} and \eqref{eq:SElawsaux} giving $(\tilde Z_1, \ldots, \tilde Z_t, \tilde U_1, \ldots, \tilde U_{t+1}, X^*)$ and $(\hat Z_1, \ldots, \hat Z_t, \hat U_1, \ldots, \hat U_{t+1}, X^*)$ respectively, \eqref{eq:equiv3} is readily obtained.
\end{proof}

Proposition \ref{prop:auxSE} gives that the state evolution recursion discussed in Section \ref{subsec:OnsSE} (cf. \eqref{eq:3ip0}--\eqref{eq:Zdef}) coincides with the state evolution tracking the iterates of the auxiliary AMP algorithm \eqref{eq:auxAMP}. In particular, $\tilde\bDelta_{3t}=\hat\bDelta_{3t}$, $\tilde\bPhi_{3t}=\hat\bPhi_{3t}$, and $\tilde\bSigma_{3t}=\bar\bSigma_{3t}$. Furthermore, in the proof of Theorem \ref{th:SE} contained in Appendix \ref{appsubsec:proof}, we will show that $\bar\bB_{3t}\to \tilde\bB_{3t}$ as $N\to \infty$.

\subsection{Proof of Theorem \ref{th:SE}}\label{appsubsec:proof}

We start by presenting a useful technical lemma. 

\begin{lemma}
    \label{lem:lipderiv}
    Let $F : \mathbb R^t \to\mathbb R$ be a Lipschitz function, and let $\partial_k F$ denote its derivative with respect to the $k$-th argument, for $1 \le k \le t$. Assume that $\partial_k F$ is continuous almost everywhere in the $k$-th argument,  for each $k$. Let $(V_1^{(m)}, \ldots, V_t^{(m)})$  be a sequence of random vectors in $\mathbb R^t$ converging in distribution to the random vector $(V_1, \ldots,V_t)$ as $m \to \infty$. Furthermore, assume that the distribution of $(V_1, \ldots, V_t)$ is absolutely continuous with respect to the Lebesgue measure. Then, 
    \begin{equation}
         \lim_{m \to \infty}  \mathbb E[ \partial_k F(V_1^{(m)}, \ldots, V_t^{(m)})] = \mathbb E[ \partial_k F(V_1, \ldots, V_t) ], \qquad 1 \le k \le t.
    \end{equation}
\end{lemma}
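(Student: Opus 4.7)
\medskip

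\noindent\textbf{Proof plan.} The strategy is to reduce the statement to an application of the Portmanteau theorem for bounded functions whose discontinuity set is negligible under the limiting law, so the main task is to verify those two ingredients for $\partial_k F$.

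First I would observe that, since $F$ is Lipschitz on $\mathbb R^t$, Rademacher's theorem ensures that $\partial_k F$ is defined Lebesgue-a.e.\ and is bounded by the Lipschitz constant $L$ of $F$ wherever it is defined. One can then fix any bounded extension (the choice on the null set being irrelevant for expectations against laws absolutely continuous with respect to Lebesgue measure). Next, the hypothesis that $\partial_k F$ is continuous almost everywhere in the $k$-th argument, combined with Fubini's theorem, implies that the set $D_k \subset \mathbb R^t$ of points at which $\partial_k F$ fails to be continuous has Lebesgue measure zero.

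Second, because the limiting random vector $(V_1,\ldots,V_t)$ is absolutely continuous with respect to Lebesgue measure, one immediately concludes that $\mathbb P[(V_1,\ldots,V_t)\in D_k]=0$. Thus $\partial_k F$ is a bounded measurable function whose discontinuity set is a null set for the law of $(V_1,\ldots,V_t)$.

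Finally, applying the Portmanteau theorem in the form: if $g:\mathbb R^t\to\mathbb R$ is bounded and measurable and if $\mathbb P\big[(V_1,\ldots,V_t) \in \mathrm{Disc}(g)\big]=0$, then $\mathbb E[g(V^{(m)})]\to\mathbb E[g(V)]$ whenever $V^{(m)}\Rightarrow V$ in distribution, we obtain
\[
\lim_{m\to\infty}\mathbb E\big[\partial_k F(V_1^{(m)},\ldots,V_t^{(m)})\big]=\mathbb E\big[\partial_k F(V_1,\ldots,V_t)\big],
\]
which is exactly the claim. The only delicate point, and the one that needs to be spelled out carefully, is the passage from ``continuous a.e.\ in the $k$-th argument'' to ``Lebesgue-null discontinuity set in $\mathbb R^t$''; once this is in place, the rest is essentially a textbook invocation of the Portmanteau theorem together with absolute continuity of the limit.
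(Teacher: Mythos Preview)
Your proposal is correct and follows exactly the standard argument. The paper itself does not give a proof of this lemma: it simply states that the case $t=2$ is \cite[Lemma~6]{Bayati_Montanari11} and that the general case is the same (citing also \cite[Lemma~7.14]{feng2022unifying}). Those references proceed precisely as you do---bounding $\partial_k F$ via the Lipschitz constant, showing the discontinuity set is Lebesgue-null, and then invoking the Portmanteau/continuous mapping theorem together with absolute continuity of the limit law---so your sketch is essentially a reconstruction of the cited proofs rather than an alternative route.
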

The result was proved for $t=2$ in \cite[Lemma 6]{Bayati_Montanari11}. The proof for $t > 2$ is basically the same, see also \cite[Lemma 7.14]{feng2022unifying}.
At this point, we are ready to give the proof of Theorem \ref{th:SE}.

\begin{proof}[Proof of Theorem \ref{th:SE}]
We show that, for any $\mathrm{PL}(2)$ function  $\psi: \mathbb R^{2t+2} \to \mathbb R$, the following limit holds almost surely for $t \ge 1$: 
\begin{equation}\label{eq:PLsecondstep}
    \begin{split}
        \lim_{N \to \infty}  \Big\vert 
        \frac{1}{N}& \sum_{i=1}^{N} \psi\big(u^1_i,u^2_i,\ldots, u^{t+1}_i, f^1_i, f^2_i, \ldots, f^t_i,  X^*_i\big)
        \\
        & -   \frac{1}{N} \sum_{i=1}^{N} \psi\big(\tilde u^1_i,\tilde u^{K+1}_i, \ldots, \tilde u^{Kt+1}_i, \tilde f^1_i, \tilde f^2_i,\ldots, \tilde f^t_i,  X^*_i\big)  \Big \vert =0,
    \end{split}
\end{equation}
where we have defined for $s\in \{1, \ldots, t\}$,
\begin{equation}\label{eq:defu}
    \tilde \bff^s = \mu_s \bX^*+\sum_{i=1}^{Ks}\theta_{s, i}\tilde \bz^i.
\end{equation}
From here till the end of the argument, all the limits hold almost surely, and we use $C$ to denote a generic positive constant, which can change from line to line and is independent of $N$. By using that $\psi$ is pseudo-Lipschitz, we have that
\begin{equation}
    \begin{split}
         \Big\vert 
         &\frac{1}{N} \sum_{i=1}^{N} \psi\big(u^1_i,u^2_i,\ldots, u^{t+1}_i, f^1_i, f^2_i, \ldots, f^t_i,  X^*_i\big)
        \\
        & \hspace{3em}-   \frac{1}{N} \sum_{i=1}^{N} \psi\big(\tilde u^1_i,\tilde u^{K+1}_i, \ldots, \tilde u^{Kt+1}_i, \tilde f^1_i, \tilde f^2_i,\ldots, \tilde f^t_i,  X^*_i\big)  \Big \vert\\
        & \le \frac{C}{N} \sum_{i=1}^{N} \Big( 1+ |X^*_i| +2|u^1_i| +\sum_{k=1}^t \Big( |f^k_i| +|\tilde f^{k}_i|+ |u^{k+1}_i|  + |\tilde{u}^{Kk+1}_i|  \Big) \Big) \\
        &\hspace{14em} \cdot \Big( \sum_{k=1}^t  \big(|f^k_i - \tilde f^k_i |^2  +  |u^{k+1}_i - \tilde u^{Kk+1}_i |^2\big) \Big)^{1/2} \\
        & \leq C(4t+3)\Big[ 1 + \frac{\| \bX^* \|^2}{N} +  
        \sum_{k=1}^t \Big( \frac{\|\bff^k \|^2}{N} + \frac{\|\tilde \bff^k \|^2}{N} + \frac{\| \bu^{k+1} \|^2}{N} + 
        \frac{\| \tilde{\bu}^{Kk+1} \|^2}{N} \Big) \Big]^{1/2} \\
        &\hspace{13em}\cdot \Big( \sum_{k=1}^t \Big(\frac{\| \bff^k - \tilde\bff^{k}\|^2}{N} + \frac{\|\bu^{k+1} - \tilde{\bu}^{Kk+1}\|^2}{N} \, \Big)\Big)^{1/2},  \label{eq:xz_diff}
    \end{split}
\end{equation}
where the last step uses twice Cauchy-Schwarz inequality. We now inductively show that as $N \to \infty$: \emph{(i)} each of the terms in the last line of \eqref{eq:xz_diff} converges to zero, and \emph{(ii)} the terms within the square brackets in \eqref{eq:xz_diff} all converge to finite, deterministic limits. To achieve this goal, we will also show that, for $k\in [t]$ and $\ell\in [K-1]$, 
\begin{align}
    \lim_{N\to\infty}&\frac{\|\bY^\ell\bu^{k}-\tilde\bu^{K(k-1)+1+\ell}\|^2}{N}=0,\label{eq:auxAMP1}\\
    \lim_{N\to\infty}&\frac{\|\tilde\bu^{K(k-1)+1+\ell}\hspace{-.2em}-\hspace{-.2em}\sum_{j=1}^{K(k-1)+\ell}\hspace{-.2em}\alpha_{K(k-1)+1+\ell, j}\tilde\bz^j\hspace{-.2em}-\hspace{-.2em}\sum_{j=1}^{k}\hspace{-.2em}\beta_{K(k-1)+1+\ell, j}\bu^j\hspace{-.2em}-\hspace{-.2em}\gamma_{K(k-1)+1+\ell}\bX^*\|^2}{N}\notag\\
    &\hspace{32.5em}=0.  \label{eq:auxAMPnew2}
\end{align}
The limit \eqref{eq:auxAMP1} formalizes the idea discussed in Section \ref{subsec:BoptAMP} (see \eqref{eq:mapping1}) that the iterate $\tilde\bu^{K(k-1)+1+\ell}$ of the auxiliary AMP tracks the quantity $\bY^\ell\bu^{k}$, where $\bu^{k}$ is the iterate of the AMP we wish to analyze, up to an $o_N(1)$ error. The limit \eqref{eq:auxAMPnew2} formalizes the interpretation of the coefficients $\{\alpha_{i, j}\}$, $\{\beta_{i, j}\}$, $\{\gamma_i\}$ provided at the end of Section \ref{subsec:OnsSE} (see \eqref{eq:charac}).

\noindent \underline{Base case ($t=1$).} We have that
\begin{equation}\label{eq:b1man}
    \begin{split}
        \bY\bu^1-\tilde \bu^2 &= \bZ\bu^1+\lambda\frac{\langle \bX^*, \bu^1\rangle}{N}\bX^*-\tilde{\bz}^1-(\tilde \bB_1)_{1, 1}\bu^1-\tilde\mu_1\bX^* \\
        &=\Big(\lambda\frac{\langle \bX^*, \bu^1\rangle}{N}-\tilde\mu_1\Big)\bX^*+\big(\bar {\sf b}_{1, 1}-(\tilde \bB_1)_{1, 1}\big)\bu^1,
    \end{split}
\end{equation}
where the first equality uses the definition of $\bY$ and of $\tilde h_2$ (see \eqref{eq:deftildeh}), and the second equality uses \eqref{eq:auxAMP} and that $\tilde\bu^1=\bu^1$. Hence, by triangle inequality,
\begin{equation}\label{eq:base1}
    \begin{split}
        \frac{\|\bY\bu^1-\tilde \bu^2\|^2}{N}&\le 2\Big(\lambda\frac{\langle \bX^*, \bu^1\rangle}{N}-\tilde\mu_1\Big)^2\frac{\|\bX^*\|^2}{N}+2\big(\bar {\sf b}_{1, 1}-(\tilde \bB_1)_{1, 1}\big)^2\frac{\|\bu^1\|^2}{N}\\
        &\le C\Big(\Big(\lambda\frac{\langle \bX^*, \bu^1\rangle}{N}-\tilde\mu_1\Big)^2+\big(\bar {\sf b}_{1, 1}-(\tilde \bB_1)_{1, 1}\big)^2\Big),
    \end{split}
\end{equation}
where the last inequality uses that $(\bX^*, \bu^1)$ converges in $W_2$ to a pair of random variables with finite second moments. As $\tilde \mu_1=\lambda\epsilon$ (cf. \eqref{eq:SEnewinit}), we have 
\begin{equation}\label{eq:base2}
    \lim_{N\to\infty}\lambda\frac{\langle \bX^*, \bu^1\rangle}{N}=\lambda\mathbb E[ U_1X^*]=\lambda\epsilon=\tilde\mu_1.
\end{equation}
Furthermore, note that  $(\tilde\bB_1)_{1, 1}=\bar{\kappa}_1$ (cf. \eqref{eq:SEnewinit}) and $\bar{\sf b}_{1, 1}=\kappa_1$ (cf. \eqref{eq:deftildeB}). Hence, by the model assumptions, as $N\to\infty$, $\kappa_1\to\bar{\kappa}_1$ and, therefore, $\bar{\sf b}_{1, 1}\to (\tilde\bB_1)_{1, 1}$. By combining this observation with \eqref{eq:base1} and \eqref{eq:base2}, we obtain that \eqref{eq:auxAMP1} holds for $k=1$ and $\ell=1$.

By using \eqref{eq:alphaup}--\eqref{eq:gammaup}, we readily obtain that $\alpha_{2, 1}=1$, $\beta_{2, 1}=(\tilde\bB_1)_{1, 1}$ and $\gamma_{2}=\tilde\mu_1$. Hence, by using the definition \eqref{eq:deftildeh} of $\tilde h_2$, we obtain that \eqref{eq:auxAMPnew2} holds for $k=1$ and $\ell=1$.

Next, by using the definitions of $\bY$, of the auxiliary AMP \eqref{eq:auxAMP} and of $\tilde h_3$ (cf. \eqref{eq:deftildeh}), we have
\begin{equation}\label{eq:y2u2}
    \begin{split}
        &\bY^2\bu^1-\tilde\bu^3 = \bY(\bY\bu^1-\tilde\bu^2)+\bY\tilde\bu^2-\tilde\bz^2-(\tilde\bB_2)_{2, 1}\bu^1-(\tilde\bB_2)_{2, 2}\tilde\bu^2-\tilde\mu_2\bX^*\\
        &=\bY(\bY\bu^1-\tilde\bu^2) + \bZ\tilde\bu^2 -\tilde\bz^2-(\tilde\bB_2)_{2, 1}\bu^1-(\tilde\bB_2)_{2, 2}\tilde\bu^2+ \Big(\lambda\frac{\langle \bX^*, \tilde\bu^2\rangle}{N}-\tilde\mu_2\Big)\bX^*\\
        &=\bY(\bY\bu^1-\tilde\bu^2) + \big(\bar{\sf b}_{2, 1}-(\tilde\bB_2)_{2, 1}\big)\bu^1+\big(\bar{\sf b}_{2, 2}-(\tilde\bB_2)_{2, 2}\big)\tilde\bu^2+ \Big(\lambda\frac{\langle \bX^*, \tilde\bu^2\rangle}{N}-\tilde\mu_2\Big)\bX^*.
    \end{split}
\end{equation}
Hence, by triangle inequality,
\begin{equation}\label{eq:y2u1}
\begin{split}
    \frac{\|\bY^2\bu^1-\tilde\bu^3\|^2}{N}&\le C\Big(\frac{\|\bY(\bY\bu^1-\tilde\bu^2)\|^2}{N} + \big(\bar{\sf b}_{2, 1}-(\tilde\bB_2)_{2, 1}\big)^2\frac{\|\bu^1\|^2}{N} \\
    &+ \big(\bar{\sf b}_{2, 2}-(\tilde\bB_2)_{2, 2}\big)^2\frac{\|\tilde\bu^2\|^2}{N}+\Big(\lambda\frac{\langle \bX^*, \tilde\bu^2\rangle}{N}-\tilde\mu_2\Big)^2\frac{\|\bX^*\|^2}{N}\Big)\\
    &:=C(T_1+T_2+T_3+T_4).
\end{split}
\end{equation}
Consider the first term. As $\bY$ has bounded operator norm and \eqref{eq:auxAMP1} holds for $k=1$ and $\ell=1$, we have that $T_1\to 0$ as $N\to\infty$.

Consider the second and third terms. The following chain of equalities holds
\begin{equation}\label{eq:phiarg}
    \lim_{N\to \infty} (\bar\bPhi_{2})_{2, 1} =\lim_{N\to\infty}\langle\partial_1 \tilde{\bu}^2\rangle =\mathbb E[\partial_1\hat{U}_2]=\mathbb E[\partial_1\tilde{U}_2] = (\tilde{\bPhi}_{2})_{2, 1}.
\end{equation}
Here, the first equality uses the definition \eqref{eq:defbPhi}; the second equality follows from Lemma \ref{lem:lipderiv}, as $\tilde{\bu}^2$ converges in $W_2$ (and therefore in distribution) to $\tilde{U}_2$ and $\partial_1 \tilde U_2$ is continuous; the third equality uses \eqref{eq:equiv3}; and the fourth equality uses the definition of $(\tilde{\bPhi}_{2})_{2, 1}$ in \eqref{eq:Phiup}. By the model assumptions, as $N\to\infty$, $\kappa_j\to\bar{\kappa}_j$ for all $j$. Thus, by combining \eqref{eq:phiarg} with the definitions of $\bar\bB_{2}$ and $\tilde{\bB}_{2}$ in \eqref{eq:deftildeB} and \eqref{eq:Bup}, respectively, we conclude that, as $N\to\infty$, $\bar{\sf b}_{2, i}\to(\tilde\bB_{2})_{2, i}$ for $i\in \{1, 2\}$. By Proposition \ref{prop:auxSE}, $\|\tilde{\bu}^2\|^2/N$ converges to a finite limit, hence we conclude that $T_2, T_3\to 0$ as $N\to\infty$.

Consider the fourth term. Then,
\begin{equation*}
    \lim_{N\to \infty}\lambda\frac{\langle \bX^*, \tilde{\bu}^{2}\rangle}{N}=\lambda\mathbb E[X^*\,\tilde{U}_{2}]=\tilde\mu_{2}.
\end{equation*}
Here, the first equality uses Proposition \ref{prop:auxSE} and the second equality uses the definition of $\tilde\mu_{2}$ in \eqref{eq:tildemuup}. As $\|\bX^*\|^2/ N=1$, we conclude that $T_4\to 0$ as $N\to\infty$. This proves that the RHS of \eqref{eq:y2u1} vanishes and gives that \eqref{eq:auxAMP1} holds for $k=1$ and $\ell=2$.

By using \eqref{eq:alphaup}--\eqref{eq:gammaup}, we readily obtain that $\alpha_{3, 1}=(\tilde\bB_2)_{2, 2}$, $\alpha_{3, 2}=1$, $\beta_{3, 1}=(\tilde\bB_2)_{2, 1}+(\tilde\bB_2)_{2, 2}\,(\tilde\bB_2)_{1, 1}$ and $\gamma_{3}=\tilde\mu_2+\tilde\mu_1\,(\tilde\bB_2)_{2, 2}$. Hence, by using the definition \eqref{eq:deftildeh} of $\tilde h_3$, we obtain that \eqref{eq:auxAMPnew2} holds for $k=1$ and $\ell=2$.

The proof of \eqref{eq:auxAMP1}--\eqref{eq:auxAMPnew2} for $k=1$ and $\ell\in \{3, \ldots, K-1\}$ follows from similar arguments. In particular, we write 
\begin{equation}\label{eq:yellu2}
    \begin{split}
        &\bY^\ell\bu^1-\tilde\bu^{1+\ell} = \bY(\bY^{\ell-1}\bu^1-\tilde\bu^\ell)+\bY\tilde\bu^\ell-\tilde\bz^\ell-(\tilde\bB_\ell)_{\ell, 1}\bu^1-\sum_{j=2}^\ell (\tilde\bB_\ell)_{\ell, j}\tilde\bu^j-\tilde\mu_\ell\bX^*\\
        &=\bY(\bY^{\ell-1}\bu^1-\tilde\bu^\ell) + \bZ\tilde\bu^\ell -\tilde\bz^\ell-(\tilde\bB_\ell)_{\ell, 1}\bu^1-\sum_{j=2}^\ell (\tilde\bB_\ell)_{\ell, j}\tilde\bu^j+ \Big(\lambda\frac{\langle \bX^*, \tilde\bu^\ell\rangle}{N}-\tilde\mu_\ell\Big)\bX^*\\
        &=\bY(\bY^{\ell-1}\bu^{1}-\tilde\bu^\ell) + \big(\bar{\sf b}_{\ell, 1}-(\tilde\bB_\ell)_{\ell, 1}\big)\bu^1+\sum_{j=2}^\ell\big(\bar{\sf b}_{\ell, j}-(\tilde\bB_\ell)_{\ell, j}\big)\tilde\bu^j\\
        &\hspace{20em}+ \Big(\lambda\frac{\langle \bX^*, \tilde\bu^\ell\rangle}{N}-\tilde\mu_\ell\Big)\bX^*,
    \end{split}
\end{equation}
which by triangle inequality gives 
\begin{equation}\label{eq:y2u1ell}
\begin{split}
    \frac{\|\bY^\ell\bu^1-\tilde\bu^{1+\ell}\|^2}{N}&\le C\Big(\frac{\|\bY(\bY^{\ell-1}\bu^1-\tilde\bu^\ell)\|^2}{N} + \big(\bar{\sf b}_{\ell, 1}-(\tilde\bB_\ell)_{\ell, 1}\big)^2\frac{\|\bu^1\|^2}{N} \\
    &+ \sum_{j=2}^\ell\big(\bar{\sf b}_{\ell, j}-(\tilde\bB_\ell)_{\ell, j}\big)^2\frac{\|\tilde\bu^j\|^2}{N}+\Big(\lambda\frac{\langle \bX^*, \tilde\bu^\ell\rangle}{N}-\tilde\mu_\ell\Big)^2\frac{\|\bX^*\|^2}{N}\Big).
\end{split}
\end{equation}
As $\bY$ has bounded operator norm and $\|\bY^{\ell-1}\bu^1-\tilde\bu^\ell\|^2/N\to 0$ (by the previous step), we have that 
\begin{equation*}
\lim_{N\to\infty}    \frac{\|\bY(\bY^{\ell-1}\bu^1-\tilde\bu^\ell)\|^2}{N}=0.
\end{equation*}
Next, by following passages analogous to those in \eqref{eq:phiarg}, we have that $\lim_{N\to\infty}\bar\bPhi_{\ell}=\tilde\bPhi_\ell$. As $\kappa_j\to\bar\kappa_j$ for all $j$, this implies that $\lim_{N\to\infty}\bar\bB_\ell=\tilde\bB_\ell$. Hence, for all $j\in [\ell]$, as $\|\tilde\bu^j\|/N$ is bounded, we have that
\begin{equation*}
    \lim_{N\to\infty}\Bigg(\big(\bar{\sf b}_{\ell, 1}-(\tilde\bB_\ell)_{\ell, 1}\big)^2\frac{\|\bu^1\|^2}{N}+ \sum_{j=2}^\ell\big(\bar{\sf b}_{\ell, j}-(\tilde\bB_\ell)_{\ell, j}\big)^2\frac{\|\tilde\bu^j\|^2}{N}\Bigg)=0.
\end{equation*}
Finally, as
\begin{equation*}
     \lim_{N\to \infty}\lambda\frac{\langle \bX^*, \tilde{\bu}^{\ell}\rangle}{N}=\lambda\mathbb E[X^*\,\tilde{U}_{\ell}]=\tilde\mu_{\ell}, 
\end{equation*}
we conclude that the last term in the RHS of \eqref{eq:y2u1ell} vanishes as well, which proves that \eqref{eq:auxAMP1} holds for $k=1$ and a generic $\ell\in \{3, \ldots, K-1\}$. Furthermore, by using \eqref{eq:alphaup}--\eqref{eq:gammaup} and the definition \eqref{eq:deftildeh} of $\tilde h_{\ell+1}$, one can readily verify that \eqref{eq:auxAMPnew2} holds for $k=1$ and a generic $\ell\in \{3, \ldots, K-1\}$.

By using \eqref{eq:auxAMP} and the definition of $\bY$, we have that
\begin{equation}
\begin{split}
    \bY^K\bu^1-\tilde\bz^K-\sum_{i=1}^K\bar{\sf b}_{K, i}\tilde\bu^i-\tilde\mu_K\bX^*&=\bZ\big(\bY^{K-1}\bu^1- \tilde\bu^K\big)+ \Big(\lambda\frac{\langle \bX^*, \bY^{K-1}\bu^1\rangle}{N}-\tilde\mu_K\Big)\bX^*.
\end{split}
\end{equation}
Hence, by using the definition of $\tilde\mu_K$ in \eqref{eq:tildemuup} and \eqref{eq:auxAMP1} with $k=1$, $\ell=K-1$, we obtain
\begin{equation}\label{eq:auxAMP3}
    \lim_{N\to\infty}\frac{\|\bY^K\bu^1-\tilde\bz^K-\sum_{i=1}^K\bar{\sf b}_{K, i}\tilde\bu^i-\tilde\mu_K\bX^*\|^2}{N}=0.
\end{equation}
Recall that $\bJ(\bY)=\sum_{j=1}^K c_j \bY^j$. Then, by combining \eqref{eq:auxAMP1} with $k=1$ and \eqref{eq:auxAMP3}, we have
\begin{equation}\label{eq:intp1}
    \lim_{N\to\infty}\frac{\|\bJ(\bY)\bu^1-\sum_{j=1}^K c_j \big(\tilde \bz^{j}+\sum_{i=1}^j\bar{\sf b}_{j, i}\tilde\bu^i+\tilde\mu_j\bX^*\big)\|^2}{N} = 0.
\end{equation}
By following the same argument as in \eqref{eq:phiarg}, we have that $\lim_{N\to\infty}\bar\bPhi_K=\tilde\bPhi_K$. As $\kappa_j\to\bar\kappa_j$ for all $j$, this implies that $\lim_{N\to\infty}\bar\bB_K=\tilde\bB_K$. Therefore,
\begin{equation}
\begin{split}
    \lim_{N\to\infty}\frac{\bigg\|\displaystyle\sum_{j=1}^K c_j \big(\tilde \bz^{j}+\sum_{i=1}^j\bar{\sf b}_{j, i}\tilde\bu^i+\tilde\mu_j\bX^*\big) - \sum_{j=1}^K c_j \big(\tilde \bz^{j}+\sum_{i=1}^j(\tilde\bB_j)_{j, i}\tilde\bu^i+\tilde\mu_j\bX^*\big)\bigg\|^2}{N}= 0.
    \end{split}
\end{equation}
Recall that $\tilde\bu^1 = \bu^1$ and \eqref{eq:auxAMPnew2} holds for $k=1$. Hence, by plugging in the formulas for ${\sf c}_{1, 1}$, $\mu_1$ and $\{\theta_{1, i}\}_{i\in [K]}$ (cf. \eqref{eq:Ons}, \eqref{eq:muup} and \eqref{eq:thetaup}), we have
\begin{equation}\label{eq:intp3}
\begin{split}
    \lim_{N\to\infty}\frac{\bigg\|\displaystyle\sum_{j=1}^K c_j \big(\tilde \bz^{j}+\sum_{i=1}^j(\tilde\bB_j)_{j, i}\tilde\bu^i+\tilde\mu_j\bX^*\big) - {\sf c}_{1, 1}\bu^1 - \mu_1\bX^* - \sum_{i=1}^K\theta_{1, i}\tilde\bz^i\bigg\|^2}{N} = 0.
    \end{split}
\end{equation}
By combining \eqref{eq:intp1}--\eqref{eq:intp3} with the definitions of $\bff^1$ and $\tilde\bff^1$(cf. \eqref{eq:AMPnew} and \eqref{eq:defu}), we conclude that 
\begin{equation}\label{eq:it1}
\lim_{N\to\infty}\frac{\|\bff^1-\tilde\bff^1\|^2}{N} =0.   
\end{equation}
As $g_2$ is Lipschitz, \eqref{eq:it1} immediately implies that
\begin{equation}
    \lim_{N\to\infty}\frac{\|\bu^2-\tilde\bu^{K+1}\|^2}{N}=0.
\end{equation}
An application of the triangle inequality gives that, for any $i\ge 1$,
\begin{equation}\label{eq:tr1}
    \begin{split}
        \|\tilde\bff^{i}\|- \|\bff^{i}-\tilde\bff^{i}\|&\le \|\bff^{i}\|\le \|\tilde\bff^{i}\|+ \|\bff^{i}-\tilde\bff^{i}\|, \\
         \|\tilde\bu^{Ki+1}\|- \|\bu^{i+1}-\tilde\bu^{Ki+1}\|&\le \|\bu^{i+1}\|\le \|\tilde\bu^{Ki+1}\|+ \|\bu^{i+1}-\tilde\bu^{Ki+1}\|.
    \end{split}
\end{equation}
Thus, by using \eqref{eq:tr1} with $i=1$ and Proposition \ref{prop:auxSE}, we obtain that
\begin{equation}
    \begin{split}
        &\lim_{N\to\infty}\frac{\|\bff^1\|^2}{N}=\lim_{N\to\infty}\frac{\|\tilde\bff^1\|^2}{N}=\mathbb E\Big[\Big(\mu_1 X^*+\sum_{i=1}^K\theta_{1, i}\tilde{Z}_i\Big)^2\Big], \\
        &\lim_{N\to\infty}\frac{\|\bu^2\|^2}{N}=\lim_{N\to\infty}\frac{\|\tilde{\bu}^{K+1}\|^2}{N}=\mathbb E[(\tilde{U}_{K+1})^2],
    \end{split}
\end{equation}
which concludes the base step.

\noindent \underline{Induction step.} Assume towards induction that \eqref{eq:auxAMP1}--\eqref{eq:auxAMPnew2} hold for $k\in [t]$, $\ell\in [K-1]$ and that, for $k\in [t]$,
\begin{align}
    &\lim_{N\to\infty}\frac{\|\bff^{k}-\tilde\bff^{k}\|^2}{N}=0,\label{eq:ind0}\\
    &\lim_{N\to\infty}\frac{\|\bu^{k+1}-\tilde{\bu}^{Kk+1}\|^2}{N}=0,\label{eq:ind1}\\
    &\lim_{N\to\infty}\frac{\|\bff^{k}\|^2}{N}=\lim_{N\to\infty}\frac{\|\tilde\bff^{k}\|^2}{N}=\mathbb E\Big[\Big(\mu_k X^*+\sum_{i=1}^{Kk}\theta_{k, i}\tilde{Z}_i\Big)^2\Big],\label{eq:ind1bis}\\
    &\lim_{N\to\infty}\frac{\|\bu^{k+1}\|^2}{N}=\lim_{N\to\infty}\frac{\|\tilde{\bu}^{Kk+1}\|^2}{N}=\mathbb E[\tilde{U}_{Kk+1}^2].\label{eq:ind2}
\end{align}
We now show that \eqref{eq:ind0}--\eqref{eq:ind2} hold for $k= t+1$, and that  \eqref{eq:auxAMP1}--\eqref{eq:auxAMPnew2} hold for $k= t+1$, $\ell\in [K-1]$.
By doing so, we will have proved also the induction step and consequently that \eqref{eq:PLsecondstep} holds. 

Using similar passages as in \eqref{eq:b1man}, we obtain
\begin{equation}
    \begin{split}
        &\bY\bu^{t+1}-\tilde \bu^{Kt+2} = \bZ\bu^{t+1}+\lambda\frac{\langle \bX^*, \bu^{t+1}\rangle}{N}\bX^*-\tilde{\bz}^{Kt+1}-\sum_{i=1}^{Kt+1}(\tilde \bB_{Kt+1})_{Kt+1, i}\tilde\bu^i-\tilde\mu_{Kt+1}\bX^* \\
        &=\bZ(\bu^{t+1}-\tilde\bu^{Kt+1})+\Big(\lambda\frac{\langle \bX^*, \bu^{t+1}\rangle}{N}-\tilde\mu_{Kt+1}\Big)\bX^*+\sum_{i=1}^{Kt+1}\big(\bar {\sf b}_{Kt+1, i}-(\tilde \bB_{Kt+1})_{Kt+1, i}\big)\tilde\bu^i.
    \end{split}
\end{equation}
Hence, by triangle inequality,
\begin{equation}\label{eq:decT}
    \begin{split}
&        \frac{\|\bY\bu^{t+1}-\tilde \bu^{Kt+2}\|^2}{N}\le C\Big(\frac{\|\bZ(\bu^{t+1}-\tilde\bu^{Kt+1})\|^2}{N}\\
        &\hspace{1em}+\Big(\lambda\frac{\langle \bX^*, \bu^{t+1}\rangle}{N}-\tilde\mu_{Kt+1}\Big)^2\frac{\|\bX^*\|^2}{N}+\sum_{i=1}^{Kt+1}\big(\bar {\sf b}_{Kt+1, i}-(\tilde \bB_{Kt+1})_{Kt+1, i}\big)^2\frac{\|\tilde{\bu}^i\|^2}{N}\Big)\\
        &\hspace{2em}:= C(\bar T_1+\bar T_2+\bar T_3).
    \end{split}
\end{equation}
Consider the first term. Since $\|\bZ\|_{\rm op}\le C$, the induction hypothesis \eqref{eq:ind1} implies that $\bar T_1\to 0$ as $N\to\infty$.

Consider the second term. The following chain of equalities holds:
\begin{equation}\label{eq:T2}
    \lim_{N\to \infty}\lambda\frac{\langle \bX^*, \bu^{t+1}\rangle}{N}=\lim_{N\to \infty}\lambda\frac{\langle \bX^*, \tilde{\bu}^{Kt+1}\rangle}{N}=\lambda \mathbb E[X\,\tilde{U}_{Kt+1}]=\tilde\mu_{Kt+1}.
\end{equation}
Here, the first equality uses \eqref{eq:ind1} together with the fact that $\|\bX^*\|^2/ N=1$; the second equality follows from Proposition \ref{prop:auxSE}; and the third equality uses the definition of $\tilde\mu_{Kt+1}$ in \eqref{eq:tildemuup}. Finally, using \eqref{eq:T2} and again that $\|\bX^*\|^2/ N=1$ gives that $\bar T_2\to 0$ as $N\to\infty$.

Consider the third term. By following the same argument as in \eqref{eq:phiarg}, we have that $\lim_{N\to\infty}\bar\bPhi_{Kt+1}=\tilde\bPhi_{Kt+1}$. As $\kappa_j\to\bar\kappa_j$ for all $j$, this implies that $\lim_{N\to\infty}\bar\bB_{Kt+1}=\tilde\bB_{Kt+1}$. 
 By using the induction hypothesis \eqref{eq:ind2}, which shows that $\|\tilde{\bu}^i\|^2/N$ converges to a finite limit, we conclude that $\bar T_3\to 0$ as $N\to\infty$. This proves that the RHS of \eqref{eq:decT} vanishes and gives that \eqref{eq:auxAMP1} holds for $k=t+1$ and $\ell=1$.
 
 For $\ell\in \{2, \ldots, K-1\}$, by following passages similar to \eqref{eq:yellu2}, we have
\begin{equation*}
    \begin{split}
        \bY^\ell\bu^{t+1}-\tilde\bu^{Kt+\ell+1} &=\bY(\bY^{\ell-1}\bu^{t+1}-\tilde\bu^{Kt+\ell}) + \sum_{i=1}^{Kt+\ell}\big(\bar{\sf b}_{Kt+
        \ell, i}-(\tilde\bB_{Kt+\ell})_{Kt+\ell, i}\big)\tilde\bu^i\\
        &+ \Big(\lambda\frac{\langle \bX^*, \tilde\bu^{Kt+\ell}\rangle}{N}-\tilde\mu_{Kt+\ell}\Big)\bX^*,
    \end{split}
\end{equation*}
which by triangle inequality gives
\begin{equation}\label{eq:y2ut1}
\begin{split}
    &\frac{\|\bY^\ell\bu^{t+1}-\tilde\bu^{Kt+\ell+1}\|^2}{N}\le C\Big(\frac{\|\bY(\bY^{\ell-1}\bu^{t+1}-\tilde\bu^{Kt+\ell})\|^2}{N}  \\
    &\hspace{3em}+ \sum_{i=1}^{Kt+\ell}\big(\bar{\sf b}_{Kt+\ell, i}-(\tilde\bB_{Kt+\ell})_{Kt+\ell, i}\big)^2\frac{\|\tilde\bu^i\|^2}{N}+ \Big(\lambda\frac{\langle \bX^*, \tilde\bu^{Kt+\ell}\rangle}{N}-\tilde\mu_{Kt+\ell}\Big)^2\frac{\|\bX^*\|^2}{N}\Big).
\end{split}
\end{equation}
The first term on the RHS of \eqref{eq:y2ut1} vanishes as $\bY$ has bounded operator norm and we have just proved in the previous step that $\|\bY^{\ell-1}\bu^{t+1}-\tilde\bu^{Kt+\ell}\|^2/N\to 0$. To bound the second term, note that, by following the same argument as in \eqref{eq:phiarg}, we have that $\lim_{N\to\infty}\bar\bPhi_{Kt+\ell}=\tilde\bPhi_{Kt+\ell}$. As $\kappa_j\to\bar\kappa_j$ for all $j$, this implies that $\lim_{N\to\infty}\bar\bB_{Kt+\ell}=\tilde\bB_{Kt+\ell}$. 
 By using the induction hypothesis \eqref{eq:ind2}, we have that $\|\tilde{\bu}^i\|^2/N$ converges to a finite limit for $i\in [Kt+\ell-1]$. Furthermore, as $\|\bY^{\ell-1}\bu^{t+1}-\tilde\bu^{Kt+\ell}\|^2/N\to 0$, we also have that $\|\tilde{\bu}^{Kt+\ell}\|^2/N$ converges to a finite limit. As a result, the second term on the RHS of \eqref{eq:y2ut1} vanishes. Finally, we can write a chain of equalities analogous to \eqref{eq:T2} with $Kt+\ell$ in place of $Kt+1$, from which we deduce that the third term vanishes. This concludes the proof that \eqref{eq:auxAMP1} holds for $k=t+1$ and $\ell\in [K-1]$.

For $\ell\in [K-1]$, by definition \eqref{eq:deftildeh} of $h_{Kt+1+\ell}$, we have
\begin{equation}\label{eq:hatuexp0}
\begin{split}
    \tilde\bu^{Kt+1+\ell} &= \tilde\bz^{Kt+\ell} + \tilde\mu_{Kt+\ell}\bX^* + \sum_{i=1}^{Kt+\ell}(\tilde\bB_{Kt+\ell})_{Kt+\ell, i}\tilde\bu^i.
\end{split}
\end{equation}
Let us define:
\begin{equation}\label{eq:hatuexp}
\begin{split}
    \hat\bu^{Kt+1+\ell} &:=  \tilde\bz^{Kt+\ell} + \tilde\mu_{Kt+\ell}\bX^* + \sum_{i=1}^{t+1}(\tilde\bB_{Kt+\ell})_{Kt+\ell, K(i-1)+1}\bu^{i} \\
    &\hspace{1em}+ \sum_{\substack{i=1\\ i\not\equiv 1 ({\rm mod} K)}}^{Kt+\ell}(\tilde\bB_{Kt+\ell})_{Kt+\ell, i} 
    \Big(\sum_{j=1}^{i-1}\alpha_{i, j}\tilde\bz^j+\sum_{j=1}^{\lceil (i-1)/K\rceil}\beta_{i, j}\bu^j+\gamma_{i}\bX^*\Big).
\end{split}
\end{equation}
Then, by using the recursive definitions \eqref{eq:alphaup}--\eqref{eq:gammaup}, we readily have that the RHS of \eqref{eq:hatuexp} is equal to 
\begin{equation}\label{eq:hatuexp1}
\sum_{j=1}^{Kt+\ell}\alpha_{Kt+1+\ell, j}\tilde\bz^j + \sum_{j=1}^{t+1}\beta_{Kt+1+\ell, j}\bu^j + \gamma_{Kt+1+\ell}\bX^*.
\end{equation}
Recall that, by induction hypothesis, \eqref{eq:ind1} holds for $k\in [t]$, and \eqref{eq:auxAMPnew2} holds for $k\in [t]$ and $\ell\in [K-1]$. Thus, by using the expressions in \eqref{eq:hatuexp0} and \eqref{eq:hatuexp} for $\ell=1$, one readily obtains that
\begin{equation}
    \lim_{N\to\infty}\frac{\|\tilde\bu^{Kt+2}-\hat\bu^{Kt+2}\|^2}{N}=0.
\end{equation}
Since the RHS of \eqref{eq:hatuexp} is equal to the expression in \eqref{eq:hatuexp1} for $\ell=1$, we conclude that \eqref{eq:auxAMPnew2} holds for $k=t+1$ and $\ell=1$. At this point, we have that \eqref{eq:auxAMPnew2} holds for $k\in [t]$, $\ell\in [K-1]$ and also for $k=t+1, \ell=1$. Hence, by using the expressions in \eqref{eq:hatuexp0} and \eqref{eq:hatuexp} for $\ell=2$, we obtain
\begin{equation}
    \lim_{N\to\infty}\frac{\|\tilde\bu^{Kt+3}-\hat\bu^{Kt+3}\|^2}{N}=0.
\end{equation}
Since the RHS of \eqref{eq:hatuexp} is equal to the expression in \eqref{eq:hatuexp1} for $\ell=2$, we conclude that \eqref{eq:auxAMPnew2} holds for $k=t+1$, $\ell=2$. By iterating this procedure for $\ell\in \{3, \ldots, K-1\}$, we obtain that \eqref{eq:auxAMPnew2} holds for $k=t+1$, $\ell\in [K-1]$.

By using \eqref{eq:auxAMP} and the definition of $\bY$, we have that
\begin{equation}
\begin{split}
    \bY^K\bu^{t+1}-\tilde\bz^{K(t+1)}-&\sum_{i=1}^{K(t+1)}\bar{\sf b}_{K(t+1), i}\tilde\bu^i-\tilde\mu_{K(t+1)}\bX^*\\
    &=\bZ\big(\bY^{K-1}\bu^{t+1}- \tilde\bu^{K(t+1)}\big)+ \Big(\lambda\frac{\langle \bX^*, \bY^{K-1}\bu^{t+1}\rangle}{N}-\tilde\mu_{K(t+1)}\Big)\bX^*.
\end{split}
\end{equation}
Hence, by using \eqref{eq:auxAMP1} with $k=t+1$, $\ell=K-1$ and the definition of $\tilde\mu_{K(t+1)}$ in \eqref{eq:tildemuup}, we obtain
\begin{equation}\label{eq:auxAMP3ind}
    \lim_{N\to\infty}\frac{\|\bY^K\bu^{t+1}-\tilde\bz^{K(t+1)}-\sum_{i=1}^{K(t+1)}\bar{\sf b}_{K(t+1), i}\tilde\bu^i-\tilde\mu_{K(t+1)}\bX^*\|^2}{N}=0.
\end{equation}
As $\bJ(\bY)=\sum_{j=1}^K c_j \bY^j$, by combining \eqref{eq:auxAMP3ind} with \eqref{eq:auxAMP1} with $k=t+1$, $\ell\in [K-1]$, we have
\begin{equation}\label{eq:intp1ind}
    \lim_{N\to\infty}\frac{\|\bJ(\bY)\bu^{t+1}-\sum_{j=1}^K c_j \big(\tilde \bz^{Kt+j}+\sum_{i=1}^{Kt+j}\bar{\sf b}_{Kt+j, i}\tilde\bu^i+\tilde\mu_{Kt+j}\bX^*\big)\|^2}{N} = 0.
\end{equation}
By following the same argument as in \eqref{eq:phiarg}, we have that $\lim_{N\to\infty}\bar\bPhi_{Kt+j}=\tilde\bPhi_{Kt+j}$ for all $j\in [K]$. As $\kappa_j\to\bar\kappa_j$ for all $j$, this implies that $\lim_{N\to\infty}\bar\bB_{Kt+j}=\tilde\bB_{Kt+j}$ for all $j\in [K]$. Therefore, \eqref{eq:intp1ind} implies that 
\begin{equation}
\begin{split}
    \lim_{N\to\infty}\frac{\|\bJ(\bY)\bu^{t+1} - \sum_{j=1}^K c_j \big(\tilde \bz^{Kt+j}+\sum_{i=1}^{Kt+j}(\tilde\bB_{Kt+j})_{Kt+j, i}\tilde\bu^i+\tilde\mu_{Kt+j}\bX^*\big)\|^2}{N}= 0.
    \end{split}
\end{equation}
Recall that \eqref{eq:ind1} holds for $k\in [t]$ by the induction hypothesis and \eqref{eq:auxAMPnew2} holds for $k\in [t+1]$, $\ell\in [K-1]$ (thanks to the induction hypothesis and the argument above). Hence, by plugging in the formulas for $\{{\sf c}_{t+1, i}\}_{i\in [t+1]}$, $\mu_{t+1}$ and $\{\theta_{t+1, i}\}_{i\in [K(t+1)]}$ (cf. \eqref{eq:Ons}, \eqref{eq:muup} and \eqref{eq:thetaup}), we have
\begin{equation}\label{eq:intp3ind}
\begin{split}
    \lim_{N\to\infty}\frac{\|\bJ(\bY)\bu^{t+1} - \sum_{i=1}^{t+1}{\sf c}_{t+1, i}\bu^i - \mu_t\bX^* - \sum_{i=1}^{K(t+1)}\theta_{t+1, i}\tilde\bz^i\|^2}{N} = 0.
    \end{split}
\end{equation}
By recalling the definitions of $\bff^{t+1}$ and $\tilde\bff^{t+1}$ (cf. \eqref{eq:AMPnew} and \eqref{eq:defu}), \eqref{eq:intp3ind} implies that 
\begin{equation}\label{eq:it1ind}
\lim_{N\to\infty}\frac{\|\bff^{t+1}-\tilde\bff^{t+1}\|^2}{N} =0.   
\end{equation}
As $g_{t+2}$ is Lipschitz, \eqref{eq:it1ind} also gives that
\begin{equation}
    \lim_{N\to\infty}\frac{\|\bu^{t+2}-\tilde\bu^{K(t+1)+1}\|^2}{N}=0.
\end{equation}
Then, by using \eqref{eq:tr1} with $i=t+1$ and Proposition \ref{prop:auxSE}, we obtain that \eqref{eq:ind1bis} and \eqref{eq:ind2} hold for $k=t+1$, thus concluding the inductive proof. The result we have just proved by induction, combined with \eqref{eq:xz_diff}, gives that \eqref{eq:PLsecondstep} holds. 

Another application of Proposition \ref{prop:auxSE}, together with \eqref{eq:PLsecondstep}, gives that
\begin{equation}\label{eq:fin1}
    \begin{split}
         \lim_{N \to \infty}  
         \frac{1}{N} \sum_{i=1}^{N} \psi\big(&\tilde u^1_i,\tilde u^{K+1}_i, \ldots, \tilde u^{Kt+1}_i, \tilde f^1_i, \tilde f^2_i,\ldots, \tilde f^t_i,  X^*_i\big) \\
         &=\mathbb E [ \psi(\tilde U_1, \tilde U_{K+1}, \ldots, \tilde U_{Kt+1}, F_1, \ldots, F_t, X^*) ],
     \end{split}
\end{equation}
where we recall that, by the definition in the theorem statement, for $s\in \{1, \ldots, t\}$,
\begin{equation}
    F_s = \mu_s X^*+\sum_{i=1}^{Ks} \theta_{s, i}\tilde Z_i.
\end{equation}
As $U_{s+1}=g_{s+1}(F_s)$, we have $\tilde U_{Ks+1}=U_{s+1}$ for all $s \in [t]$, and the proof is complete.
\end{proof}

\bibliographystyle{plain}
\bibliography{refs}

\end{document}